\newcommand{\pgram}{\text{$\mathfrak{p}$}}
\def\Xint#1{\mathchoice
   {\XXint\displaystyle\textstyle{#1}}%
   {\XXint\textstyle\scriptstyle{#1}}%
   {\XXint\scriptstyle\scriptscriptstyle{#1}}%
   {\XXint\scriptscriptstyle\scriptscriptstyle{#1}}%
   \!\int}
\def\XXint#1#2#3{{\setbox0=\hbox{$#1{#2#3}{\int}$}
     \vcenter{\hbox{$#2#3$}}\kern-.5\wd0}}
\def\dashint{\Xint-}
\def\eq{\begin{equation}}
\def\endeq{\end{equation}}
\def\bpm{\begin{pmatrix}}
\def\epm{\end{pmatrix}}
\def\bbm{\begin{bmatrix}}
\def\ebm{\end{bmatrix}}
\def\pp{\mathcal{P}}
\def\pq{\mathcal{Q}}
\def\pu{\mathcal{U}}
\def\pv{\mathcal{V}}
\def\Odot{{\bf \dot{O}}}
\newtheorem{rhp}{Riemann-Hilbert Problem}
\newtheorem{proposition}{Proposition}
\newtheorem{theorem}{Theorem}
\newtheorem{corollary}{Corollary}
\theoremstyle{definition}
\newtheorem{definition}{Definition}
\newtheorem{remark}{Remark}
\renewcommand{\theequation}{\arabic{section}-\arabic{equation}}
\title{Large-degree asymptotics of rational Painlev\'e-II functions. I.}
\author{Robert J. Buckingham}
\address[R. J. Buckingham]{Department of Mathematical Sciences\\ University of Cincinnati\\ PO Box 210025\\ Cincinnati, OH 45221.}
\email{buckinrt@uc.edu}
\urladdr{http://homepages.uc.edu/~buckinrt/}
\author{Peter D. Miller}
\address[P. D. Miller]{Department of Mathematics, University of Michigan\\East Hall\\530 Church St.\\Ann Arbor, MI 48109.}
\email{millerpd@umich.edu}
\urladdr{http://www.math.lsa.umich.edu/~millerpd/}
\thanks{R. J. Buckingham was partially supported by the National Science 
Foundation via grant DMS-1312458, by the 
Simons Foundation via award 245775, and the Charles Phelps Taft Research 
Foundation.  P. D. Miller was partially supported by the National Science 
Foundation under grant DMS-0807653.} 
\begin{document}
\begin{abstract}
Rational solutions of the inhomogeneous Painlev\'e-II equation and of a 
related coupled Painlev\'e-II system have recently arisen in studies of fluid 
vortices and of the sine-Gordon equation.  For the sine-Gordon 
application in particular it is of interest to understand the large-degree 
asymptotic behavior of the rational Painlev\'e-II functions.  We explicitly compute the leading-order large-degree 
asymptotics of these two families of rational functions 
valid in the whole complex plane with the exception of a neighborhood of a 
certain piecewise-smooth closed curve.  
We obtain rigorous error bounds by using the Deift-Zhou 
nonlinear steepest-descent method for Riemann-Hilbert problems.

\end{abstract}
\maketitle

\section{Introduction}

Solutions of the inhomogeneous Painlev\'e-II equation
\eq
\label{PII}
p_{yy} = 2p^3 + \frac{2}{3}y p-\frac{2}{3}m, \quad p:\mathbb{C}\to\mathbb{C} \text{ with parameter } m\in\mathbb{C}
\endeq
and the coupled Painlev\'e-II system
\eq
\label{PII-system}
\left.\begin{matrix}
\vspace{.1in}\displaystyle u_{yy} + 2u^2v + \frac{1}{3}yu = 0 \\
\displaystyle v_{yy} + 2uv^2 + \frac{1}{3}yv = 0
\end{matrix}\right\}, 
\quad u,v:\mathbb{C}\to\mathbb{C},
\endeq
are often referred to in the literature, along with solutions to other 
Painlev\'e-type equations, as \emph{Painlev\'e transcendents} since solutions 
cannot in general be expressed in terms of elementary functions.  However, 
both \eqref{PII} and \eqref{PII-system} admit important families of rational 
solutions.  To be precise, define 
\eq
\label{backlund-initial-condition}
\pu_0(y):=1 \quad \text{and} \quad \pv_0(y):=-\frac{1}{6}y.
\endeq
Then define the rational functions $\pu_m(y)$ and $\pv_m(y)$ iteratively 
for positive integers $m$ by 
\eq
\label{backlund-positive}
\pu_{m+1}(y) := -\frac{1}{6}y\pu_m(y) - \frac{\pu_m'(y)^2}{\pu_m(y)} + \frac{1}{2}\pu_m''(y) \quad \text{and} \quad \pv_{m+1}(y) := \frac{1}{\pu_m(y)},
\endeq
and for negative integers $m$ by
\eq
\label{backlund-negative}
\pu_{m-1}(y) := \frac{1}{\pv_m(y)} \quad \text{and} \quad \pv_{m-1}(y) := \frac{1}{2}\pv_m''(y) - \frac{\pv_m'(y)^2}{\pv_m(y)} - \frac{1}{6}y\pv_m(y).
\endeq
Then $\{u,v\}=\{\pu_m,\pv_m\}$ solves the coupled Painlev\'e-II system 
\eqref{PII-system} for each choice of $m\in\mathbb{Z}$.  Furthermore, if 
we define 
\eq
\label{log-derivative}
\pp_m(y) := \frac{\pu_m'(y)}{\pu_m(y)} \quad \text{and} \quad \pq_m(y) := \frac{\pv_m'(y)}{\pv_m(y)}, \quad m\in\mathbb{Z}
\endeq
then $\pp_m$ satisfies \eqref{PII} with parameter $m$ while $\pq_m$ satisfies 
\eq
\pq_m''(y)=2\pq_m(y)^3+\frac{2}{3}y\pq_m(y)+\frac{2}{3}(m-1).
\endeq
It is known that the Painlev\'e-II 
equation \eqref{PII} 
has a rational solution if and only if $m\in\mathbb{Z}$ \cite{Airault:1979}, and 
when this rational solution exists it is unique \cite{Murata:1985}.  
These rational solutions have arisen in the study of fluid 
vortices \cite{Clarkson:2009} and string theory \cite{Johnson:2006}.  
From the point of view of
the Flaschka-Newell inverse monodromy theory for the Painlev\'e-II equation 
\cite{Flaschka:1980} they play a role similar to that played by multisoliton solutions of integrable nonlinear wave equations (arising from determinantal formulae, corresponding to fully discrete scattering data, etc.).  

We became further interested in these functions when they appeared in the study of the sine-Gordon equation 
\cite{BuckinghamMcritical} in the semiclassical or small-dispersion limit.  It turns out that the functions $\pu_m$ are of crucial importance in an 
associated 
double-scaling limit describing the transition between librational and
rotational oscillations as a separatrix is traversed at a certain critical 
point.  Near the critical point, the solution of the sine-Gordon equation is accurately approximated in the limit
by a universal curvilinear grid of isolated kink-type solutions, with 
the location of the $m^\text{th}$ kink in the space-time plane determined by the graph of $\pu_m(y)$ for 
$y\in\mathbb{R}$.  Furthermore, the kinks collide in a grazing fashion (that can 
be modeled by a double soliton solution of the sine-Gordon equation) at space-time points determined by 
the real poles and zeros of $\pu_m$.  It remains an open problem to match the critical behavior near the transition point onto larger-time dynamics presumably described by hyperelliptic functions, and this 
problem motivates the current study of
the large-$m$ asymptotic behavior of $\pu_m(y)$ for $y\in\mathbb{R}$.  

Generalizing further to $y\in\mathbb{C}$, the 
idea that the rational Painlev\'e functions considered here may have a 
particularly interesting structure 
in the complex $y$-plane when $m$ is large 
is indicated 
by two previous results.  Clarkson and Mansfield \cite{Clarkson:2003} noted 
from numerically-generated plots that the $m(m+1)/2$ complex zeros of the related $m^\text{th}$ 
\emph{Yablonskii-Vorob'ev polynomial} form a highly regular pattern in a 
triangular-type region with curved sides of size proportional to $m^{2/3}$ as 
$m\to\infty$ (see also the work of Roffelson \cite{Roffelson:2010,Roffelson:2012} for other recent 
results on the Yablonskii-Vorob'ev polynomials, including interlacing properties of the roots).  The rational Painlev\'e-II 
functions $\pp_m(y)$ are the logarithmic 
derivatives of ratios of successive Yablonskii-Vorob'ev polynomials (or equivalently, the Painlev\'e-II functions $\pu_m(y)$ are themselves such ratios), and 
thus the zeros and poles of $\pu_m$ and $\pp_m$ exhibit the same qualitative 
behavior (see Figure \ref{um-zeros}, in which we present figures similar to those in \cite{Clarkson:2003} but in a rescaled independent variable).  The regular pattern of zeros and poles suggests modeling the rational functions with (doubly-periodic) elliptic functions of some local complex coordinate.  
This is compatible with a study of Kapaev \cite{Kapaev:1997} in which 
it is shown that general solutions 
of \eqref{PII} are asymptotically described by elliptic functions as 
$m\to\infty$ 
away from certain $m$-independent Stokes lines in the complex plane 
of the rescaled independent variable $m^{-1/3}y$.  
We call the region containing the poles and zeros of the Painlev\'e-II rational functions in the rescaled complex plane the \emph{elliptic region} $T$ (a precise definition including an expression for the boundary will be formulated later).
\begin{figure}[h]
\includegraphics[width=2.1in]{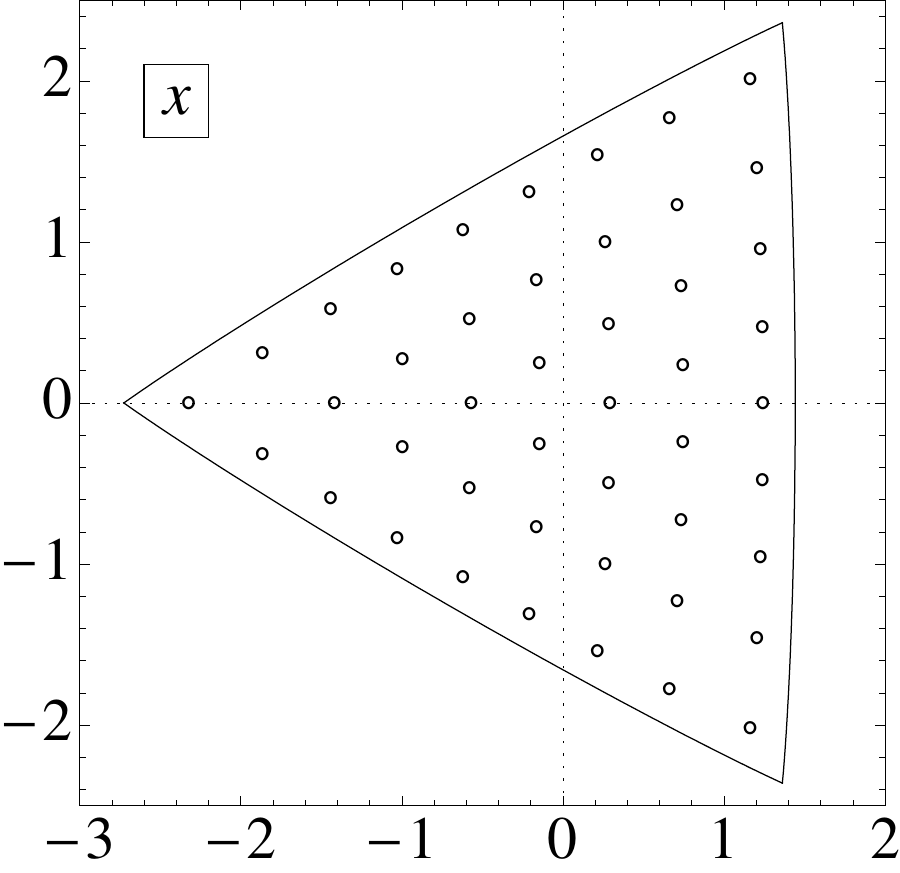}
\includegraphics[width=2.1in]{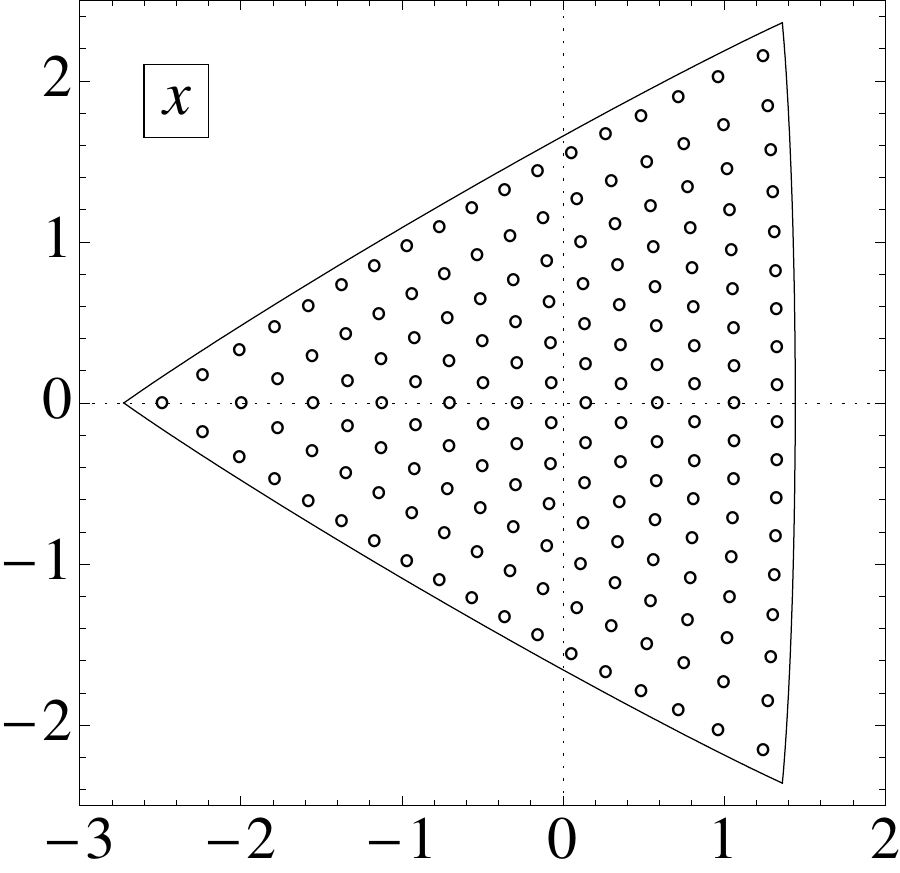}
\includegraphics[width=2.1in]{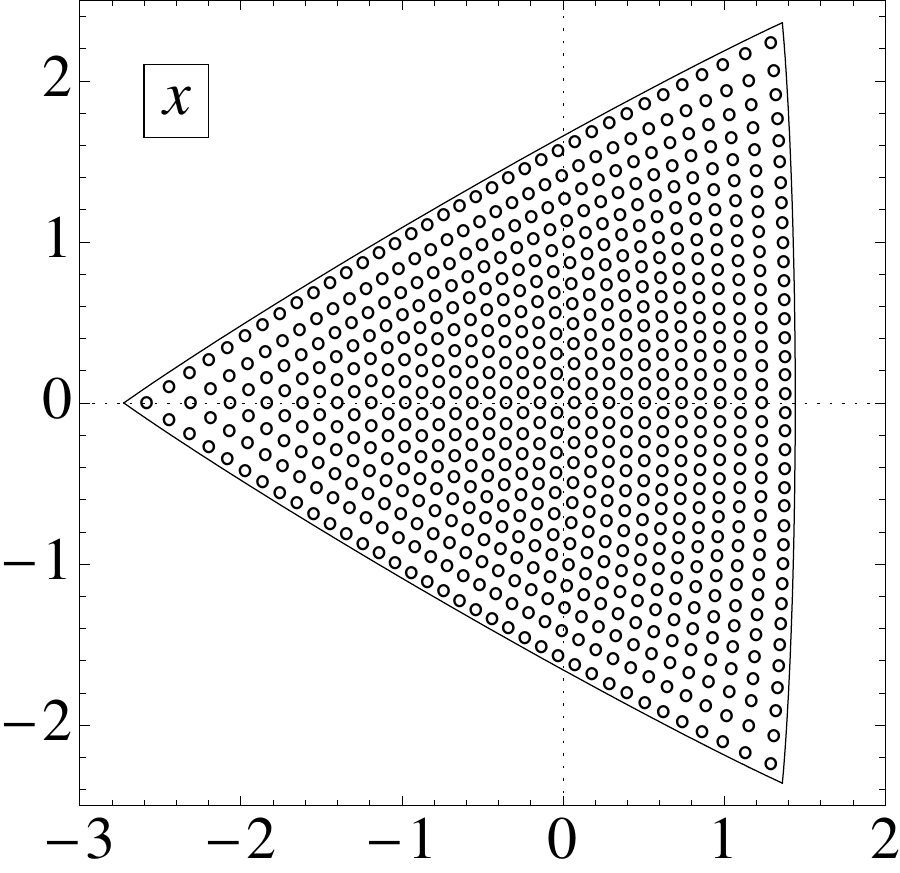}
\caption{\emph{The zeros of $\pu_m(y)$ in the complex $x$-plane, where 
$x=(m-\tfrac{1}{2})^{-2/3}y$ 
for $m=9$ (left), $m=18$ (center), and $m=36$ (right), along with the 
$m$-independent boundary of the elliptic region $T$.}}
\label{um-zeros}
\end{figure}
\begin{figure}[h]
\includegraphics[width=2.5in]{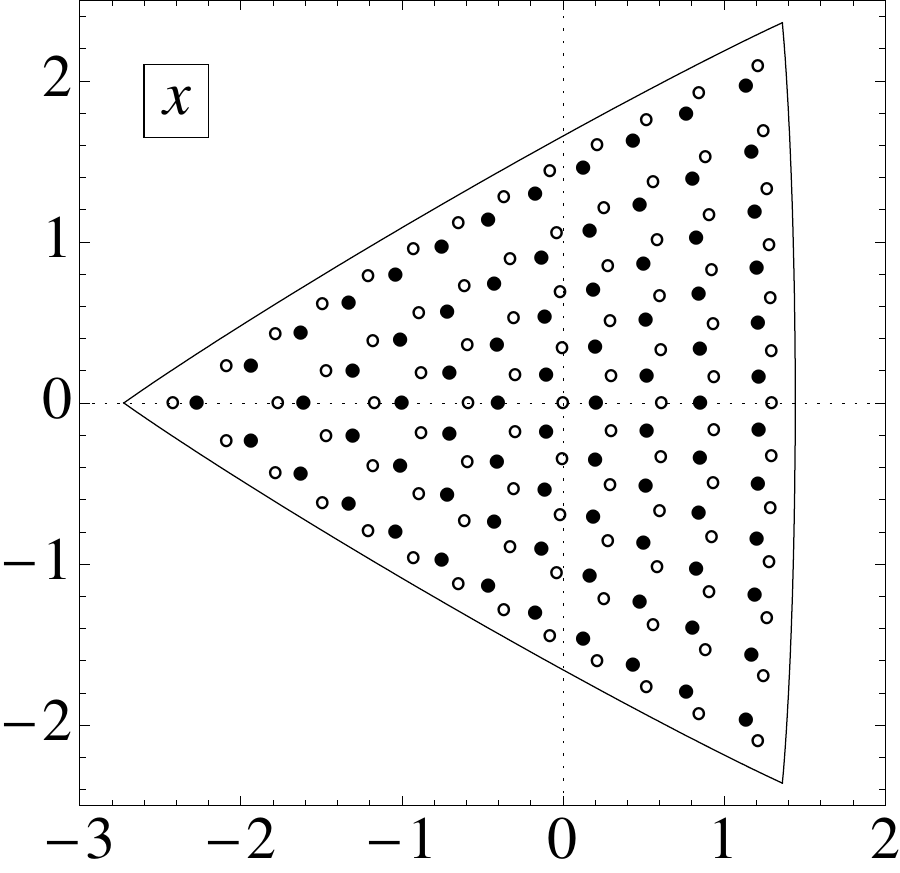}
\includegraphics[width=2.5in]{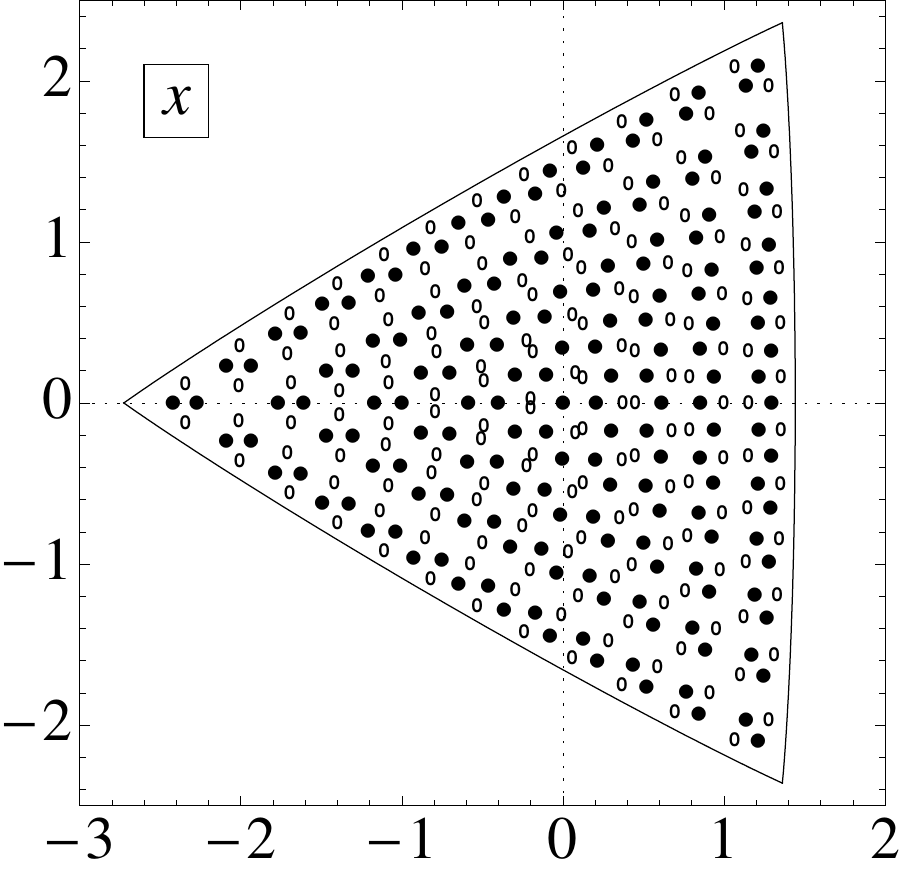}
\caption{\emph{The zeros (circles) and poles (dots) of $\pu_{13}(y)$ (left) 
and $\pp_{13}(y)$ (right) in the complex $x$-plane, where 
$x=(13-\tfrac{1}{2})^{-2/3}y$, along 
with the $m$-independent boundary of the region $T$.  
Note that every zero and every pole of $\pu_{13}(y)$ is a pole of 
$\pp_{13}(y)$.}}
\label{u13-p13-zeros-poles}
\end{figure}
\begin{figure}[h]
\includegraphics[width=2.5in]{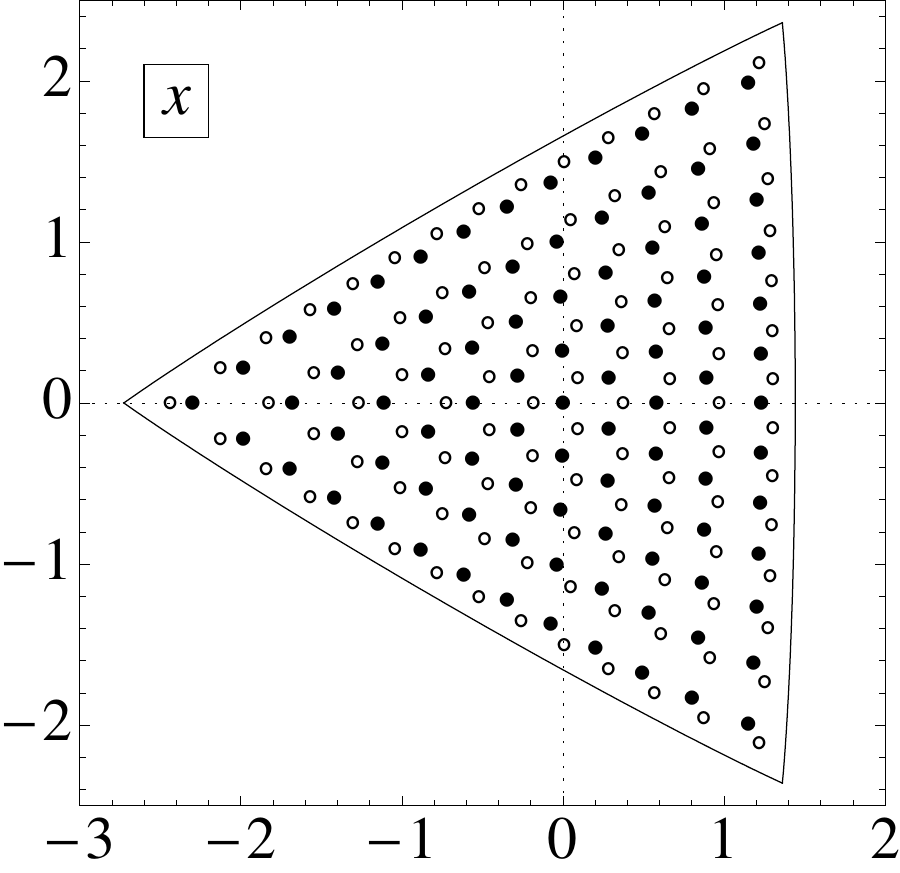}
\includegraphics[width=2.5in]{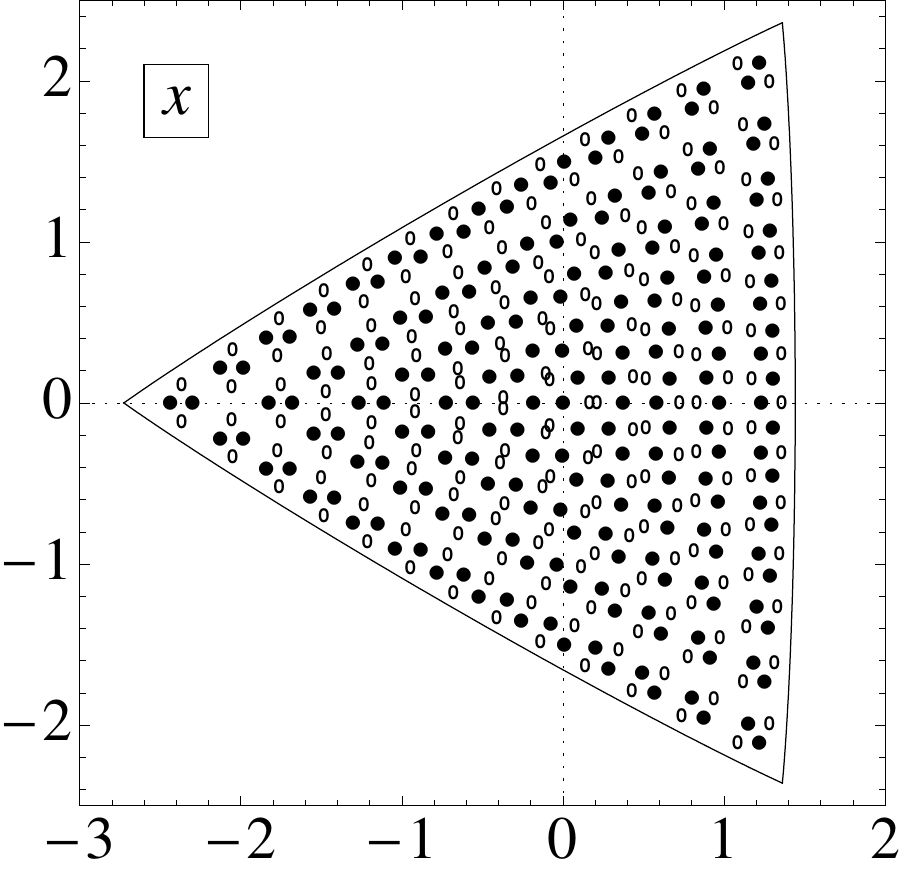}
\caption{\emph{The zeros (circles) and poles (dots) of $\pu_{14}(y)$ (left) 
and $\pp_{14}(y)$ (right) in the complex $x$-plane, where 
$x=(14-\tfrac{1}{2})^{-2/3}y$, 
along with the $m$-independent boundary of the elliptic region $T$.  
Note that each pole of $\pu_{14}(y)$ corresponds to a (shifted) zero of $\pu_{13}(y)$ in Figure \ref{u13-p13-zeros-poles}.  }}
\label{u14-p14-zeros-poles}
\end{figure}

\newpage
The purpose of our paper is to use Riemann-Hilbert analysis to rigorously and explicitly determine, with 
error estimates, the large-$m$ behavior of $\pu_m(y)$, $\pv_m(y)$, $\pp_m(y)$, 
and $\pq_m(y)$ for $y$ 
outside and inside the elliptic region.  The remaining cases, where 
$x=(m-\frac{1}{2})^{-2/3}y$ is near an edge or a corner of the elliptic 
region, will be handled in a subsequent work \cite{Buckingham-rational-crit}.  
Our starting point is Riemann-Hilbert 
Problem~\ref{rhp:DSlocalII} (see \S\ref{Riemann-Hilbert-section}) which arises naturally as a parametrix in 
the double-scaling semiclassical limit of the sine-Gordon equation 
\cite{BuckinghamMcritical}.  This 
Riemann-Hilbert problem is associated with the so-called Jimbo-Miwa 
Lax pair for the Painlev\'e-II equation \cite{Jimbo:1981a,Jimbo:1981b} (see also 
\cite[page 156]{Fokas:2006-book}).  The jump matrices for this Riemann-Hilbert 
problem are nontrivial and oscillatory for rational solutions, which allows us to use the 
machinery of the Deift-Zhou nonlinear steepest-descent method 
\cite{Deift:1993,Deift:1997}.  On the other hand, for the Riemann-Hilbert 
problem 
associated with the alternative so-called Flaschka-Newell Lax pair 
\cite{Flaschka:1980} the jump matrices corresponding to the rational solutions are trivial 
(i.e.\@ they all degenerate to the Identity matrix), 
and instead the monodromy data is encoded in the principal part expansion of 
a high-order pole at the origin.   
Therefore, the Deift-Zhou method does not apply to the latter problem without substantial modifications to exchange isolated pole singularities for jumps along contours.  

\begin{remark}
If $u(y)$ and $v(y)$ solve the coupled system \eqref{PII-system}, then 
$w(y):=u(y)v(y)$ solves the Painlev\'e-XXXIV equation 
\eq
w_{yy} = \frac{w_y^2}{2w} - 4w^2 - \frac{2}{3}yw - \frac{\nu^2}{2w}, \quad \nu:=vu_y-uv_y.
\endeq
Therefore the results we will present will imply corresponding asymptotic formulae for the rational solutions of the Painlev\'e-XXXIV equation.
\end{remark}

\begin{remark}
It is sufficient to assume that $m$ is a large \emph{positive} integer.  Indeed, a simple induction argument using the recursions \eqref{backlund-positive}--\eqref{backlund-negative} and initial conditions \eqref{backlund-initial-condition} shows
that
\begin{equation}
\mathcal{U}_{-m}(y)=\frac{1}{\mathcal{U}_m(y)}\quad\text{and}\quad
\mathcal{V}_{1-m}(y)=\frac{1}{\mathcal{V}_{m+1}(y)} \quad \text{for all } m\in\mathbb{Z}.
\end{equation}
\end{remark}

\subsection{A Boutroux-type ansatz}  

We now note a simple but nonrigorous 
computation that motivates some of our results.  Starting from the Painlev\'e-II 
equation \eqref{PII}, we would like to rescale $p$ and $y$ so that the 
zeros and poles of the rational solutions are (approximately) equally spaced.  
It is known that the maximum modulus of the zeros of $\pp_m(y)$ grows as 
$m^{2/3}$ \cite{Kametaka:1983}.  This suggests the fact (which we will prove 
later) that the large-$m$ asymptotic boundaries of the elliptic region $T$ 
are fixed in the $x$-plane, where $x=(m-\tfrac{1}{2})^{-2/3}y$.  (We include the shift 
by $\tfrac{1}{2}$ to be consistent with our later calculations.)  
To zoom in on the local behavior near a point $x=x_0\in T$ we need a local 
coordinate $w$ that behaves like $m(x-x_0)$ since the number of 
roots (or poles) of the rational solution is of order $m^2$.  These arguments motivate the rescalings: 
\eq
\label{Boutroux-rescalings}
y=(m-\tfrac{1}{2})^{2/3}x = (m-\tfrac{1}{2})^{2/3}x_0 + (m-\tfrac{1}{2})^{-1/3}w, \quad p(y)=(m-\tfrac{1}{2})^{1/3}\dot{\mathcal{P}}(w).
\endeq
These rescalings render \eqref{PII} in the equivalent form
\eq
\dot{\mathcal{P}}'' = 2\dot{\mathcal{P}}^3 + \frac{2}{3(m-\tfrac{1}{2})}w\dot{\mathcal{P}} + \frac{2}{3}x_0\dot{\mathcal{P}} - \frac{2}{3}, \quad {}^\prime=\frac{d}{dw}.
\endeq
Now formally disregarding the term whose coefficient becomes small as 
$m\to\infty$ gives the model equation
\eq
\dot{\mathcal{P}}'' = 2\dot{\mathcal{P}}^3 + \frac{2}{3}x_0\dot{\mathcal{P}} - \frac{2}{3}.
\endeq
Multiplication by $\dot{\mathcal{P}}'$ and integrating with respect to $w$ gives 
\eq
\frac{1}{2}(\dot{\mathcal{P}}')^2-\frac{1}{2}\dot{\mathcal{P}}^4-\frac{1}{3}x_0\dot{\mathcal{P}}^2 + \frac{2}{3}\dot{\mathcal{P}} = \frac{1}{2}\Pi
\endeq
for some integration constant $\Pi=\Pi(x_0)$.  Thus $\dot{\mathcal{P}}$ satisfies
\eq
\dot{\mathcal{P}}' = \left(\dot{\mathcal{P}}^4 + \frac{2}{3}x_0 \dot{\mathcal{P}}^2 - \frac{4}{3}\dot{\mathcal{P}} + \Pi\right)^{1/2}.
\label{eq:BoutrouxElliptic}
\endeq

The remaining argument now breaks into two cases.  First, suppose one can find 
functions $S=S(x_0)$ and $\Delta=\Delta(x_0)$ satisfying 
$6S^2+3\Delta^2=-8x_0$ and $3S\Delta^2=16$.  Then \eqref{eq:BoutrouxElliptic} 
can be written as 
\eq
\label{eq:BoutrouxGen0}
(\dot\pp')^2 = \left(\dot\pp-\frac{S-\Delta}{2}\right)\left(\dot\pp-\frac{S+\Delta}{2}\right)\left(\dot\pp+\frac{S}{2}\right)^2
\endeq
provided that also $\Pi=S^2(S^2-\Delta^2)/8$.  Then it is evident that the 
constant function 
$\dot\pp=-S/2$ (independent of $w$) satisfies 
\eqref{eq:BoutrouxGen0}.  In fact, this will turn out to be the correct 
leading-order approximation to $\pp$ in most of the complex $x$-plane (see 
Theorem~\ref{main-genus-zero-thm}).

The second case is the generic one, in which the quartic on the 
right-hand side of \eqref{eq:BoutrouxElliptic} has distinct roots.  In this case, the solution to the differential equation \eqref{eq:BoutrouxElliptic} is a certain elliptic function of $w$ (unique up to translation in $w$), and since $x_0$ enters explicitly into one of the coefficients of the quartic on the right-hand side, the elliptic function will be slowly modulated as $x_0$ varies in the complex plane.  Thus one expects that, near some points $x_0$ in the complex $x$-plane at least, the rational function $\mathcal{P}_m$ is modeled by an elliptic function of a local variable $w$ satisfying the differential equation 
\eqref{eq:BoutrouxElliptic}.  A similar line of reasoning was followed by Boutroux \cite{Boutroux13} in his analysis of solutions of the first and second Painlev\'e equations in the limit that the independent variable tends to infinity.  It turns out that this result is essentially correct for $x_0$ inside 
the region $T$.  In this case, the constant of integration $\Pi$ will be tied to $x_0$ in a precise manner that is difficult to motivate from the simple 
reasoning given here.  

\subsection{Summary of Main Results}
Here we summarize our key results in words and provide references to the 
mathematically precise statements that follow.  We present our results in 
terms of the scaled variable 
\eq
x:=\left(m-\tfrac{1}{2}\right)^{-2/3}y
\endeq
for which the zeros and poles of the rational Painlev\'e-II functions densely 
fill out (as $m\to\infty$) the fixed, open, bounded, and simply-connected 
subset $T$ of the plane (the \emph{elliptic region}).  The behavior of the rational 
Painlev\'e-II functions is particularly difficult to formulate 
in a compact fashion for $x\in T$, so we delay the presentation of our results 
in full mathematical detail until the required machinery has been developed in 
\S\ref{section-gen0} and \S\ref{bulk-section}.  
We obtain our results by applying the Deift-Zhou steepest descent 
techniques to an appropriate Riemann-Hilbert problem specified in 
\S\ref{Riemann-Hilbert-section}.

\subsubsection{Asymptotic behavior for $x$ outside of the elliptic region $T$}
The asymptotic analysis of the rational Painlev\'e-II functions for $x$ 
outside of the elliptic region $T$ is more straightforward than the analysis 
in the elliptic region since, with the exception of the need to employ Airy 
functions to construct two parametrices in the standard fashion, all of the work involves elementary functions.  We obtain the following results.
\begin{itemize}
\item We characterize the boundary $\partial T$ of the elliptic region $T$ as explicitly as possible, expressing it as a level set of a function built from explicit elementary functions and the solution of a cubic equation (see \eqref{cubic-equation}, \eqref{S-large-x}, and \eqref{eq:EdgeCondition}).
\item We obtain explicit (up to the solution of the cubic equation \eqref{cubic-equation}) asymptotic formulae in the limit of large $m\in\mathbb{Z}_+$ for the rational Painlev\'e-II functions $\pu_m$, $\pv_m$, $\pp_m$, and $\pq_m$ for $x\notin\overline{T}$ (see Theorem~\ref{main-genus-zero-thm}).  The result for $\pp_m$ confirms that 
the (non-elliptic) Boutroux ansatz is correct for $x\notin\overline{T}$.  The order of accuracy is uniformly $\mathcal{O}(m^{-1})$ and the rescaled variable $x$ is allowed to approach the edges (but not the corners) of the boundary $\partial T$ of the elliptic region from the outside at a distance proportional to $\log(m)/m$ with no loss in the rate of decay of the error terms.  
\end{itemize}
The accuracy of the asymptotic approximations we construct for $x\in\mathbb{C}\setminus\overline{T}$ is illustrated (for $x\in\mathbb{R}$) in Figures~\ref{exact-vs-asymp-um-genus0} and \ref{exact-vs-asymp-pm-genus0}, wherein
\eq
x_c:=\inf (T\cap\mathbb{R}) \quad \text{and} \quad x_e:=\sup (T\cap\mathbb{R}).
\endeq
To generate these plots requires only the solution of the cubic equation \eqref{cubic-equation} for every $x$ of interest, a task easily accomplished numerically.
\begin{figure}[h]
\includegraphics[width=2in]{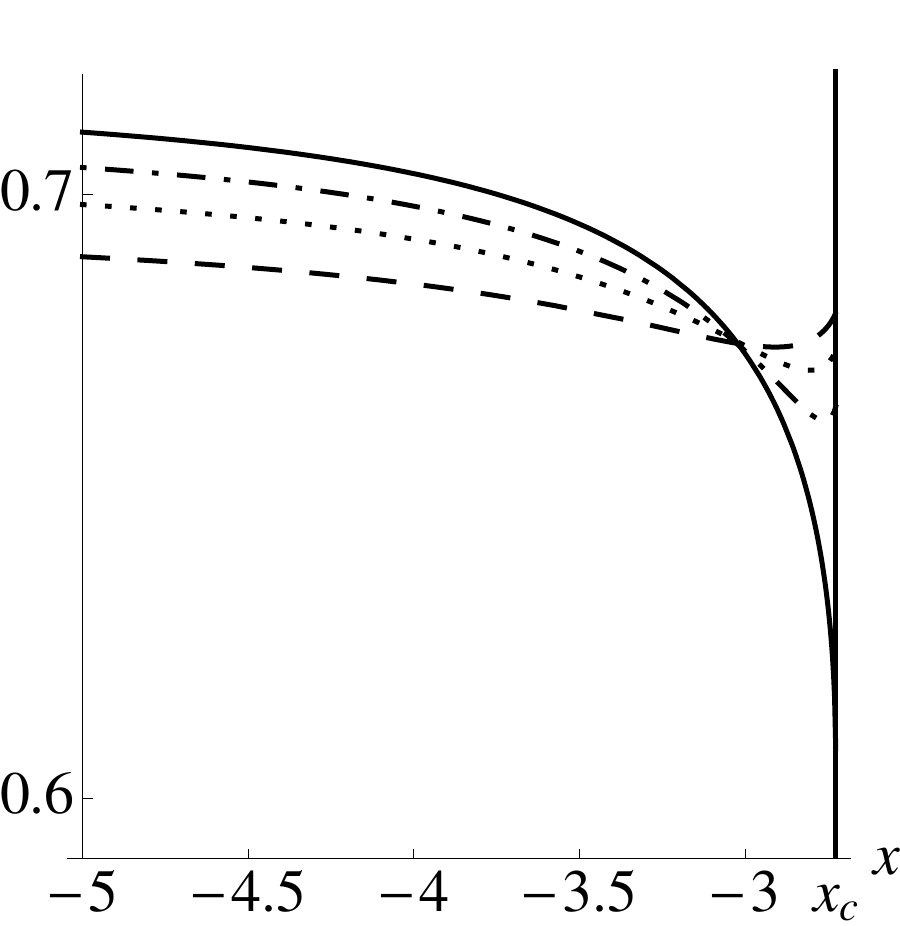}
\includegraphics[width=2in]{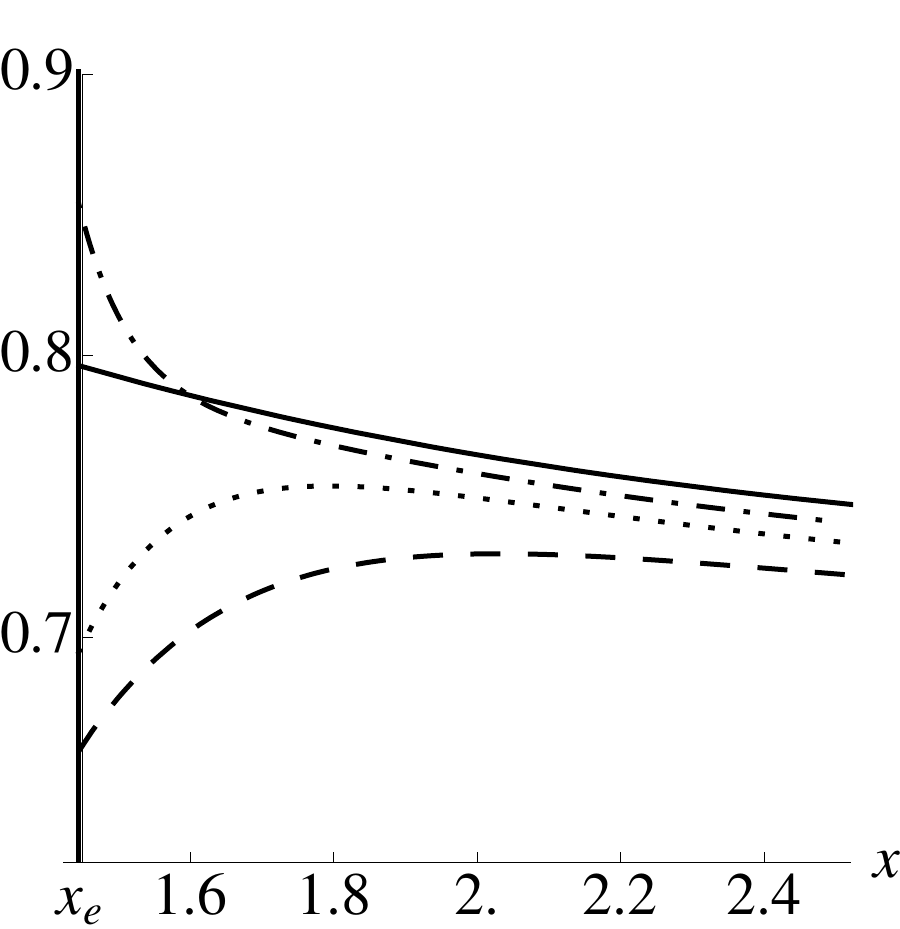}
\caption{\emph{Illustration of the approximation of $m^{-2m/3}e^{-m\lambda(x)}\pu_m((m-\tfrac{1}{2})^{2/3}x)$ (dashed lines: $m=3$, dotted lines: $m=5$, dot-dashed lines: $m=10$) by $\dot{\pu}(x)$ (solid lines) for $x<x_c$ (left plot) and $x>x_e$ (right plot).  The function $\lambda(x)$ is defined in \eqref{eq:lambdadef} and $\dot{\pu}(x)$ in \eqref{g0-pudot-pvdot-ppdot-pqdot}.  The right plot illustrates that the nature of the convergence at $x=x_e$ depends on if $m$ is even or odd (due to the behavior of $\pu_m(y)$ at its largest positive pole). }}
\label{exact-vs-asymp-um-genus0}
\end{figure}
\begin{figure}[h]
\includegraphics[width=2in]{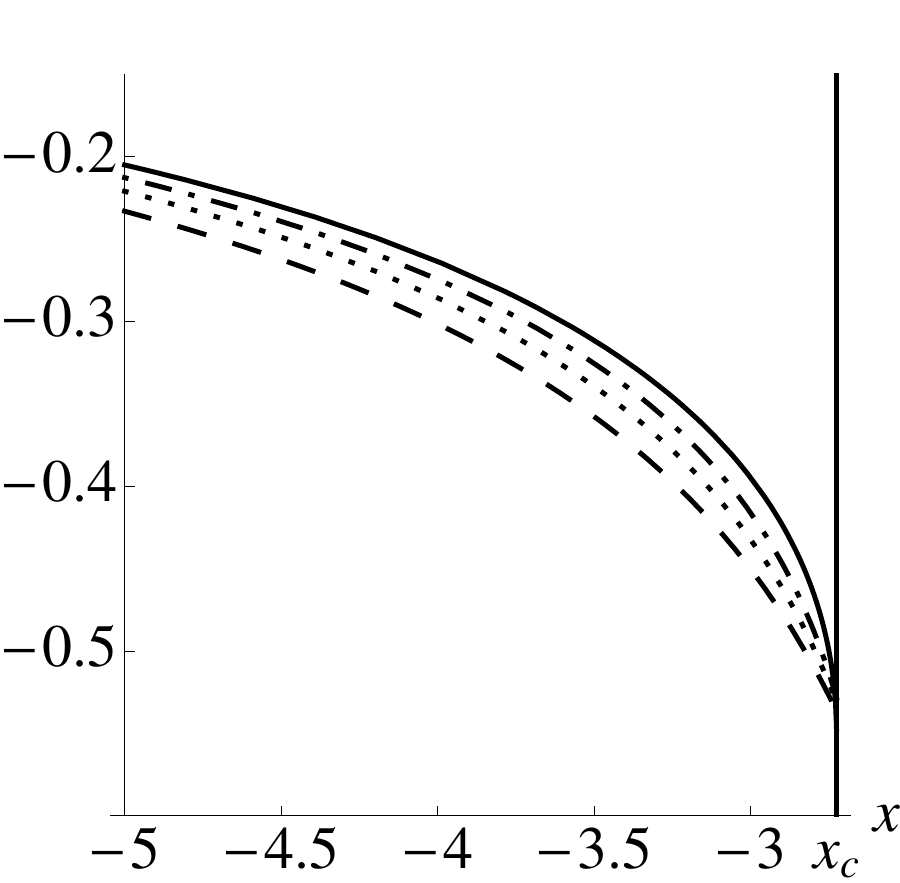}
\includegraphics[width=2in]{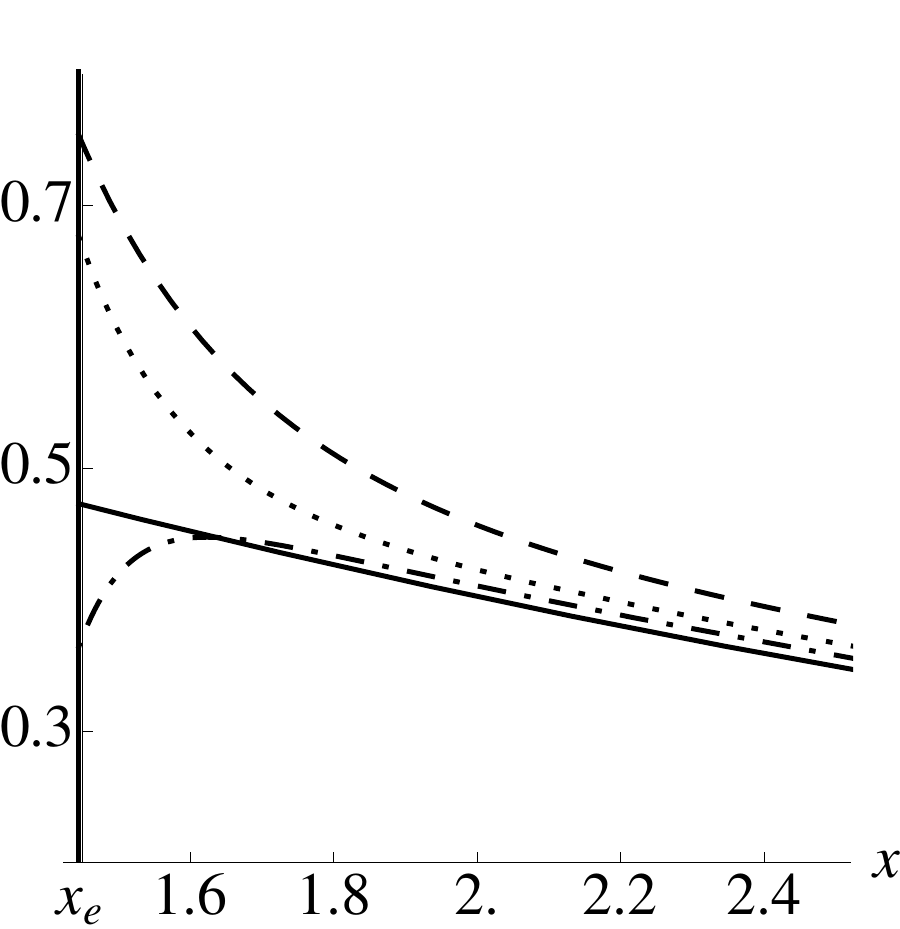}
\caption{\emph{Plots showing the asymptotic approximation of $m^{-1/3}\pp_m((m-\tfrac{1}{2})^{2/3}x)$ (dashed lines: $m=3$, dotted lines: $m=5$, dot-dashed lines: $m=10$) by $\dot{\pp}(x)$ (solid lines) for $x<x_c$ (left plot) and $x>x_e$ (right plot).  The function $\dot{\pp}(x)$ is defined in \eqref{g0-pudot-pvdot-ppdot-pqdot}.  The right plot illustrates that the nature of the convergence at $x=x_e$ depends on if $m$ is even or odd (due to the behavior of $\pp_m(y)$ at its largest positive pole). }}
\label{exact-vs-asymp-pm-genus0}
\end{figure}

\subsubsection{Asymptotic behavior for $x$ inside the elliptic region $T$}  

We provide a rigorous justification to the formal argument of the Boutroux 
ansatz method in the generic (elliptic) case when $x\in T$. 
The form of the quartic $z^4+\tfrac{2}{3}x_0z^2-\tfrac{4}{3}z+\Pi$ appearing in the Boutroux ansatz method (see the right-hand side of \eqref{eq:BoutrouxElliptic}) arises in a completely different way, through the imposition of certain \emph{moment conditions} (see \eqref{eq:g1-moments}) needed to construct an appropriate $g$-function (a key ingredient in the Deift-Zhou method).  The integration constant $\Pi$ is determined as a function of $x_0$ in order that certain constant exponents that occur in the jump matrices of a model Riemann-Hilbert problem are purely imaginary, ensuring that the solution of the model problem is suitably bounded and that errors can be controlled.  The precise conditions that determine $\Pi$ we call \emph{Boutroux conditions}, and they take the form \eqref{eq:g1-Boutroux}, or equivalently, \eqref{eq:BoutrouxConditions}.  
It turns out that $\Pi$ is a complex-valued function of $x_0$ that is smooth 
(i.e., $\text{Re}(\Pi)$ and $\text{Im}(\Pi)$ are infinitely differentiable 
functions of $\text{Re}(x_0)$ and $\text{Im}(x_0)$) but nowhere analytic for 
$x_0\in T$.  This fact leads to certain challenges in interpreting the asymptotic formulae for the rational Painlev\'e-II functions that we discuss at length in \S\ref{sec:g1-approximate-formulae}.  Our main results are the following.
\begin{itemize}
\item We obtain asymptotic formulae for all four rational Painlev\'e-II functions $\pu_m$, $\pv_m$, $\pp_m$, and $\pq_m$ in terms of the solution of certain $m$-independent algebraic equations that we prove exists (and that we are able to effectively implement numerically) and Riemann theta functions in whose arguments $m$ appears explicitly.  See Theorem~\ref{theorem-g1-basic}.  Significantly, we are able to obtain accuracy in a suitable reciprocal sense even near points of $T$ at which there exist (necessarily simultaneous and simple) poles of $\pu_m$ and $\pv_m$, and hence there is no solution whatsoever to the original Riemann-Hilbert problem formulated in \S\ref{Riemann-Hilbert-section} below.  We achieve this using B\"acklund transformations to essentially turn each pole into a zero of a related function that we can analyze.
\item As a corollary (see Corollary~\ref{cor:g1-pole-zero-approx}) we prove that uniformly for $x$ in compact subsets of $T$ (that is, avoiding $\partial T$) the zeros and poles of the rational Painlev\'e-II functions each lie within a distance proportional to $m^{-2}$ in the $x=(m-\tfrac{1}{2})^{-2/3}y$ independent variable from exactly one corresponding zero or pole of the (mostly) explicit approximating functions.  The distance between nearest neighbor poles or zeros scales as $m^{-1}$ in the $x$-plane.
\item We prove a distributional convergence result for the rescaled rational Painlev\'e-II function 
$m^{-1/3}\pp_m$ considered as a function of $x$, in which the rapid fluctuations of the rational function modeled by elliptic functions within $T$ are locally averaged in two dimensions to produce a genuine $m$-independent limit function that we call $\langle\dot{\pp}\rangle(x)$.  See Theorem~\ref{theorem:g1-weak-limit}.  Combining this result with the strong convergence result we obtain in Theorem~\ref{main-genus-zero-thm} for $x$ outside $T$, we define a ``macroscopic limit'' formula for $m^{-1/3}\pp_m$ that we call $\dot{\pp}_\mathrm{macro}(x)$, and that is a valid distributional limit for all $x$ away from $\partial T$.  See Corollary~\ref{corollary:g1-global-weak-convergence} for this global weak convergence result.
\item We obtain a similar distributional convergence result for $m^{-1/3}\pp_m((m-\tfrac{1}{2})^{2/3}x)$ now considered as a real-valued function of a real variable.  Here due to simple poles the integrals against test functions are defined in the principal value sense.  See Theorem~\ref{theorem:g1-weak-limit-real}.
\item We calculate the asymptotic density of poles of $\pu_m$ in the complex $x$-plane near an arbitrary point $x\in T$ and express it in terms of $m$-independent quantities that are easy to calculate numerically as functions of $x\in T$.  We also calculate the linear density of real poles of $\pu_m$ for $x\in T\cap \mathbb{R}$.  See Theorem~\ref{theorem:g1-density-equalities}.
\end{itemize}
We wish to emphasize that our asymptotic formulae are effective for numerical computations.
In fact, we were surprised at the accuracy of the approximate formulae; to the eye they are remarkably accurate for $m$ as small as $m=2$ or $m=3$ even though we only prove their accuracy in the asymptotic limit $m\to +\infty$.  
In Figure~\ref{fig:g1-U-compare-T} we plot the leading term of a suitably exponentially renormalized version of $\pu_m$ that is a valid approximation for $x\in T$ and superimpose the actual locations of the poles and zeros as numerically calculated by root-finding applied to formulae generated from the B\"acklund transformations \eqref{backlund-positive}--\eqref{backlund-negative}.  In Figures~\ref{fig:g1-U-compare-real}--\ref{fig:g1-U-compare-PiBySix-ImagParts} we compare the renormalized version of $\pu_m$ with its leading-order approximation for various $m$ on $\mathbb{R}\cap T$ and $e^{i\pi/6}\mathbb{R}\cap T$ (on which the approximation is pole-free).  
Figures~\ref{fig:g1-P-compare-real}--\ref{fig:g1-P-compare-PiBySix-ImagParts} do the same for the function $\pp_m$ and its leading-order approximation.  In this case when we take a real section of $T$ we can also compare to the distributional (weak) limit described by Theorem~\ref{theorem:g1-weak-limit-real}.  In Figure~\ref{fig:g1-densities} we plot the planar and linear pole density functions for $\pu_m$.  A Mathematica code for producing these and other figures is available from the authors upon request.
\begin{figure}[H]
\begin{center}
\includegraphics[width=2.3 in]{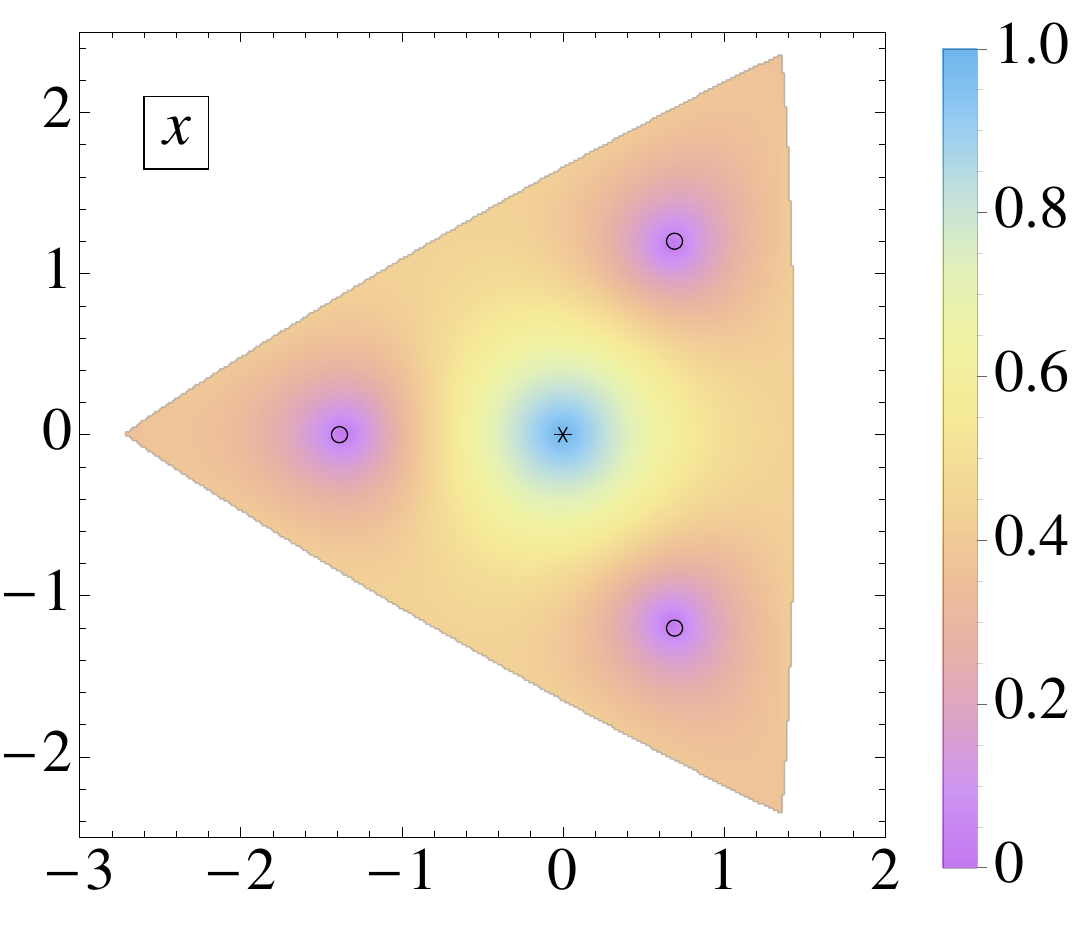}\hspace{0.5 in}
\includegraphics[width=2.3 in]{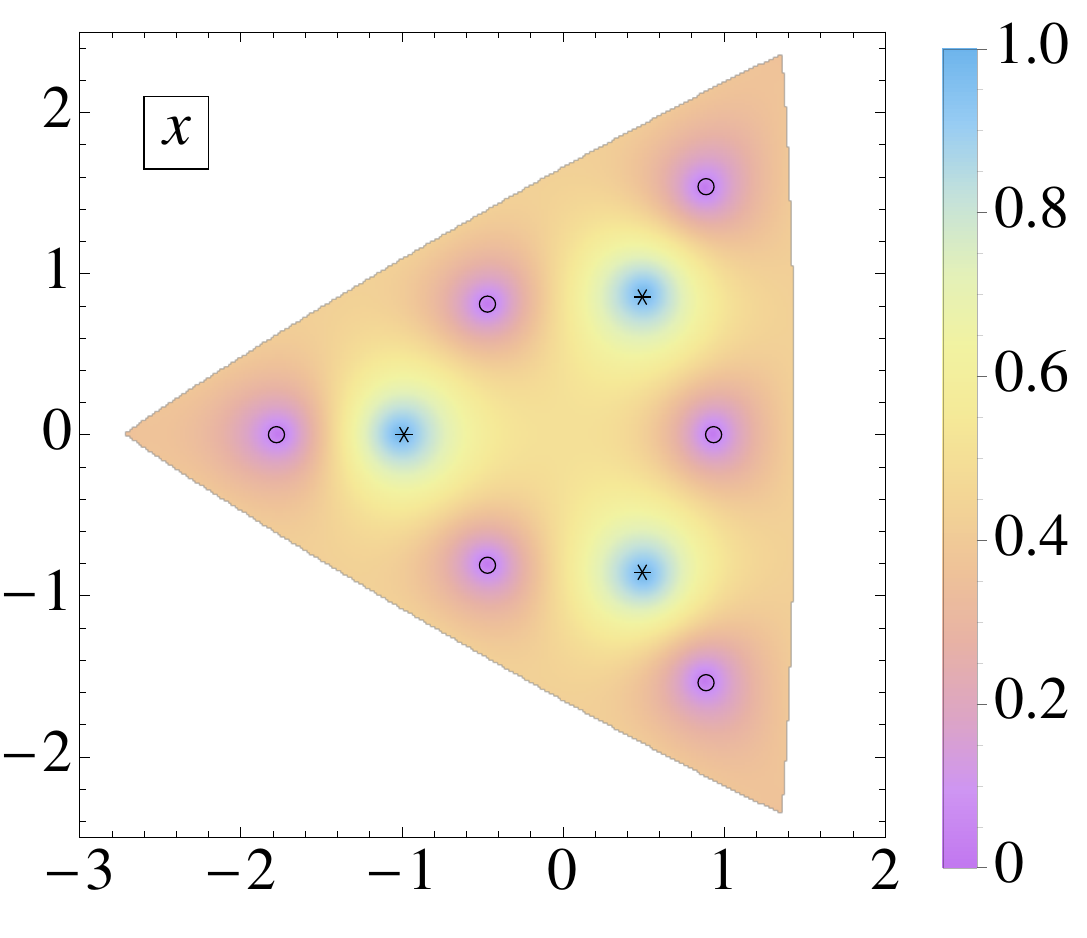}\\
\includegraphics[width=2.3 in]{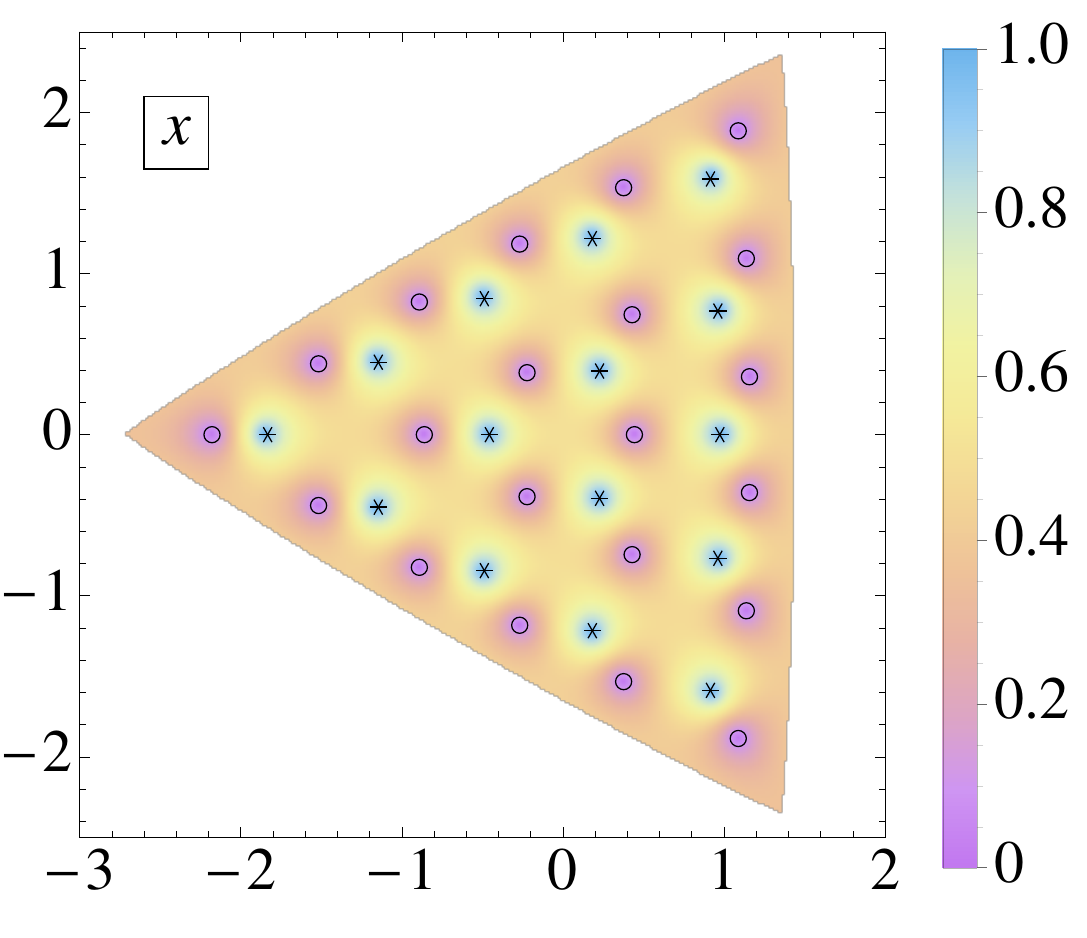}\hspace{0.5 in}
\includegraphics[width=2.3 in]{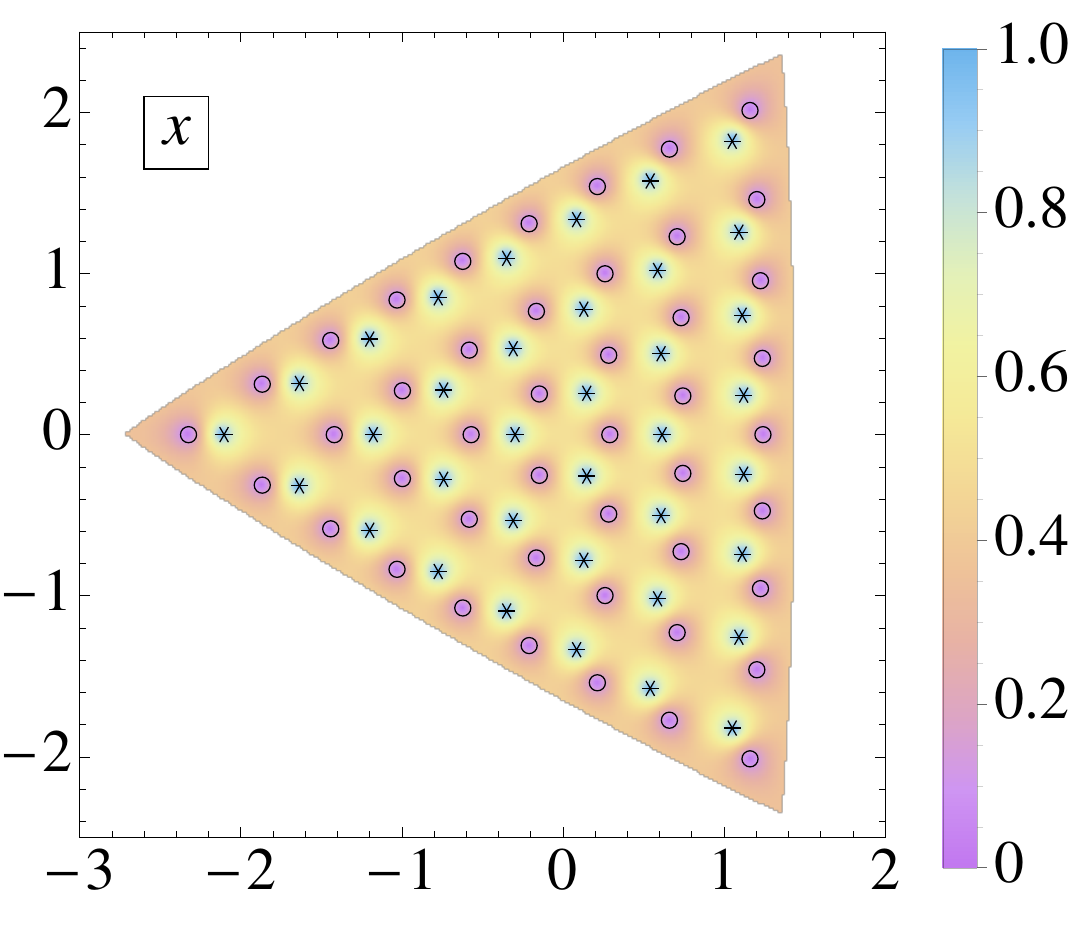}
\end{center}
\caption{\emph{Density plots of $\tfrac{2}{\pi}\arctan(|\dot{\mathcal{U}}_m(0;x)|)$ for $m=2$ (upper left), $m=3$ (upper right), $m=6$ (lower left), and $m=9$ (lower right).  Superimposed with circles and asterisks respectively are the exact locations of the zeros and poles of $\mathcal{U}_m((m-\tfrac{1}{2})^{2/3}x)$.  The function $\dot{\pu}_m(w;x_0)$ is defined in \eqref{eq:g1-dotU-formula} and 
\eqref{eq:g1-UVdot-redefine}.}}
\label{fig:g1-U-compare-T}
\end{figure}

\begin{figure}[H]
\begin{center}
\hspace{-.1in}
\includegraphics[width=2 in]{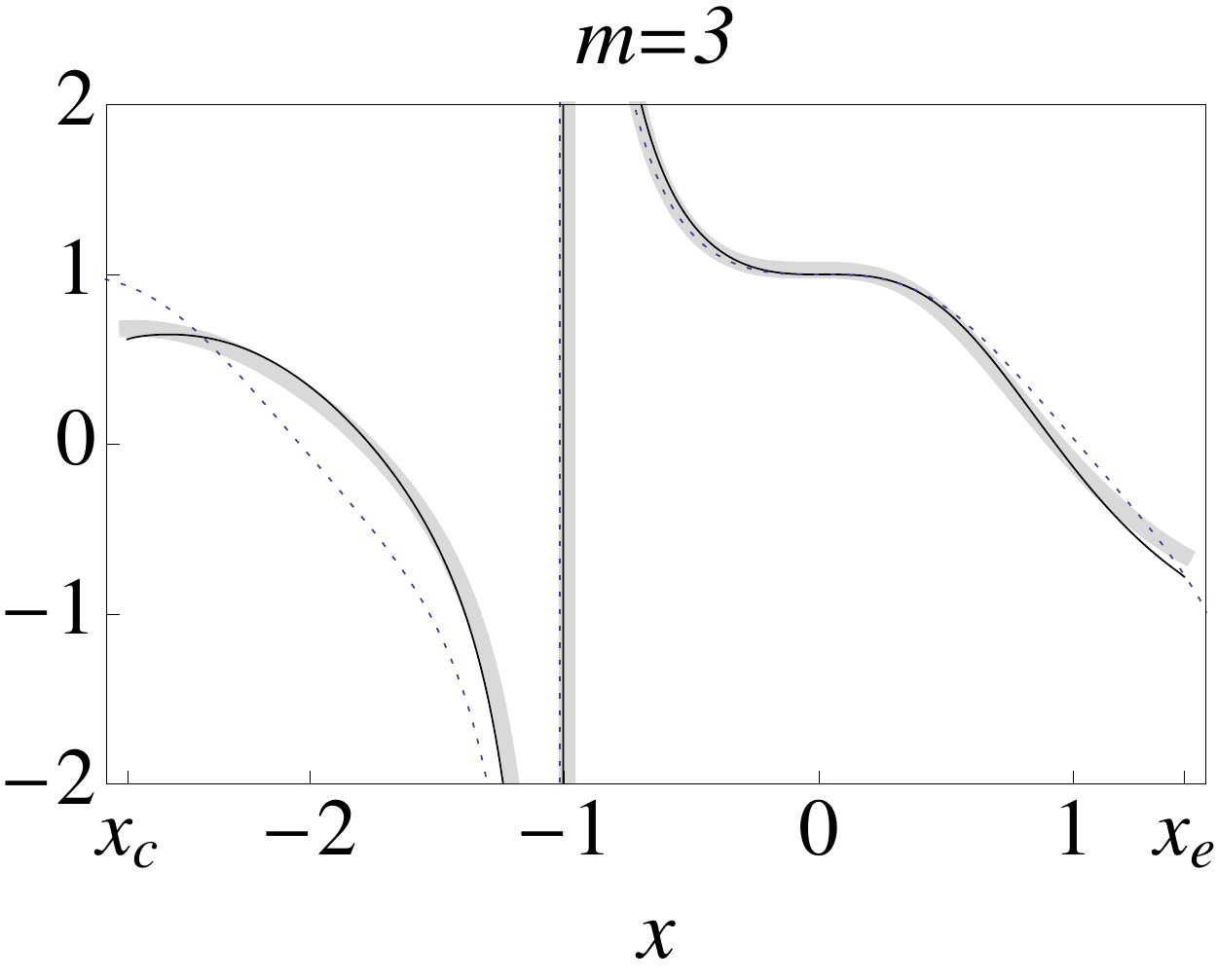}\hspace{0.5 in}
\includegraphics[width=2 in]{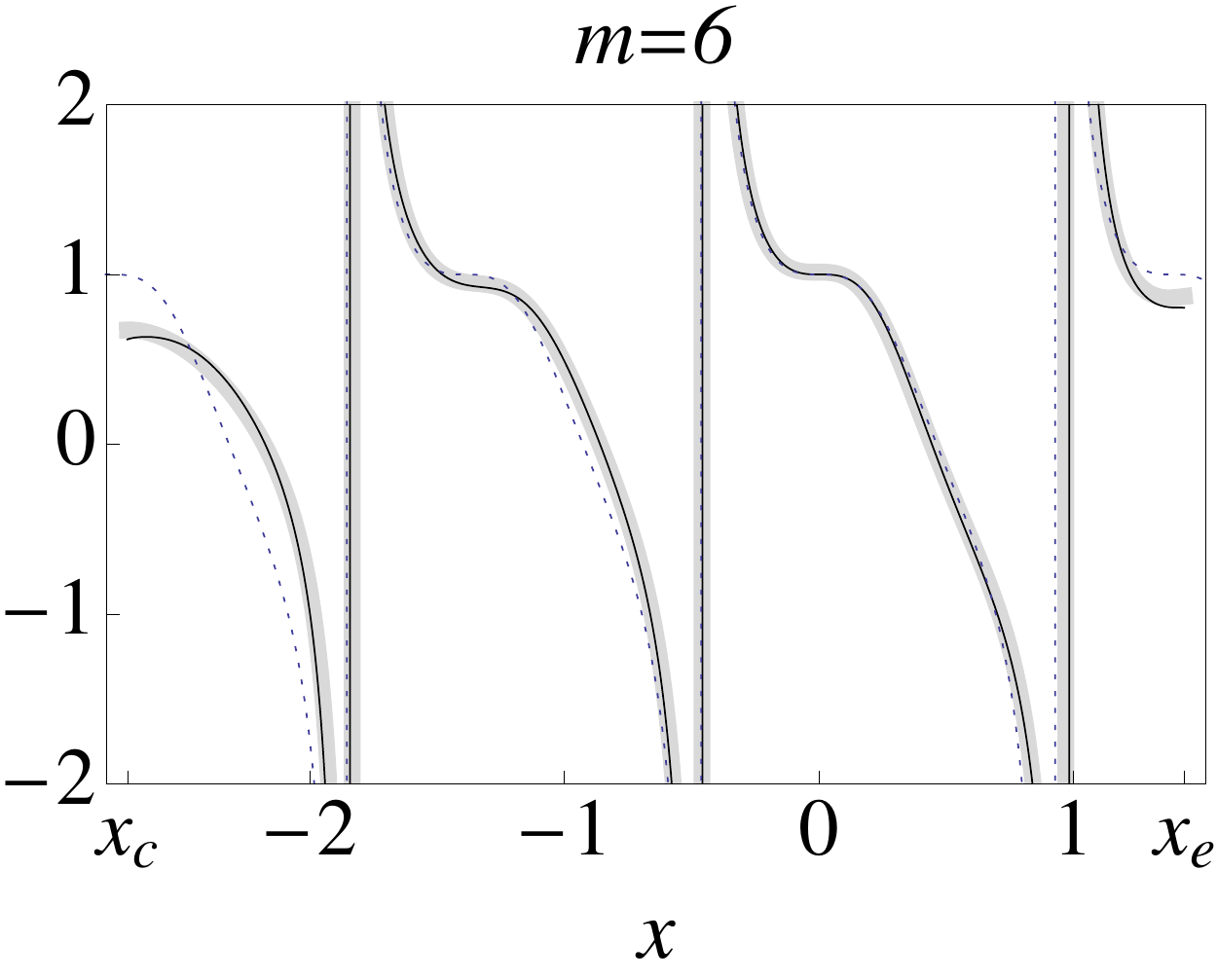}\hspace{0.2in}\\
\includegraphics[width=2 in]{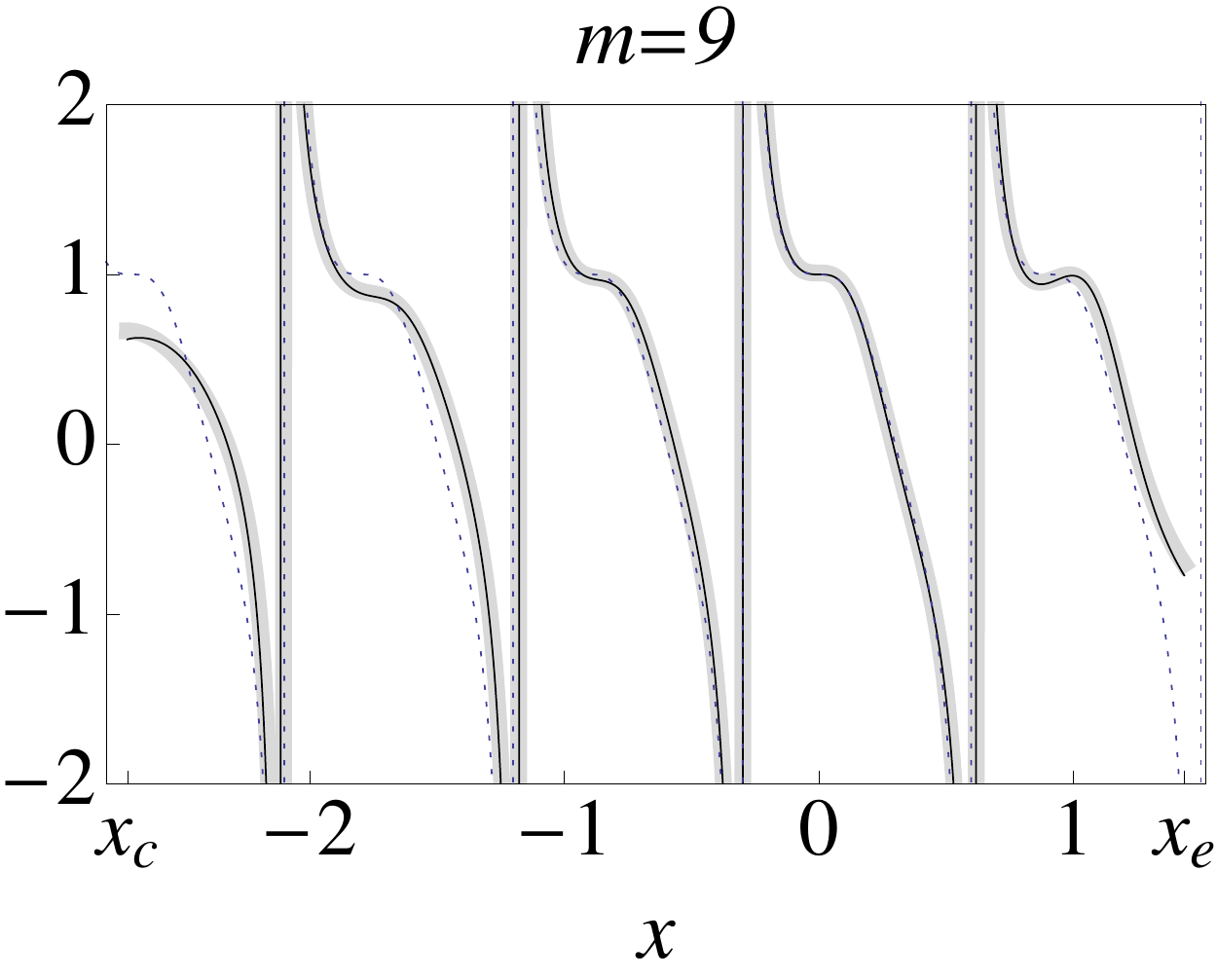}\hspace{0.5 in}
\raisebox{0.5 in}{\includegraphics[width=2 in]{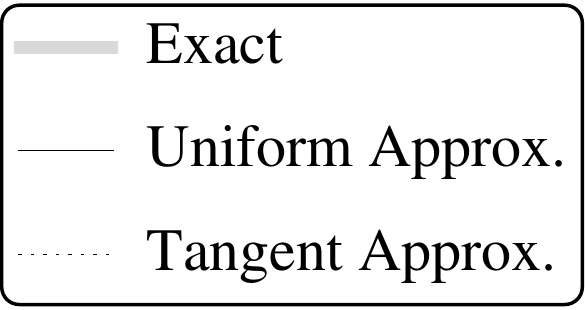}}
\end{center}
\caption{\emph{Comparing $m^{-2m/3}e^{-m\Lambda(x)}\mathcal{U}_m((m-\tfrac{1}{2})^{2/3}x)$ (thick gray curves), its uniform approximation $\dot{\mathcal{U}}_m(0;x)$ (black curves), and a tangent approximation based at the origin $\dot{\mathcal{U}}_m((m-\tfrac{1}{2})x,0)$ (dotted curves) as real functions of $x=x_0\in[x_c,x_e]$ for various values of $m$.  The function $\Lambda(x)$ is defined in \eqref{eq:g1-Lambda} and the function $\dot{\pu}_m(w;x_0)$ is defined in 
\eqref{eq:g1-dotU-formula} and \eqref{eq:g1-UVdot-redefine}.}}
\label{fig:g1-U-compare-real}
\end{figure}
\vspace{-.1in}
\begin{figure}[H]
\begin{center}
\hspace{-.1in}
\includegraphics[width=2 in]{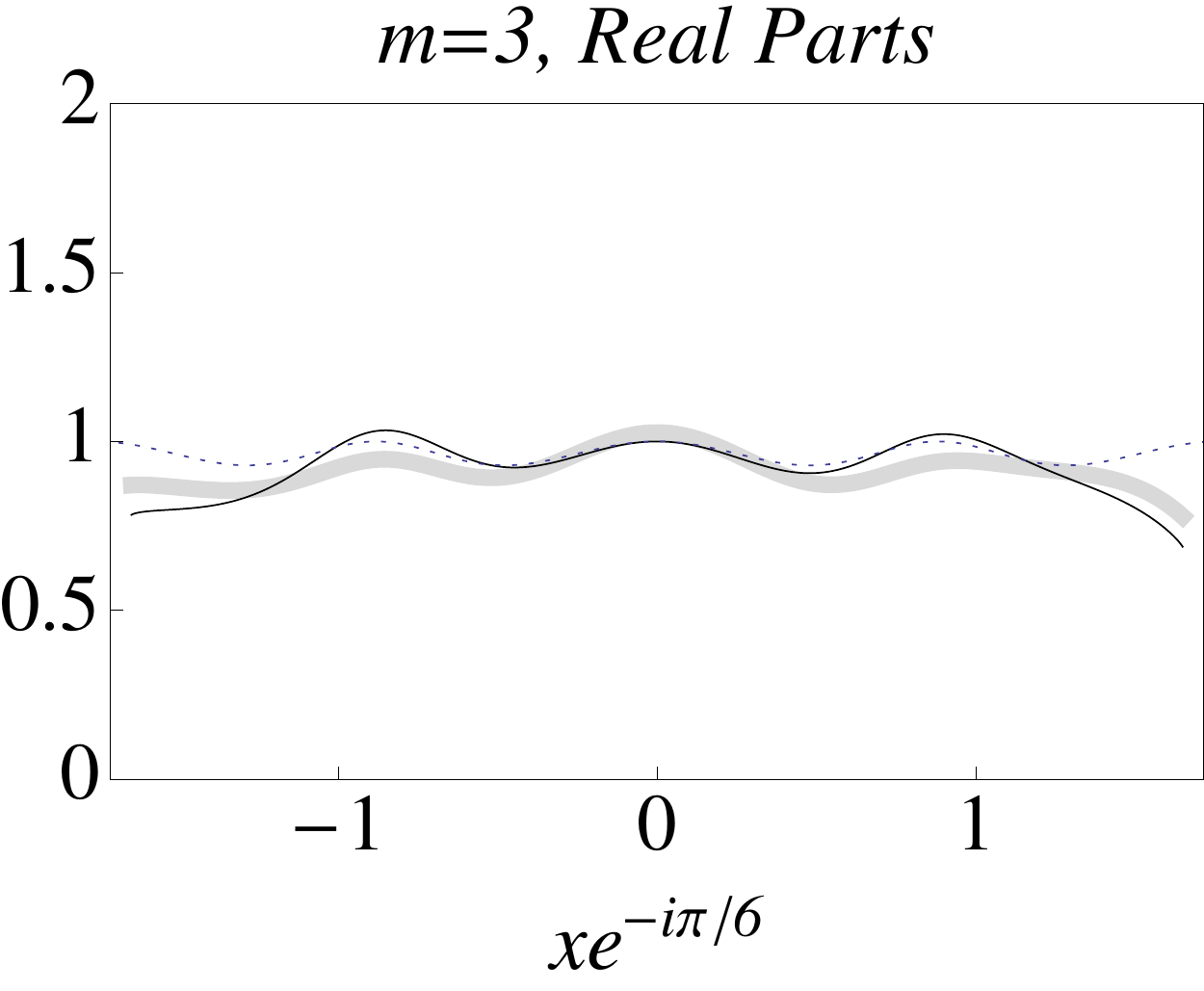}\hspace{0.5 in}%
\includegraphics[width=2 in]{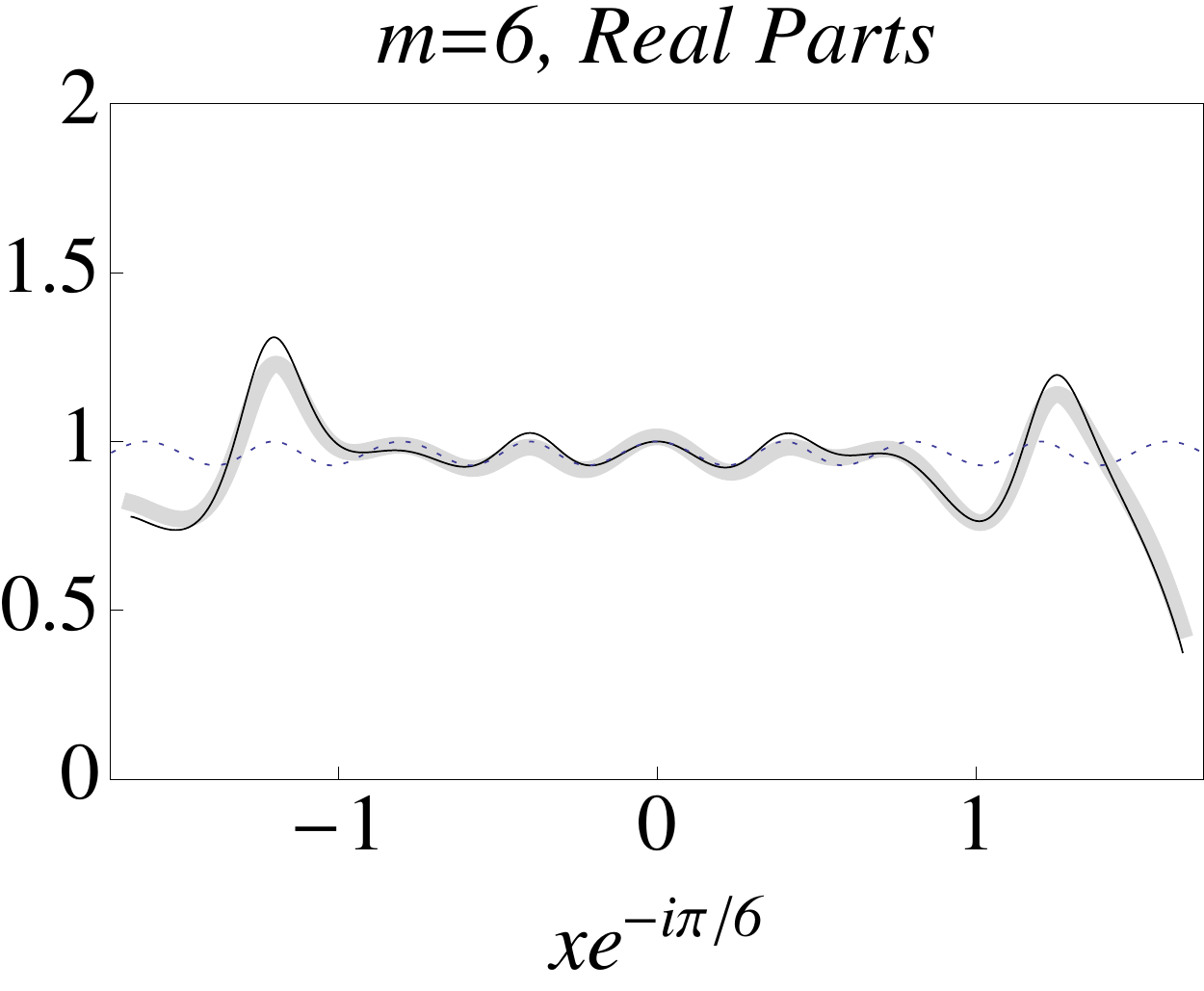}\hspace{0.2in}\\
\includegraphics[width=2 in]{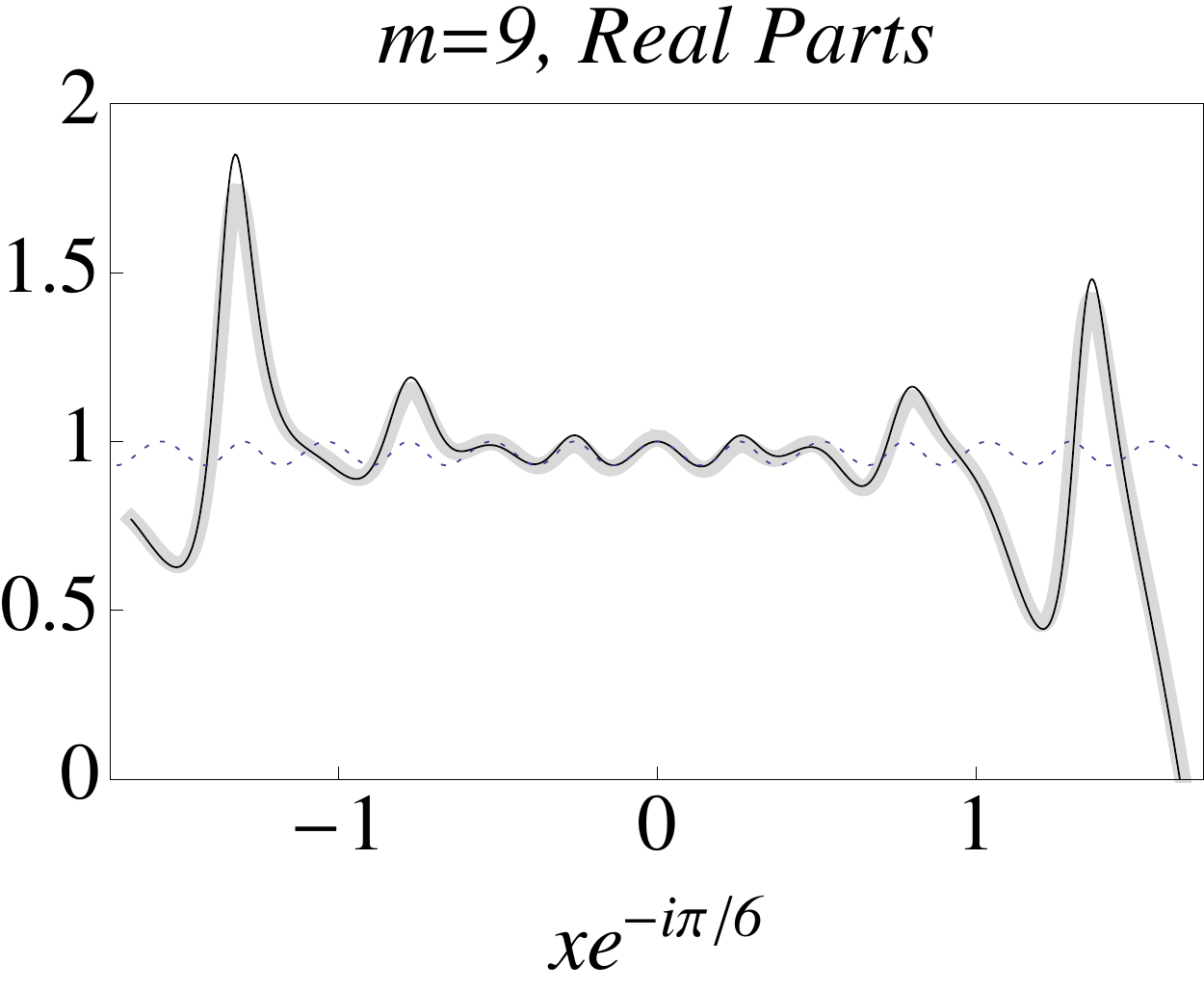}\hspace{0.5 in}
\raisebox{0.5 in}{\includegraphics[width=2 in]{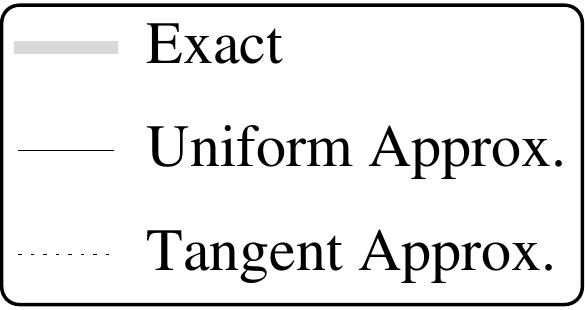}}
\end{center}
\caption{\emph{Comparing the real parts of $m^{-2m/3}e^{-m\Lambda(x)}\mathcal{U}_m((m-\tfrac{1}{2})^{2/3}x)$ (thick gray curves), the corresponding uniform approximation $\dot{\mathcal{U}}_m(0;x)$ (black curves), and tangent approximation $\dot{\mathcal{U}}_m((m-\tfrac{1}{2})x,0)$ (dotted curves) evaluated for $x\in e^{i\pi/6}\mathbb{R}\cap T$ for various values of $m$.  Here $\Lambda(x)$ is defined in \eqref{eq:g1-Lambda} and $\dot{\pu}_m(w;x_0)$ is defined in \eqref{eq:g1-dotU-formula} and \eqref{eq:g1-UVdot-redefine}.}}
\label{fig:g1-U-compare-PiBySix-RealParts}
\end{figure}
\begin{figure}[H]
\begin{center}
\hspace{-.1in}
\includegraphics[width=2 in]{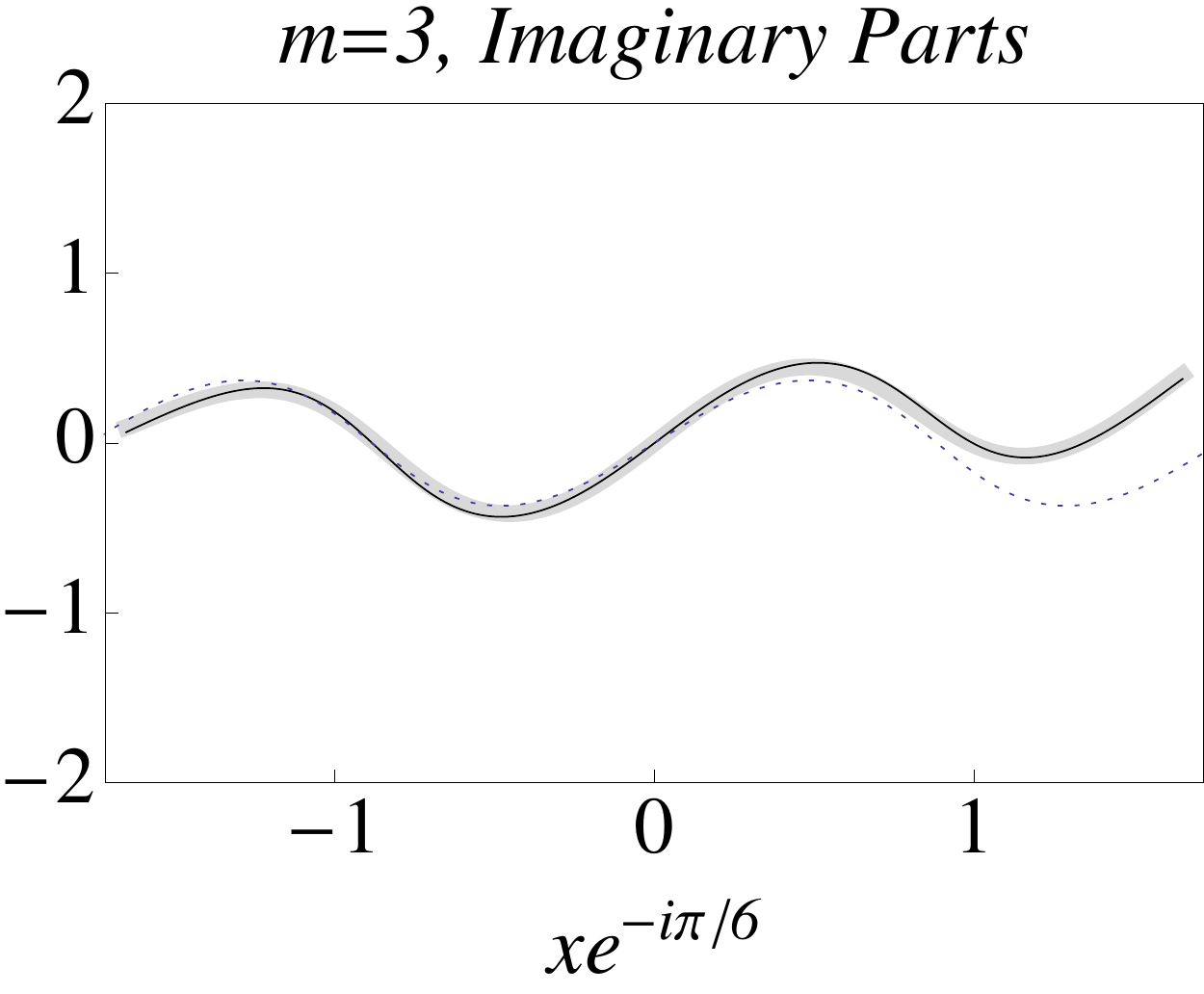}\hspace{0.5 in}%
\includegraphics[width=2 in]{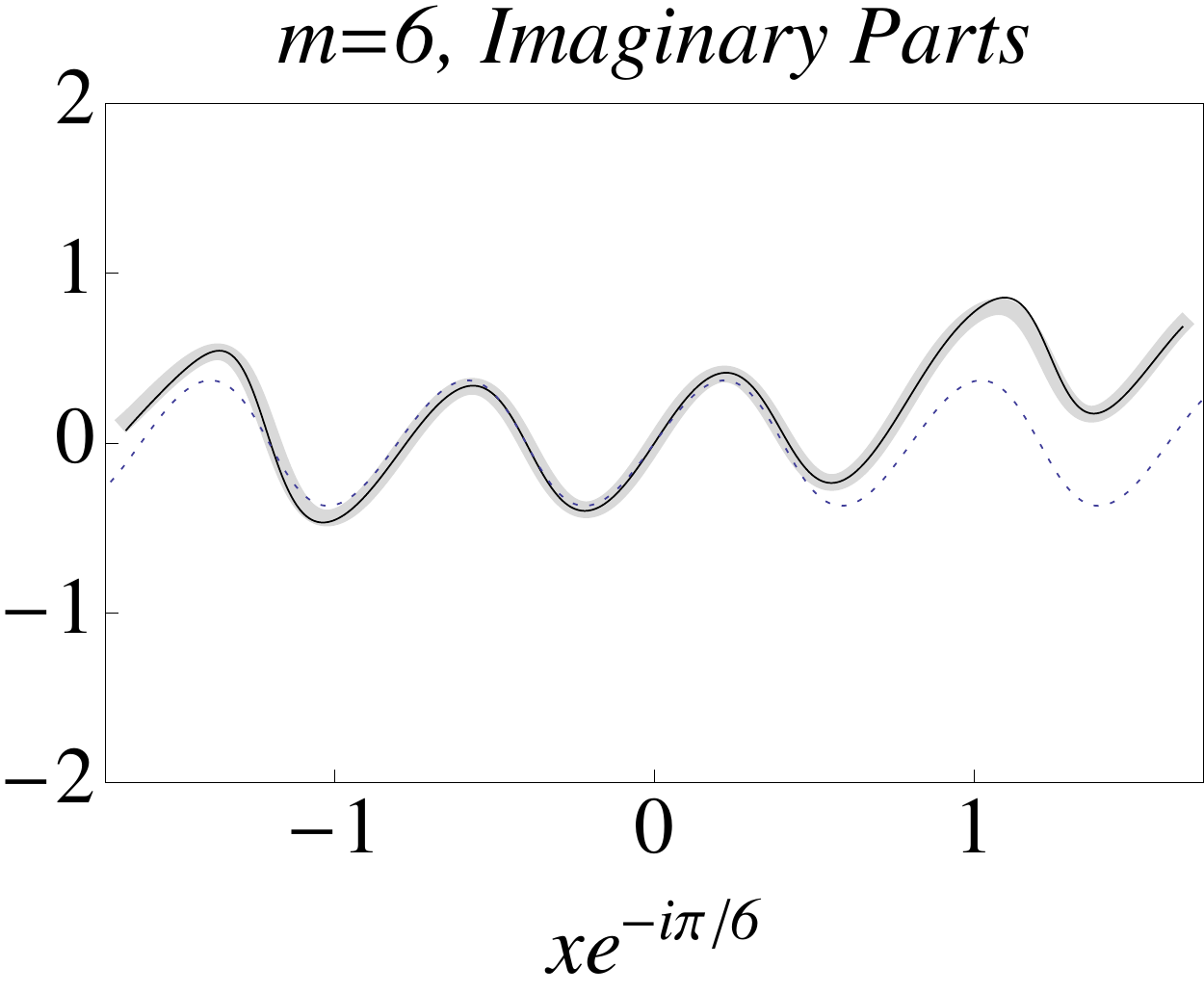}\hspace{0.2in}\\
\includegraphics[width=2 in]{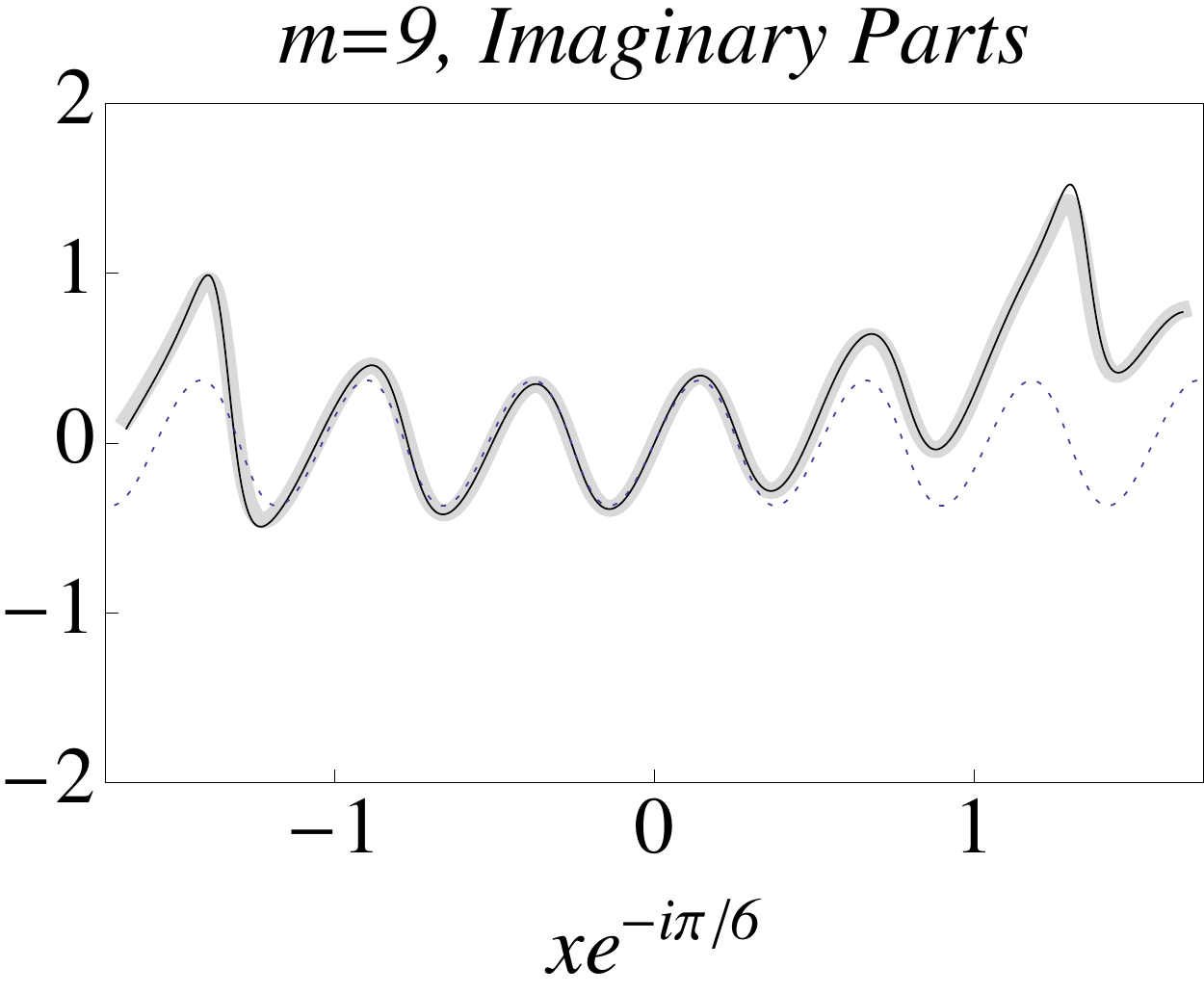}\hspace{0.5 in}
\raisebox{0.5 in}{\includegraphics[width=2 in]{PiBySixULegend.pdf}}
\end{center}
\caption{\emph{Same as Figure~\ref{fig:g1-U-compare-PiBySix-RealParts} except now the imaginary parts of the three functions are compared.}}
\label{fig:g1-U-compare-PiBySix-ImagParts}
\end{figure}
\begin{figure}[H]
\begin{center}
\hspace{-.1in}
\includegraphics[width=2 in]{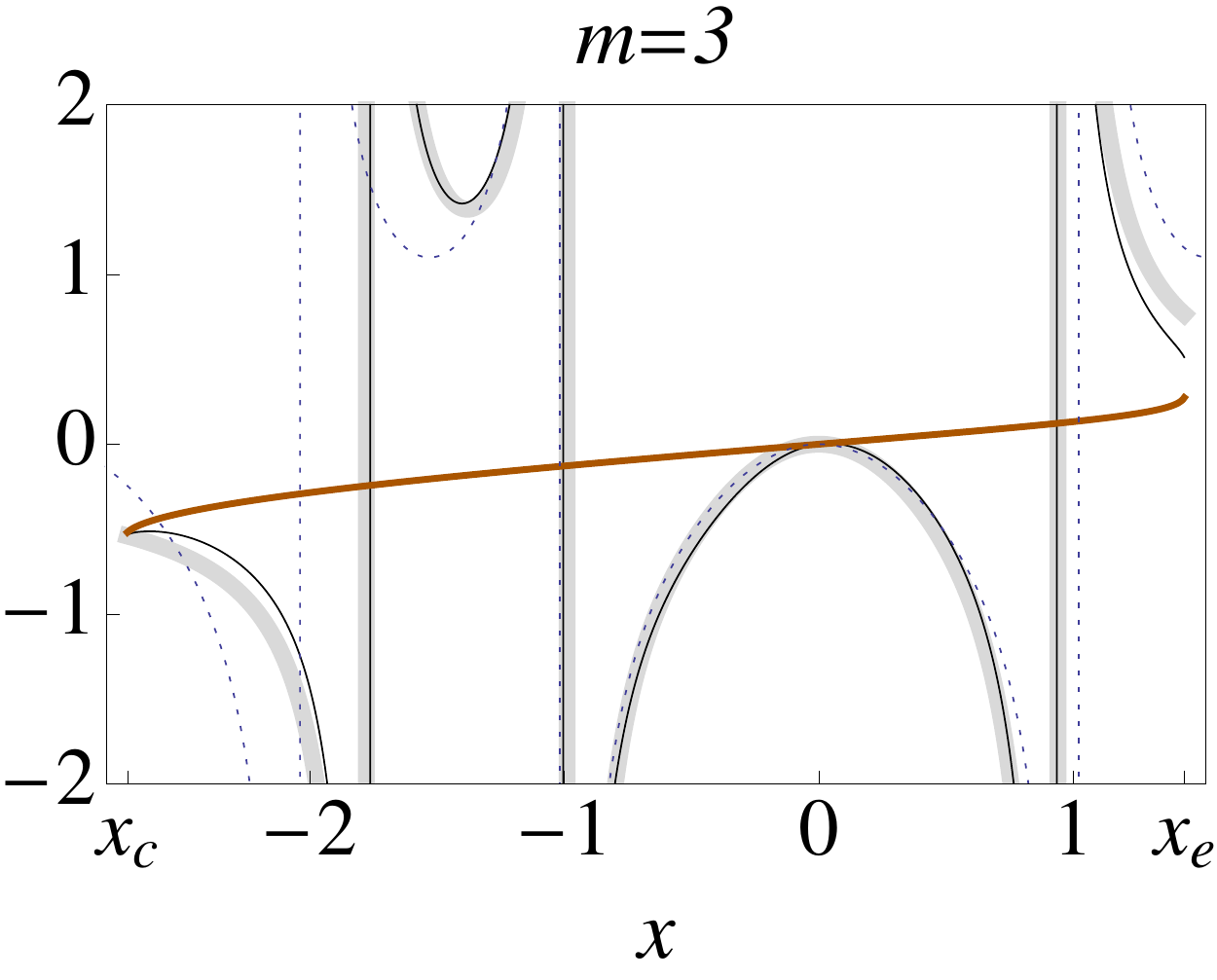}\hspace{0.5 in}%
\includegraphics[width=2 in]{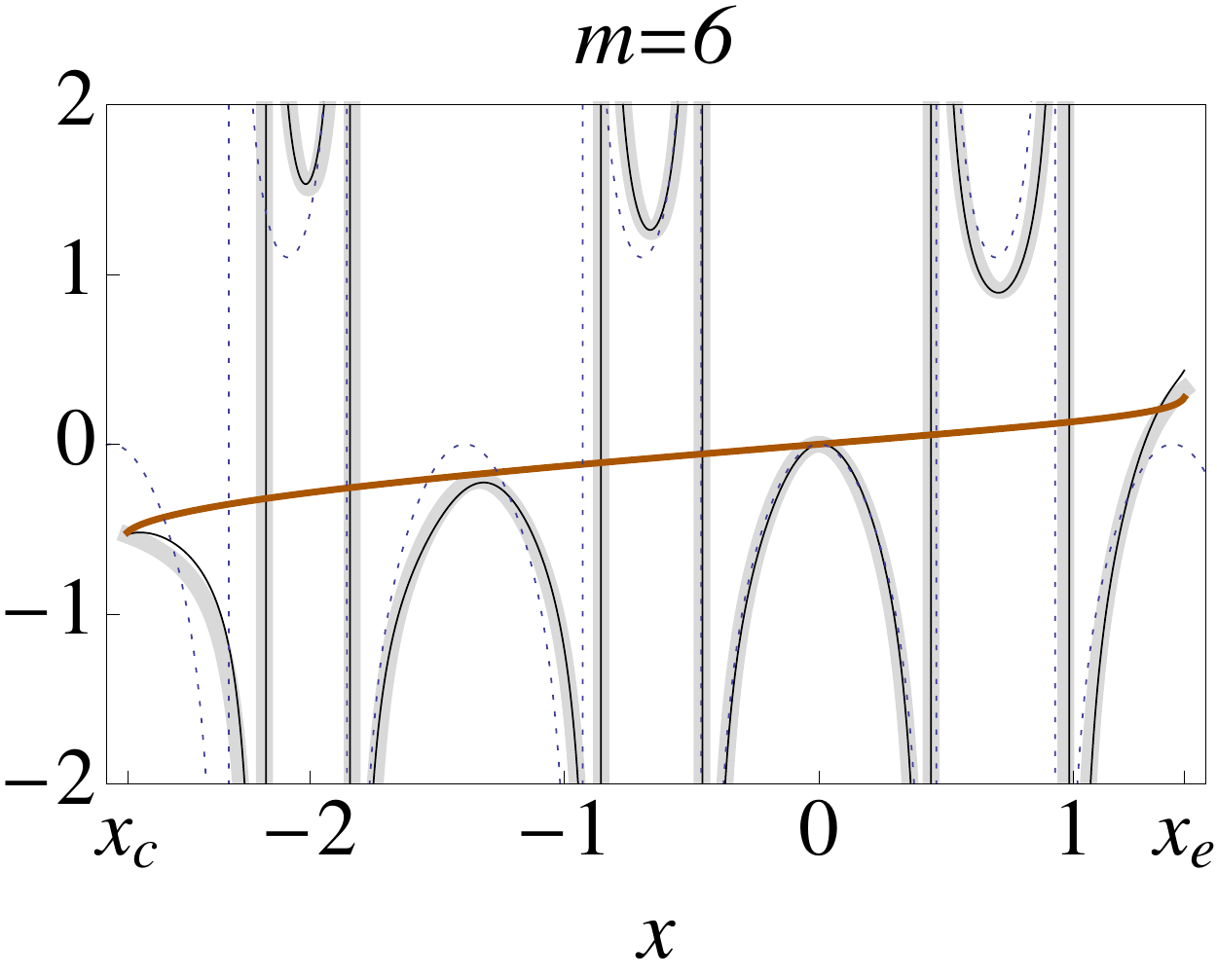}\hspace{0.2in}\\
\includegraphics[width=2 in]{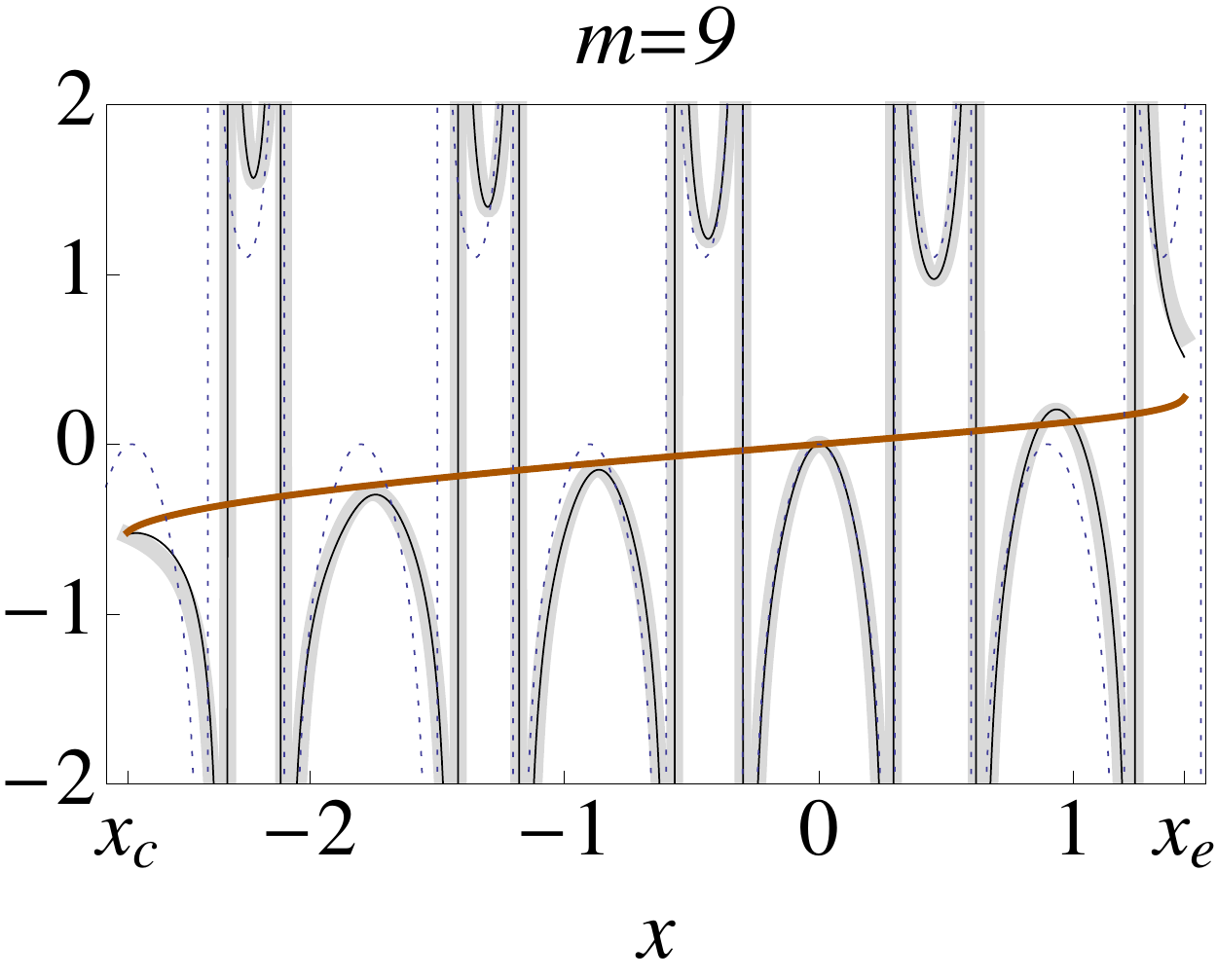}\hspace{0.5 in}
\raisebox{0.3 in}{\includegraphics[width=2 in]{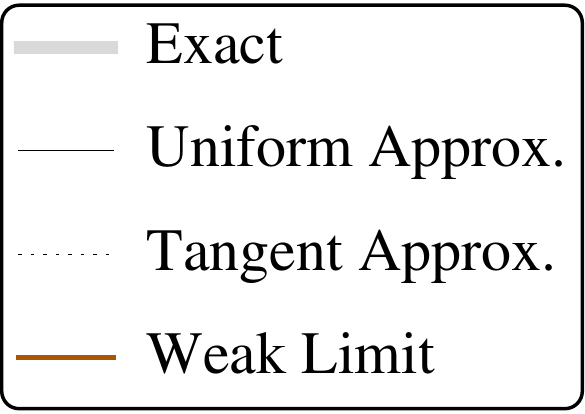}}
\end{center}
\caption{\emph{Comparing $m^{-1/3}\mathcal{P}_m((m-\tfrac{1}{2})^{2/3}x)$ (thick gray curves), its uniform approximation $\dot{\mathcal{P}}_m(0;x)$ (thin black curves), a tangent approximation based at the origin $\dot{\mathcal{P}}_m((m-\tfrac{1}{2})x,0)$ (dotted curves), and the weak limit $\langle\dot{\mathcal{P}}\rangle_\mathbb{R}(x)$ ($m$-independent medium brown curve) as real functions of $x=x_0\in[x_c,x_e]$ for various values of $m$.  The function
$\dot{\pp}_m(w;x_0)$ is defined in \eqref{eq:g1-dotP-formula} and
$\langle\dot{\pp}\rangle_\mathbb{R}(x)$ in \eqref{eq:g1-dotP-real-average}.}}
\label{fig:g1-P-compare-real}
\end{figure}
\begin{figure}[H]
\begin{center}
\hspace{-.1in}
\includegraphics[width=2 in]{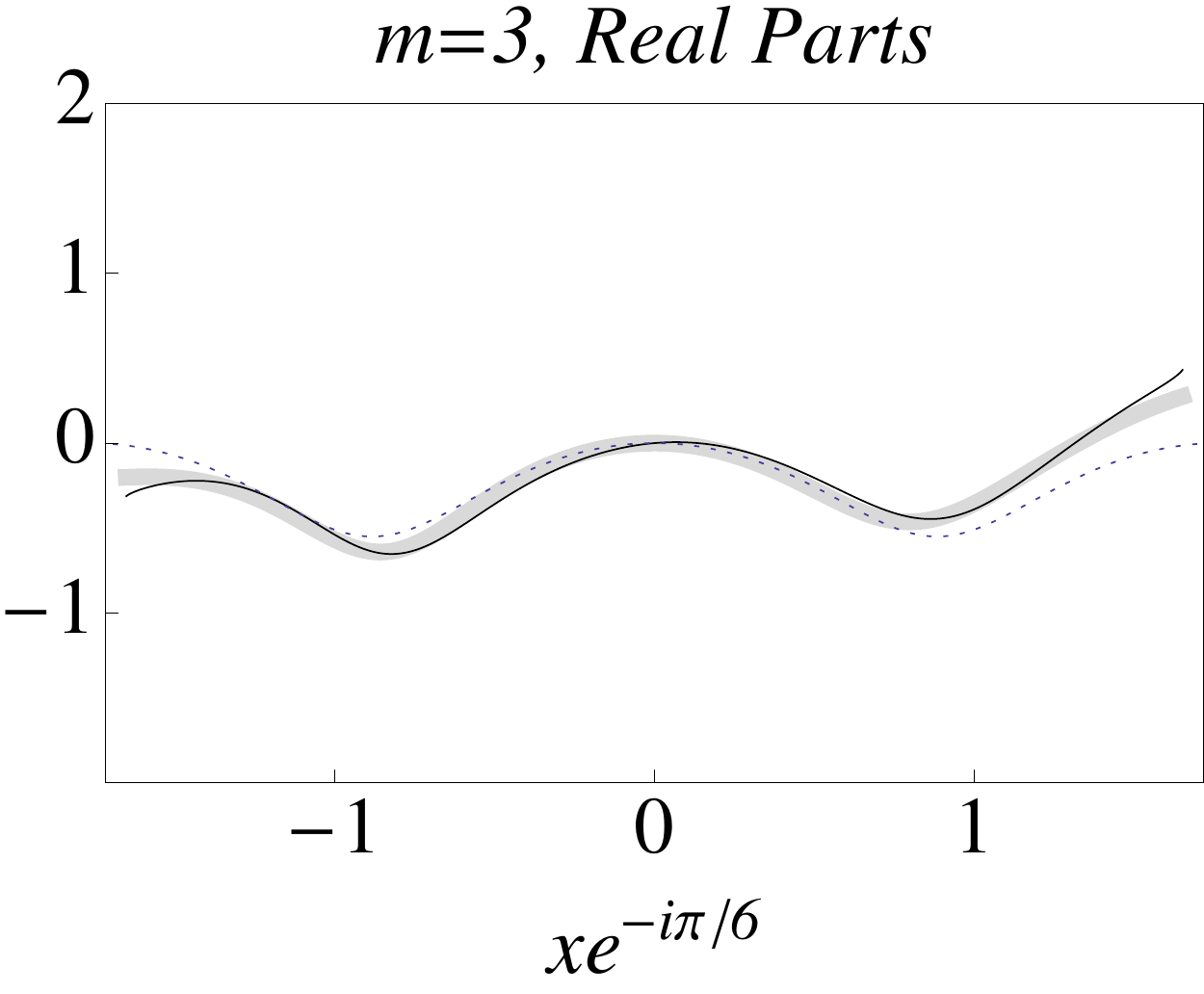}\hspace{0.5 in}%
\includegraphics[width=2 in]{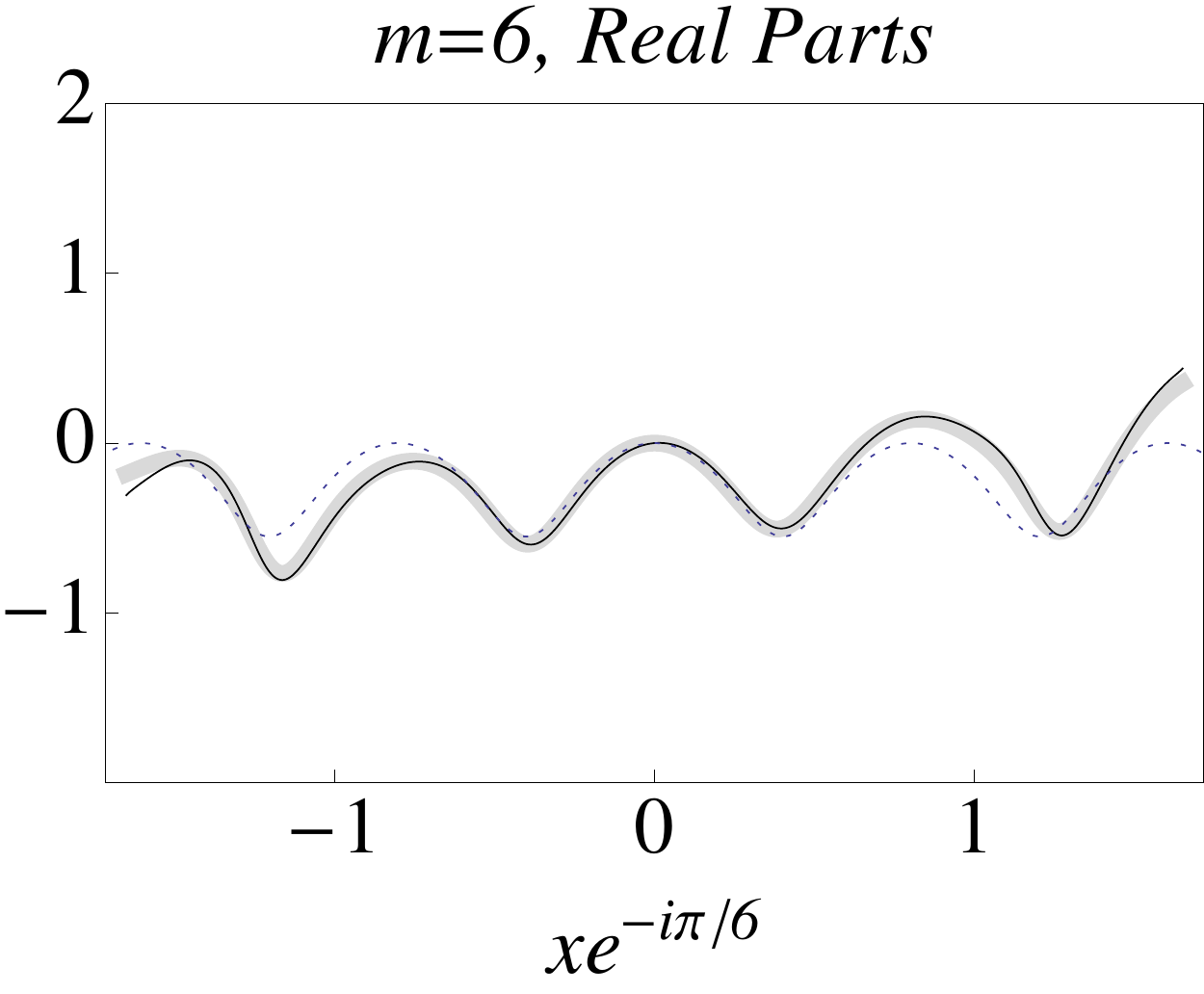}\hspace{0.2in}\\
\includegraphics[width=2 in]{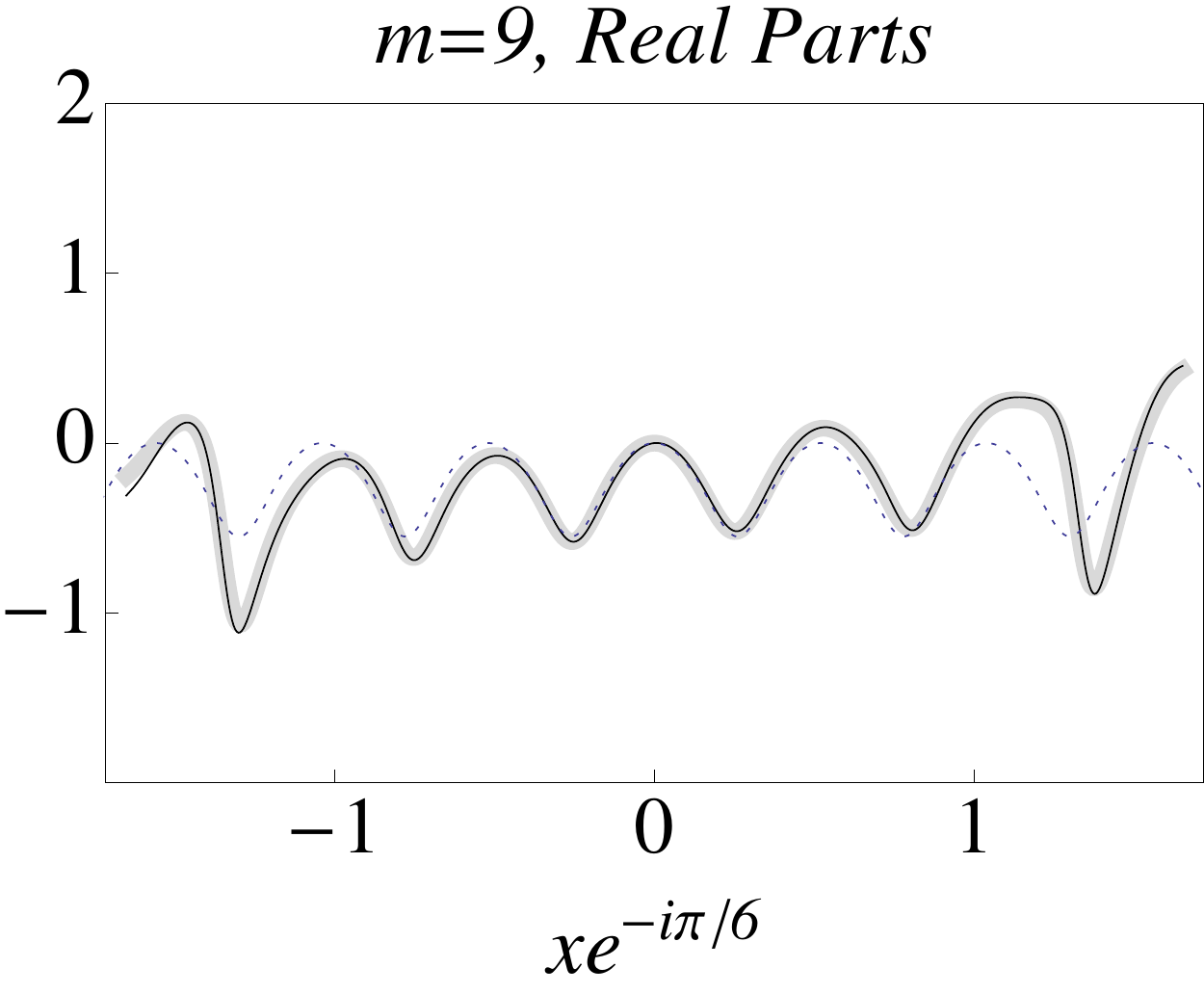}\hspace{0.5 in}
\raisebox{0.5 in}{\includegraphics[width=2 in]{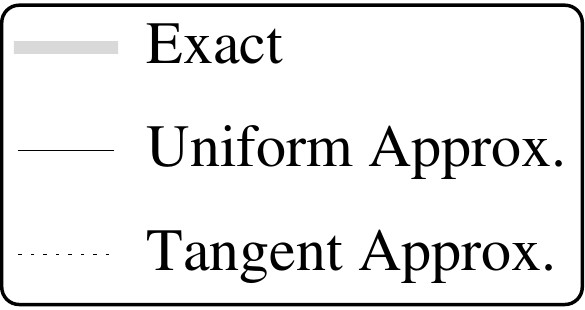}}
\end{center}
\caption{\emph{Comparing the real parts of $m^{-1/3}\mathcal{P}_m((m-\tfrac{1}{2})^{2/3}x)$ (thick gray curves), its uniform approximation $\dot{\mathcal{P}}_m(0;x)$ (thin black curves), and a tangent approximation based at the origin $\dot{\mathcal{P}}_m((m-\tfrac{1}{2})x,0)$ (dotted curves) evaluated for $x\in e^{i\pi/6}\mathbb{R}$ for various values of $m$.  The function $\dot{\pp}_m(w;x_0)$ is defined in \eqref{eq:g1-dotP-formula}.}}
\label{fig:g1-P-compare-PiBySix-RealParts}
\end{figure}
\begin{figure}[H]
\begin{center}
\hspace{-.1in}
\includegraphics[width=2 in]{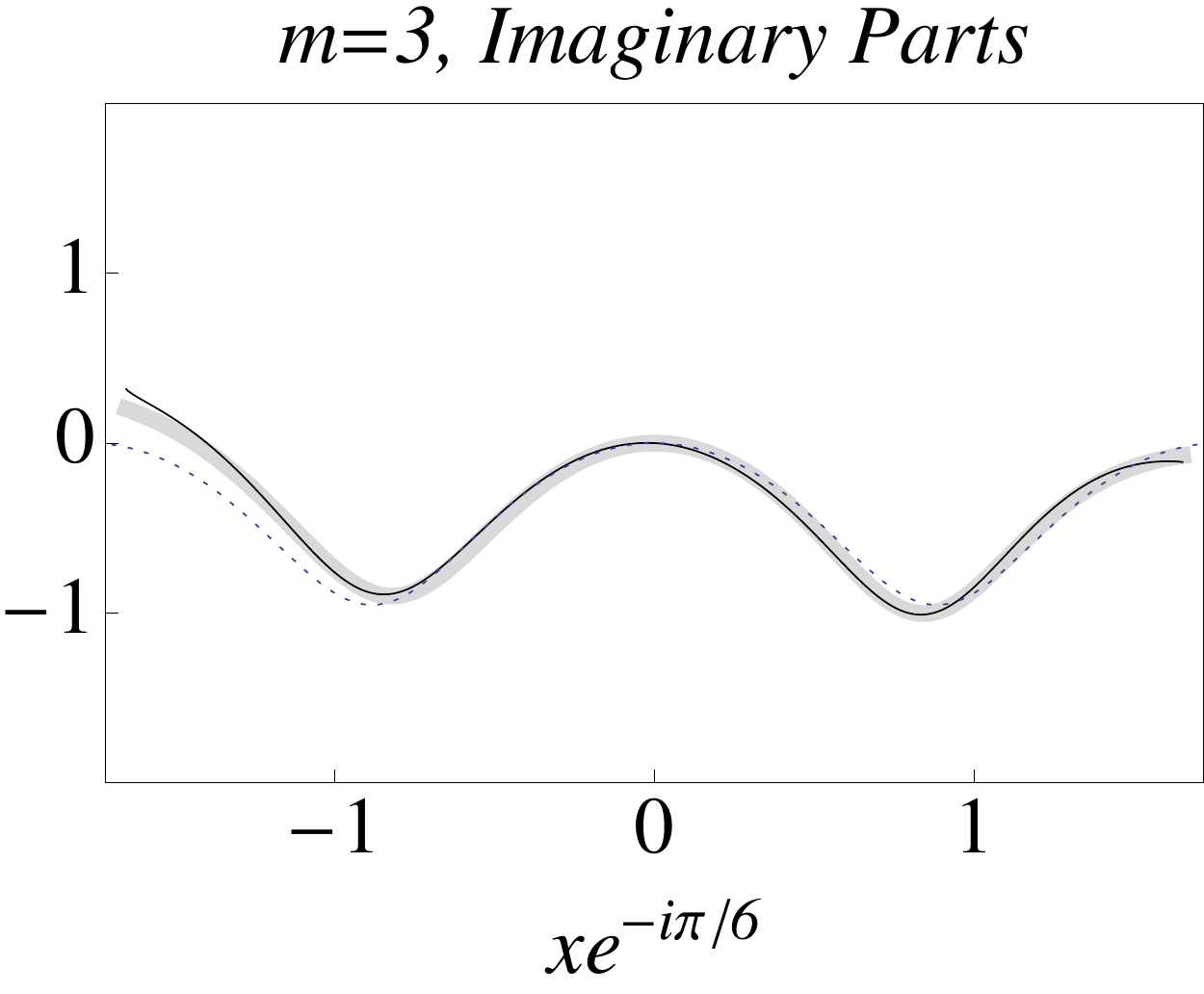}\hspace{0.5 in}%
\includegraphics[width=2 in]{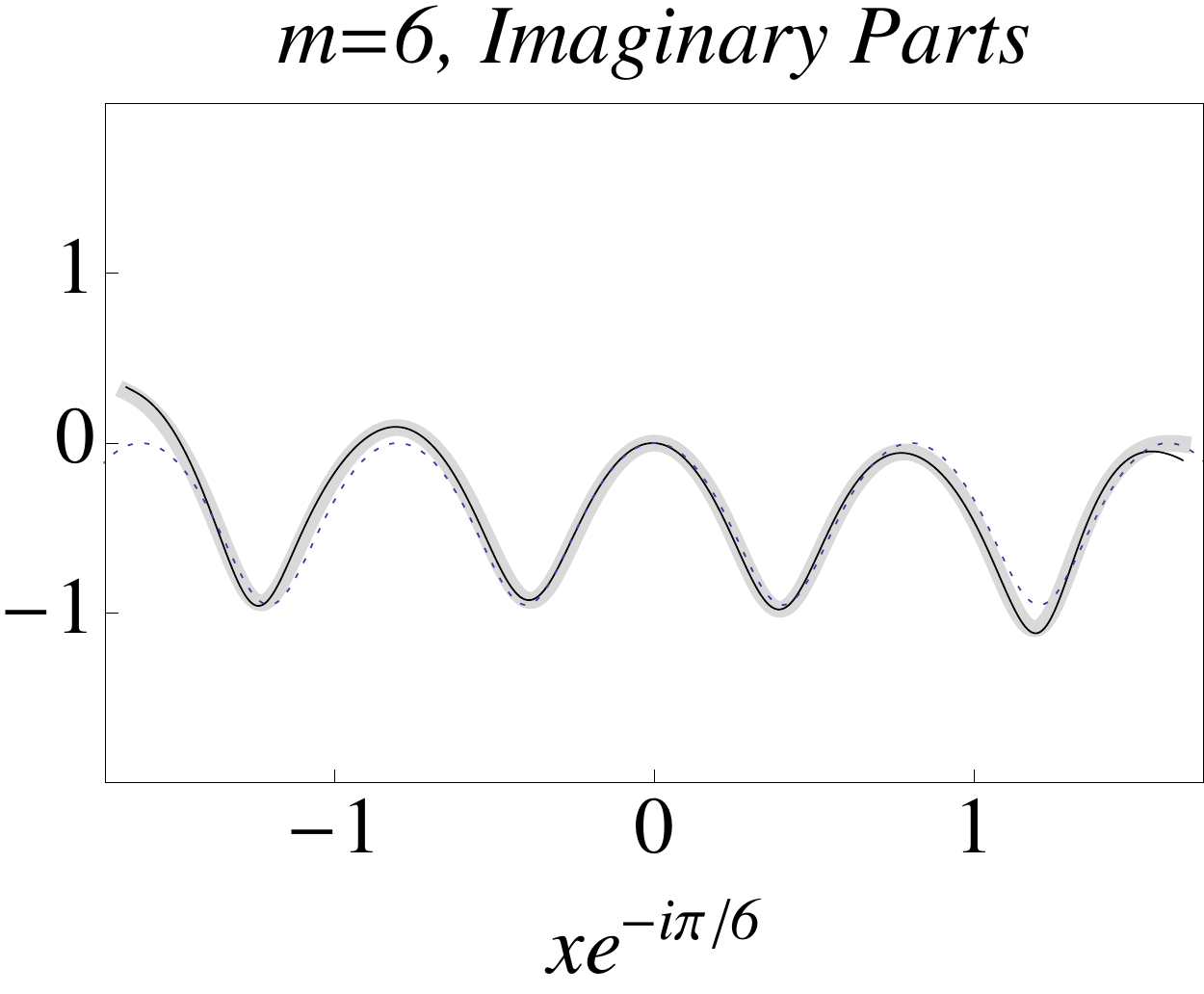}\hspace{0.2in}\\
\includegraphics[width=2 in]{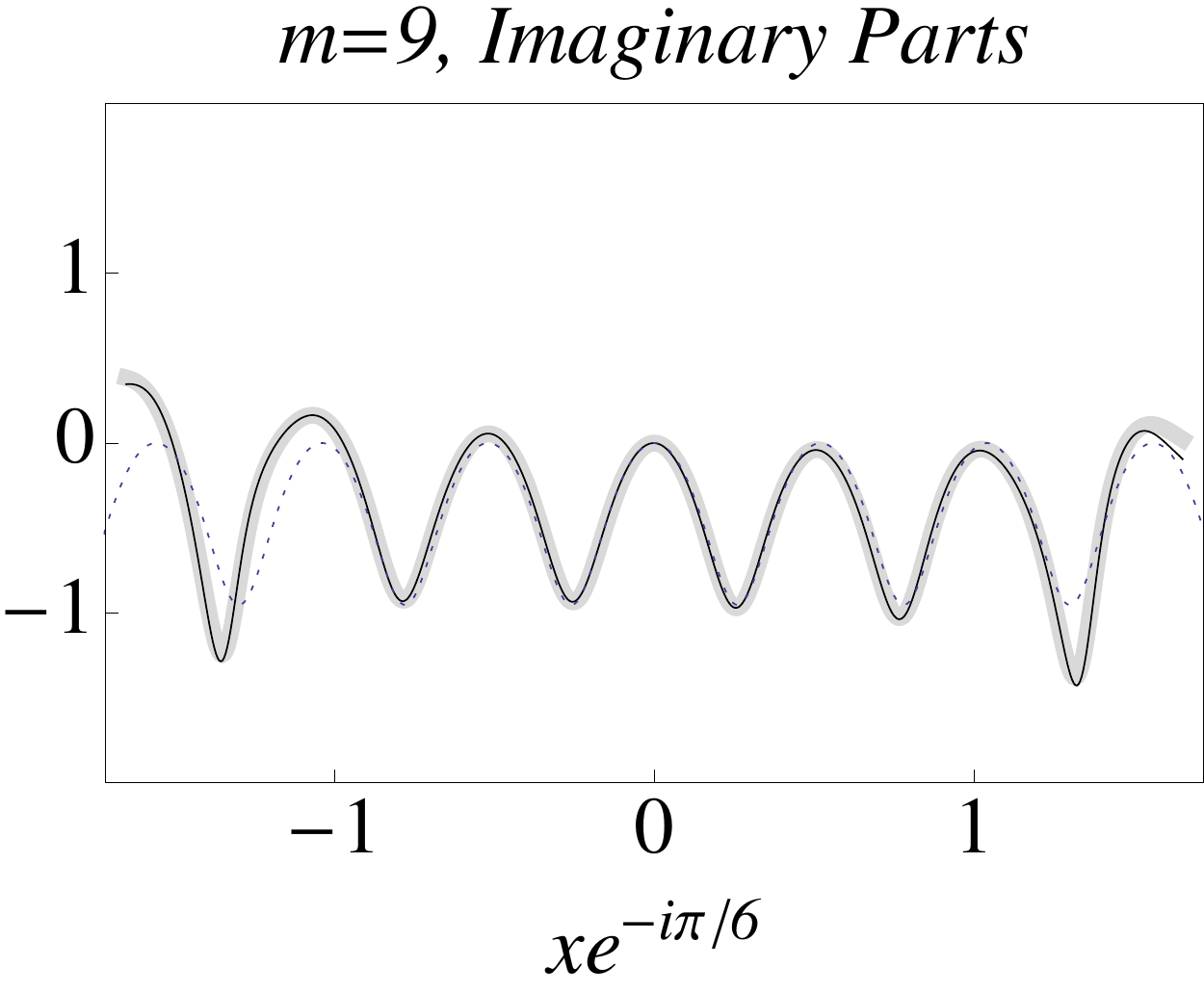}\hspace{0.5 in}
\raisebox{0.5 in}{\includegraphics[width=2 in]{PiBySixPLegend.pdf}}
\end{center}
\caption{\emph{Same as Figure~\ref{fig:g1-P-compare-PiBySix-RealParts} except now the imaginary parts of the three functions are compared.}}
\label{fig:g1-P-compare-PiBySix-ImagParts}
\end{figure}
\begin{figure}[H]
\begin{center}
\includegraphics[width=2.5 in]{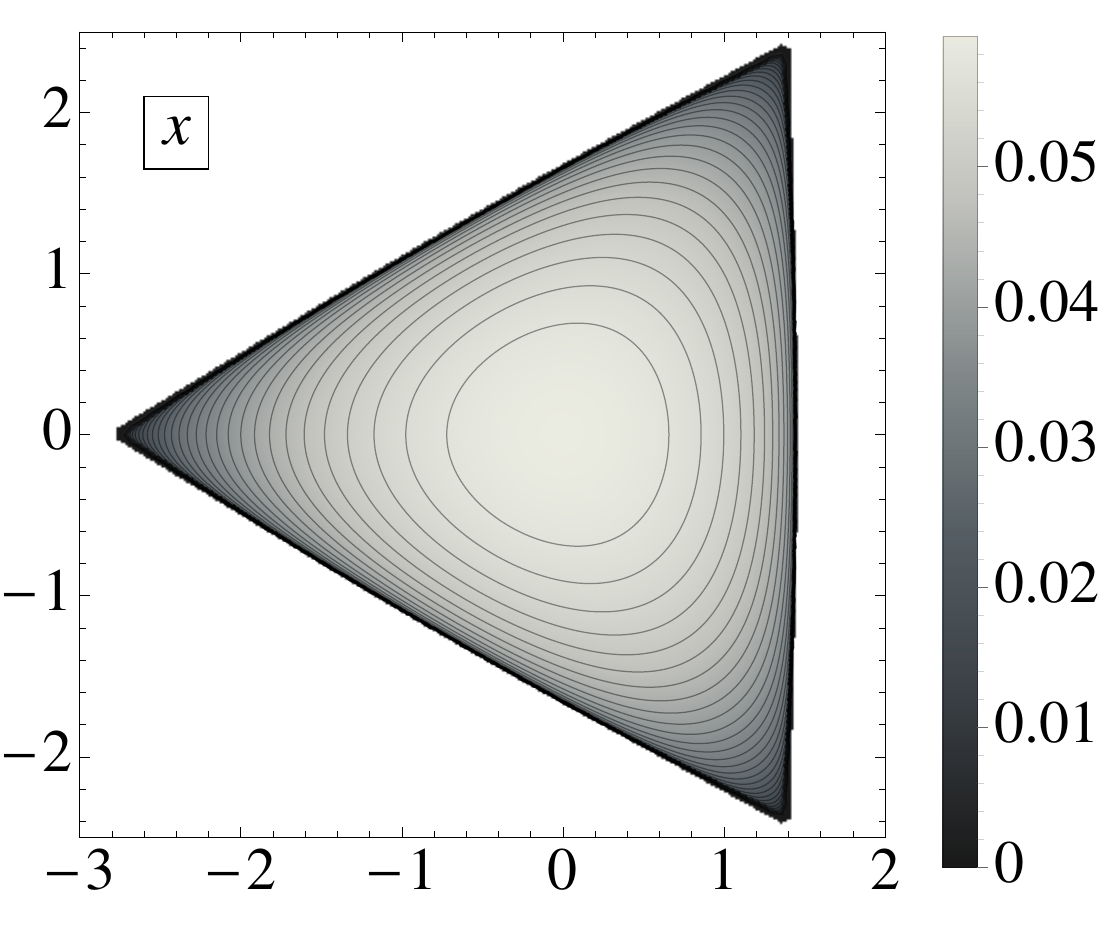}\hspace{0.5 in}%
\includegraphics[width=2.5 in]{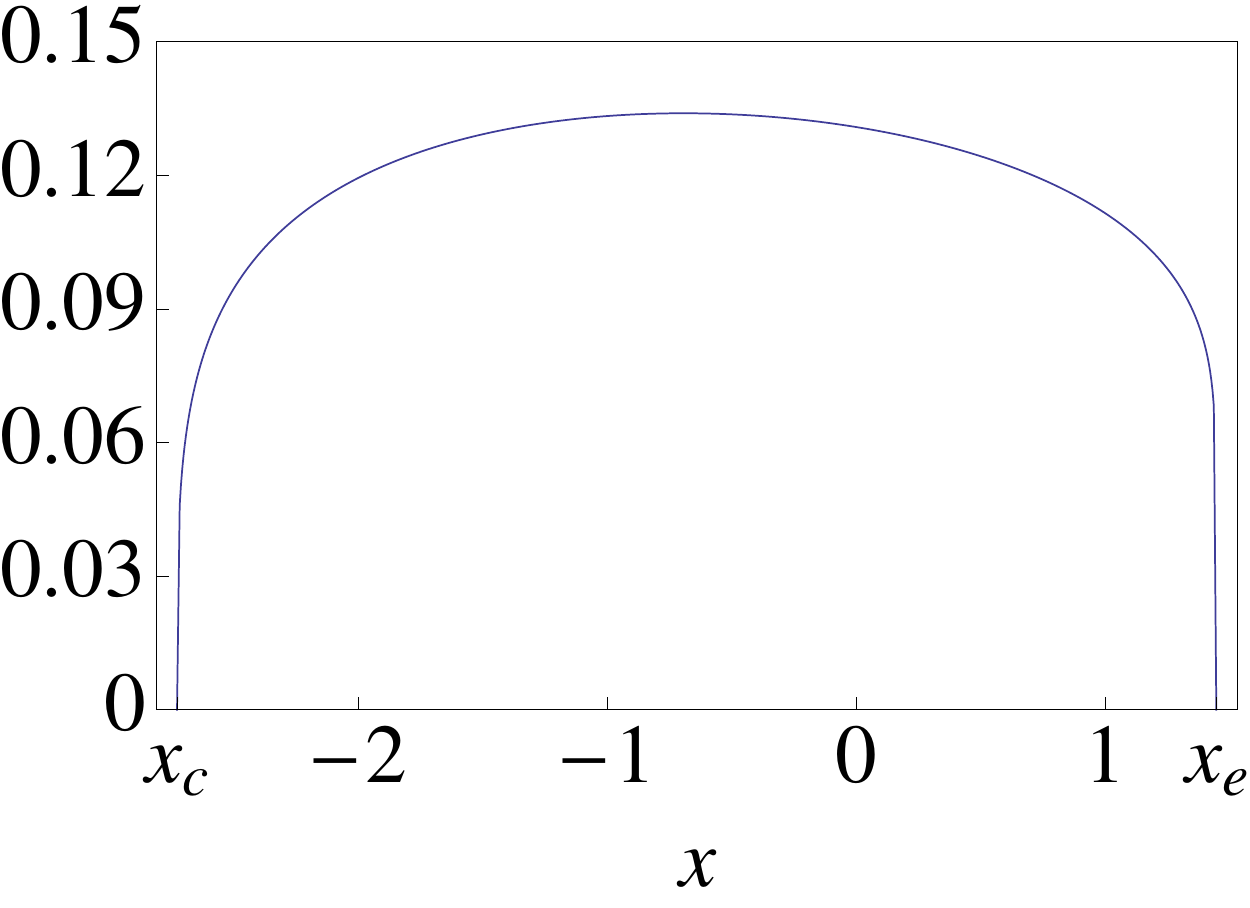}
\end{center}
\caption{\emph{Left:  a contour plot of the planar density $\sigma_\mathrm{P}$ over the region $T$.  Right:  a plot of the linear density $\sigma_\mathrm{L}$ over the interval $T\cap\mathbb{R}=(x_c,x_e)$. Both densities are nearly constant over most of their domains of definition but drop down rapidly to zero at the boundary $\partial T$.  The densities are defined in \eqref{g1-planar-density}--\eqref{g1-linear-density}.}}
\label{fig:g1-densities}
\end{figure}

Finally, we would like to draw attention to a few figures that appear later 
that may be of interest.  Plots of the exponential growth factor that 
dominates the asymptotic behavior of $\pu_m$ for large $m$ (i.e. $\Lambda(x)$ 
for $x\in T$ and $\lambda(x)$ for $x\in\mathbb{C}\setminus\overline{T}$)
are shown in Figure~\ref{fig:Lambdalambda}.  
In Figure~\ref{fig:ParallelogramPictures} we display tilings of $T$ into curvilinear parallelograms by level curves of two computable functions that appear as part of the proof of Theorem~\ref{theorem:g1-weak-limit}; this figure is of independent interest because it shows that the vertices of these curvilinear parallelograms evidently coincide nearly exactly with the pole locations of $\pu_m$ within the elliptic region $T$.  Figure~\ref{fig:g1-weak-limit} shows the real and imaginary parts of the ``macroscopic limit'' function $\dot{\pp}_\mathrm{macro}:\mathbb{C}\to\mathbb{C}$.

\subsection{Acknowledgements}
We benefited greatly from useful discussions with many people, including Peter Clarkson, Alexander Its, Andrei Kapaev, Erik Koelink, Andrei Martinez-Finkelshtein, Davide Masoero, and Boris Shapiro.

\section{The Riemann-Hilbert problem}
\label{Riemann-Hilbert-section}
The starting point of our analysis will be the following Riemann-Hilbert 
problem, from which it is 
possible to extract the functions $\pu_m$, $\pv_m$, $\pp_m$, and $\pq_m$ 
\cite{BuckinghamMcritical}.
We define for use here and later the Pauli spin matrices
\eq
\sigma_1:=\bbm 0 & 1 \\ 1 & 0 \ebm, \quad \sigma_2:=\bbm 0 & -i \\ i & 0 \ebm, \quad \sigma_3:=\bbm 1 & 0 \\ 0 & -1 \ebm.
\endeq
\begin{rhp}[]
Fix a real number $y$ and an integer $m$.  
Seek a $2\times 2$ matrix $\mathbf{Z}_m(\zeta;y)$ 
with the following properties:
\begin{itemize}
\item[]\textbf{Analyticity:}  $\mathbf{Z}_m(\zeta;y)$
is analytic in $\zeta$ except along the rays $\arg(\zeta)=k\pi/3$, 
$k=0,\dots,5$, may be continued from each sector of analyticity to
a slightly larger sector, and in each sector is H\"older continuous 
up to the boundary in a neighborhood of $\zeta=0$.
\item[]\textbf{Jump condition:}  The boundary values\footnote{We use a 
subscript $+$ ($-$) to indicate the boundary value
taken on a specified oriented arc from the left (right).} taken by 
$\mathbf{Z}_m(\zeta;y)$ on the six rays of discontinuity are related by the 
jump condition 
$\mathbf{Z}_{m+}(\zeta;y)=
\mathbf{Z}_{m-}(\zeta;y)
\mathbf{V}_{\mathbf{Z}}(\zeta;y)$, 
where the jump matrix $\mathbf{V}_{\mathbf{Z}}(\zeta;y)$ 
is as shown in 
Figure~\ref{fig:VlocR} 
\begin{figure}[h]
\setlength{\unitlength}{2pt}
\begin{center}
\begin{picture}(100,100)(-50,-50)
\thicklines
\put(-80,41){\framebox{$\zeta$}}
\put(-50,0){\line(1,0){100}}
\put(0,0){\vector(-1,0){25}}
\put(0,0){\vector(1,0){25}}
\put(0,0){\line(2,-3){30}}
\put(0,0){\vector(2,-3){15}}
\put(0,0){\line(-2,3){30}}
\put(0,0){\vector(-2,3){15}}
\put(0,0){\line(2,3){30}}
\put(0,0){\vector(2,3){15}}
\put(0,0){\line(-2,-3){30}}
\put(0,0){\vector(-2,-3){15}}
\put(0,0){\circle*{2}}
\put(-1.25,-6){$0$}
\put(-81,0){$\bbm 1 & 0 \\ ie^{\zeta^3+y\zeta} & 1 \ebm$}
\put(-63,39){$\bbm 1 & ie^{-\zeta^3-y\zeta} \\ 0 & 1 \ebm$}
\put(32,39){$\bbm 1 & 0 \\ ie^{\zeta^3+y\zeta} & 1 \ebm$}
\put(53,0){$\bbm -1 & -ie^{-\zeta^3-y\zeta} \\ 0 & -1 \ebm$}
\put(32,-41){$\bbm 1 & 0 \\ ie^{\zeta^3+y\zeta} & 1 \ebm$}
\put(-63,-41){$\bbm 1 & ie^{-\zeta^3-y\zeta} \\ 0 & 1 \ebm$}
\end{picture}
\end{center}
\caption{\emph{The jump matrix $\mathbf{V}^{(\mathbf{Z})}(\zeta;y)$.}}
\label{fig:VlocR}
\end{figure}
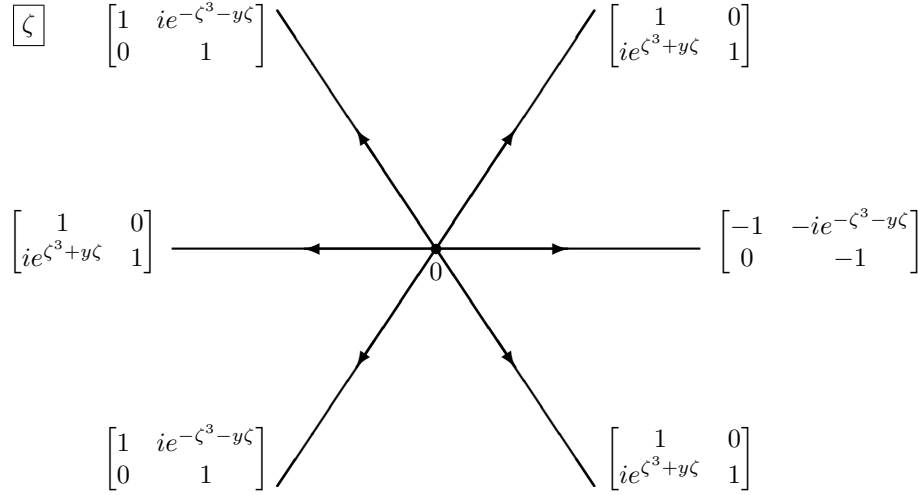
and all rays are oriented toward infinity.
\item[]\textbf{Normalization:}  The matrix 
$\mathbf{Z}_m(\zeta;y)$
satisfies the condition
\begin{equation}
\lim_{\zeta\to\infty}\mathbf{Z}_m(\zeta;y)
(-\zeta)^{(1-2m)\sigma_3/2}=\mathbb{I},
\label{eq:Zindnorm}
\end{equation}
with the limit being uniform with respect to direction in each of the
six sectors of analyticity.
\end{itemize}
\label{rhp:DSlocalII}
\end{rhp}
As shown in \cite[Section 5]{BuckinghamMcritical}, if we write 
\begin{equation}
\mathbf{Z}_m(\zeta;y)(-\zeta)^{(1-2m)\sigma_3/2}=
\mathbb{I}+\mathbf{A}_m(y)\zeta^{-1} + \mathbf{B}_m(y)\zeta^{-2} +\mathcal{O}(\zeta^{-3}),
\quad\zeta\to\infty,
\label{eq:Fexpansion}
\end{equation}
then
\eq
\label{eq:u}
\pu_m(y) = A_{m,12}(y),
\endeq
\eq
\label{eq:v}
\pv_m(y) = A_{m,21}(y),
\endeq
\eq
\label{eq:uprime}
\pu_m'(y) = -B_{m,12}(y) + A_{m,12}(y)A_{m,22}(y),
\endeq
and
\eq
\label{eq:vprime}
\pv_m'(y) = B_{m,21}(y) - A_{m,21}(y)A_{m,11}(y),
\endeq
where $A_{m,ij}(y)$ (respectively, $B_{m,ij}(y)$) denotes the entry in the 
$i^\text{th}$ row and $j^\text{th}$ column of $\mathbf{A}_m(y)$ 
(respectively, $\mathbf{B}_m(y)$).

Let $m\in\mathbb{Z}$ be fixed.  Riemann-Hilbert Problem~\ref{rhp:DSlocalII} has two elementary discrete
symmetries.  Firstly, it is easy to check that if we define the matrix $\widetilde{\mathbf{Z}}_m(\zeta;y):=\mathbf{Z}_m(\zeta^*;y^*)^*$, where the outer asterisk denotes elementwise complex conjugation (no transpose), then $\widetilde{\mathbf{Z}}_m(\zeta;y)$ satisfies Riemann-Hilbert Problem~\ref{rhp:DSlocalII} whenever $\mathbf{Z}_m(\zeta;y)$ does.  It follows by a uniqueness argument
based on Liouville's Theorem that in fact $\widetilde{\mathbf{Z}}_m(\zeta;y)=\mathbf{Z}_m(\zeta;y)$.
This implies in turn that the matrix coefficient $\mathbf{A}_m(y)$
satisfies $\mathbf{A}_m(y^*)=\mathbf{A}_m(y)^*$,
and therefore we have
\begin{equation}
\pu_m(y^*)=\pu_m(y)^*\quad\text{and}\quad\pv_m(y^*)=\pv_m(y)^*,
\end{equation}
and hence, by \eqref{log-derivative},
\begin{equation}
\pp_m(y^*)=\pp_m(y)^* \quad\text{and}\quad \pq_m(y^*)=\pq_m(y)^*.
\end{equation}
Next, it is also easy to see that the matrix $\widehat{\mathbf{Z}}_m(\zeta;y)$ defined by
\begin{equation}
\widehat{\mathbf{Z}}_m(\zeta;y):=\begin{cases}
e^{(2m-1)i\pi\sigma_3/3}\mathbf{Z}_m(e^{-2\pi i/3}\zeta;e^{2\pi i/3}y),&\quad \arg(\zeta)\in (2\pi/3,\pi)\cup (-\pi,0)\\
-e^{(2m-1)i\pi\sigma_3/2}\mathbf{Z}_m(e^{-2\pi i/3}\zeta;e^{2\pi i/3}y),&\quad\arg(\zeta)\in (0,2\pi/3)
\end{cases}
\end{equation}
again satisfies Riemann-Hilbert Problem~\ref{rhp:DSlocalII} whenever $\mathbf{Z}_m(\zeta;y)$ does.  By the same uniqueness argument we then have $\widehat{\mathbf{Z}}_m(\zeta;y)=\mathbf{Z}_m(\zeta;y)$.  It follows that $\mathbf{A}_m(e^{-2\pi i/3}y)=e^{2\pi i/3}e^{(2m-1)i\pi\sigma_3/3}\mathbf{A}_m(y)e^{-(2m-1)i\pi\sigma_3/3}$, which implies that
\begin{equation}
\label{pu-pv-symmetries}
\pu_m(e^{-2\pi i/3}y)=e^{-2m\pi i/3}\pu_m(y)\quad \text{and}\quad \pv_m(e^{-2\pi i/3}y)=e^{2(m-1)\pi i/3}\pv_m(y),
\end{equation}
and hence, by \eqref{log-derivative},
\begin{equation}
\label{pp-symmetry}
\pp_m(e^{-2\pi i/3}y)=e^{2\pi i/3}\pp_m(y) \quad\text{and}\quad \pq_m(e^{-2\pi i/3})=e^{2\pi i/3}\pq_m(y).
\end{equation}

Suppose that $m\geq 1$.  Then, rescaling the spectral parameter $\zeta$ by
\begin{equation}
\label{z-def}
z:=(m-\tfrac{1}{2})^{-1/3}\zeta,
\end{equation}
and also rescaling the Painlev\'e independent variable $y$ by
\begin{equation}
\label{x}
x:=(m-\tfrac{1}{2})^{-2/3}y,
\end{equation}
consider the transformation (which does not change any jump conditions)
\begin{equation}
\label{M-def}
\mathbf{M}(z;x,\epsilon):=\epsilon^{\sigma_3/(3\epsilon)}\mathbf{Z}_m(\epsilon^{-1/3}z;\epsilon^{-2/3}x).
\end{equation}
Here we have introduced 
\begin{equation}
\label{epsilon}
\epsilon:=(m-\tfrac{1}{2})^{-1}.
\end{equation}
Recalling the normalization condition \eqref{eq:Zindnorm} at infinity for 
$\mathbf{Z}_m(\zeta;y)$, we see that for each fixed $m\geq 1$,
\begin{equation}
\label{M-normalization}
\begin{split}
\lim_{z\to\infty}\mathbf{M}(z;x,\epsilon)(-z)^{-\sigma_3/\epsilon} &= 
\lim_{\zeta\to\infty}(m-\tfrac{1}{2})^{(1/2-m)\sigma_3/3}\mathbf{Z}_m(\zeta;y)(-(m-\tfrac{1}{2})^{-1/3}\zeta)^{(1/2-m)\sigma_3}\\
&=(m-\tfrac{1}{2})^{(1/2-m)\sigma_3/3}\left[\lim_{\zeta\to\infty}\mathbf{Z}_m(\zeta;y)(-\zeta)^{(1/2-m)\sigma_3}\right](m-\tfrac{1}{2})^{(m-1/2)\sigma_3/3}\\
&=\mathbb{I},
\end{split}
\end{equation}
so the normalization condition for $\mathbf{M}(z;x,\epsilon)$ is particularly simple in form.  
The matrix $\mathbf{M}(z;x,\epsilon)$ also is analytic in the six sectors $0<|\arg(z)|<\pi/3$,
$\pi/3<|\arg(z)|<2\pi/3$, and $0<|\arg(-z)|<\pi/3$.  The jump conditions are exactly of the
same form as those satisfied on the same rays by $\mathbf{Z}_m(\zeta;y)$ but in each case the
exponent $\zeta^3+y\zeta$ is replaced by $\epsilon^{-1}\theta(z;x)$, where
\begin{equation}
\theta(z;x):=z^3+xz.
\end{equation}

\section{Analysis for $x$ outside the elliptic region}
\label{section-gen0}
In this section we compute the large-$m$ (or small-$\epsilon$) asymptotic 
expansions of $\pu_m(\epsilon^{-2/3}x)$ and $\pp_m(\epsilon^{-2/3}x)$ for $x$ 
outside the elliptic region.  Our method involves the judicious use of deformations of the contours of the Riemann-Hilbert problem satisfied by $\mathbf{M}(z;x,\epsilon)$, and it will turn out that there is not a 
single consistent way to deform the contours that will be fruitful for all $x$ outside the 
elliptic region.  Rather, reflecting the three-fold symmetry of the problem, 
outside the elliptic region there will be one deformation strategy that works for 
$-\pi/3<\arg(x)<\pi$, one that works for $-\pi<\arg(x)<\pi/3$, 
and one that works for 
$|\arg(-x)|<2\pi/3$.  Due to the symmetries 
\eqref{pu-pv-symmetries} and \eqref{pp-symmetry}, analyzing the problem in 
any of these three sectors is sufficient.  Note that, except on the rays 
$\arg(x)=\pm\pi/3$ and 
$\arg(-x)=0$, two different deformations will work 
for a given $x$.  Along these rays the one deformation that does work is 
symmetric and natural.  For this reason, and because we are especially 
interested in real values of $x$ for applications, we will primarily work in 
the region 
$|\arg(-x)|<2\pi/3$ and illustrate most details of our methodology for 
$\arg(-x)=0$.  This has the added benefit that the 
Riemann-Hilbert analysis is well-adapted to analyzing $x$ near a corner of the 
elliptic region, which will be done in a subsequent work 
\cite{Buckingham-rational-crit}.  On the other hand, to study $x$ near an 
edge (but not a corner) of the elliptic region, it is most natural to carry 
out the analysis for $x$ on and near the positive real axis.  With this in 
mind we will briefly present in \S\ref{subsection-outside-positive-x} the setup for the analysis for these $x$ using 
the deformation valid for $-\pi/3<\arg(x)<\pi$.

\subsection{The genus-zero $g$-function}
\label{genus-zero-g}
Analysis of the Riemann-Hilbert problem will involve the use of a 
standard tool called the $g$-function, a scalar function used to reduce 
the jump matrices asymptotically to constant matrices in the 
small-$\epsilon$ (or large-$m$) limit.  Let $a$ and $b$ be distinct points in the complex $z$-plane,
and let $\Sigma$ (the \emph{band}) denote the oriented straight line segment $\overrightarrow{ab}$.
Given $\Sigma$, let $r(z)$ be the function analytic for $z\in\mathbb{C}\setminus\Sigma$ satisfying the conditions
\begin{equation}
r(z)^2=(z-a)(z-b)\quad\text{and}\quad r(z)=z+\mathcal{O}(z),\quad z\to\infty.
\label{eq:g0-rdefine}
\end{equation}
If we introduce the quantities
\begin{equation}
S:=a+b\quad\text{and}\quad\Delta:=b-a
\end{equation}
then $r(z)^2$ is the quadratic
\begin{equation}
r(z)^2=z^2-Sz+\frac{1}{4}(S^2-\Delta^2).
\end{equation}
The boundary values $r_\pm(z)$ taken from the left and right for $z\in\Sigma$ satisfy $r_+(z)+r_-(z)=0$.  Now for each $x\in\mathbb{C}$ we define a related function by setting
\begin{equation}
g'(z):=\frac{1}{2}\theta'(z;x)-\frac{3}{2}\left(z+\frac{1}{2}S\right)r(z)=\frac{3}{2}z^2+\frac{1}{2}x-\frac{3}{2}\left(z+\frac{1}{2}S\right)r(z),\quad z\in\mathbb{C}\setminus\Sigma.
\label{eq:g0-gprime}
\end{equation}
Suppose now that $a$, $b$, and $x$ are related by the two \emph{moment conditions}
\begin{equation}
\label{ab-conditions}
6S^2 + 3\Delta^2 = -8x \quad \text{and} \quad 3S\Delta^2 = 16.
\end{equation}
These conditions imply the following large-$z$ asymptotic behavior of $g'(z)$:
\begin{equation}
g'(z)=\frac{1}{z}+\mathcal{O}\left(\frac{1}{z^2}\right),\quad z\to\infty.
\end{equation}

Eliminating $\Delta^2$ from \eqref{ab-conditions} yields a cubic equation for $S$:
\eq
\label{cubic-equation}
3S^3 + 4xS + 8 = 0.
\endeq
Let $\Sigma_S$ be the contour in the complex $x$-plane illustrated in Figure~\ref{fig:Sigma-S}.
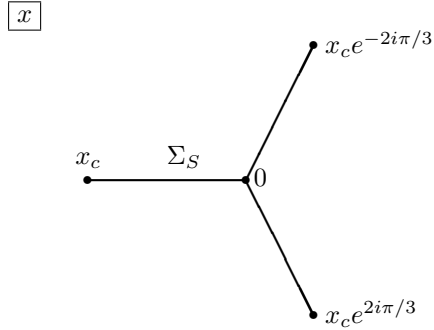
\begin{figure}[h]
\setlength{\unitlength}{1.5pt}
\begin{center}
\begin{picture}(80,80)(-40,-40)
\put(-20,4){$\Sigma_S$}
\thicklines
\put(-60,40){\framebox{$x$}}
\put(0,0){\line(-1,0){40}}
\put(0,0){\line(1,2){17}}
\put(0,0){\line(1,-2){17}}
\put(17,34){\circle*{2}}
\put(20,32){$x_c e^{-2i\pi/3}$}
\put(17,-34){\circle*{2}}
\put(20,-36){$x_c e^{2i\pi/3}$}
\put(0,0){\circle*{2}}
\put(2,-1.5){$0$}
\put(-40,0){\circle*{2}}
\put(-43,4){$x_c$}
\end{picture}
\end{center}
\caption{\emph{The contour $\Sigma_S$ is the union of three straight-line segments with nonzero endpoints coinciding with the three branch points of the cubic \eqref{cubic-equation}.  Here $x_c:=-(9/2)^{2/3}<0$.}}
\label{fig:Sigma-S}
\end{figure}
We claim that there exists a unique solution $S=S(x)$ of the cubic equation \eqref{cubic-equation}
that is defined and analytic for $x\in\mathbb{C}\setminus\Sigma_S$ and that has the asymptotic behavior
\eq
\label{S-large-x}
S(x) = -\frac{2}{x} + \mathcal{O}\left(\frac{1}{x^4}\right), \quad x\to\infty.
\endeq
Indeed, the three endpoints of $\Sigma_S$ are easily seen to be the only branch points of the cubic \eqref{cubic-equation} in the finite complex $x$-plane, and although the general solution is branched at infinity, the asymptotic condition \eqref{S-large-x} makes $S$ single-valued for large $x$.  Thus the unique existence of $S:\mathbb{C}\setminus\Sigma_S\to\mathbb{C}$ is a consequence of the Implicit Function Theorem.  Given the well-defined function $S(x)$, we then
define $\Delta(x)$ from the second equation in \eqref{ab-conditions} by taking an appropriate square root; from the asymptotic behavior \eqref{S-large-x} it is clear that $\Delta(x)$ will be branched at $x=\infty$.  We therefore define three  semi-infinite rays (each oriented toward $x=\infty$) by
\eq
\mathscr{R}_{\pi/3}:=\left(x_c e^{-2i\pi/3}, \infty e^{\pi/3}\right), \quad \mathscr{R}_{-}:=\left(x_c,-\infty \right), \quad \mathscr{R}_{-\pi/3}:=\left(x_c e^{2i\pi/3}, \infty e^{-i\pi/3}\right)
\endeq
(see Figure~\ref{fig:c-of-x-equals-zero}), and then define $\Delta(x)$ as the analytic function for $x\in\mathbb{C}\setminus(\Sigma_S\cup\mathscr{R}_{-\pi/3})$ satisfying $\Delta(x)^2=16/(3S(x))$ that is positive real for $x\in\mathscr{R}_-$.  We then have $a$ and $b$ defined in the same domain as analytic functions of $x$, and we note that these points are exchanged across the branch cut $\mathscr{R}_{-\pi/3}$ where $\Delta$ changes sign.  This definition implies that 
$\text{Re}(a(x))<\text{Re}(b(x))$ for $0<\arg(x)<5\pi/3$ and 
$\text{Im}(a(x))>\text{Im}(b(x))$ for $-\pi/3<\arg(x)<\pi$.  

Since $g'(z)$ has a residue at infinity, to define the $g$-function as an antiderivative of $g'(z)$ we must introduce a logarithmic branch cut.  Let $L$ denote an unbounded arc joining $z=b$ to 
$z=\infty$ without otherwise touching $\Sigma$, and suppose that $L$ agrees with the positive real $z$-axis for sufficiently large $|z|$.  Let $\log^{(L)}(b-z)$ denote the branch of $\log(b-z)$ with branch cut $L$ that agrees asymptotically with the principal branch of $\log(-z)$ for large negative $z$.  
Assuming that $a$ and $b$ are determined as functions of $x$ as above, we then define the $g$-function as the antiderivative of $g'(z)$ given by
\begin{equation}
g(z)=g(z;x):=\log^{(L)}(b-z)+\int_\infty^z\left[g'(\zeta)-\frac{1}{\zeta-b}\right]\,d\zeta,\quad z\in\mathbb{C}\setminus(\Sigma\cup L),\quad x\in\mathbb{C}\setminus(\Sigma_S\cup\mathscr{R}_{-\pi/3}).
\label{eq:g0-g-integral}
\end{equation}
(Note that the integral is independent of path as long as the path avoids $\Sigma$.)  We define a function related to $g$ by setting
\begin{equation}
h(z)=h(z;x):=\frac{1}{2}\theta(z;x)-g(z),\quad z\in\mathbb{C}\setminus(\Sigma\cup L),\quad x\in\mathbb{C}\setminus(\Sigma_S\cup\mathscr{R}_{-\pi/3}).
\label{eq:hgdef}
\end{equation}
The derivative (in $z$) of $h$ has no jump across $L$ and is given simply by
\begin{equation}
h'(z)=\frac{3}{2}\left(z+\frac{1}{2}S\right)r(z),\quad z\in\mathbb{C}\setminus\Sigma.
\label{eq:hprime}
\end{equation}
The basic properties of $g$ and $h$ are the following.
\begin{proposition}
Suppose that $a$ and $b$ are defined in terms of $x\not\in\Sigma_S\cup\mathscr{R}_{-\pi/3}$ as described above, guaranteeing that $g(z)$ and $h(z)$ are well-defined by \eqref{eq:g0-g-integral}
and \eqref{eq:hgdef} respectively.  Then the relation
\begin{equation}
\frac{d}{dz}\left[h_+(z)+h_-(z)\right]=\frac{d}{dz}\left[\theta(z;x)-g_+(z)-g_-(z)\right]=0
\end{equation}
holds as an identity for $z\in\Sigma$.  Also,
\begin{equation}
g_+(z)-g_-(z)=-2\pi i,\quad z\in L,
\end{equation}
and finally,
\begin{equation}
g(z)=\log(-z)+\mathcal{O}\left(\frac{1}{z}\right),\quad z\to\infty,
\end{equation}
assuming that $L$ agrees with the positive real axis for sufficiently large $|z|$.
\label{prop:g0-g-function}
\end{proposition}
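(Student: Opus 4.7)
The plan is to verify the three identities directly from the defining formulas, using only the boundary relation $r_+(z) + r_-(z) = 0$ on $\Sigma$ together with the fact that the moment conditions \eqref{ab-conditions} were engineered to produce $g'(z) = 1/z + \mathcal{O}(1/z^2)$ as $z \to \infty$.

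For the jump identity on $\Sigma$, since $\theta(z;x)$ is entire, the claim $(\theta - g_+ - g_-)'(z) = 0$ on $\Sigma$ reduces to $g'_+(z) + g'_-(z) = \theta'(z;x)$, which follows immediately from \eqref{eq:g0-gprime} together with $r_+(z) + r_-(z) = 0$. Equivalently, via \eqref{eq:hprime}, $h'_+(z) + h'_-(z) = \frac{3}{2}(z + \frac{1}{2}S)(r_+(z) + r_-(z)) = 0$, and then the identity for $\theta - g_+ - g_-$ follows by taking derivatives of $h = \frac{1}{2}\theta - g$.

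For the jump across $L$, the key point is that the integral term in \eqref{eq:g0-g-integral} is actually single-valued on all of $\mathbb{C}\setminus\Sigma$, not merely on $\mathbb{C}\setminus(\Sigma\cup L)$. Indeed, the integrand $g'(\zeta) - 1/(\zeta-b)$ is analytic on $\mathbb{C}\setminus\Sigma$ and decays like $\mathcal{O}(\zeta^{-2})$ at infinity by the moment conditions (which were engineered precisely for this cancellation), so the integral around any closed loop encircling $\Sigma$ vanishes by a large-contour estimate. Hence the integral contributes no jump across $L$, and the entire jump of $g$ across $L$ comes from $\log^{(L)}(b-z)$. The branch convention, together with $L$ being asymptotic to the positive real axis oriented outward, pins down this jump as exactly $-2\pi i$; concretely, as $z$ crosses $L$ from the $+$ side to the $-$ side, $\arg(b-z)$ jumps from near $-\pi$ to near $+\pi$ in the principal branch.

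Finally, for the large-$z$ behavior, expanding $\log^{(L)}(b-z) = \log(-z) + \log(1-b/z) = \log(-z) + \mathcal{O}(1/z)$ and combining with the $\mathcal{O}(\zeta^{-2})$ decay of the integrand (which forces the integral from $\infty$ to be $\mathcal{O}(1/z)$) yields $g(z) = \log(-z) + \mathcal{O}(1/z)$. The bulk of the work is bookkeeping; the one genuinely delicate point is tracking the branch conventions carefully enough to fix the sign of the jump across $L$, and for this the asymptotic agreement of $\log^{(L)}(b-z)$ with the principal branch of $\log(-z)$ for $z \to -\infty$ is exactly what is needed.
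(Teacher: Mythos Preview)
Your proof is correct. The paper in fact states this proposition without proof, treating the three assertions as immediate consequences of the construction (it even remarks parenthetically, just after \eqref{eq:g0-g-integral}, that the integral is independent of path in $\mathbb{C}\setminus\Sigma$, which is precisely your second point). Your direct verification---using $r_++r_-=0$ on $\Sigma$ for the first identity, the $\mathcal{O}(\zeta^{-2})$ decay of $g'(\zeta)-1/(\zeta-b)$ to see that only $\log^{(L)}(b-z)$ contributes the jump across $L$, and the same decay for the large-$z$ asymptotic---is exactly the argument the paper's setup is designed to make transparent.
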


According to Proposition~\ref{prop:g0-g-function}, there is a complex number $\lambda=\lambda(x)$ that is the constant value of $-(h_+(z)+h_-(z))$ for $z\in\Sigma$.  The function $g(z)$ and the corresponding constant $\lambda$ can be expressed in terms of elementary functions as
follows.  Let $\mathscr{L}(z)$ be given by
\begin{equation}
\mathscr{L}(z)=\mathscr{L}(z;x):=\log\left(\frac{S(x)-2z-2r(z;x)}{4}\right), \quad z\notin\Sigma\cup L, \quad x\notin \Sigma_S \cup \mathscr{R}_{-\pi/3}
\end{equation}
with the interpretation that $\mathscr{L}$ is single-valued and analytic in $z$ where defined, and
that $\mathscr{L}(z)=\log(-z)+\mathcal{O}(z^{-1})$ as $|z|\to\infty$.  Then $g(z)$ can be written 
in the explicit form
\begin{multline}
\label{g-formula}
g(z) = g(z;x) := \frac{1}{2}\theta(z;x) - \frac{1}{8}\left(4z^2+2S(x)z-2S(x)^2-\Delta(x)^2\right)r(z;x) + \mathscr{L}(z;x) + \frac{1}{8}S(x)^3,  \\
\quad z\notin\Sigma\cup L, \quad x\notin \Sigma_S \cup \mathscr{R}_{-\pi/3}.
\end{multline}
Furthermore, $\lambda=\lambda(x)$ may be expressed for $x\not\in\Sigma_S\cup\mathscr{R}_{-\pi/3}$ as
\begin{equation}
\begin{split}
\lambda(x)&=\mathscr{L}_+(z;x)+\mathscr{L}_-(z;x)+\frac{1}{4}S(x)^3,\quad z\in\Sigma\\
&=\frac{1}{4}S(x)^3 +\log\left(\frac{\Delta(x)^2}{16}\right)\\
&=\frac{1}{4}S(x)^3 -\log(3S(x))
\end{split}
\label{eq:lambdadef}
\end{equation}
for an appropriate choice of the logarithm (here not necessarily the principal branch).

\subsection{The genus-zero ansatz and formula for the boundary of the elliptic region}
\label{boundary-section}
The possibility of using the $g$-function defined in \S\ref{genus-zero-g} to asymptotically reduce
the jump matrices for the Riemann-Hilbert problem to a tractable form hinges on the qualitative nature of the zero level set of the function 
\begin{equation}
F(z)=F(z;x):=\mathrm{Re}(2h(z;x)+\lambda(x))
\label{eq:g0-Fdefine}
\end{equation}
in the $z$-plane.  This level set can undergo sudden topological changes (bifurcations) as $x$ varies; the bifurcations occur exactly when the critical point $z=-S(x)/2$ crosses the level set.  The zero level set of $F$ always includes the endpoints $z=a$ and $z=b$ of $\Sigma$.
A direct calculation using the cubic equation \eqref{cubic-equation} shows that $-S(x)/2$ lies on the band $\Sigma$ exactly when $x\in\mathscr{R}_-\cup\mathscr{R}_{\pi/3}\cup\mathscr{R}_{-\pi/3}$, and therefore the formula
\begin{equation}
\mathfrak{c}(x):=\int_{a(x)}^{-S(x)/2}h'(z;x)\,dz,\quad |\arg(x)|<\frac{\pi}{3},\;\text{or}\;
\frac{\pi}{3}<\arg(x)<\pi,\;\text{or}\;-\pi<\arg(x)<-\frac{\pi}{3},
\label{c-def-gen0}
\end{equation}
defines three different analytic functions of $x$ in abutting sectors of the complex $x$-plane, where the path of integration is taken to be a straight line.  Now, $\mathfrak{c}(x)$ cannot generally be identified via the Fundamental Theorem of Calculus with a difference of values of $h$, because it is possible for the straight-line path of integration to cross the logarithmic branch cut $L$ for $h$; however since $h_+(z;x)-h_-(z;x)=2\pi i$ for $z\in L$, it \emph{is} true that
\begin{equation}
2\mathrm{Re}(\mathfrak{c}(x))=2\mathrm{Re}\left(h\left(-\frac{1}{2}S(x);x\right)-h(a(x);x)\right) = 
F\left(-\frac{1}{2}S(x);x\right).
\end{equation}
While $\mathfrak{c}(x)$ has jump discontinuities across the three rays $\mathscr{R}_-$, $\mathscr{R}_{\pi/3}$, and $\mathscr{R}_{-\pi/3}$, $\mathrm{Re}(\mathfrak{c}(x))$ extends to these rays from either side as a continuous harmonic function.  To see this, one actually shows more by a direct calculation using \eqref{eq:hprime} and \eqref{c-def-gen0}; namely for $x\in\mathscr{R}_-\cup\mathscr{R}_{\pi/3}\cup\mathscr{R}_{-\pi/3}$, both boundary values taken by $\mathfrak{c}(x)$ are purely imaginary.  Hence $\mathrm{Re}(\mathfrak{c}(x))$ extends to these three rays with the value $\mathrm{Re}(\mathfrak{c}(x))=0$.  It can be shown that the only other points in the domain 
$\mathbb{C}\setminus\Sigma_S$ where the harmonic function $\mathrm{Re}(\mathfrak{c}(x))$ vanishes lie along three bounded arcs joining the endpoints of $\Sigma_S$.  These three arcs together with the three rays $\mathscr{R}_-$, $\mathscr{R}_{\pi/3}$, and $\mathscr{R}_{-\pi/3}$
define the locus of points in the domain $x\in\mathbb{C}\setminus\Sigma_S$ where the zero level set of $F$ undergoes a topological bifurcation.
Note that the condition 
$\mathrm{Re}(\mathfrak{c}(x))=0$ can  be written explicitly in terms of elementary functions as
\eq
\label{eq:EdgeCondition}
\text{Re}\left[\frac{x}{3}r\left(-\frac{S(x)}{2};x\right)\right]-\log\left\vert S(x)-r\left(-\frac{S(x)}{2};x\right)\right\vert-\frac{1}{2}\log\vert S(x)\vert+\log\left(\frac{2\sqrt{3}}{3}\right) = 0.
\endeq

\begin{definition}
\label{bulk-defn}
The 
\emph{elliptic region} $T$ is the domain of the complex $x$-plane bounded by the three arcs of the level curve $\mathrm{Re}(\mathfrak{c}(x))=0$ joining in pairs the three endpoints of $\Sigma_S$.  
The \emph{genus-zero region} is the unbounded domain $\mathbb{C}\backslash\overline{T}$ complementary to $T$.  
\end{definition}
The elliptic region $T$ is the open domain bounded by the curvilinear triangle $\partial T$ illustrated in Figure
\ref{fig:c-of-x-equals-zero}.  
\begin{figure}[h]
\includegraphics[width=2.5in]{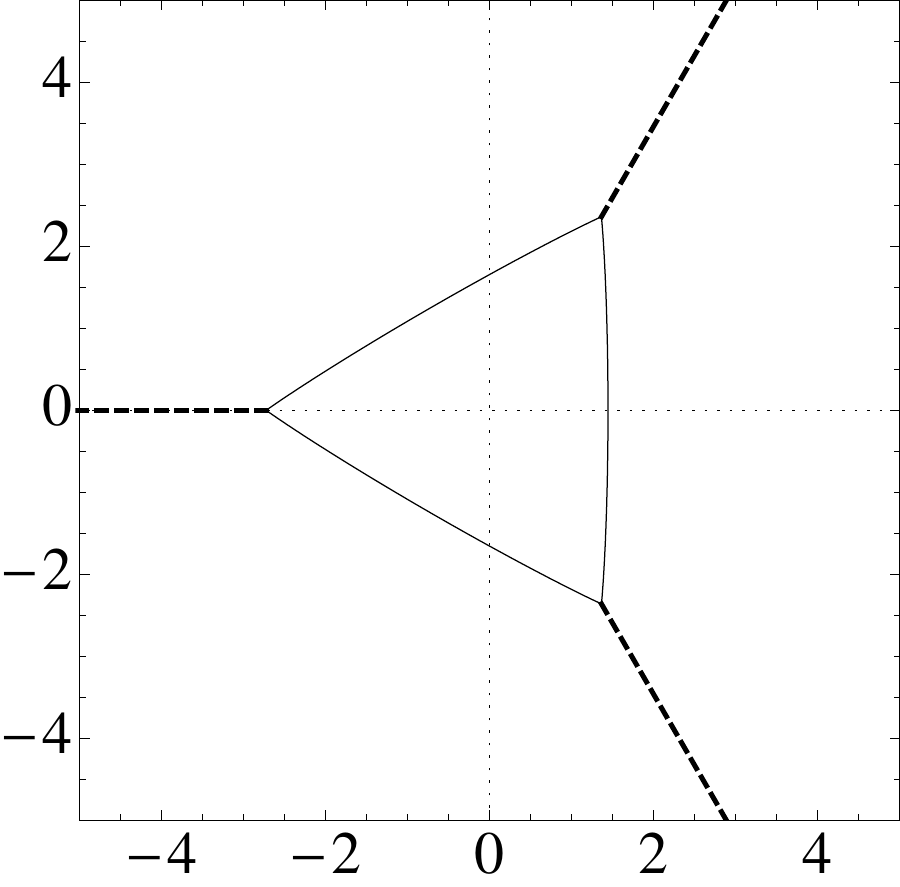}
\caption{\emph{The curves in the complex $x$-plane along which $\mathrm{Re}(\mathfrak{c}(x))=0$.  The solid lines are 
$\partial T$ (across which the rational Painlev\'e-II functions exhibit Stokes phenomena in the large-$m$ limit) and the dashed lines are the semi-infinite contours $\mathscr{R}_{\pm\pi/3}$ and $\mathscr{R}_-$.}}
\label{fig:c-of-x-equals-zero}
\end{figure}
\begin{remark}
Despite a striking resemblance from a distance, $\partial T$ is \emph{not} 
a Euclidean triangle; one can check that the interior angle at each corner is exactly $2\pi/5$.  
This turns out to be related to the fact that the pole sector of the 
tritronqu\'ee solutions to the Painlev\'e-I equation opens with the same angle.   
Details will be given in the sequel paper \cite{Buckingham-rational-crit}.
\end{remark}

We say that the genus-zero ansatz is valid if the topology of the level curves of $F(z;x)$ defined by \eqref{eq:g0-Fdefine} in the $z$-plane is well-suited to the asymptotic reduction of the Riemann-Hilbert problem for $\mathbf{M}(z)$ in the limit $\epsilon\downarrow 0$.  To be more precise, we offer the following definition.
\begin{definition}
The \emph{genus-zero ansatz} is valid for a given $x\in\mathbb{C}$ if 
\begin{itemize}
\item There are exactly three arcs of the zero-level set of 
$F(z;x)$ terminating at $z=a(x)$ and $z=b(x)$, and 
\item There exist six arcs in $\mathbb{C}\setminus\Sigma$, three from $a(x)$ to $\infty$ and three  from $b(x)$ 
to $\infty$, none of which crosses the zero level set of 
$F$, and such that the six arcs tend to infinity at distinct angles 
$0$, $\pm\pi/3$, $\pm 2\pi/3$, and $\pi$.
\end{itemize}
\end{definition}

\begin{proposition} \label{genus-zero-proposition} 
The genus-zero ansatz is valid exactly in the genus-zero region.  
\end{proposition}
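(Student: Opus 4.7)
The plan is to analyze the zero level set $Z := \{z \in \mathbb{C}\setminus\Sigma : F(z;x)=0\}$ of the real-valued harmonic function $F$ defined in \eqref{eq:g0-Fdefine}, classify its possible topological types from local data at the critical points of $h$ and at infinity, and identify the bifurcations of topology with the vanishing of $\mathrm{Re}(\mathfrak{c}(x))$.

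First I would catalog the local structure of $Z$. From $h'(z) = \tfrac{3}{2}(z+S/2)r(z)$ in \eqref{eq:hprime}, $h$ has $3/2$-power branch points at $z=a$ and $z=b$ and a simple interior critical point at $z=-S/2$. Consequently $Z$ has exactly three smooth arcs emanating from each of $a$ and $b$ at equal angular spacings $2\pi/3$, and four additional arcs meet at right angles at $z=-S/2$ precisely when the critical value $F(-S/2;x) = 2\,\mathrm{Re}(\mathfrak{c}(x))$ vanishes. At $z=\infty$ the expansion $h(z)\sim \tfrac{1}{2}z^3$ implies that $Z$ asymptotes to the six rays $\arg z = \pi/6+k\pi/3$, $k=0,\ldots,5$, which bound the six alternating-sign sectors of $F$ at infinity. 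On $\Sigma$ the relation $h_+ + h_- = -\lambda$ yields $F_+ + F_- = 0$, while $F$ extends continuously across $L$; everywhere else $Z$ is a smooth one-manifold by the implicit function theorem.

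Second, since $-S(x)/2$ is the only non-endpoint critical point of $F$ in $\mathbb{C}\setminus\Sigma$, the topological type of $Z$ is invariant as $x$ varies within any connected component of the complement of the bifurcation locus $\{x: \mathrm{Re}(\mathfrak{c}(x))=0\}$ in $\mathbb{C}\setminus\Sigma_S$. The maximum principle rules out closed loops in $Z$ bounding a component of $\mathbb{C}\setminus(Z\cup\Sigma)$ disjoint from $\Sigma$, so every arc of $Z$ must terminate at $a$, $b$, $-S/2$, or $\infty$. Combining this with the half-edge counts ($3$ at $a$, $3$ at $b$, either $0$ or $4$ at $-S/2$, and $6$ at infinity) and with the alternation of the sign of $F$ between consecutive asymptotic sectors, one finds essentially two topologies for $Z$ consistent with the local data: topology (i), consisting of three arcs of $Z$ between $a$ and $b$ (using all six endpoint arcs) plus six escape arcs to the six sectorial directions, corresponding to one definite sign of $F(-S/2;x)$; and topology (ii), in which an arc of $Z$ passes through $-S/2$, only one arc joins $a$ to $b$, and four endpoint arcs escape to infinity, corresponding to the opposite sign. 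Topology (i) is precisely the genus-zero topology: the six sign-definite unbounded components of $\mathbb{C}\setminus(Z\cup\Sigma)$ each contain one of $a$ or $b$ in their closure and admit the required escape arcs along $\arg z = 0,\pm\pi/3, \pm 2\pi/3, \pi$; topology (ii) violates the first clause of the ansatz since fewer than three arcs of $Z$ connect $a$ to $b$.

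Third, I would verify the correspondence by explicit computation at convenient reference points in each connected component of $\mathbb{C}\setminus(\overline T \cup\mathscr{R}_-\cup\mathscr{R}_{\pi/3}\cup\mathscr{R}_{-\pi/3})$. At a point of the sector $-\pi/3<\arg x<\pi/3$ with $x$ large and real positive, where $S(x)\to 0^-$ and $a,b$ lie symmetrically on a vertical line in the left half-plane, a phase portrait of $F$ drawn from the explicit formula \eqref{g-formula} confirms topology (i); at an interior point of $T$, such as $x=0$ where \eqref{cubic-equation} gives $S=-(8/3)^{1/3}$, the analogous plot confirms topology (ii). The three-fold rotational symmetry inherited from the structure of the Riemann-Hilbert problem then propagates the outside verification to the remaining two sectors of $\mathbb{C}\setminus\overline T$, identifying the genus-zero region with the domain of validity of the ansatz. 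The main technical obstacle is the global classification argument in the second step: passing rigorously from local data at $a$, $b$, $-S/2$, $\infty$ to an exhaustive list of possible global topologies of $Z$ requires a careful sign chase of $F$ across $\Sigma$, $L$, and the six asymptotic sectors, combined with the maximum principle, so as to exclude exotic configurations not accounted for by the local count.
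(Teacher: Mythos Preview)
Your overall strategy---analyze local half-edge counts at $a$, $b$, $-S/2$, and $\infty$, invoke the maximum principle to rule out spurious components, and then verify at sample points in each connected component of the complement of the bifurcation locus---is essentially the one the paper uses. However, the heart of your argument contains a genuine error: your classification of the two candidate topologies is wrong, and this stems from a misreading of the ansatz.

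The first clause of the ansatz requires that exactly three arcs of $Z$ terminate at $a$ and exactly three at $b$; it does \emph{not} say that three arcs of $Z$ connect $a$ to $b$. In the actual genus-zero configuration (compare Figure~\ref{h-contours} for, say, $x=-3$), all six endpoint arcs escape to the six asymptotic directions at infinity and \emph{no} arc of $Z$ runs between $a$ and $b$; the six ``escape arcs'' required by the second clause are not level curves of $F$ at all, but contour arcs lying in the sign-definite regions between consecutive branches of $Z$. Your topology~(i), with three arcs of $Z$ joining $a$ to $b$ and three additional arcs going from infinity to infinity, is not the genus-zero picture and in fact cannot occur (it would enclose bounded components of $\{F=0\}$ disconnected from the six sectors at infinity, violating the half-edge count $3+3+6=12$ you yourself set up). Conversely, the failure mode inside $T$ (Figure~\ref{genus-zero-failure}) has exactly one arc of $Z$ joining $a$ to $b$ plus one arc from infinity to infinity, with $-S/2$ \emph{off} $Z$; your topology~(ii) instead places $-S/2$ on $Z$, which is the transitional $\partial T$ picture, not the interior one.

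There is also a secondary issue your framework misses: the rays $\mathscr{R}_-$, $\mathscr{R}_{\pm\pi/3}$ lie on the bifurcation locus $\{\mathrm{Re}(\mathfrak{c})=0\}$ yet the ansatz remains valid there, because for those $x$ the critical point $-S/2$ sits on the band $\Sigma$ where both boundary values of $F$ vanish identically. The paper handles this by checking those rays separately; your dichotomy ``topology~(i) for one sign, topology~(ii) for the other'' does not account for them. Finally, your chosen interior test point $x=0$ lies on $\Sigma_S$, where the branch $S(x)$ with asymptotics \eqref{S-large-x} is not defined.
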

\begin{proof}
We study the zero level set of $F(z;x)$ defined in terms of $h$ by \eqref{eq:g0-Fdefine}.  For each $x\not\in\Sigma_S$, this is the zero level set of a function harmonic in $z$ except on the band $\Sigma$, across which it generally has a jump discontinuity.  The zero level set therefore consists of a finite number of smooth arcs.  The points $z=a(x)$ and $z=b(x)$ are necessarily contained in the zero level set by the definition of $\lambda(x)$, but it can be checked that $F(z;x)$ vanishes at no other points of either edge of the branch cut $\Sigma$ \emph{unless} $x\in\mathscr{R}_-\cup\mathscr{R}_{\pi/3}\cup\mathscr{R}_{-\pi/3}$ in which case $F(z;x)$ vanishes identically for $z\in\Sigma$ (more properly, both boundary values taken on $\Sigma$ vanish identically).  
  
The only finite points where multiple arcs of the zero level set can intersect are zeros of 
$h'(z;x)$, so from \eqref{eq:hprime} the candidate points are $a(x)$, $b(x)$, 
and $-S(x)/2$, the first two of which are part of the zero level set for all $x$ under consideration.
The point $z=-S(x)/2$ lies on the zero level set of $F(z;x)$ exactly
when $\mathrm{Re}(\mathfrak{c}(x))=0$, namely for $x\in\mathscr{R}_-\cup\mathscr{R}_{\pi/3}\cup\mathscr{R}_{-\pi/3}$ and for $x\in\partial T$.  (See Figure~\ref{fig:c-of-x-equals-zero}.)  Note that $a(x)\neq b(x)$ since $\Delta(x)^2$ is zero free.
A local analysis then shows that if $a(x)\neq-S(x)/2$ then there 
are exactly three arcs of the level set of $F(z;x)$ emanating from $a(x)$ at angles separated 
by $2\pi/3$.  On the other hand, if $a(x)=-S(x)/2$ then there are exactly 
five level curves emanating from $a(x)$ at angles separated by $2\pi/5$ (and 
thus the genus-zero ansatz is not valid).  Analogous statements hold for 
$b(x)$.  By direct calculation, the only $x$-values for which either $a(x)=-S(x)/2$ 
or $b(x)=-S(x)/2$ are $x_c$, $x_c e^{-2i\pi/3}$, and $x_c e^{2i\pi/3}$ 
(the three corners of the elliptic region $T$, also the three endpoints of $\Sigma_S$).  

\begin{figure}
\includegraphics[width=1.5in]{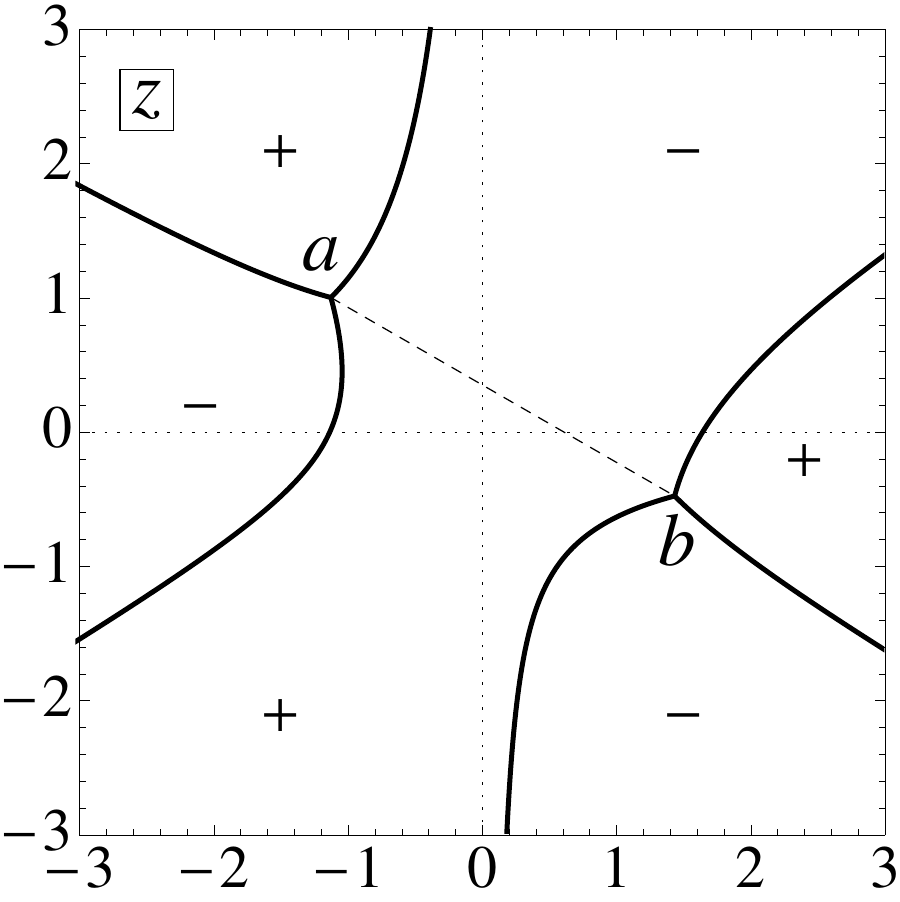}
\includegraphics[width=1.5in]{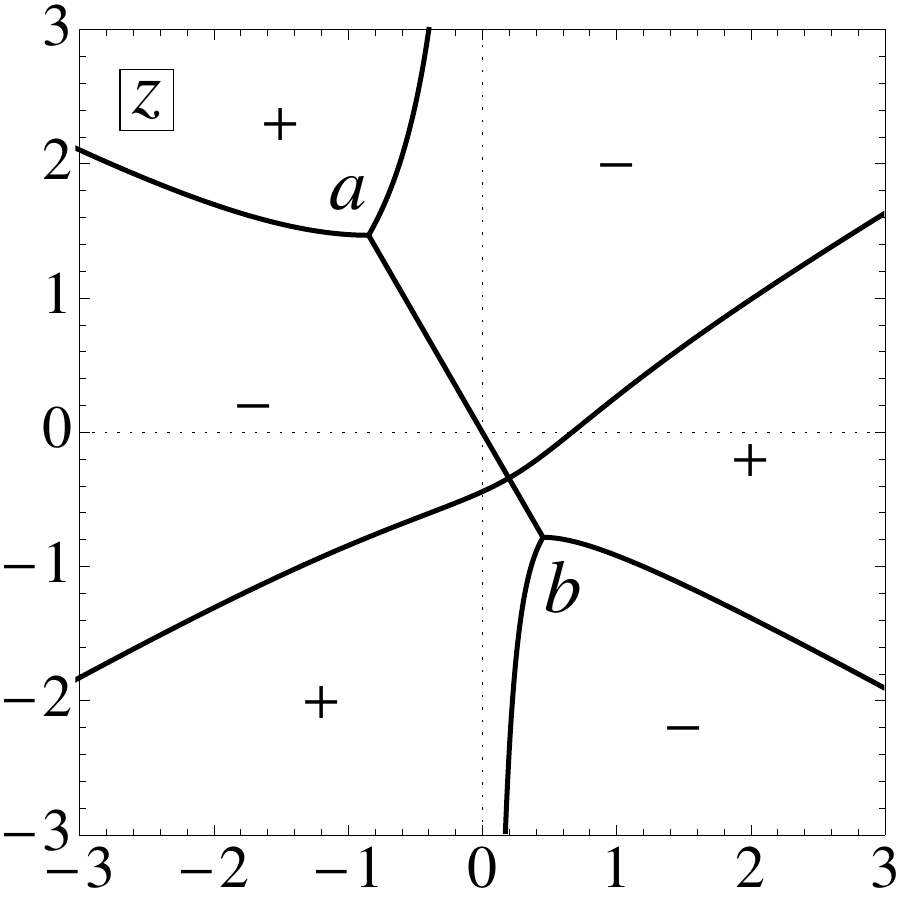}
\\
\includegraphics[width=1.5in]{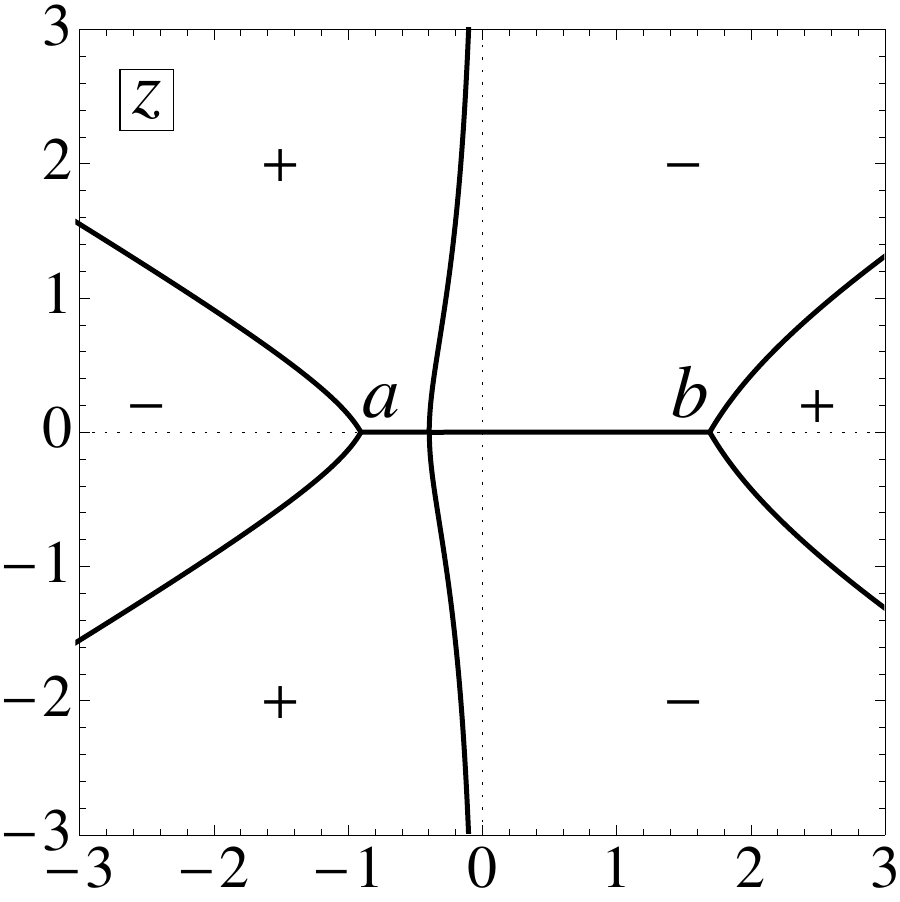}
\includegraphics[width=1.5in]{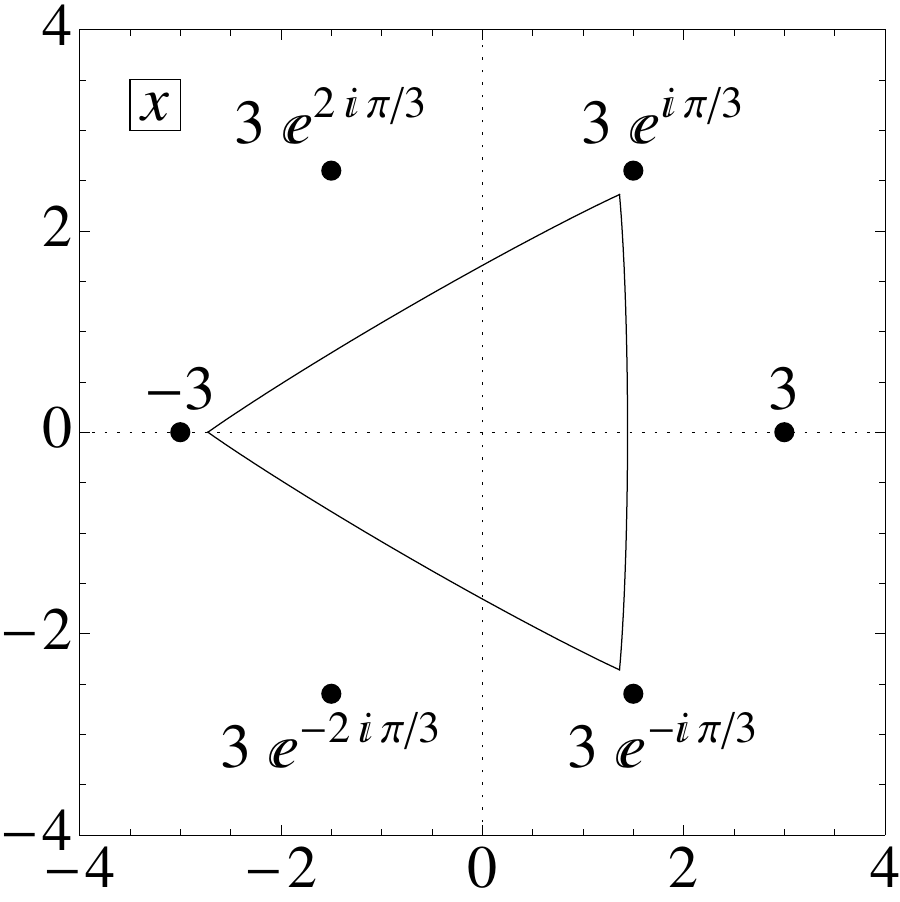}
\includegraphics[width=1.5in]{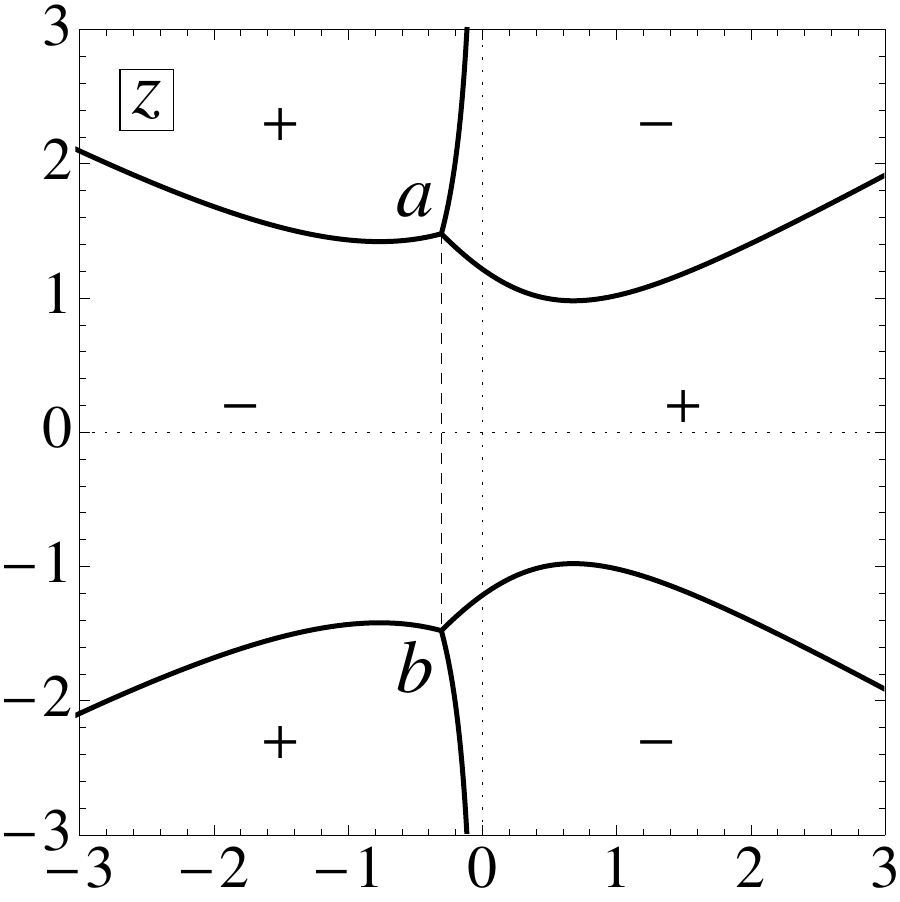}
\\
\includegraphics[width=1.5in]{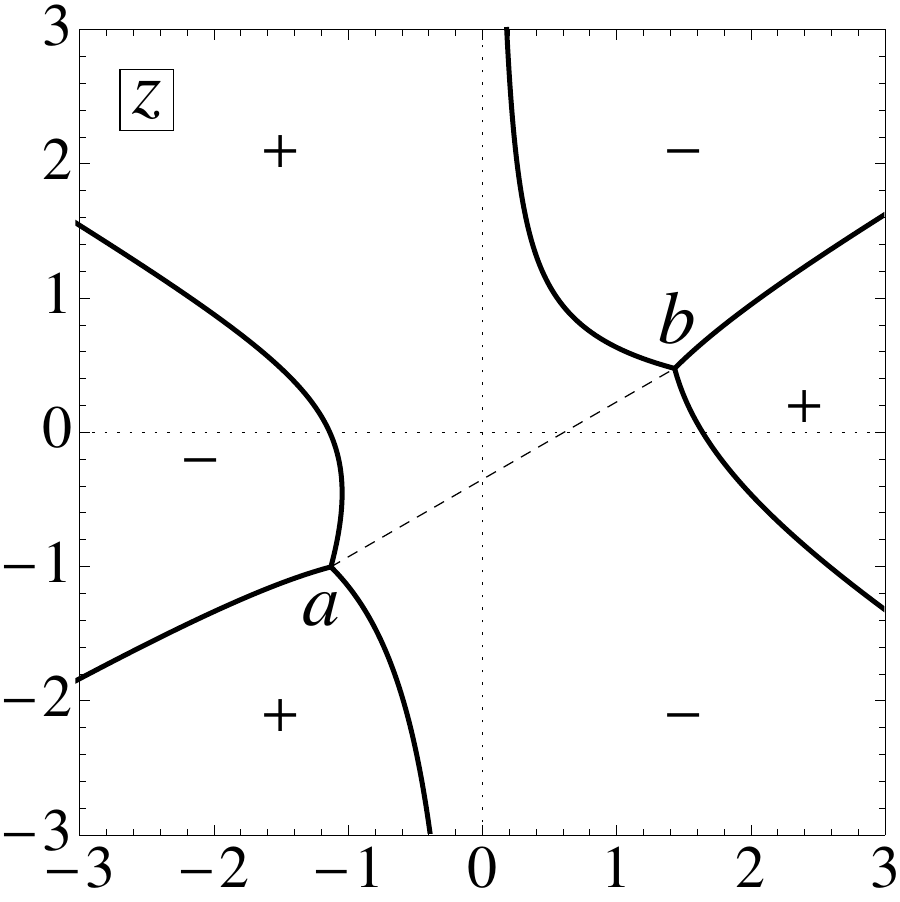}
\includegraphics[width=1.5in]{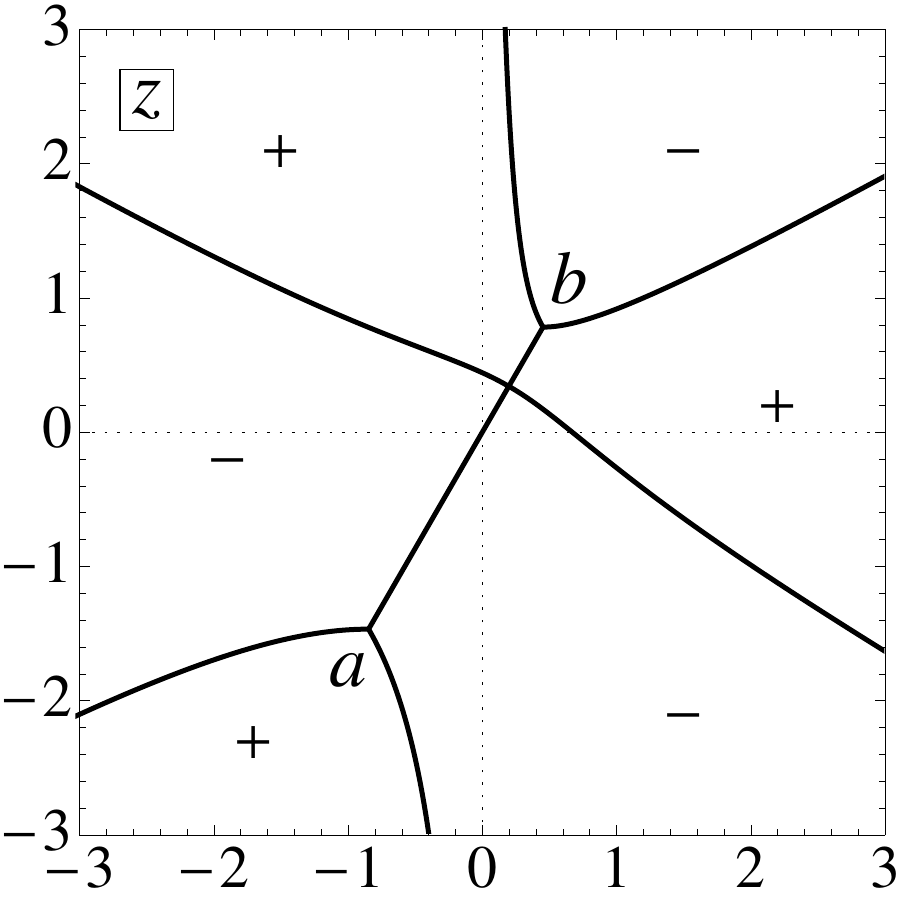}
\caption{\emph{Signature charts for $F(z;x):=\mathrm{Re}(2h(z;x)+\lambda(x))$ in the 
complex $z$-plane for various choices of $x$ outside the boundary curve.  
Counter-clockwise from the right-most chart:  
$x=3$, $x=3e^{i\pi/3}$, $x=3e^{2i\pi/3}$, 
$x=-3$, $x=3e^{-2i\pi/3}$, $x=3e^{-i\pi/3}$.  
The center plot illustrates the relation of the chosen $x$ values to the 
boundary curve.  The solid lines are zero level curves of 
$F$, while the dashed lines represent jump discontinuities 
across the contour $\Sigma$.  Recall that $a(x)$ and $b(x)$ are exchanged as 
$x$ crosses $\mathscr{R}_{-\pi/3}$.}}
\label{h-contours}
\end{figure}

For the remainder of the proof assume that $-S(x)/2$ does not equal either $a(x)$ 
or $b(x)$.  Again from \eqref{eq:hprime}, near $z=\infty$ there are exactly 
six arcs of the level curve that tend to infinity at angles $\pm\pi/6$, 
$\pm\pi/2$, and $\pm 5\pi/6$.  Furthermore, since 
$F(z;x)$ is harmonic as a function of $z$ and has no critical points for 
$z\in\mathbb{C}\backslash(\Sigma\cup\{-S(x)/2\})$, all arcs of the level set must terminate at either $z=a(x)$, 
$z=b(x)$, $z=-S(x)/2$ (only if $\mathrm{Re}(\mathfrak{c}(x))=0$), or $z=\infty$.  

Fix a point $x=x_0$ such that $F(z;x)$ is continuous (as a 
function of $x$) at $x_0$ (i.e. off the contour $\Sigma_S$ shown in Figure 
\ref{fig:Sigma-S}) and such that $z=-S(x_0)/2$ is not on the zero level set of $F(z;x_0)$.  There 
is a maximal open neighborhood $\mathscr{O}(x_0)$ of $x_0$ in the $x$-plane such that each point 
$x\in\mathscr{O}(x_0)$ can be connected to $x_0$ via a path along which 
$F(z;x)$ is continuous and $z=-S(x)/2$ is never on the zero level set of $F(z;x)$.  
The topology of the zero level set of $F$ cannot change without one of these two 
conditions failing, so if the genus-zero ansatz is valid for $x=x_0$, then it is 
also valid at every point in $\mathscr{O}(x_0)$, and vice-versa.  The $x$-plane can be 
written as 
$\mathscr{O}(3)\cup\mathscr{O}(3e^{2i\pi/3})\cup\mathscr{O}(3e^{-2i\pi/3})\cup\mathscr{R}_{\pi/3}\cup\mathscr{R}_{-\pi/3}\cup\mathscr{R}_-\cup\partial T\cup\mathscr{O}(1)\cup\mathscr{O}(e^{2i\pi/3})\cup\mathscr{O}(e^{-2i\pi/3})\cup\Sigma_S$.  
Now in 
$\mathscr{O}(3)\cup\mathscr{O}(3e^{2i\pi/3})\cup\mathscr{O}(3e^{-2i\pi/3})$ 
the genus-zero ansatz is valid, as can be seen from the signature charts shown in Figure \ref{h-contours}.  The same 
figure illustrates that the genus-zero ansatz is valid also on 
$\mathscr{R}_{\pi/3}\cup\mathscr{R}_{-\pi/3}\cup\mathscr{R}_-$ (moreover it is impossible for the 
topology of the zero level set of $F$ to change as $x$ varies along one of 
these rays).  
On the other hand, the genus-zero ansatz fails along 
$\partial T$,
as shown in Figure \ref{genus-zero-failure}, when two arcs of the zero level set of  $F$ pinch 
together at $z=-S(x)/2$, making it impossible to draw arcs from the band endpoints to one 
of the sectors at infinity.  Also, by checking the points 
$x=1$, $x=e^{2i\pi/3}$, and $x=e^{-2i\pi/3}$, it is seen that one of the arcs of the zero level 
set of $F$ emanating from $a(x)$ terminates at $b(x)$, while there is an arc of the zero level 
set of $F$ with both ends terminating at infinity.  As a result the genus-zero 
ansatz fails for 
$\mathscr{O}(1)\cup\mathscr{O}(e^{2i\pi/3})\cup\mathscr{O}(e^{-2i\pi/3})$.
It can be similarly checked that the genus-zero ansatz fails for 
$x\in\Sigma_S$.  
\begin{figure}
\includegraphics[width=1.5in]{h-contours-m3.pdf}
\includegraphics[width=1.5in]{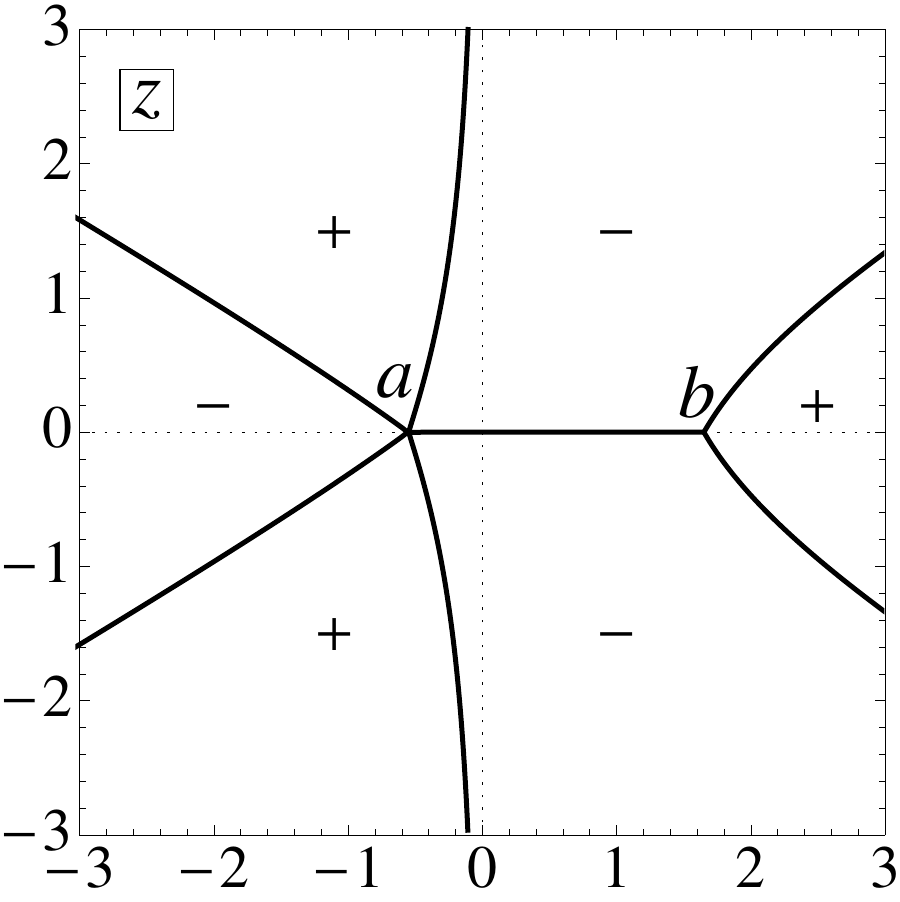}
\includegraphics[width=1.5in]{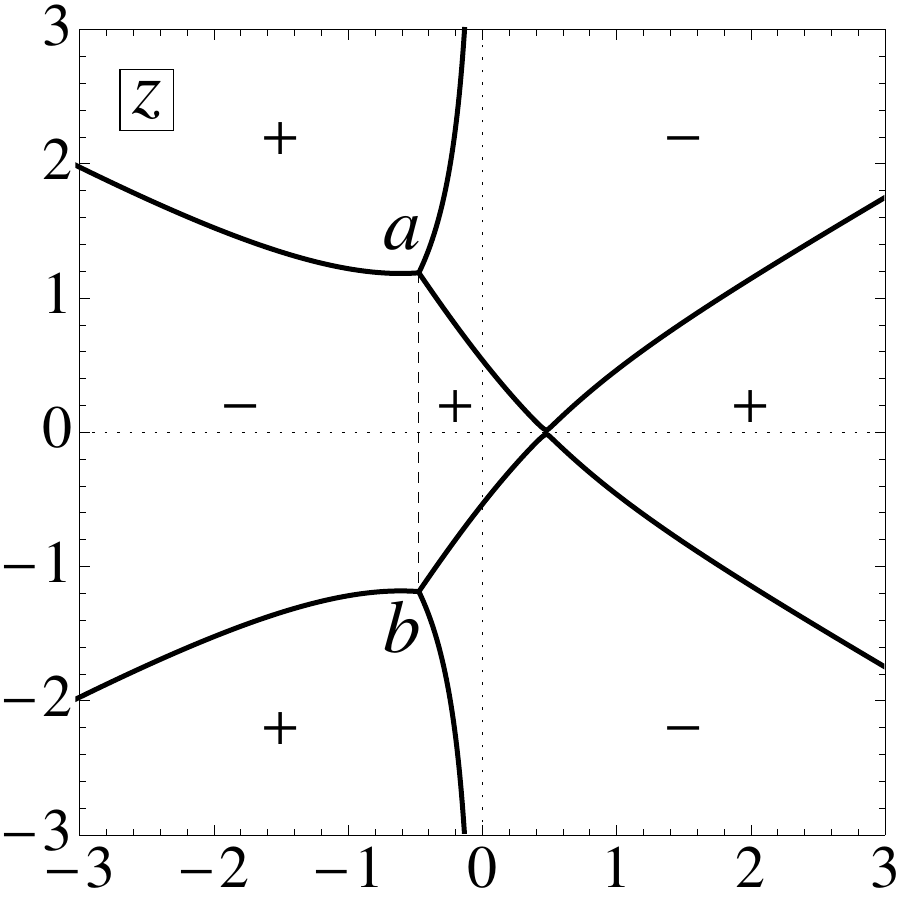}
\includegraphics[width=1.5in]{h-contours-3.pdf}
\\
\includegraphics[width=1.5in]{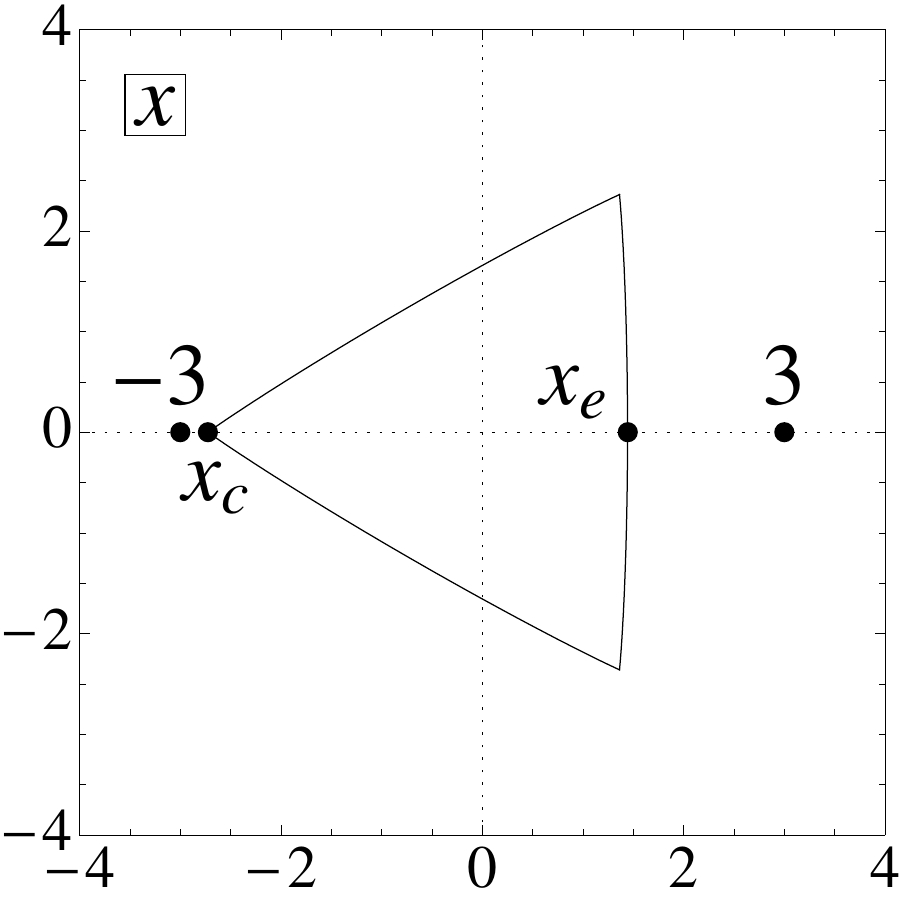}
\caption{\emph{Signature charts for $F(z;x):=\mathrm{Re}(2h(z;x)+\lambda(x))$ in the complex $z$-plane for various choices of $x$ illustrating the two ways the genus-zero ansatz can fail.  Top row, left to right:  $x=-3$, $x=x_c=-(9/2)^{2/3}$, $x=x_e\approx 1.445$ (the positive real point on the boundary of the elliptic region), $x=3$.  The solid lines are zero level curves of $F$, while the dashed lines represent jump discontinuities across the contour $\Sigma$. The bottom plot illustrates the relation of the chosen $x$ values to the boundary curve.}  }
\label{genus-zero-failure}
\end{figure}
\end{proof}

For all $x$ in the genus zero region, we now choose the contour $L$ connecting $z=b(x)$ to infinity (ultimately along the positive real axis) to lie entirely within the domain where the inequality
$\mathrm{Re}(h_+(z;x)+h_-(z;x)+\lambda(x))>0$ holds (except at the initial endpoint $z=b$).

\begin{figure}[h]
\includegraphics[width=2.5in]{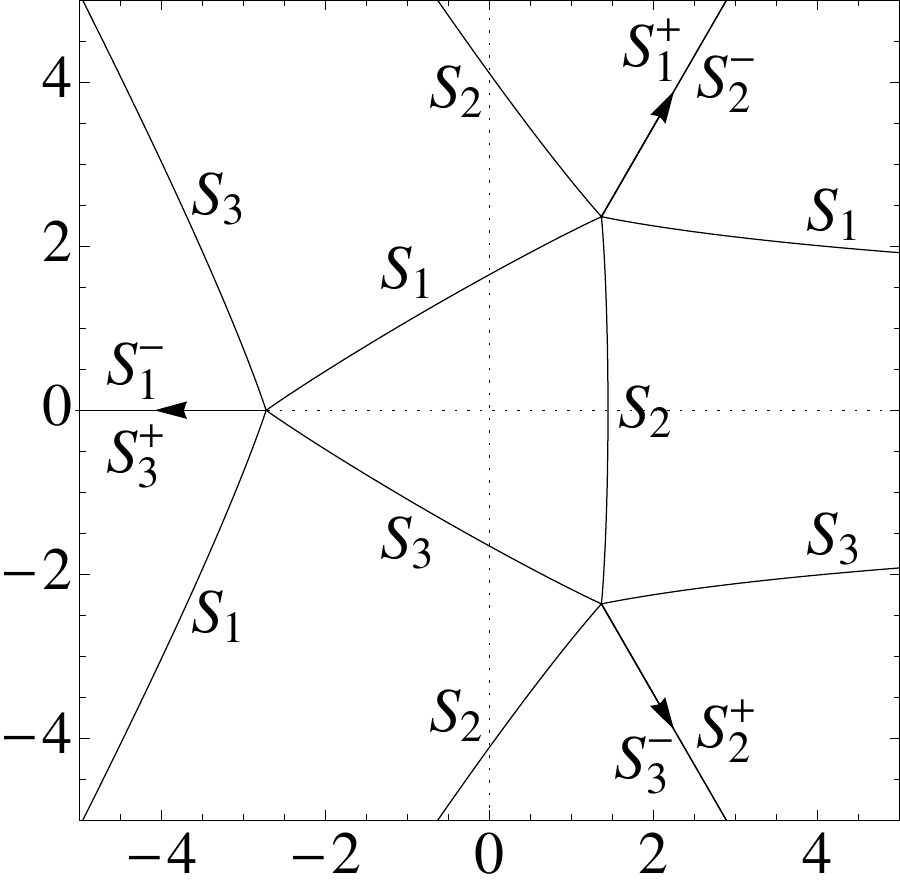}
\caption{\emph{The curves in the complex $x$-plane where general solutions of the Painlev\'e-II equation \eqref{PII} exhibit Stokes phenomenon in the large-$m$ limit.  The curves marked $S_j$ satisfy $\mathrm{Re}(h(-S_j(x)/2;x))=\mathrm{Re}(h(a(x);x))$ for $j=1,2,3$.  The curves marked $S_j^+$ (respectively, $S_j^-$) satisfy this condition when $S_j(x)$ is understood to be the limit as $x$ approaches the curve from the left (respectively, right) as indicated by the arrows.}
}
\label{fig:phantom-stokes}
\end{figure}

Now is an appropriate time to emphasize that the rational solutions exhibit 
fewer Stokes lines in the large-$m$ limit than the generic solution of the 
Painlev\'e-II equation.  The generic Stokes lines were found by Kapaev 
\cite{Kapaev:1997} and are illustrated in Figure \ref{fig:phantom-stokes}.  
The three curves that bound the elliptic region $T$ are bona-fide Stokes curves 
for the rational solutions of the Painlev\'e-II equation, and we have seen that $x$-values on the three 
semi-infinite rays $\mathscr{R}_{\pm\pi/3}$ and $\mathscr{R}_{-}$  
satisfy $\mathrm{Re}(\mathfrak{c}(x))=0$ (but the genus-zero ansatz is still valid).  The six 
remaining curves (two emanating from each of the corners) play a more subtle 
role in the analysis of the rational functions.  To see how they arise, we 
define three new functions $S_1(x)$, $S_2(x)$, and $S_3(x)$, each of which 
satisfies \eqref{cubic-equation} (the defining equation for $S(x)$).  
Specifically, choose these functions so 
\begin{itemize}
\item $S_1(x)$ is analytic off $\mathscr{R}_{\pi/3}\cup\mathscr{R}_{-}$ and $S_1(x)\sim(-\tfrac{4}{3}x)^{1/2}$ as $x\to\infty$ along $\mathscr{R}_{-\pi/3}$,
\item $S_2(x)$ is analytic off $\mathscr{R}_{\pi/3}\cup\mathscr{R}_{-\pi/3}$ and $S_2(x)\sim-(-\tfrac{4}{3}x)^{1/2}$ as $x\to\infty$ along $\mathscr{R}_{-}$,
\item $S_3(x)$ is analytic off $\mathscr{R}_{-}\cup\mathscr{R}_{-\pi/3}$ and $S_3(x)\sim(-\tfrac{4}{3}x)^{1/2}$ as $x\to\infty$ along $\mathscr{R}_{\pi/3}$.
\end{itemize}
(In each case we mean to take the principal branch of the square root.)  Then the six remaining curves in Figure \ref{fig:phantom-stokes} 
are curves on which 
\eq
\label{phantom-curve-condition}
\text{Re}\left(\int_{a(x)}^{-S_j(x)/2}h'(z;x)dz\right)=0
\endeq
for various choices of $j$ as indicated in the figure.  Since $S_j(x)$ never agrees 
with $S(x)$ on these six curves when \eqref{phantom-curve-condition} is 
satisfied, these curves play no role in the genus-zero analysis.  However, 
the fact that these curves lie outside the elliptic region will be used in 
\S\ref{bulk-section}.

\subsection{Reduction to a model Riemann-Hilbert problem.}
\label{model-RHP-genus1}
For the remainder of \S\ref{section-gen0}, we assume that $x$ lies in the genus zero region $\mathbb{C}\setminus\overline{T}$.
We are now ready to deform the jump contours in preparation for performing 
the nonlinear steepest-descent analysis of ${\bf M}(z;x)$, guided by the 
signature charts for $F(z;x):=\mathrm{Re}(2h(z;x)+\lambda(x))$ shown in Figure 
\ref{h-contours} for various values of $x$.  
Using standard sectionally analytic substitutions, 
we deform the jump contours for 
${\bf M}(z)$ so that the six rays intersect at $a$, 
and then collapse the three rays that tend to infinity in the directions $\arg(z)=0$, 
$\arg(z)=-\pi/3$, and $\arg(z)=-2\pi/3$ so that they coincide along 
$\Sigma$.  The resulting jump on $\Sigma$ is simply the product of three factors:
\eq
\label{collapsed-jump}
\bbm 1 & ie^{-\theta/\epsilon} \\ 0 & 1 \ebm \bbm 1 & 0 \\ ie^{\theta/\epsilon} & 1 \ebm \bbm -1 & -ie^{-\theta/\epsilon} \\ 0 & -1 \ebm = \bbm 0 & -ie^{-\theta/\epsilon} \\ -ie^{\theta/\epsilon} & 0 \ebm.
\endeq 
Finally, the six semi-infinite contours are deformed if necessary so that 
their behavior at infinity is unchanged but those with upper-diagonal jump matrices 
are confined to domains in which $F(z)=\mathrm{Re}(2h(z)+\lambda)>0$ and so that those with 
lower-diagonal jump matrices are confined to domains in which $F(z)=\mathrm{Re}(2h(z)+\lambda)<0$ (and $L$ has already been chosen so that $\mathrm{Re}(h_+(z)+h_-(z)+\lambda)>0$ holds along $L$).
These deformations result in a Riemann-Hilbert problem equivalent to that for the matrix $\mathbf{M}(z;x,\epsilon)$, but with a related unknown matrix $\mathbf{N}(z;x,\epsilon)$; we may assume that $\mathbf{N}(z;x,\epsilon)\equiv\mathbf{M}(z;x,\epsilon)$ for sufficiently large $|z|$,
and hence $\mathbf{N}(z;x,\epsilon)$ satisfies the normalization condition
\eq
\lim_{z\to\infty}{\bf N}(z;x,\epsilon)(-z)^{-\sigma_3/\epsilon} = \mathbb{I}.
\endeq
However, the jump contour for $\mathbf{N}(z;x,\epsilon)$ differs from that for $\mathbf{M}(z;x,\epsilon)$
in the finite part of the $z$-plane as illustrated in Figure~\ref{fig:N-jumps-pi}.  
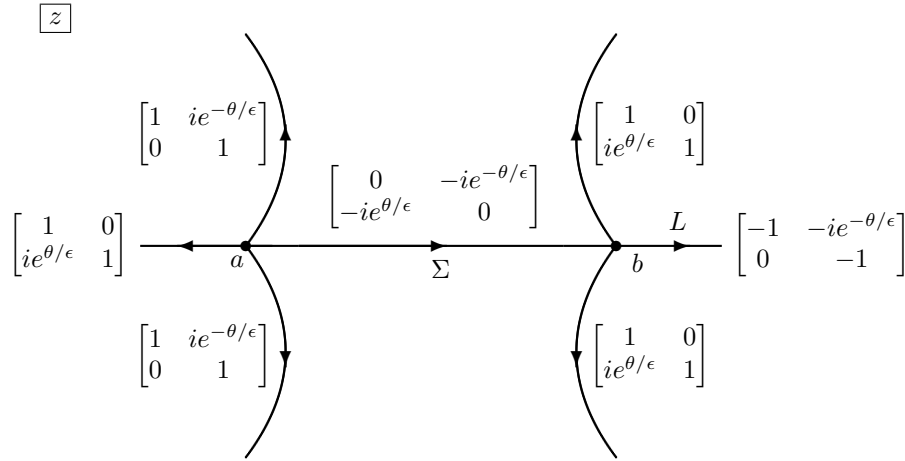
\begin{figure}[h]
\setlength{\unitlength}{2pt}
\begin{center}
\begin{picture}(100,100)(-50,-50)
\put(-74,42){\framebox{$z$}}
\put(-35,0){\circle*{2}}
\put(-38,-4){$a$}
\put(35,0){\circle*{2}}
\put(38,-5){$b$}
\thicklines
\put(25,0){\line(1,0){30}}
\put(48,0){\vector(1,0){1}}
\put(57,-1){$\bbm -1 & -ie^{-\theta/\epsilon} \\ 0 & -1 \ebm$}
\put(45,3){$L$}
\qbezier(35,0)(20,20)(35,40)
\put(27.5,22){\vector(0,1){1}}
\put(30,20){$\bbm 1 & 0 \\ ie^{\theta/\epsilon} & 1 \ebm$}
\put(-25,0){\line(1,0){50}}
\put(-25,0){\vector(1,0){28}}
\put(0,-6){$\Sigma$}
\put(-20,8){$\bbm 0 & -ie^{-\theta/\epsilon} \\ -ie^{\theta/\epsilon} & 0 \ebm$}
\qbezier(-35,0)(-20,20)(-35,40)
\put(-27.5,22){\vector(0,1){1}}
\put(-56,20){$\bbm 1 & ie^{-\theta/\epsilon} \\ 0 & 1 \ebm$}
\put(-25,0){\line(-1,0){30}}
\put(-30,0){\vector(-1,0){18}}
\put(-80,-1){$\bbm 1 & 0 \\ ie^{\theta/\epsilon} & 1 \ebm$}
\qbezier(-35,0)(-20,-20)(-35,-40)
\put(-27.5,-22){\vector(0,-1){1}}
\put(-56,-22){$\bbm 1 & ie^{-\theta/\epsilon} \\ 0 & 1 \ebm$}
\qbezier(35,0)(20,-20)(35,-40)
\put(27.5,-22){\vector(0,-1){1}}
\put(30,-22){$\bbm 1 & 0 \\ ie^{\theta/\epsilon} & 1 \ebm$}
\end{picture}
\end{center}
\caption{\emph{The jump matrices $\mathbf{V^{(N)}}(z;x,\epsilon)$ for 
$x$ on the negative real axis outside the elliptic region $T$.  A topologically 
equivalent deformation applies for any $x$ outside the elliptic region in the 
sector 
$|\arg(-x)|<2\pi/3$.}}
\label{fig:N-jumps-pi}
\end{figure}

We now introduce the $g$-function defined in \S\ref{genus-zero-g} into the deformed Riemann-Hilbert problem for $\mathbf{N}(z;x,\epsilon)$.  
With $g(z;x)$ and $\lambda(x)$ given by \eqref{g-formula} and 
\eqref{eq:lambdadef}, we define
\begin{equation}
\label{O-def-gen0}
\mathbf{O}(z;x,\epsilon):=e^{-\lambda(x)\sigma_3/(2\epsilon)}\mathbf{N}(z;x,\epsilon)e^{-g(z;x)\sigma_3/\epsilon}e^{\lambda(x)\sigma_3/(2\epsilon)}.
\end{equation}
The matrix-valued function $\mathbf{O}(z;x,\epsilon)$ has jump discontinuities across the 
same contours as $\mathbf{N}(z;x,\epsilon)$.  The jump matrices for 
$\mathbf{O}(z;x,\epsilon)$ are obtained from those of
$\mathbf{N}(z;x,\epsilon)$ by the recipe
\begin{equation}
\mathbf{V^{(N)}}(z;x,\epsilon) = \bbm v_{11}^{(\mathbf{N})} & v_{12}^{(\mathbf{N})} \\ v_{21}^{(\mathbf{N})} & v_{22}^{(\mathbf{N})} \ebm \implies \mathbf{V^{(O)}}(z;x,\epsilon) := \bbm v_{11}^{(\mathbf{N})}e^{-(g_+-g_-)/\epsilon} & v_{12}^{(\mathbf{N})}e^{(g_++g_--\lambda)/\epsilon} \\ v_{21}^{(\mathbf{N})}e^{-(g_++g_--\lambda)/\epsilon} & v_{22}^{(\mathbf{N})}e^{(g_+-g_-)/\epsilon} \ebm.
\end{equation}
Then, using the properties of $g$ described in Proposition~\ref{prop:g0-g-function}
along with $\epsilon^{-1}+\tfrac{1}{2}\in\mathbb{Z}$, we see that 
$\mathbf{O}(z;x,\epsilon)$ satisfies the Riemann-Hilbert problem with jump contour and jump matrices as illustrated in Figure~\ref{fig:O-jumps-pi} 
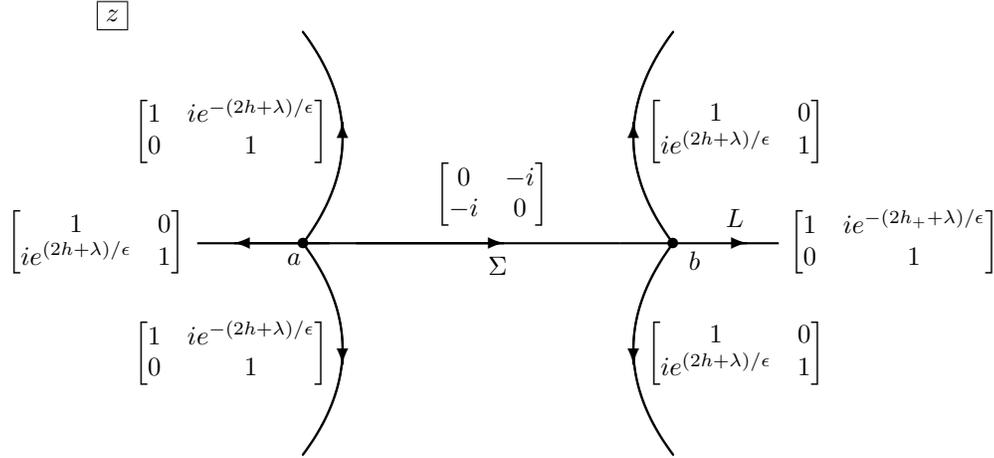
\begin{figure}[h]
\setlength{\unitlength}{2pt}
\begin{center}
\begin{picture}(100,100)(-50,-50)
\put(-74,42){\framebox{$z$}}
\put(-35,0){\circle*{2}}
\put(-38,-4){$a$}
\put(35,0){\circle*{2}}
\put(38,-5){$b$}
\thicklines
\put(25,0){\line(1,0){30}}
\put(48,0){\vector(1,0){1}}
\put(57,-1){$\bbm 1 & ie^{-(2h_++\lambda)/\epsilon} \\ 0 & 1 \ebm$}
\put(45,3){$L$}
\qbezier(35,0)(20,20)(35,40)
\put(27.5,22){\vector(0,1){1}}
\put(30,20){$\bbm 1 & 0 \\ ie^{(2h+\lambda)/\epsilon} & 1 \ebm$}
\put(-25,0){\line(1,0){50}}
\put(-25,0){\vector(1,0){28}}
\put(0,-6){$\Sigma$}
\put(-10,8){$\bbm 0 & -i \\ -i & 0 \ebm$}
\qbezier(-35,0)(-20,20)(-35,40)
\put(-27.5,22){\vector(0,1){1}}
\put(-67,20){$\bbm 1 & ie^{-(2h+\lambda)/\epsilon} \\ 0 & 1 \ebm$}
\put(-25,0){\line(-1,0){30}}
\put(-30,0){\vector(-1,0){18}}
\put(-91,-1){$\bbm 1 & 0 \\ ie^{(2h+\lambda)/\epsilon} & 1 \ebm$}
\qbezier(-35,0)(-20,-20)(-35,-40)
\put(-27.5,-22){\vector(0,-1){1}}
\put(-67,-22){$\bbm 1 & ie^{-(2h+\lambda)/\epsilon} \\ 0 & 1 \ebm$}
\qbezier(35,0)(20,-20)(35,-40)
\put(27.5,-22){\vector(0,-1){1}}
\put(30,-22){$\bbm 1 & 0 \\ ie^{(2h+\lambda)/\epsilon} & 1 \ebm$}
\end{picture}
\end{center}
\caption{\emph{The jump matrices $\mathbf{V^{(O)}}(z;x,\epsilon)$ for 
$x$ on the negative real axis outside the elliptic region $T$.  A topologically 
equivalent deformation works for any $x$ outside the elliptic region in the 
sector $|\arg(-x)|<2\pi/3$.  We say that such a topologically equivalent Riemann-Hilbert problem uses the 
Negative-$x$ Configuration.  }}
\label{fig:O-jumps-pi}
\end{figure}
and subject to the normalization condition 
\begin{equation}
\lim_{z\to\infty}\mathbf{O}(z;x,\epsilon)=\mathbb{I}.
\end{equation}
Examining the jump matrices for $\mathbf{O}(z;x,\epsilon)$ shown in Figure 
\ref{fig:O-jumps-pi} in light of the signature chart for $F=\text{Re}(2h+\lambda)$ 
shown in Figure \ref{h-contours}, we see that, \emph{because the genus zero ansatz is valid in the genus zero region according to Proposition~\ref{genus-zero-proposition}},  the jump matrix decays rapidly to the identity as $\epsilon\downarrow 0$ for all $z$ with the sole exception of $z\in\Sigma$.  
Therefore, we are led to attempt to construct an \emph{outer parametrix} $\Odot^{\rm(out)}(z;x)$ as a solution to the 
following ``one cut'' outer model problem formulated on the contour $\Sigma$:
\begin{rhp}[] 
\label{Odot-genus0-rhp}
Find a $2\times 2$ matrix-valued function $\Odot^{\rm(out)}(z;x)$ 
satisfying the following conditions:
\begin{itemize}
\item[]\textbf{Analyticity:}  $\Odot^{\rm(out)}(z;x)$ is analytic for 
$z\notin\Sigma$ with H\"older-continuous boundary values on $\Sigma$ with 
the exception of the endpoints $a$ and $b$, where negative one-fourth power 
singularities are allowed.
\item[]\textbf{Jump condition:}  $\Odot_+^{\rm(out)}(z;x) = \Odot_-^{\rm(out)}(z;x)\bbm 0 & -i \\ -i & 0 \ebm$ for $z\in\Sigma$.
\item[]\textbf{Normalization:}  $\Odot^{\rm(out)}(z;x)\to \mathbb{I}$ as $z\to\infty$. 
\end{itemize}
\label{rhp:outer-model}
\end{rhp}
We now solve for $\Odot^{\rm(out)}$.  Using the factorization 
\eq
\bbm 0 & -i \\ -i & 0 \ebm = \frac{1}{\sqrt{2}} \bbm 1 & -1 \\ 1 & 1 \ebm \cdot \bbm -i & 0 \\ 0 & i \ebm \cdot \frac{1}{\sqrt{2}} \bbm 1 & 1 \\ -1 & 1 \ebm,
\endeq
we see that 
\eq
{\bf \ddot{O}}(z;x):=\frac{1}{\sqrt{2}} \bbm 1 & 1 \\ -1 & 1 \ebm \cdot \Odot^{\rm(out)}(z;x) \cdot \frac{1}{\sqrt{2}}\bbm 1 & -1 \\ 1 & 1 \ebm
\endeq
satisfies the diagonal jump condition 
\eq
{\bf \ddot{O}}_+(z;x) = {\bf \ddot{O}}_-(z;x)\bbm -i & 0 \\ 0 & i \ebm \text{ for } z\in\Sigma
\endeq
and the normalization condition 
\eq
{\bf \ddot{O}}(z;x) \to \mathbb{I} \quad \text{as} \quad z\to\infty.
\endeq
Let $\beta(z;x)$ be the function analytic for $z\in\mathbb{C}\setminus\Sigma$ that is defined by the relation
\eq
\label{beta-gen0}
\beta(z;x)^4=\frac{z-a(x)}{z-b(x)}
\endeq
and with the branch chosen so that $\beta(z)\to 1$ as $z\to\infty$.  
Then 
\eq
{\bf \ddot{O}}(z;x) = \beta(z;x)^{\sigma_3},
\endeq
and thus
\eq
\label{Odot-gen0}
\Odot^{\rm(out)}(z;x) = \frac{1}{2}\bbm \beta(z;x)+\beta(z;x)^{-1} & \beta(z;x)-\beta(z;x)^{-1} \\ \beta(z;x)-\beta(z;x)^{-1} & \beta(z;x)+\beta(z;x)^{-1} \ebm.
\endeq

\subsection{Deformation valid outside the elliptic region near the positive real $x$-axis.}
\label{subsection-outside-positive-x}
\begin{figure}
\setlength{\unitlength}{2pt}
\begin{center}
\begin{picture}(100,100)(-50,-50)
\put(-84,42){\framebox{$z$}}
\put(-20,-10){\vector(1,1){8}}
\put(-10,15){\circle*{2}}
\put(-15,15){$a$}
\put(-10,-15){\circle*{2}}
\put(-11,-21){$b$}
\thicklines
\qbezier(-10,-15)(-5,0)(40,0)
\put(9,-2.75){\vector(4,1){4}}
\put(43,1){$\bbm 1 & ie^{-(2h_++\lambda)/\epsilon} \\ 0 & 1 \ebm$}
\put(20,1){$L$}
\put(5.25,30){\vector(1,2){1}}
\qbezier(-10,15)(0,15)(12,45)
\put(14,39){$\bbm 1 & 0 \\ ie^{(2h+\lambda)/\epsilon} & 1 \ebm$}
\put(-10,-15){\line(0,1){30}}
\put(-10,15){\vector(0,-1){17}}
\put(-8,-2){$\Sigma$}
\put(-42,-12){$\bbm 0 & -i \\ -i & 0 \ebm$}
\put(-10,15){\line(-1,2){15}}
\put(-10,15){\vector(-1,2){9}}
\put(-64,39){$\bbm 1 & ie^{-(2h+\lambda)/\epsilon} \\ 0 & 1 \ebm$}
\qbezier(-10,15)(-17,0)(-50,0)
\put(-30,2){\vector(-4,-1){4}}
\put(-85,-1){$\bbm 1 & 0 \\ ie^{(2h+\lambda)/\epsilon} & 1 \ebm$}
\qbezier(-10,-15)(-20,-15)(-32,-45)
\put(-24,-27.75){\vector(-1,-2){1}}
\put(-70,-41){$\bbm 1 & ie^{-(2h+\lambda)/\epsilon} \\ 0 & 1 \ebm$}
\put(-10,-14){\line(1,-2){15}}
\put(-10,-14){\vector(1,-2){9}}
\put(7,-41){$\bbm 1 & 0 \\ ie^{(2h+\lambda)/\epsilon} & 1 \ebm$}
\end{picture}
\end{center}
\caption{\emph{The jump matrices $\mathbf{V^{(O)}}(z;x,\epsilon)$ for 
$x$ on the positive real axis outside the elliptic region $T$.  A topologically 
equivalent deformation works for any $x$ outside the elliptic region in the 
sector $-\pi/3<\arg(x)<\pi$.  We say that such a topologically equivalent Riemann-Hilbert problem uses the Positive-$x$ Configuration.  }}
\label{fig:O-jumps}
\end{figure}
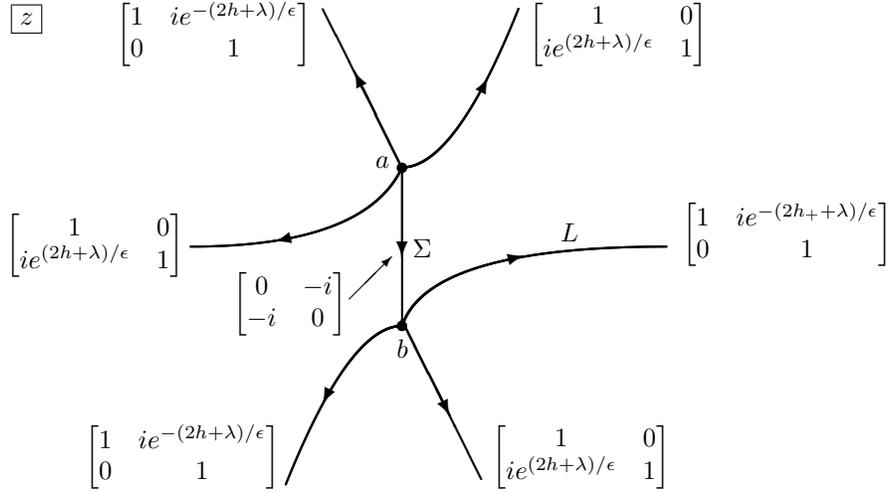
Here we briefly note a few details concerning the Riemann-Hilbert problem analysis 
when $-\pi/3<\arg(x)<\pi/3$.  The definition of ${\bf M}(z;x,\epsilon)$ 
in \eqref{M-def} remains the same.  To define ${\bf N}(z;x,\epsilon)$ via a suitable deformation of the jump contours for $\mathbf{M}(z;x,\epsilon)$, initial segments 
of the rays in Figure \ref{fig:VlocR} with angles $\arg(z)=\pi/3$, $\arg(z)=2\pi/3$, and 
$\arg(-z)=0$ are collapsed together to form part of $\Sigma$, while initial segments of the other 
three rays in Figure \ref{fig:VlocR} are collapsed together to form the remaining part of $\Sigma$.  
(Recall that for $|\arg(-x)|<2\pi/3$ we collapsed together the initial segments of the rays with 
angles $\arg(z)=\pm\pi/3$ and $\arg(z)=0$ to form part of $\Sigma$, and collapsed together the initial segments of the rays with angles $\arg(z)=\pm 2\pi/3$ and $\arg(-z)=0$ to form the remaining part of $\Sigma$.)  Once the contour arc $\Sigma$ is determined, the 
definition of ${\bf O}(z;x,\epsilon)$ in \eqref{O-def-gen0} takes exactly the same form.
The resulting jump matrices for ${\bf O}(z;x,\epsilon)$ are 
shown in Figure \ref{fig:O-jumps}.  

\subsection{Modification of the Riemann-Hilbert analysis for $x$ near $\partial T$}  
For $x$ in the elliptic region $T$ it will be necessary to use a different set of 
contour deformations and introduce additional cuts into the outer model 
problem (see \S\ref{bulk-section}).  On the other hand, for $x\in\mathbb{C}\setminus\overline{T}$ 
but sufficiently close to the boundary $\partial T$ of the elliptic region, the genus-zero contour 
ansatz \emph{nearly} works, but it becomes necessary to insert an additional local 
parametrix to recover uniform decay of the jump matrices to the identity.   
A different parametrix is required depending 
on whether $x$ is close to one of the three corners of $\partial T$ or not.  As shown in 
the second plot in the top row of Figure \ref{genus-zero-failure}, if $x$ is at a corner 
then five arcs of the zero level set of $F(z;x):=\mathrm{Re}(2h(z;x)+\lambda(x))$ meet at one of 
the band endpoints.  The necessary local parametrix is associated to a 
\emph{tritronqu\'ee} solution of the Painlev\'e-I equation, and the correction 
to the leading-order asymptotics is given in terms of the Hamiltonian of this 
function.  If $x$ is on the boundary but not at a corner, then the third 
plot in the top row of Figure \ref{genus-zero-failure} indicates it is necessary to 
insert a local parametrix where two arcs of the zero level set collide at the critical point $z=-S(x)/2$ of $h(z;x)$.  This required parametrix 
is associated to the Hermite orthogonal polynomials and the resulting 
correction to the leading-order asymptotics involves trigonometric functions.  
Once the appropriate parametrix is installed in each case, one obtains approximations for the rational solutions of the Painlev\'e-II equation that are also valid for $x$ slightly inside of the elliptic region $T$.  In other words, the parametrices describe the solution of the connection problem for the rational Painlev\'e-II functions across the Stokes curve $\partial T$.
Full details of the large-$m$ behavior of the rational Painlev\'e-II functions for $x$ near $\partial T$ in both the corner and edge cases 
will be given in a sequel to this paper \cite{Buckingham-rational-crit}.

\subsection{Error analysis}
Let $O_T$ be any open cover of $\overline{T}$ and fix $\delta>0$.  Define a sectorial domain of the exterior of $T$ by setting
\eq
\mathcal{S}_m:=\{x\in\mathbb{C}: |\arg(x)|\le \pi/3-\delta \text{ and } \, \mathrm{Re}(2\mathfrak{c}(x))>\log(m)/m\},
\endeq
where $\mathfrak{c}(x)$ is defined in \eqref{c-def-gen0}.  Define an $m$-dependent region of the complex $x$-plane as follows:
\eq
\label{genus-zero-region-Km}
K_m := (\mathbb{C}\setminus O_T) \cup \mathcal{S}_m \cup e^{2\pi
i/3}\mathcal{S}_m \cup e^{-2\pi i/3}\mathcal{S}_m.
\endeq
In words, $x\in K_m$ means that $x$ is outside of the closure $\overline{T}$ of the elliptic region, and while bounded away from the three corners of $\partial T$, $x$ may approach $\partial T$
elsewhere at a suitably slow rate (distance $\gtrsim\log(m)/m$).

For given $x\in K_m$, let $\mathbb{D}_a$ and $\mathbb{D}_b$ be closed  
disks of radius independent of $m$ containing the points $a(x)$ and $b(x)$, respectively, small enough to be disjoint and to exclude the point 
$z=-S(x)/2$.  The boundaries 
$\partial\mathbb{D}_a$ and $\partial\mathbb{D}_b$ are given a negative (clockwise) orientation. 
Within these disks we will use standard Airy 
parametrices that satisfy the same jump conditions as does ${\bf O}(z;x,\epsilon)$ and that match 
well onto the outer parametrix $\Odot^\text{(out)}(z;x)$.  The construction of 
these parametrices dates back to the work of Deift and Zhou on the Painlev\'e-II equation \cite{Deift:1995}.

\subsubsection{Inner (Airy) parametrix near $z=a$}  
The construction of the Airy parametrix near $z=a$ will depend on whether, given $x$, the 
deformed Riemann-Hilbert problem is 
topologically equivalent to that illustrated
in Figure \ref{fig:O-jumps-pi} (the \emph{Negative-$x$ Configuration}) or 
to that illustrated in Figure \ref{fig:O-jumps} (the \emph{Positive-$x$ 
Configuration}).  

For $x\in K_m$ (as defined in \eqref{genus-zero-region-Km}), we 
can choose $\mathbb{D}_a$ sufficiently small such that 
$(2h(z)+\lambda)^2$ is analytic for $z\in\mathbb{D}_a$, has exactly a 
third-order zero at $z=a$, and is non-zero for 
$z\in\mathbb{D}_a\setminus\{a\}$.  Taking into account the signature 
charts of $F(z)=\text{Re}(2h(z)+\lambda)$ shown in Figure \ref{h-contours}, we see that there is a univalent function $\tau_a:\mathbb{D}_a\to\mathbb{C}$ satisfying the equation
\begin{equation}
\tau_a(z)^3=(2h(z)+\lambda)^2,\quad z\in\mathbb{D}_a,
\end{equation}
and such that (assuming the arcs of the jump contour for $\mathbf{O}$ have been aligned correctly within $\mathbb{D}_a$):
\begin{itemize}
\item In the Negative-$x$ Configuration, 
$\tau_a(z)$ is positive real in $\mathbb{D}_a$ along the arc of the 
jump contour for ${\bf O}(z)$ where the jump matrix 
${\bf V}^{({\bf O})}(z)$ is lower-triangular.  
\item In the Positive-$x$ Configuration, 
$\tau_a(z)$ is positive real in $\mathbb{D}_a$ along the arc of the 
jump contour for ${\bf O}(z)$ where the jump matrix 
${\bf V}^{({\bf O})}(z)$ is upper-triangular.  
\end{itemize}
Note that $\tau_a(z)$ may depend both on $x$ and on the configuration 
of jump matrices.  The conformal mapping $\tau_a$ satisfies 
$\tau_a(a)=0$.  

Define a matrix function $\mathbf{H}_a:\mathbb{D}_a\to SL(2,\mathbb{C})$ 
for $z\in \mathbb{D}_a$ by 
\begin{equation}
\label{eq:g0-Ha}
\mathbf{H}_{a}(z):= \begin{cases} \dot{\mathbf{O}}^{(\mathrm{out})}(z)\bbm 0 & e^{i\pi/4} \\ -e^{-i\pi/4} & 0 \ebm \mathbf{V}^{-1}\tau_a(z)^{-\sigma_3/4} \text{ for the Negative-$x$ Configuration}, \vspace{.1in} ”\\
\dot{\mathbf{O}}^{(\mathrm{out})}(z) e^{i\pi\sigma_3/4} \mathbf{V}^{-1}\tau_a(z)^{-\sigma_3/4} \text{ for the Positive-$x$ Configuration},
\end{cases}
\end{equation}
where $\mathbf{V}$ is the unimodular and unitary matrix defined by 
\eqref{eq:AiryAppendix-EigenvectorMatrix}.  (Where  $\tau_a$ and 
$\dot{\mathbf{O}}^{(\mathrm{out})}$ both have jump discontinuities along 
$\Sigma\cap \mathbb{D}_a$, either boundary value suffices and gives the same 
value for $\mathbf{H}_a$.)  Since $\tau_a(z)^{\sigma_3/4}\mathbf{V}$ satisfies 
the same jump conditions as does $\dot{\mathbf{O}}^{(\mathrm{out})}(z)$ within 
the disk $\mathbb{D}_a$, and since $\mathbf{H}_a=\mathcal{O}((z-a)^{-1/2})$, 
the matrix defined by \eqref{eq:g0-Ha} is an analytic function within its 
disk of definition, and hence its norm is controlled by its size on 
$\partial\mathbb{D}_a$ via the maximum modulus principle.  Note that 
$\mathbf{H}_a(z)$ is independent of $\epsilon$, and also that 
$\det(\mathbf{H}_a(z))=1$ for $z\in \mathbb{D}_a$ by \eqref{Odot-gen0} and 
\eqref{eq:AiryAppendix-EigenvectorMatrix}.  

We define the Airy parametrix for $z\in\mathbb{D}_a$ as 
\begin{equation}
\dot{\mathbf{O}}^{(a)}(z):= \begin{cases} \mathbf{H}_a(z)\epsilon^{\sigma_3/6}\mathbf{A}(\epsilon^{-2/3}\tau_a(z)) \bbm 0 & -e^{i\pi/4} \\ e^{-i\pi/4} & 0 \ebm \text{ for the Negative-$x$ Configuration,} \\
\mathbf{H}_a(z)\epsilon^{\sigma_3/6}\mathbf{A}(\epsilon^{-2/3}\tau_a(z)) e^{-i\pi\sigma_3/4} \text{ for the Positive-$x$ Configuration,} 
\end{cases}
\end{equation}
where the matrix function $\mathbf{A}$ is defined by \eqref{eq:AiryAppendix-ParametrixDef-I}--\eqref{eq:AiryAppendix-ParametrixDef-IV}.  Then
\begin{equation}
\begin{split}
\dot{\mathbf{O}}^{(a)}(z)\dot{\mathbf{O}}^{(\mathrm{out})}(z)^{-1} &= \mathbf{H}_a(z) \epsilon^{\sigma_3/6}\mathbf{A}(\epsilon^{-2/3}\tau_a(z))\mathbf{V}^{-1}[\epsilon^{-2/3}\tau_a(z)]^{-\sigma_3/4}\epsilon^{-\sigma_3/6}\mathbf{H}_a(z)^{-1}\\
&=\mathbb{I}+\begin{bmatrix}\mathcal{O}(\epsilon^2) & \mathcal{O}(\epsilon)\\
\mathcal{O}(\epsilon) & \mathcal{O}(\epsilon^2)\end{bmatrix},\quad z\in\partial \mathbb{D}_a,
\end{split}
\end{equation}
where we have used \eqref{eq:AiryAppendix-ParametrixAsymp} and the fact that 
$\tau_a(z)$ is bounded away from zero on the boundary of $\mathbb{D}_a$.  
Also, from 
\eqref{eq:AiryAppendix-AiryJump-I}--\eqref{eq:AiryAppendix-AiryJump-III} it 
follows that $\dot{\mathbf{O}}^{(a)}(z)$ satisfies exactly the same 
jump conditions within $\mathbb{D}_a$ as does $\mathbf{O}(z)$.

\subsubsection{Inner (Airy) parametrix near $z=b$}  The construction of the 
Airy parametrix near $z=b$ differs from that near $z=a$ due to the presence of 
the branch cut of the function $h$ on the contour $L$.  
The arcs $(\Sigma\cup L)\cap\mathbb{D}_b$ of the jump contour for 
$\mathbf{O}(z)$ divide $\mathbb{D}_b$ into two complementary parts, 
$\mathbb{D}^+_b$ on the left and $\mathbb{D}^-_b$ on the right by orientation 
(see Figures \ref{fig:O-jumps-pi} and \ref{fig:O-jumps}).  Note that 
$\mathbb{D}_b^-$ contains an arc of one jump contour in the Negative-$x$ 
Configuration but arcs of two jump contours in the Positive-$x$ 
Configuration.  

We define a univalent function $\tau_b(z)$ for $z\in\mathbb{D}_b$ so that
\begin{equation}
\tau_b(z)^3=\begin{cases}(2h(z)+\lambda-2\pi i)^2,&\quad z\in\mathbb{D}_b^+,\\
(2h(z)+\lambda+2\pi i)^2,&\quad z\in\mathbb{D}_b^-, 
\end{cases}
\end{equation}
and such that (assuming the arcs of the jump contour for $\mathbf{O}$ have been aligned correctly within $\mathbb{D}_b$):
\begin{itemize}
\item In the Negative-$x$ Configuration, 
$\tau_b(z)$ is positive real in $\mathbb{D}_b$ along the arc of the 
jump contour for ${\bf O}(z)$ where the jump matrix 
${\bf V}^{({\bf O})}(z)$ is upper-triangular.  
\item In the Positive-$x$ Configuration, 
$\tau_b(z)$ is positive real in $\mathbb{D}_b$ along the arc of the 
jump contour for ${\bf O}(z)$ where the jump matrix 
${\bf V}^{({\bf O})}(z)$ is lower-triangular.  
\end{itemize}
Note that $\tau_b(b)=0$.  

Next, define an analytic matrix function 
$\mathbf{H}_b:\mathbb{D}_b\to SL(2,\mathbb{C})$ by ($\mathbf{V}$ is defined 
by \eqref{eq:AiryAppendix-EigenvectorMatrix})
\begin{equation}
\label{eq:g0-Hb}
\mathbf{H}_{b}(z):= \begin{cases} \dot{\mathbf{O}}^{(\mathrm{out})}(z) e^{i\pi\sigma_3/4} \mathbf{V}^{-1}\tau_b(z)^{-\sigma_3/4} \text{ for the Negative-$x$ Configuration}, \vspace{.1in} ”\\
\dot{\mathbf{O}}^{(\mathrm{out})}(z) \bbm 0 & e^{i\pi/4} \\ -e^{-i\pi/4} & 0 \ebm \mathbf{V}^{-1}\tau_b(z)^{-\sigma_3/4} \text{ for the Positive-$x$ Configuration}.
\end{cases}
\end{equation}
The function $\mathbf{H}_b(z)$ is analytic in $\mathbb{D}_b$ even though both 
$\dot{\mathbf{O}}^{(\mathrm{out})}(z)$ and $\tau_b(z)^{-\sigma_3/4}$ have 
jump discontinuities along $\Sigma\cap\mathbb{D}_D$, as seen by a direct 
calculation.  

Now for $z\in\mathbb{D}_b$, define $\Odot^{(b)}(z)$ by ($\mathbf{A}$ is defined by \eqref{eq:AiryAppendix-ParametrixDef-I}--\eqref{eq:AiryAppendix-ParametrixDef-IV})
\begin{equation}
\dot{\mathbf{O}}^{(b)}(z):= \begin{cases} \mathbf{H}_b(z)\epsilon^{\sigma_3/6}\mathbf{A}(\epsilon^{-2/3}\tau_b(z)) e^{-i\pi\sigma_3/4} \text{ for the Negative-$x$ Configuration,} \\
\mathbf{H}_b(z)\epsilon^{\sigma_3/6}\mathbf{A}(\epsilon^{-2/3}\tau_b(z)) \bbm 0 & -e^{i\pi/4} \\ e^{-i\pi/4} & 0 \ebm \text{ for the Positive-$x$ Configuration.} 
\end{cases}
\end{equation}
From \eqref{eq:AiryAppendix-AiryJump-I}--\eqref{eq:AiryAppendix-AiryJump-III}, 
this parametrix satisfies the same jump conditions in $\mathbb{D}_b$ as $\mathbf{O}(z)$, 
and from \eqref{eq:AiryAppendix-ParametrixAsymp}, 
\begin{equation}
\begin{split}
\dot{\mathbf{O}}^{(b)}(z)\dot{\mathbf{O}}^{(\mathrm{out})}(z)^{-1}&=\mathbf{H}_b(z)\epsilon^{\sigma_3/6}\mathbf{A}(\epsilon^{-2/3}\tau_b(z))\mathbf{V}^{-1}[\epsilon^{-2/3}\tau_b(z)]^{-\sigma_3/4}\epsilon^{-\sigma_3/6}\mathbf{H}_b(z)^{-1}\\
&=\mathbb{I}+\begin{bmatrix}\mathcal{O}(\epsilon^{2}) & \mathcal{O}(\epsilon)\\
\mathcal{O}(\epsilon) & \mathcal{O}(\epsilon^{2})\end{bmatrix},\quad z\in \partial \mathbb{D}_b,
\end{split}
\end{equation}
since $\tau_b(z)$ is bounded away from zero for $z\in\partial\mathbb{D}_b$.

\subsubsection{The global parametrix and its use}

We introduce the explicit global parametrix defined by
\eq
\Odot(z;x,\epsilon) := \begin{cases} \Odot^{\rm(out)}(z;x), & z\notin\mathbb{D}_a\cup\mathbb{D}_b, \\ \Odot^{(a)}(z;x,\epsilon), & z\in\mathbb{D}_a, \\ \Odot^{(b)}(z;x,\epsilon), & z\in\mathbb{D}_b \end{cases} 
\endeq
and the corresponding (unknown, because $\mathbf{O}$ is) error matrix 
\eq
\label{E-def-genus0}
{\bf E}(z;x,\epsilon):={\bf O}(z;x,\epsilon)\Odot(z;x,\epsilon)^{-1}.
\endeq
Define the ``mismatch'' jump matrices
\eq
\begin{split}
{\bf V}_a^{({\bf E})}(z;x,\epsilon)&:=\Odot^{(a)}(z;x,\epsilon)\Odot^\text{(out)}(z;x)^{-1}, \quad z\in\partial\mathbb{D}_a,\\
{\bf V}_b^{({\bf E})}(z;x,\epsilon)&:=\Odot^{(b)}(z;x,\epsilon)\Odot^\text{(out)}(z;x)^{-1},\quad z\in\partial\mathbb{D}_b.
\end{split}
\endeq
Now ${\bf E}(z;x,\epsilon)$ is the unique solution to the Riemann-Hilbert 
problem specified by the normalization 
${\bf E}(z;x,\epsilon)=\mathbb{I}+\mathcal{O}(z^{-1})$ 
as $z\to\infty$ and by the jump contour $\Sigma^{({\bf E})}$ and jump matrices 
${\bf V}^{({\bf E})}$ illustrated in Figure~\ref{fig:E-jumps-pi}.
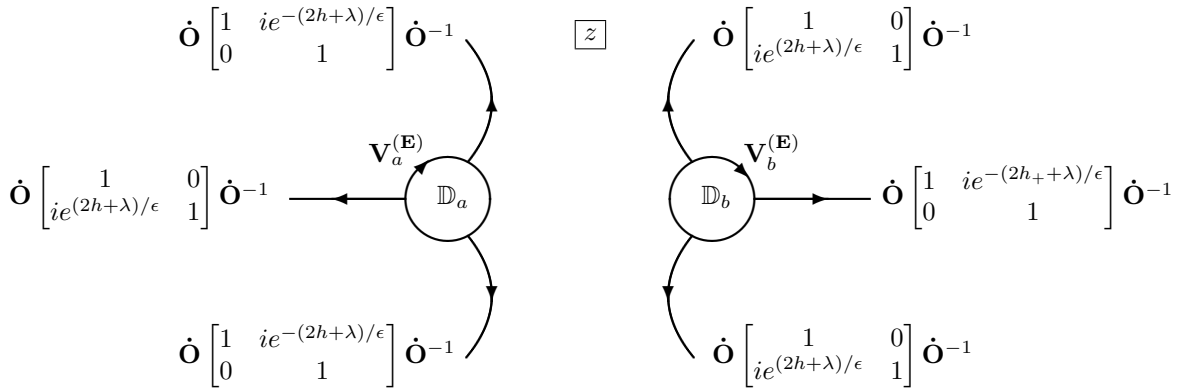
\begin{figure}[h]
\setlength{\unitlength}{2pt}
\begin{center}
\begin{picture}(100,100)(-50,-50)
\put(-1,30){\framebox{$z$}}
\put(-27,-1){$\mathbb{D}_a$}
\put(-40,7){${\bf V}^{({\bf E})}_a$}
\put(23,-1){$\mathbb{D}_b$}
\put(31,7){${\bf V}^{({\bf E})}_b$}
\thicklines
\put(25,0){\circle{16}}
\put(30,6){\vector(1,-1){2}}
\put(-25,0){\circle{16}}
\put(-30.5,5.5){\vector(1,1){2}}
\put(33,0){\line(1,0){22}}
\put(33,0){\vector(1,0){14}}
\put(57,-1){$\Odot\bbm 1 & ie^{-(2h_++\lambda)/\epsilon} \\ 0 & 1 \ebm\Odot^{-1}$}
\qbezier(21.2,7)(12,18.5)(21.5,30)
\put(16.7,18.5){\vector(0,1){1}}
\put(25,29){$\Odot\bbm 1 & 0 \\ ie^{(2h+\lambda)/\epsilon} & 1 \ebm\Odot^{-1}$}
\qbezier(-21.2,7)(-12,18.5)(-21.5,30)
\put(-16.7,18.5){\vector(0,1){1}}
\put(-76,29){$\Odot\bbm 1 & ie^{-(2h+\lambda)/\epsilon} \\ 0 & 1 \ebm\Odot^{-1}$}
\put(-33,0){\line(-1,0){22}}
\put(-33,0){\vector(-1,0){14}}
\put(-108,-1){$\Odot\bbm 1 & 0 \\ ie^{(2h+\lambda)/\epsilon} & 1 \ebm\Odot^{-1}$}
\qbezier(-21.2,-7)(-12,-18.5)(-21.5,-30)
\put(-16.7,-18.5){\vector(0,-1){1}}
\put(-76,-31){$\Odot\bbm 1 & ie^{-(2h+\lambda)/\epsilon} \\ 0 & 1 \ebm\Odot^{-1}$}
\qbezier(21.2,-7)(12,-18.5)(21.5,-30)
\put(16.7,-18.5){\vector(0,-1){1}}
\put(25,-31){$\Odot\bbm 1 & 0 \\ ie^{(2h+\lambda)/\epsilon} & 1 \ebm\Odot^{-1}$}
\end{picture}
\end{center}
\caption{\emph{The jump contour $\Sigma^{({\mathbf E})}=\Sigma^{(\mathbf{E})}(x)$ and jump matrices $\mathbf{V^{(E)}}(z;x,\epsilon)$ 
for $x$ on the negative real axis outside the elliptic region.  A topologically 
equivalent Riemann-Hilbert problem applies for any $x$ outside the elliptic region $T$ in the 
sector $|\arg(-x)|<2\pi/3$.}}
\label{fig:E-jumps-pi}
\end{figure}

Now for $|z|$ sufficiently large, we have
\eq
\begin{split}
\mathbf{Z}_m(\zeta;y)(-\zeta)^{-\sigma_3/\epsilon} & = \epsilon^{-\sigma_3/(3\epsilon)}\mathbf{M}(z;x,\epsilon)(-z)^{-\sigma_3/\epsilon}\epsilon^{\sigma_3/(3\epsilon)} \\
& = \epsilon^{-\sigma_3/(3\epsilon)}e^{\lambda\sigma_3/(2\epsilon)}\mathbf{O}(z;x,\epsilon)(-z)^{-\sigma_3/\epsilon}e^{(2g-\lambda)\sigma_3/(2\epsilon)}\epsilon^{\sigma_3/(3\epsilon)} \\
& = \epsilon^{-\sigma_3/(3\epsilon)}e^{\lambda\sigma_3/(2\epsilon)}{\bf E}(z;x,\epsilon)\Odot^{\rm(out)}(z;x)e^{g\sigma_3/\epsilon}(-z)^{-\sigma_3/\epsilon}e^{-\lambda\sigma_3/(2\epsilon)}\epsilon^{\sigma_3/(3\epsilon)}.
\end{split}
\endeq
Recall that we have $\pu_m(y)=A_{m,12}$, $\pv_m(y)=A_{m,21}$, 
$\mathcal{P}_m(y) = A_{m,22}-B_{m,12}/A_{m,12}$, and 
$\mathcal{Q}_m(y) = -A_{m,11}+B_{m,21}/A_{m,21}$, where 
${\bf Z}_m(-\zeta)^{-\sigma_3/\epsilon} = \mathbb{I}+{\bf A}_m\zeta^{-1} + {\bf B}_m\zeta^{-2}+\mathcal{O}(\zeta^{-3})$.  
If we expand 
\eq
\begin{split}
&{\bf E}(z;x,\epsilon) = \mathbb{I} + \frac{{\bf E}_1(x,\epsilon)}{z} + \frac{{\bf E}_2(x,\epsilon)}{z^2} + \mathcal{O}\left(\frac{1}{z^3}\right), \\
&\Odot^{\rm(out)}(z;x) = \mathbb{I} + \frac{\Odot_1(x)}{z} + \frac{\Odot_2(x)}{z^2} + \mathcal{O}\left(\frac{1}{z^3}\right), \\
&e^{-g(z;x)/\epsilon}(-z)^{1/\epsilon} =: \mathcal{F}(z;x,\epsilon) = 1 + \frac{\mathcal{F}_1(x,\epsilon)}{z} + \mathcal{O}\left(\frac{1}{z^2}\right),
\end{split}
\endeq
then we have (using $z=\epsilon^{1/3}\zeta$ from \eqref{z-def})
\eq
\begin{split}
A_{m,11} & = (E_{1,11}+\dot{O}_{1,11}-\mathcal{F}_1)\epsilon^{-1/3}, \\
A_{m,12} & = (E_{1,12}+\dot{O}_{1,12})\epsilon^{-(2+\epsilon)/(3\epsilon)}e^{\lambda/\epsilon}, \\
A_{m,21} & = (E_{1,21}+\dot{O}_{1,21})\epsilon^{(2-\epsilon)/(3\epsilon)}e^{-\lambda/\epsilon}, \\
A_{m,22} & = (E_{1,22}+\dot{O}_{1,22}+\mathcal{F}_1)\epsilon^{-1/3}, \\
B_{m,12} & = [E_{2,12}+\dot{O}_{2,12}+E_{1,11}\dot{O}_{1,12}+E_{1,12}\dot{O}_{1,22} + (E_{1,12}+\dot{O}_{1,12})\mathcal{F}_1]\epsilon^{-(2+2\epsilon)/(3\epsilon)}e^{\lambda/\epsilon}, \\
B_{m,21} & = [E_{2,21}+\dot{O}_{2,21}+E_{1,22}\dot{O}_{1,21}+E_{1,21}\dot{O}_{1,11} - (E_{1,21}+\dot{O}_{1,21})\mathcal{F}_1]\epsilon^{(2-2\epsilon)/(3\epsilon)}e^{-\lambda/\epsilon}.
\end{split}
\endeq
Here the left-hand sides are evaluated at $y=(m-\tfrac{1}{2})^{2/3}x$ and the 
right-hand sides are evaluated at $x$.
Therefore we see
\eq
\label{um-pm-ito-O-E-genus0}
\begin{split}
\epsilon^{(2+\epsilon)/(3\epsilon)}e^{-\lambda/\epsilon}\pu_m & = E_{1,12}+\dot{O}_{1,12}, \\
\epsilon^{-(2-\epsilon)/(3\epsilon)}e^{\lambda/\epsilon}\pv_m & = E_{1,21}+\dot{O}_{1,21}, \\
\epsilon^{1/3}\pp_m & = E_{1,22}+\dot{O}_{1,22} - \frac{E_{2,12}+\dot{O}_{2,12}+E_{1,11}\dot{O}_{1,12}+E_{1,12}\dot{O}_{1,22}}{E_{1,12}+\dot{O}_{1,12}}, \\
\epsilon^{1/3}\pq_m & = -E_{1,11}-\dot{O}_{1,11} + \frac{E_{2,21}+\dot{O}_{2,21}+E_{1,22}\dot{O}_{1,21}+E_{1,21}\dot{O}_{1,11}}{E_{1,21}+\dot{O}_{1,21}},
\end{split}
\endeq
where the functions $\pu_m$, $\pv_m$, $\pp_m$, and $\pq_m$ are evaluated at 
$y=(m-\tfrac{1}{2})^{2/3}x$ and all other functions are evaluated at $x$.

Recall the set $K_m\subset\mathbb{C}$ defined by \eqref{genus-zero-region-Km}.
\begin{proposition}
\label{error-prop-gen0}
The estimates
\eq
{\bf E}_1(x,\epsilon)=\mathcal{O}(\epsilon) \text{ and } {\bf E}_2(x,\epsilon)=\mathcal{O}(\epsilon)
\label{eq:E1E2estimates-g0}
\endeq
hold uniformly for $x\in K_m$ in the limit $\epsilon\downarrow 0$.
\end{proposition}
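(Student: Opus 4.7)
The plan is to apply the standard small-norm theory for Riemann-Hilbert problems to the error problem for $\mathbf{E}(z;x,\epsilon)$ defined in \eqref{E-def-genus0}. By construction $\mathbf{E}$ is analytic off the contour $\Sigma^{(\mathbf{E})}$ pictured in Figure~\ref{fig:E-jumps-pi}, takes continuous boundary values (the endpoint singularities of the outer parametrix are cancelled inside $\mathbb{D}_a$ and $\mathbb{D}_b$ by the Airy parametrices), and normalizes to $\mathbb{I}$ at infinity. Thus it suffices to control $\|\mathbf{V}^{(\mathbf{E})}-\mathbb{I}\|_{L^2\cap L^\infty(\Sigma^{(\mathbf{E})})}$ by $\mathcal{O}(\epsilon)$; then the Cauchy/Beals-Coifman integral equation is uniquely solvable, $\mathbf{E}-\mathbb{I}=\mathcal{O}(\epsilon)$ in $L^2$, and the two moments are extracted via
\[
\mathbf{E}_k(x,\epsilon)=-\frac{1}{2\pi i}\int_{\Sigma^{(\mathbf{E})}}\mathbf{E}_-(s;x,\epsilon)\bigl(\mathbf{V}^{(\mathbf{E})}(s;x,\epsilon)-\mathbb{I}\bigr)s^{k-1}\,ds,\qquad k=1,2,
\]
from which Cauchy-Schwarz on the bounded contour $\Sigma^{(\mathbf{E})}$ yields \eqref{eq:E1E2estimates-g0}.

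The estimate of $\mathbf{V}^{(\mathbf{E})}-\mathbb{I}$ splits into two regimes. On the disk boundaries $\partial\mathbb{D}_a\cup\partial\mathbb{D}_b$ the Airy matching computations already performed in the construction of $\dot{\mathbf{O}}^{(a)}$ and $\dot{\mathbf{O}}^{(b)}$ give $\mathbf{V}^{(\mathbf{E})}-\mathbb{I}=\mathcal{O}(\epsilon)$, uniformly because $\tau_a(z)$ and $\tau_b(z)$ are bounded away from $0$ on the respective boundaries with bounds uniform in $x\in K_m$ (recall $K_m$ excludes neighborhoods of the corners, so the conformal maps $\tau_{a,b}$ are non-degenerate). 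On the remaining lens and tail arcs $\mathbf{V}^{(\mathbf{E})}-\mathbb{I}$ contains exponentials $e^{\pm(2h(z;x)+\lambda(x))/\epsilon}$, and by virtue of Proposition~\ref{genus-zero-proposition} (validity of the genus-zero ansatz) together with the choice of $L$, these arcs may be placed in regions where $F(z;x)=\mathrm{Re}(2h(z;x)+\lambda(x))$ has the correct sign. The minimum of $|F|$ along each lens arc is attained near the saddle $z=-S(x)/2$, where it equals $2|\mathrm{Re}(\mathfrak{c}(x))|$ by the definition \eqref{c-def-gen0}.

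The uniformity for $x\in K_m$ then follows from the two cases making up \eqref{genus-zero-region-Km}. If $x\in\mathbb{C}\setminus O_T$ (bounded away from $\overline{T}$) then $\mathrm{Re}(2\mathfrak{c}(x))$ is bounded below by an $m$-independent positive constant, giving exponentially small jumps on the lenses, easily $\mathcal{O}(\epsilon)$. If instead $x$ lies in the sectorial approach region $\mathcal{S}_m$ (or a rotate thereof), the defining inequality $\mathrm{Re}(2\mathfrak{c}(x))>\log(m)/m$ forces
\[
e^{-2\mathrm{Re}(\mathfrak{c}(x))/\epsilon}\le e^{-(m-1/2)\log(m)/m}=\mathcal{O}(m^{-1})=\mathcal{O}(\epsilon),
\]
which is precisely the threshold rate at which the tails remain subordinate to the disk contributions; the contour arcs can moreover be steepest-descent-aligned so that the exponential decay off the saddle is quadratic, yielding an $L^2$ bound of the same order. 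The main obstacle will be ensuring that the lens contours, the disks $\mathbb{D}_{a},\mathbb{D}_b$, and the branch ray $L$ can all be chosen with $m$-independent geometry as $x$ ranges over $K_m$—in particular as $x$ approaches $\partial T$ at the critical rate $\log(m)/m$—so that the saddle $-S(x)/2$ stays a fixed distance outside the disks and the signature chart of $F(z;x)$ varies continuously. Excluding the corners of $\partial T$ from $K_m$ is essential for this, since there the band endpoints collide with the saddle and the Airy parametrices break down (the required substitute is the Painlev\'e-I parametrix deferred to \cite{Buckingham-rational-crit}).
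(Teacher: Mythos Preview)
Your overall strategy matches the paper's: reduce to a small-norm Riemann-Hilbert problem, split the jump estimate into the disk boundaries (Airy mismatch $\mathcal{O}(\epsilon)$) and the remaining arcs (controlled by $e^{-2\mathrm{Re}(\mathfrak{c}(x))/\epsilon}$, which is $\mathcal{O}(\epsilon)$ precisely because of the $\log(m)/m$ threshold defining $\mathcal{S}_m$). But there is a slip and a genuine gap.

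The slip: $\Sigma^{(\mathbf{E})}$ is \emph{not} bounded---it has six unbounded arcs tending to infinity at angles $k\pi/3$ (see Figure~\ref{fig:E-jumps-pi}). Your moment formula is still correct, but its convergence for $k=1,2$ requires $\|z^2(\mathbf{V}^{(\mathbf{E})}-\mathbb{I})\|_{L^1}<\infty$, which comes from the cubic decay $e^{-\delta|z|^3/\epsilon}$ on the tails; the paper records this in \eqref{error-est2-gen0}--\eqref{error-est3-gen0}.

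The gap: you correctly name ``the main obstacle'' but do not overcome it. The small-norm theory (Proposition~\ref{prop:small-norm-moments-bound}) yields $|\mathbf{E}_k|\le K_{\Sigma^{(\mathbf{E})}}''\cdot\mathcal{O}(\epsilon)$, and the constant depends on $\|\mathcal{C}_-^{\Sigma^{(\mathbf{E})}}\|_{L^2\circlearrowleft}$, while $\Sigma^{(\mathbf{E})}$ itself varies with $x$. Uniformity over the unbounded, non-closed set $K_m$ requires controlling this operator norm uniformly in $x$, and this is the bulk of the paper's argument. The paper exploits the invariance of the Cauchy projection norm under rigid motions and dilations: (i) on compacta away from $\partial T$, a finite open cover reduces to finitely many fixed contours; (ii) as $x$ approaches an edge of $\partial T$, the contour is taken to be a rigid rotation--translation of a fixed reference contour, pinned to the moving saddle $-S(x)/2$ at the steepest-ascent angle; (iii) as $x\to\infty$, the contour is a dilate by $|x|^{1/2}$ of a fixed contour in the rescaled variable $w=|x|^{-1/2}z$ (using the limit \eqref{eq:hprime-limit}). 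In each case only finitely many equivalence classes of contours arise, so the operator norms are uniformly bounded. You do not address case~(iii) at all, and for case~(ii) you only assert that an $m$-independent geometry should exist without producing one---that is exactly where the work lies.
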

\begin{proof}
The proof relies on the $L^2$ theory of small-norm Riemann-Hilbert problems formulated relative to admissible contours $\Sigma^{(\mathbf{E})}$ as outlined in Appendix~\ref{small-norm-app}.  There are two key estimates required to apply this theory:  (i) an estimate of the difference between the jump matrix and the identity that decays to zero with $\epsilon$ in suitable norms and (ii) an estimate of the operator norm of the Cauchy projection operator $\mathcal{C}_-^{\Sigma^{(\mathbf{E})}}$.  Both of these estimates need to hold uniformly with respect to $x$ varying over the $m$-dependent set $K_m$.  An important freedom in establishing the latter estimate is the fact that the jump contour $\Sigma^{(\mathbf{E})}$ has not been completely determined; indeed, the only conditions so far placed upon its arcs are that they lie within certain domains in which strict inequalities hold for $F(z)=\mathrm{Re}(2h(z)+\lambda)$, that they tend to infinity in certain steepest-descent directions, and that the two disk boundaries $\partial\mathbb{D}_a$ and $\partial\mathbb{D}_b$ are bounded away from the corresponding band endpoints $a$ and $b$ respectively while the disks are small enough so as to be disjoint and not include the critical point $z=-S(x)/2$.

We begin by establishing the necessary estimates for the deviation of the jump matrix from the identity.
First, suppose $x\in\mathbb{C}\backslash O_T$.  
We divide $\Sigma^{({\bf E})}$ into a compact part 
$\Sigma^{({\bf E})}_C:=\partial\mathbb{D}_a\cup\partial\mathbb{D}_b$ and a 
non-compact part  
$\Sigma^{({\bf E})}_N:=\Sigma^{({\bf E})}\backslash\Sigma^{({\bf E})}_C$.  
By the construction of the Airy parametrices there is a positive constant $c$, 
independent of $x$ and $\epsilon$, such that 
\eq
\label{error-est1-gen0}
||{\bf V}^{({\bf E})}(z)-\mathbb{I}||_{L^\infty(\Sigma_C^{({\bf E})})} = \mathcal{O}(\epsilon).  
\endeq
By Proposition \ref{genus-zero-proposition} and the fact that the outer parametrix $\dot{\mathbf{O}}^{(\mathrm{out})}(z;x)$ and its inverse are uniformly bounded away from the points $a$ and $b$, there is a positive constant $c$ 
such that 
\eq
\label{error-est2-gen0} 
||{\bf V}^{({\bf E})}(z)-\mathbb{I}||_{L^\infty(\Sigma_N^{({\bf E})})} + ||z^2({\bf V}^{({\bf E})}(z)-\mathbb{I})||_{L^1(\Sigma_N^{({\bf E})})} = \mathcal{O}(e^{-c/\epsilon}).
\endeq

Now suppose 
$x\in \mathcal{S}_m \cup e^{2\pi i/3}\mathcal{S}_m \cup e^{-2\pi i/3}\mathcal{S}_m$, allowing $x$ to lie very close to (only) one of the three arcs of $\partial T$, separated by a distance proportional to $\log(m)/m$.  As $x$ tends to the edge of $T$ from outside, a singularity develops in the zero level set of $F(z;x)$ defined by \eqref{eq:g0-Fdefine} near the point $z=-S(x)/2$ in which one of the domains in which the inequality $F(z;x)>0$ holds pinches off at $z=-S(x)/2$, forcing an arc of the jump contour $\Sigma^{(\mathbf{E})}$ to pass over the saddle point where the necessary strict inequality fails.  To address this difficulty, we first suppose that for $x$ close to (but not exactly on) an arc of $\partial T$ the strict inequality $F(-S(x)/2;x)>0$ holds (that is, the saddle point lies in the correct ``basin'').  This inequality can only fail if $-S(x)/2$ is too close to the contour $\Sigma$ joining $a(x)$ and $b(x)$, and if it does fail it can be easily restored by an appropriate local deformation of  $\Sigma$ (which formerly was taken as a straight line segment purely as a matter of convenience).  Then, we take the arc of $\Sigma^{(\mathbf{E})}$ that has to lie in the nearly pinched-off domain where $F(z;x)>0$ holds to pass exactly over the saddle point $z=-S(x)/2$ locally as a straight-line segment of absolutely fixed length with angle coincident with the steepest ascent direction for $F(z;x)$ away from the saddle.  The near-singularity in the zero level set of $F(z;x)$ has no effect upon the estimate \eqref{error-est1-gen0} because the point $z=-S(x)/2$ has been excluded from the disks $\mathbb{D}_a$ and $\mathbb{D}_b$.  However, the estimate \eqref{error-est2-gen0} has to be modified because the decay of the jump matrix to the identity will be dominated by the behavior near the saddle point $z=-S(x)/2$.  On the segment of $\Sigma^{(\mathbf{E})}$ passing over the saddle point we have
the estimate
\begin{equation}
\left|e^{-(2h(z;x)+\lambda(x))/\epsilon}\right|=e^{-F(z;x)/\epsilon}\le
e^{-F(-S(x)/2;x)/\epsilon} = e^{-2\mathrm{Re}(\mathfrak{c}(x))/\epsilon}<2\epsilon
\end{equation}
since $2\mathrm{Re}(\mathfrak{c}(x))>\log(m)/m>\epsilon\log((2\epsilon)^{-1})$ for $\epsilon>0$ by hypothesis (on how close $x$ may approach $\partial T$).  It follows that in place of \eqref{error-est2-gen0} we have instead the estimate
\eq
\label{error-est3-gen0} 
||{\bf V}^{({\bf E})}(z)-\mathbb{I}||_{L^\infty(\Sigma_N^{({\bf E})})} + ||z^2({\bf V}^{({\bf E})}(z)-\mathbb{I})||_{L^1(\Sigma_N^{({\bf E})})} = \mathcal{O}(\epsilon).
\endeq
In other words, by allowing $x$ to approach within a distance proportional to $\log(m)/m$ from the boundary of $T$, the exponential decay of the jump matrices to the identity away from the Airy disks is compromised exactly to the point at which the discrepancy balances in magnitude the usually dominant error contribution from the disk boundaries themselves.  Any closer approach would lead to estimates worse than $\mathcal{O}(\epsilon)$ in the statement of the proposition (see \eqref{eq:E1E2estimates-g0}).

Now by Proposition \ref{prop:small-norm-moments-bound} in Appendix 
\ref{small-norm-app}, ${\bf E}_1=\mathcal{O}(K_{\Sigma^{({\bf E})}}'\epsilon)$ 
and ${\bf E}_2=\mathcal{O}(K_{\Sigma^{({\bf E})}}''\epsilon)$, where 
$K_{\Sigma^{({\bf E})}}'$ and $K_{\Sigma^{({\bf E})}}''$ are defined in 
\eqref{small-norm-K'} and \eqref{small-norm-K''}, respectively.  While this 
establishes the desired $\mathcal{O}(\epsilon)$ error bounds pointwise in 
$x$, we require uniformity for $x\in K_m$.  It remains to establish that 
$\|\mathcal{C}_-^{{\Sigma}^{({\bf E})}}\|_{L^2(\Sigma^{({\bf E})})\circlearrowleft}$
is uniformly bounded as $x$ varies in $K_m$.  For every $x\in K_m$, the contour $\Sigma^{(\mathbf{E})}$ may be assumed to be \emph{admissible} in the sense of the theory explained in Appendix~\ref{small-norm-app}, and while this fact implies that $\mathcal{C}_-^{\Sigma^{(\mathbf{E})}}$ is a bounded operator for each $x$, it is not sufficient to give a bound for the operator norm that is independent of $x\in K_m$.

We first claim that each point $x_0$ in the genus zero region $\mathbb{C}\setminus\overline{T}$ (which contains $K_m$ for each $m$) has an open neighborhood $\mathscr{O}_{x_0}$ such that for all $x\in\mathscr{O}_{x_0}$ the jump contour $\Sigma^{(\mathbf{E})}$ may be taken to be exactly the same as for $x=x_0$ with no change in the error estimates \eqref{error-est1-gen0}--\eqref{error-est2-gen0}.  Therefore, if $K$ is a $m$-independent compact subset of $K_m$, the open covering $K\subset \bigcup_{x\in K}\mathscr{O}_x$ has a finite sub-covering, and so a finite number of admissible contours $\Sigma^{(\mathbf{E})}$ suffice to analyze the error for values of $x\in K$.  But $K_m$ also contains points $x$ that approach the boundary of $T$ as $m$ increases, and furthermore $K_m$ is  unbounded, allowing $x$ to approach infinity.  Both of these  cases require special care to control the norm of $\mathcal{C}_-^{\Sigma^{(\mathbf{E})}}$.

As $x$ approaches $\partial T$, the issue is that one arc of 
$\Sigma^{({\bf E})}$ must pass through a narrow isthmus where 
$F(z;x)=\mathrm{Re}(2h(z;x)+\lambda(x))>0$.  This isthmus is pinched off at 
the saddle point $z=-S(x)/2$ for $x\in \partial T$ (see Figure \ref{genus-zero-failure}).  
Since the pinch-off point is moving as $x$ varies along $\partial T$ it is 
not possible to take the contour $\Sigma^{(\mathbf{E})}$ to be locally independent of $x$ near the boundary.  However, we can exploit the fact that the norm of $\mathcal{C}_-^{\Sigma^{(\mathbf{E})}}$ is unchanged if $\Sigma^{(\mathbf{E})}$ is subjected to a rigid motion in the plane (translation plus rotation).  For any $x\in\mathbb{C}\setminus T$ bounded away from the three corner points of $T$ (but otherwise $x\in\partial T$ is allowed), let $\psi(x)$ denote the angle of steepest ascent toward $z=\infty$ for $F(z;x)$ at the saddle point $z=-S(x)/2$.  We claim that each non-corner point $x_0$ of the boundary $\partial T$ has an open neighborhood $\mathscr{O}_{x_0}$ such that if for each $x\in\mathscr{O}_{x_0}$ the contour $\Sigma^{(\mathbf{E})}=\Sigma^{(\mathbf{E})}(x)$
is taken to be given by the rigid congruence
\begin{equation}
\Sigma^{(\mathbf{E})}(x):=\left(\Sigma^{(\mathbf{E})}(x_0)+\frac{1}{2}S(x_0)\right)e^{i(\psi(x)-\psi(x_0))}-\frac{1}{2}S(x),\quad x\in\mathscr{O}_{x_0},
\end{equation}
where $\Sigma^{(\mathbf{E})}(x_0)$ denotes an admissible contour\footnote{Note however that on the contour $\Sigma^{(\mathbf{E})}(x_0)$ no useful analog of the estimate \eqref{error-est2-gen0} holds for the jump matrix because there is no decay of $e^{-(2h(z)+\lambda)/\epsilon}$ at the saddle point on the jump contour when $x_0\in\partial T$.} for the error problem when $x=x_0$, then the estimates \eqref{error-est1-gen0} and \eqref{error-est3-gen0} hold for the jump matrix on
$\Sigma^{(\mathbf{E})}(x)$ if also $x\in K_m$.  Indeed, it is easy to see that $\Sigma^{(\mathbf{E})}(x)$ so-defined includes a small straight-line segment passing through the narrow isthmus over the saddle point at exactly the steepest ascent angle, and the parts of the contour away from the moving saddle point are only slightly deformed from $\Sigma^{(\mathbf{E})}(x_0)$ if $|x-x_0|$ is small enough, as can be enforced by choosing the diameter of $\mathscr{O}_{x_0}$ sufficiently small (independent of $m$).  Given any compact sub-arc $C$ of $\partial T$ we have the open covering $C\subset\bigcup_{x_0\in C}\mathscr{O}_{x_0}$ which has a finite sub-covering, and so in this way all points $x$ near the boundary $\partial T$ can be analyzed using rigid congruences of
a finite number of admissible contours $\Sigma^{(\mathbf{E})}(x_0)$; since the norm of the operator $\mathcal{C}_-^{\Sigma^{(\mathbf{E})}}$ depends only on the equivalence class of the contour under rigid congruence, essentially a finite number of admissible contours again suffices to study $x\in K_m$ approaching $\partial T$ as $m\to+\infty$.

Finally, we consider $x\in K_m$ tending to infinity.  In this case, the contour $\Sigma^{(\mathbf{E})}$ has to expand as $x\to\infty$ because both $a(x)$ and $b(x)$ are asymptotically proportional to $|x|^{1/2}$.  As $x\to\infty$ at a fixed angle $\varphi:=\arg(x)$, we have the following limit:
\begin{equation}
\lim_{x\to\infty}|x|^{-1}h'(|x|^{1/2}w;x)=\frac{3}{2}w\tilde{r}(w;\varphi),
\label{eq:hprime-limit}
\end{equation}
where $\tilde{r}(w;\varphi)$ is the function analytic for all $w$ not on the straight line connecting $\pm ie^{i\varphi/2}\sqrt{\tfrac{2}{3}}$ that satisfies $\tilde{r}(w;\varphi)^2=w^2+\tfrac{2}{3}e^{i\varphi}$ and $\tilde{r}(w;\varphi)=w+\mathcal{O}(1)$ as $w\to\infty$.  The existence of this limit shows that for each angle $\varphi$ there is a limiting rescaled version, denoted $\Sigma_\varphi$, of the contour $\Sigma^{(\mathbf{E})}$ consisting of six unbounded fixed arcs in the $w$-plane lying in basins defined by the limiting function defined by \eqref{eq:hprime-limit} together with two disjoint circles centered at the points $w=\pm ie^{i\varphi/2}\sqrt{\tfrac{2}{3}}$ and not containing the origin (the critical point of the rescaled $h$, and the limiting image in the $w$-plane of $-S(x)/2$).  We claim that for each angle $\varphi$ there exists some $\delta(\varphi)>0$ and $R_0(\varphi)<\infty$ such that for each $x$ in the domain $\mathscr{O}_\varphi:=\{x\in\mathbb{C}:|x|>R_0(\varphi),\;|\arg(x)-\varphi|<\delta(\varphi)\}$, the contour $\Sigma^{(\mathbf{E})}(x)$ may be defined as
\begin{equation}
\Sigma^{(\mathbf{E})}(x):=|x|^{1/2}\Sigma_\varphi,\quad x\in\mathscr{O}_\varphi,
\end{equation}
that is, a simple dilation of the fixed admissible contour $\Sigma_\varphi$, and the estimates \eqref{error-est1-gen0}--\eqref{error-est2-gen0} will hold on such a contour.  In particular it is no problem that the two Airy disks $\mathbb{D}_a$ and $\mathbb{D}_b$ in the contour $\Sigma^{(\mathbf{E})}$ are expanding as $x\to\infty$ as long as they remain disjoint and exclude the critical point $-S(x)/2$, which will be the case for large enough $|x|$.  Covering the neighborhood of $z=\infty$ via a union over the compact circle $S^1$ of angles $\varphi$ and extracting a finite sub-covering, we handle all unbounded values of $x$ with the use of dilates of a finite number of admissible contours $\Sigma_\varphi$.  Since the norm of the operator $\mathcal{C}_-^{\Sigma^{(\mathbf{E})}}$ depends only on the equivalence class of the contour under dilations, we obtain a uniform upper bound for the norm as a maximum over a finite number of values.

With the uniform estimates \eqref{error-est1-gen0} and  \eqref{error-est3-gen0} (which subsumes the sharper \eqref{error-est2-gen0} but is more general) for the difference of the jump matrix $\mathbf{V}^{(\mathbf{E})}$ from the identity, and the uniform control of the norm of the Cauchy projection $\mathcal{C}_-^{\Sigma^{(\mathbf{E})}}$ obtained as a maximum over a finite number of values, the proof is complete.
\end{proof}

With Proposition~\ref{error-prop-gen0} in hand, we can now prove our first main result describing how the Painlev\'e-II rational functions behave as $m\to +\infty$ for sufficiently large independent variable $y$ (so as to place $x$ in the region $K_m$).  Recall the region $K_m$ defined by \eqref{genus-zero-region-Km} and the analytic function $S:\mathbb{C}\setminus\Sigma_S\to\mathbb{C}$ defined as the solution of the cubic equation \eqref{cubic-equation} with asymptotic behavior for large $x$ given by \eqref{S-large-x}.
\begin{theorem}
\label{main-genus-zero-thm}
The rational Painlev\'e-II functions obey the asymptotic formulae
\eq
\begin{split}
m^{-2m/3}e^{-m\lambda(x)}\pu_m\left(\left(m-\frac{1}{2}\right)^{2/3}x\right) & = \dot{\pu}(x)  + \mathcal{O}\left(\frac{1}{m}\right), \\ 
m^{2(m-1)/3}e^{m\lambda(x)}\pv_m\left(\left(m-\frac{1}{2}\right)^{2/3}x\right) & = \dot{\pv}(x)  + \mathcal{O}\left(\frac{1}{m}\right), \\ 
m^{-1/3}\pp_m\left(\left(m-\frac{1}{2}\right)^{2/3}x\right) & = \dot{\pp}(x)+\mathcal{O}\left(\frac{1}{m}\right), \\
m^{-1/3}\pq_m\left(\left(m-\frac{1}{2}\right)^{2/3}x\right) & = \dot{\pq}(x)+\mathcal{O}\left(\frac{1}{m}\right),
\end{split}
\endeq
valid as $m\to +\infty$ uniformly for $x\in K_m$, 
where $\lambda:\mathbb{C}\setminus\overline{T}\to\mathbb{C}$ is given explicitly in terms of $S$ by \eqref{eq:lambdadef} and where
\eq
\label{g0-pudot-pvdot-ppdot-pqdot}
\dot{\pu}(x):=e^{xS(x)/6}, \quad \dot{\pv}(x):=\frac{1}{3S(x)}e^{-xS(x)/6}, \quad\dot{\pp}(x):=-\frac{1}{2}S(x), \quad \text{and} \quad \dot{\pq}(x):=\frac{1}{2}S(x).
\endeq
\end{theorem}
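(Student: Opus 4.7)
The plan is to read off the four Painlevé-II rational functions directly from the identities \eqref{um-pm-ito-O-E-genus0}, which express them as rational combinations of Laurent coefficients at $z=\infty$ of the outer parametrix $\dot{\mathbf{O}}^{(\mathrm{out})}(z;x)$, the auxiliary scalar $\mathcal{F}(z;x,\epsilon)=e^{-g(z;x)/\epsilon}(-z)^{1/\epsilon}$, and the error matrix $\mathbf{E}(z;x,\epsilon)$.  Proposition \ref{error-prop-gen0} already supplies the crucial input $\mathbf{E}_1,\mathbf{E}_2 = \mathcal{O}(\epsilon)$ uniformly for $x \in K_m$, so the only remaining work is explicit computation and algebraic simplification.

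The first task is to expand $\dot{\mathbf{O}}^{(\mathrm{out})}$ at infinity using the explicit formula \eqref{Odot-gen0}.  From $\beta(z)^4 = (z-a)/(z-b)$ with $\beta\to 1$, a binomial expansion yields $\beta(z) = 1+\Delta/(4z)+[b\Delta/4-3\Delta^2/32]/z^2+\mathcal{O}(z^{-3})$, from which I read off
\[
\dot{O}_{1,11}=\dot{O}_{1,22}=0,\qquad \dot{O}_{1,12}=\dot{O}_{1,21}=\frac{\Delta(x)}{4},\qquad \dot{O}_{2,12}=\dot{O}_{2,21}=\frac{\Delta(x)S(x)}{8}.
\]
For $\mathcal{F}$, I would write $g(z)=\log(-z)+g_1(x)/z+\mathcal{O}(z^{-2})$ so that $\mathcal{F}_1(x,\epsilon) = -g_1(x)/\epsilon$; fortunately $g_1$ will cancel out of the needed formulas below.

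Substituting these expansions together with $\mathbf{E}_j=\mathcal{O}(\epsilon)$ into \eqref{um-pm-ito-O-E-genus0}, and using that $\dot{O}_{1,12}=\Delta/4$ is bounded away from zero uniformly on $K_m$, the denominators in the formulas for $\pp_m,\pq_m$ invert safely, and the cancellations afforded by $\dot{O}_{1,11}=\dot{O}_{1,22}=0$ collapse the rational expressions to $\dot{O}_{2,12}/\dot{O}_{1,12}+\mathcal{O}(\epsilon) = S/2+\mathcal{O}(\epsilon)$.  This yields $\epsilon^{1/3}\pp_m = -S/2+\mathcal{O}(\epsilon)$ and $\epsilon^{1/3}\pq_m = S/2+\mathcal{O}(\epsilon)$, while $\epsilon^{(2+\epsilon)/(3\epsilon)}e^{-\lambda/\epsilon}\pu_m = \Delta/4+\mathcal{O}(\epsilon)$ and analogously for $\pv_m$.

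Finally I would convert the $\epsilon$-prefactors to $m$-prefactors via $\epsilon=(m-\tfrac12)^{-1}$ and the \emph{exact} identity $(2+\epsilon)/(3\epsilon)=2m/3$, together with the elementary asymptotics $(m-\tfrac12)^{2m/3} = m^{2m/3}e^{-1/3}(1+\mathcal{O}(1/m))$ and $e^{(m-\tfrac12)\lambda}=e^{m\lambda}e^{-\lambda/2}$.  Combined with the identity $\lambda=S^3/4+2\log(\Delta/4)$ (from \eqref{eq:lambdadef} together with the moment condition $3S\Delta^2=16$), this gives $m^{-2m/3}e^{-m\lambda}\pu_m = e^{-1/3-S^3/8}+\mathcal{O}(1/m)$, and the algebraic identity $-1/3-S^3/8 = xS/6$ --- which is just a rearrangement of the cubic \eqref{cubic-equation} --- converts the limit into $e^{xS/6}=\dot{\pu}(x)$.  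Entirely parallel manipulations handle $\pv_m$, $\pp_m$, and $\pq_m$.  The main obstacle is organizational rather than conceptual: one must carefully balance the small parameters $\epsilon$ and $1/m$ (which differ only at order $1/m^2$) and verify the algebraic identities relating $S$, $\Delta$, $\lambda$, and $x$ through the moment conditions \eqref{ab-conditions} and the cubic \eqref{cubic-equation}.  Computing $\dot{O}_2$ to the needed order is the most tedious step, but the cancellations $\dot{O}_{1,11}=\dot{O}_{1,22}=0$ ensure that only the off-diagonal $z^{-2}$ coefficient is actually required.
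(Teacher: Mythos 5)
Your proposal is correct and follows essentially the same route as the paper's proof: invoke Proposition~\ref{error-prop-gen0} for the $\mathcal{O}(\epsilon)$ error moments, expand $\dot{\mathbf{O}}^{(\mathrm{out})}$ at infinity to get $\dot{O}_{1,12}=\dot{O}_{1,21}=\Delta/4$, $\dot{O}_{1,11}=\dot{O}_{1,22}=0$, $\dot{O}_{2,12}=\dot{O}_{2,21}=S\Delta/8$, and then convert the $\epsilon$-prefactors to $m$-prefactors using $(2+\epsilon)/(3\epsilon)=2m/3$, $(m-\tfrac12)^{2m/3}=m^{2m/3}e^{-1/3}(1+\mathcal{O}(m^{-1}))$, and the identities linking $\lambda$, $S$, $\Delta$, and $x$ via \eqref{eq:lambdadef}, \eqref{ab-conditions}, and the cubic \eqref{cubic-equation}. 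The only difference is that you spell out the final algebraic simplification $-\tfrac13-\tfrac18 S^3=\tfrac16 xS$ that the paper leaves implicit.
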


\begin{proof}
From Proposition \ref{error-prop-gen0} we see 
\eq
\label{pu-ito-Odot-genus0}
\begin{split}
\epsilon^{(2+\epsilon)/(3\epsilon)}e^{-\lambda/\epsilon}\pu_m = \dot{O}_{1,12} + \mathcal{O}(\epsilon), \quad \quad & \epsilon^{-(2-\epsilon)/(3\epsilon)}e^{\lambda/\epsilon}\pv_m = \dot{O}_{1,21} + \mathcal{O}(\epsilon), \\
\epsilon^{1/3}\pp_m = \dot{O}_{1,22} - \frac{\dot{O}_{2,12}}{\dot{O}_{1,12}} + \mathcal{O}(\epsilon), \quad \quad & \epsilon^{1/3}\pq_m = -\dot{O}_{1,11} + \frac{\dot{O}_{2,21}}{\dot{O}_{1,21}} + \mathcal{O}(\epsilon),
\end{split}
\endeq
with $\pu_m$, $\pv_m$, $\pp_m$, and $\pq_m$ evaluated at $y=(m-\tfrac{1}{2})^{2/3}x$ and all other functions evaluated at $x$.
A straightforward large-$z$ asymptotic expansion of the explicit formula \eqref{Odot-gen0} for the outer parametrix $\dot{\mathbf{O}}^{(\mathrm{out})}(z)$ yields
\eq
\label{O1,12-etc-genus0}
\dot{O}_{1,12}(x) = \dot{O}_{1,21}(x) = \frac{\Delta(x)}{4}, \quad \dot{O}_{1,11}(x) = \dot{O}_{1,22}(x) = 0, \quad \dot{O}_{2,12}(x) = \dot{O}_{2,21}(x) = \frac{S(x)\Delta(x)}{8},
\endeq
from which the desired asymptotic formulae for $m^{-1/3}\pp_m$ and 
$m^{-1/3}\pq_m$ follow immediately upon recalling $\epsilon=(m-\tfrac{1}{2})^{-1}$.
Furthermore, 
\begin{equation}
\epsilon^{(2+\epsilon)/(3\epsilon)}e^{-\lambda(x)/\epsilon}=(m-\tfrac{1}{2})^{-2m/3}e^{-m\lambda(x)}e^{\lambda(x)/2},
\end{equation}
and since $(m-\tfrac{1}{2})^{-2m/3}=m^{-2m/3}e^{1/3}(1+\mathcal{O}(m^{-1}))$ as $m\to\infty$, we obtain
\begin{equation}
m^{-2m/3}e^{-m\lambda(x)}\pu_m = e^{-1/3-\lambda(x)/2}\frac{\Delta(x)}{4} + \mathcal{O}(m^{-1})
\end{equation}
from which the desired asymptotic formula for $m^{-2m/3}e^{-m\lambda}\pu_m$ follows with the help of \eqref{cubic-equation}, 
\eqref{eq:lambdadef}, and the expression for $\Delta(x)$ in terms of $S(x)$ following from the second equation in \eqref{ab-conditions} upon taking the appropriate square root.  The formula for $m^{2(m-1)/3}e^{m\lambda}\pv_m$ 
follows from completely analogous calculations.
\end{proof}

Since $S(x)$ is real and negative only for $x$ on the positive 
real axis $\mathbb{R}_+$, it follows from the formula \eqref{eq:lambdadef} that $\lambda:\mathbb{C}\setminus\overline{T}\to\mathbb{C}$ has a branch cut along the positive real axis; however its real part is harmonic throughout its domain of definition and the imaginary part has a constant additive jump of $2\pi i$ across its branch cut, and hence $e^{m\lambda(x)}$ is well-defined and analytic for every $m\in\mathbb{Z}_+$.  One can easily check that for each fixed $m\in\mathbb{Z}_+$, 
\begin{equation}
e^{m\lambda(x)}=(-6x)^m(1+\mathcal{O}(x^{-1})),\quad x\to\infty.
\end{equation}
Contour plots of the real and imaginary parts (the latter modulo $2\pi$) of $\lambda:\mathbb{C}\setminus \overline{T}\to\mathbb{C}$ are illustrated (along with a closely related function to be defined for $x\in T$ in \S\ref{bulk-section}) in Figure~\ref{fig:Lambdalambda}.  
Likewise, contour plots of the real and imaginary parts of the analytic approximating function $\dot{\pp}:\mathbb{C}\setminus\overline{T}\to\mathbb{C}$ that gives the leading-order behavior of $m^{-1/3}\pp_m$ for $x$ outside of $T$ are shown (along with plots of a closely related function to be defined for $x\in T$ in \S\ref{bulk-section}) in Figure~\ref{fig:g1-weak-limit}.
We plot $m^{-2m/3}e^{-m\lambda(x)}\pu_m((m-\tfrac{1}{2})^{2/3}x)$ and 
$m^{-1/3}\pp_m((m-\tfrac{1}{2})^{2/3}x)$ for $m=3$, 5, and 10 and real $x$ outside the 
elliptic region $T$ along with their leading asymptotic approximations $\dot{\pu}(x)$ and $\dot{\pp}(x)$ respectively in Figures~\ref{exact-vs-asymp-um-genus0} and \ref{exact-vs-asymp-pm-genus0}.

\section{Analysis for $x$ inside the elliptic region $T$}
\label{bulk-section}
To analyze the rational Painlev\'e-II functions outside of the elliptic region $T$ it was convenient to introduce the new independent variable $x=(m-\tfrac{1}{2})^{-2/3}y$.  The same will be true now that we will consider the behavior inside of the elliptic region, however it will be useful to generalize a bit by writing $x\in T$ in the form
\eq
\label{x-ito-x0-gen1}
x=x_0+\epsilon w, 
\endeq
where $x_0$ is a point in $T$ and $w$ is bounded as 
$\epsilon\to 0$.  We think of the asymptotic formulae that will result as depending functionally on $w$ with 
$x_0\in T$ as an additional parameter.  For each $x$ and $\epsilon$ there are of course infinitely many 
ways to express $x$ in the form \eqref{x-ito-x0-gen1}, but from uniqueness 
of the solution to the Riemann-Hilbert Problem \ref{rhp:DSlocalII} and the validity of the error estimates that we will present below, the 
resulting formulae for the leading-order terms of $\mathcal{P}_m$ and 
$\mathcal{U}_m$ will provide valid approximations of the corresponding functions at a given value of $x\in T$, regardless  of the choice of $x_0$ and $w$ consistent with \eqref{x-ito-x0-gen1}; see Corollary~\ref{cor-kernel} for a more precise statement.  The reason 
for taking this approach is that the asymptotic formulae we will obtain will not depend on $x_0$ analytically due to the presence of nonzero $\overline{\partial}$-derivatives (see \eqref{eq:g1-dbarPi} below). On the other hand, for each fixed $x_0$, key ingredients in our approximating formulae will depend meromorphically on $w$ (so the only issues are with isolated poles).  This meromorphic dependence on $w$ therefore gives a better local picture of the asymptotic behavior of a family of functions, each of which is rational.

\subsection{Preliminary steps}
The contour of the Riemann-Hilbert problem for $\mathbf{M}(z)$ consists of six straight rays meeting at the origin, and by standard analytic substitutions the contour may be deformed into one that is topologically similar but with a different self-intersection point and with the rays deformed into non-intersecting curves that tend to infinity along the asymptotic directions $\arg(z)=n\pi/3$, $n=-2,-1,0,1,2,3$.  Given four distinct points in the complex plane $A$, $B$, $C$, and $D$, we assume that the contour is deformed in this way such that $C$ is the self-intersection point, the arc going to infinity with angle $\arg(z)=2\pi/3$ passes through the point $A$, the arc going to infinity with angle $\arg(z)=0$ passes through the point $D$, and the arc going to infinity with angle $\arg(z)=-2\pi/3$ passes through the point $B$.  See Figure~\ref{fig:deform-g1},
\begin{figure}[h]
\begin{center}
\includegraphics{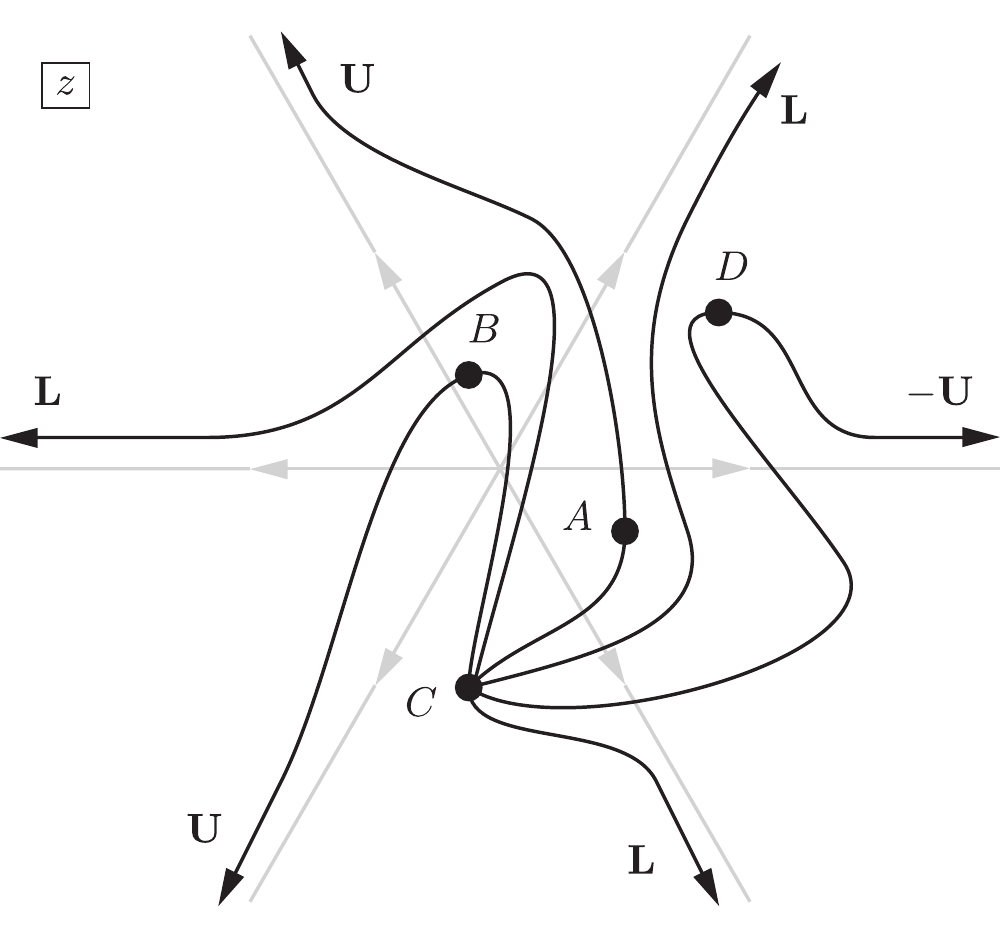}
\end{center}
\caption{\emph{The initial deformation of the jump contour for $\mathbf{M}$ given four distinct points $A$, $B$, $C$, and $D$.  The jump matrix is as indicated on each of the six ``rays''.  The reader should not be misled by the qualitative shape of the contour or the positions of the points $A$, $B$, $C$, and $D$; it will turn out that the actual picture is far more regular although this cannot be presumed at such an early stage of the argument.}}
\label{fig:deform-g1}
\end{figure}
where we use the following notation for the jump matrices:
\begin{equation}
\mathbf{L}:=\begin{bmatrix}1 & 0\\ie^{\theta/\epsilon} & 1\end{bmatrix}\quad\text{and}\quad
\mathbf{U}:=\begin{bmatrix}1 & ie^{-\theta/\epsilon}\\0 & 1\end{bmatrix}.
\end{equation}
Next, we choose a simple closed contour encircling $C$ and passing through the points $A$, $B$, and $D$ as illustrated with a dotted curve in Figure~\ref{fig:M-to-N-g1},
\begin{figure}[h]
\begin{center}
\includegraphics{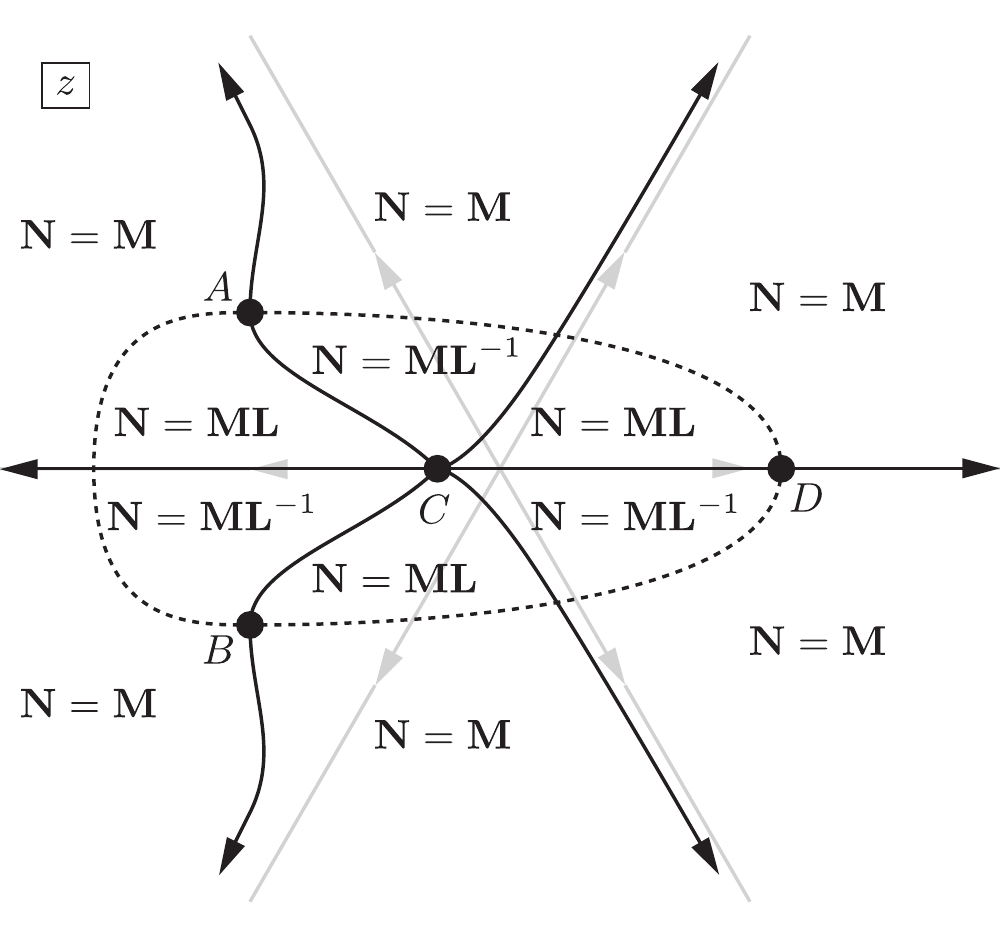}
\end{center}
\caption{\emph{The explicit and invertible relation between $\mathbf{M}$ and $\mathbf{N}$, illustrated in
a configuration of the points $A$, $B$, $C$, and $D$ resembling what will actually be the case for sufficiently small $x_0\in\mathbb{R}$.}}
\label{fig:M-to-N-g1}
\end{figure}
and define a new unknown matrix $\mathbf{N}(z)$ explicitly in terms of $\mathbf{M}(z)$ as shown.  The jump contour for the matrix $\mathbf{N}(z)$ and the corresponding jump matrix are illustrated in Figure~\ref{fig:N-jumps-g1}.  
\begin{figure}[h]
\begin{center}
\includegraphics{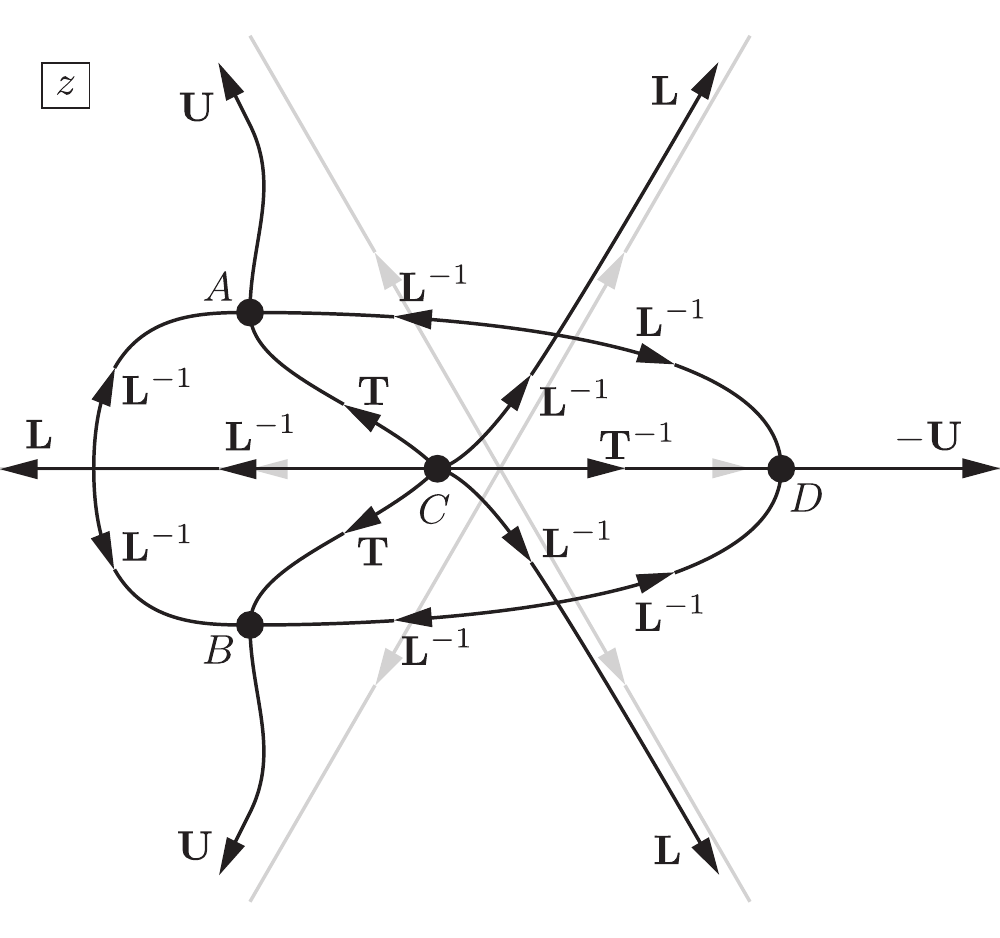}
\end{center}
\caption{\emph{The jump contour $\Sigma^{(\mathbf{N})}$ for the matrix $\mathbf{N}(z)$, and the jump matrix as defined on each oriented arc of the jump contour.  This is also the jump contour $\Sigma^{(\mathbf{O})}$ for the matrix $\mathbf{O}(z)$ defined in terms of $\mathbf{N}(z)$ below by \eqref{eq:g1-OfromN}.}}
\label{fig:N-jumps-g1}
\end{figure}
Here, the matrix $\mathbf{T}$ is defined as
\begin{equation}
\mathbf{T}:=\begin{bmatrix}0 & ie^{-\theta/\epsilon}\\ie^{\theta/\epsilon} & 0\end{bmatrix}.
\end{equation}
The next step is to introduce an appropriate $g$-function that will ultimately allow us to neglect the jump matrix on all but the three arcs on which the jump matrix is off-diagonal, $\mathbf{T}$ or $\mathbf{T}^{-1}$.

We emphasize that at this point $\mathbf{M}(z)=\mathbf{M}(z;x,\epsilon)$ and $\mathbf{N}(z)=\mathbf{N}(z;x,\epsilon)$ depend only on the combination $x=x_0+\epsilon w$.  Henceforth we will be making substitutions that separate the roles of $x_0$ and $w$.

\subsection{Definition and properties of the $g$-function}
\label{genus-one-g}
Let $\Sigma$ denote the contour that is the closure of the union of the oriented arcs $\overrightarrow{CA}$, $\overrightarrow{CB}$, and $\overrightarrow{CD}$.  Given $\Sigma$ (and hence the four points $A$, $B$, $C$, and $D$) let $R(z)$ denote the function analytic for $z\in\mathbb{C}\setminus\Sigma$ that satisfies the two conditions
\begin{equation}
R(z)^2=(z-A)(z-B)(z-C)(z-D)\quad\text{and}\quad R(z)=z^2+\mathcal{O}(z),\quad z\to\infty.
\end{equation}
At each point $z$ of the three open arcs whose union becomes $\Sigma$ under closure, $R(z)$ has two well-defined boundary values $R_\pm(z)$, and these satisfy $R_+(z)+R_-(z)=0$.  Given also a point $x_0\in\mathbb{C}$, define the related function
\begin{equation}
G'(z):=\frac{1}{2}\theta'(z;x_0)-\frac{3}{2}R(z)=\frac{3}{2}z^2+\frac{1}{2}x_0-\frac{3}{2}R(z),\quad z\in\mathbb{C}\setminus\Sigma.
\end{equation}
We assume now that $A$, $B$, $C$, $D$, and $x_0$ are related by the three \emph{moment conditions}
\begin{equation}
\begin{split}
M_1(A,B,C,D)&=0,\\
M_2(A,B,C,D)&=-\frac{4}{3}x_0,\\
M_3(A,B,C,D)&=4,
\end{split}
\label{eq:g1-moments}
\end{equation}
where
\begin{equation}
M_p(A,B,C,D):=A^p+B^p+C^p+D^p,\quad p=1,2,3.
\label{eq:g1-moments-define}
\end{equation}
These conditions imply that,
for large $z$, we have the expansion
\begin{equation}
G'(z)=\frac{1}{z}+\mathcal{O}\left(\frac{1}{z^2}\right),\quad z\to\infty.
\end{equation}
The presence of a residue at infinity implies that antiderivatives of $G'(z)$ will be logarithmically branched at infinity.  Let $L$ denote the unbounded arc of the jump contour for $\mathbf{N}(z)$ shown in Figure~\ref{fig:N-jumps-g1} that is attached to the point $D$ (the only arc along which the jump matrix is $-\mathbf{U}$), and let $\log^{(L)}(D-z)$ denote the branch of $\log(D-z)$ with branch cut $L$ and that agrees asymptotically with the principal branch of $\log(-z)$ for large negative $z$ (recall that $L$ tends to infinity along the positive real axis).  We then define an antiderivative of $G'(z)$ by the formula
\begin{equation}
G(z):=\log^{(L)}(D-z)+\int_\infty^z \left[G'(\zeta)-\frac{1}{\zeta-D}\right]\,d\zeta,\quad z\in\mathbb{C}\setminus (\Sigma\cup L).
\label{eq:g1-G}
\end{equation}
Finally, set 
\begin{equation}
H(z):=\frac{1}{2}\theta(z;x_0) -G(z),\quad z\in\mathbb{C}\setminus (\Sigma\cup L).
\label{eq:g1-H}
\end{equation}
Note that $H'(z)$ extends analytically to $L$ and that
\begin{equation}
H'(z)=\frac{3}{2}R(z),\quad z\in\mathbb{C}\setminus\Sigma.
\end{equation}
The functions $G$ and $H$ have a number of elementary properties that are immediate consequences of their definitions and that we summarize in the following proposition.
\begin{proposition}
Suppose that $A$, $B$, $C$, and $D$ are related with $x_0$ by the moment conditions \eqref{eq:g1-moments} (so that $G(z)$ and $H(z)$ are well-defined by \eqref{eq:g1-G} and \eqref{eq:g1-H} respectively).  Then the relation 
\begin{equation}
\frac{d}{dz}\left[H_+(z)+H_-(z)\right]=\frac{d}{dz}\left[\theta(z;x_0)-G_+(z)-G_-(z)\right]=0
\end{equation}
holds as an identity along each open arc of $\Sigma$.  Also, 
\begin{equation}
G_+(z)-G_-(z)=-2\pi i,\quad z\in L,
\label{eq:g1-G-jump}
\end{equation}
and finally,
\begin{equation}
G(z)=\log(-z) + \mathcal{O}\left(\frac{1}{z}\right),\quad z\to\infty,
\label{eq:g1-G-asymp}
\end{equation}
assuming that  $L$ agrees with the positive real axis for sufficiently large $|z|$.
\label{prop:g1-G-properties}
\end{proposition}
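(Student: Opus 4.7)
My plan is to establish the three assertions in a convenient order: first the asymptotic formula at infinity, then the jump identity on each arc of $\Sigma$, and finally the constant jump $-2\pi i$ on $L$. The first two are self-contained consequences of the definitions and of the branching relation $R_+ + R_- = 0$, while the third combines the first with a local-constancy argument.

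For the large-$z$ asymptotic I would expand both terms in the definition \eqref{eq:g1-G} directly. The moment conditions \eqref{eq:g1-moments} were imposed precisely to guarantee $G'(\zeta) = 1/\zeta + \mathcal{O}(1/\zeta^2)$ at infinity; combined with $1/(\zeta - D) = 1/\zeta + \mathcal{O}(1/\zeta^2)$ this yields $G'(\zeta) - 1/(\zeta - D) = \mathcal{O}(1/\zeta^2)$, so $\int_\infty^z$ of this integrand is $\mathcal{O}(1/z)$. For the logarithmic term, since $L$ agrees with the positive real axis for large $|z|$, the branch $\log^{(L)}(D - z)$ matches the principal branch of $\log(-z)$ up to $\log(1 - D/z) = \mathcal{O}(1/z)$. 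Adding these contributions gives $G(z) = \log(-z) + \mathcal{O}(1/z)$.

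The identity on each open arc of $\Sigma$ follows immediately from the formula $H'(z) = \tfrac{3}{2}R(z)$ (derived in the text just before the proposition) together with $R_+(z) + R_-(z) = 0$ on $\Sigma$: taking boundary values yields $(H_+ + H_-)'(z) = 0$. The equivalent form written in terms of $\theta$ and $G$ follows from $H = \tfrac{1}{2}\theta - G$ and the fact that $\theta(\,\cdot\,;x_0)$ is entire.

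For the jump on $L$, I would show that $G_+ - G_-$ is a constant function of $z$ along $L$ and then identify the constant via the asymptotic just obtained. The key observation is that although $G(z)$ itself has a branch cut on $L$ through the term $\log^{(L)}(D-z)$, its derivative $G'(z) = \tfrac{3}{2}z^2 + \tfrac{1}{2}x_0 - \tfrac{3}{2}R(z)$ has branch cut only on $\Sigma$ and therefore extends continuously across $L \setminus \{D\}$. Taking boundary values, $G'_+(z) = G'_-(z)$ on $L \setminus \{D\}$, so $z \mapsto G_+(z) - G_-(z)$ has vanishing derivative there and is locally constant; since $L$ is a single arc (so $L \setminus \{D\}$ is connected), the jump is globally constant. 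To evaluate it, let $z \to \infty$ along $L$ and invoke $G(z) = \log(-z) + \mathcal{O}(1/z)$, where the branch of $\log(-z)$ inherited from $\log^{(L)}(D-z)$ is the principal one with cut on the positive real axis; its jump across that axis is precisely $-2\pi i$, which must then equal $G_+ - G_-$ everywhere on $L$.

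The main (and still modest) obstacle is the careful bookkeeping of signs and branches in this last step: verifying that the convention pinning $\log^{(L)}(D-z)$ to the principal branch of $\log(-z)$ at large negative $z$ forces the jump across $L$ to be $-2\pi i$ rather than $+2\pi i$. Once the orientation of $L$ and the corresponding $\pm$ convention are fixed, the computation reduces to tracking $\arg(D-z)$ as $z$ traverses the upper and lower half planes from large negative values to a point just above, respectively just below, the positive real axis.
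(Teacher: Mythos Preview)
Your argument is correct. The paper does not actually supply a proof of this proposition; it prefaces the statement with the remark that the properties are ``immediate consequences of their definitions'' and then moves on. Your write-up is exactly the kind of verification the paper is implicitly asking the reader to carry out, and the order you chose (asymptotic, then $\Sigma$-identity, then $L$-jump) is natural.

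One small simplification for the $L$-jump: rather than arguing via constancy of $G_+-G_-$ and then evaluating at infinity, you can observe directly that the integral term in \eqref{eq:g1-G} is single-valued on $\mathbb{C}\setminus\Sigma$ (the integrand $G'(\zeta)-1/(\zeta-D)$ is analytic there, and $\mathbb{C}\setminus\Sigma$ is simply connected since $\Sigma$ is a tree), so the entire jump across $L$ is carried by $\log^{(L)}(D-z)$ alone. This dispenses with the limit at infinity and the branch-tracking you flagged as the ``main obstacle,'' though of course your route reaches the same conclusion.
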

Under the conditions of Proposition~\ref{prop:g1-G-properties}, we may therefore identify with each of the three arcs of $\Sigma$ a finite complex constant (independent of $z$), namely the value of $-(H_+(z)+H_-(z))$.  Since $H$ is continuous at $z=A$ and $z=B$,  these three constants have the values $-2H(A)$, $-2H(B)$, and 
\begin{equation}
\Lambda:=-(H_+(D)+H_-(D))=-2H_\pm(D)\pm 2\pi i,
\label{eq:g1-Lambda}
\end{equation}
where we have used \eqref{eq:g1-G-jump}.
Let
\begin{equation}
\Phi_+:=-i(\Lambda+2H(A))\quad\text{and}\quad\Phi_-:=-i(\Lambda+2H(B)).
\end{equation}
In addition to the moment conditions \eqref{eq:g1-moments}, we also wish to impose the following additional conditions (sometimes known as \emph{Boutroux conditions}) relating $A$, $B$, $C$, and $D$ with $x_0$:
\begin{equation}
\begin{split}
\mathrm{Im}(\Phi_+)&=0,\\
\mathrm{Im}(\Phi_-)&=0.
\end{split}
\label{eq:g1-Boutroux}
\end{equation}

\subsection{Dependence of the branch points $A$, $B$, $C$, and $D$ on $x_0$}
We now consider the possibility of determining the branch points $A$, $B$, $C$, and $D$ from the moment equations \eqref{eq:g1-moments} and the Boutroux conditions \eqref{eq:g1-Boutroux} given $x_0\in T$.  
First, we observe that it is easy to find values of $A$, $B$, $C$, and $D$ that simultaneously satisfy the moment conditions \eqref{eq:g1-moments} and the Boutroux conditions \eqref{eq:g1-Boutroux} in the special case that $x_0=0$.  Indeed, 
the values
\eq
A=\sqrt[3]{\frac{4}{3}}e^{2\pi i/3}, \quad B=\sqrt[3]{\frac{4}{3}}e^{-2\pi i/3}, \quad C=0, \quad D=\sqrt[3]{\frac{4}{3}}
\label{eq:zeroconfig}
\endeq
obviously satisfy the moment conditions \eqref{eq:g1-moments} for $x_0=0$.  
We take the corresponding contour $\Sigma$ to be constructed from the union of three radial straight-line segments as illustrated in 
Figure \ref{fig:R-branches-x0}.
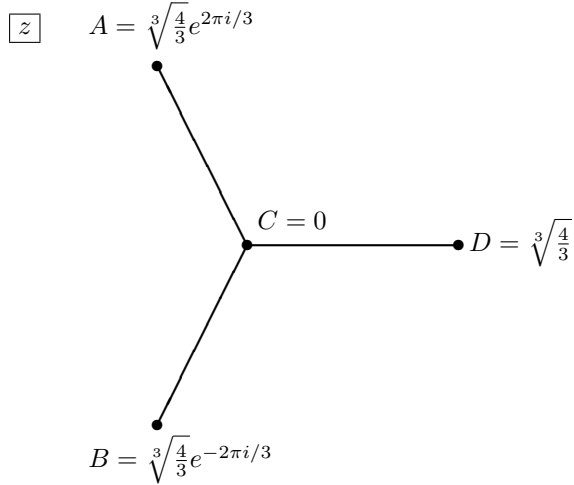
\begin{figure}[h]
\setlength{\unitlength}{2pt}
\begin{center}
\begin{picture}(100,100)(-50,-50)
\thicklines
\put(-45,40){\framebox{$z$}}
\put(0,0){\line(1,0){40}}
\put(0,0){\line(-1,2){17}}
\put(0,0){\line(-1,-2){17}}
\put(-17,34){\circle*{2}}
\put(-30,40){$A=\sqrt[3]{\frac{4}{3}}e^{2\pi i/3}$}
\put(-17,-34){\circle*{2}}
\put(-30,-42){$B=\sqrt[3]{\frac{4}{3}}e^{-2\pi i/3}$}
\put(0,0){\circle*{2}}
\put(2,3){$C=0$}
\put(40,0){\circle*{2}}
\put(42,-1){$D=\sqrt[3]{\frac{4}{3}}$}
\end{picture}
\end{center}
\caption{\emph{The branch cuts of $R(z)$ for $x_0=0$.}}
\label{fig:R-branches-x0}
\end{figure}
To confirm the Boutroux conditions \eqref{eq:g1-Boutroux}, 
first note that, with our choice of $\Sigma$, the function $R(z)$ has Schwarz symmetry:  $R(z^*)=R(z)^*$.  This implies also that $H(z^*)=H(z)^*$, so that $\Lambda\in\mathbb{R}$ and
$H(B)=H(A)^*$.  Therefore, $\Phi_-=-\Phi_+^*$, so $\mathrm{Im}(\Phi_+)=\mathrm{Im}(\Phi_-)$.   Hence it suffices to calculate $\Phi_+$:
\begin{equation}
\Phi_+=2\pi-2i(H(A)-H_+(D))=2\pi-2i\int_D^AH'(z)\,dz = 2\pi-3i\int_D^AR(z)\,dz.
\end{equation}
But, in addition to Schwarz symmetry, $R(z)$ also satisfies the rotational symmetry
$R(e^{2\pi i/3}z)=e^{-2\pi i/3}R(z)$ in this configuration, so making the substitution $z=e^{-2\pi i/3}\zeta$ yields
\begin{equation}
\Phi_+=2\pi-3ie^{-2\pi i/3}\int_A^BR(e^{-2\pi i/3}\zeta)\,d\zeta =2\pi -3i\int_A^BR(\zeta)\,d\zeta
\end{equation}
which is easily seen to be purely real by virtue of Schwarz symmetry of $R$.  It follows
that both Boutroux conditions \eqref{eq:g1-Boutroux} are indeed satisfied.

\begin{proposition}
There exist unique functions 
$A:T\to\mathbb{C}$, $B:T\to\mathbb{C}$, $C:T\to\mathbb{C}$, and $D:T\to\mathbb{C}$ whose real and imaginary parts are real differentiable functions of $(\mathrm{Re}(x_0),\mathrm{Im}(x_0))$, $x_0\in T$,  that yield the special values \eqref{eq:zeroconfig} for $x_0=0$, and that simultaneously satisfy the moment conditions
 \eqref{eq:g1-moments} and the Boutroux conditions \eqref{eq:g1-Boutroux}.  
\end{proposition}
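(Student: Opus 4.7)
\medskip

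\noindent\textbf{Proof proposal.} The plan is to apply the implicit function theorem at the distinguished point $x_0=0$, where an explicit solution is already in hand, and then to extend the solution throughout $T$ by a continuation argument that rules out degeneration inside $T$.

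\emph{Base point.} At $x_0=0$ the configuration \eqref{eq:zeroconfig} verifies the moment conditions \eqref{eq:g1-moments} by direct substitution, while the Boutroux conditions \eqref{eq:g1-Boutroux} follow from the threefold rotational and Schwarz symmetries of $R(z)$ as already shown in the paragraph preceding the statement. I will view the moment and Boutroux conditions together as a smooth system
\[
\mathbf{F}(A,B,C,D;x_0)=0
\]
of eight real equations in the eight real unknowns $(\mathrm{Re}\,A,\mathrm{Im}\,A,\dots,\mathrm{Re}\,D,\mathrm{Im}\,D)$, with $x_0$ a real two-parameter family.

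\emph{Local solvability via the implicit function theorem.} I will show that the $8\times 8$ real Jacobian $D_{A,B,C,D}\mathbf{F}$ at the base configuration \eqref{eq:zeroconfig} is invertible. The holomorphic moment block has complex gradient matrix
\[
\begin{bmatrix} 1 & 1 & 1 & 1 \\ 2A & 2B & 2C & 2D \\ 3A^{2} & 3B^{2} & 3C^{2} & 3D^{2}\end{bmatrix},
\]
whose $3\times 3$ minors are (up to constants) Vandermonde determinants and hence nonvanishing since the four branch points are distinct. The Boutroux block contributes derivatives of the period integrals $\Phi_{\pm}=-i(\Lambda+2H(A))$ and $-i(\Lambda+2H(B))$ with respect to the branch points; these can be computed by differentiating $H'(z)=\tfrac{3}{2}R(z)$ under the integral sign, and they correspond to the holomorphic and meromorphic differentials of the third kind on the genus-one Riemann surface of $R$. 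The non-degeneracy of the full $8\times 8$ real Jacobian then reduces to verifying that the appropriate combination of these differential periods does not vanish at the symmetric point, which follows either by direct computation using the explicit symmetries at $x_0=0$ or by a Riemann bilinear relations argument asserting linear independence of the period mapping whenever the four branch points remain distinct. The implicit function theorem then yields unique smooth functions $A,B,C,D$ on a neighborhood of $x_0=0$ taking the stated values there.

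\emph{Extension to all of $T$.} Let $T_{0}\subseteq T$ be the maximal connected open set containing $0$ on which smooth functions $A,B,C,D$ satisfying \eqref{eq:g1-moments}--\eqref{eq:g1-Boutroux} exist; $T_{0}$ is nonempty and open. Suppose $T_{0}\neq T$; pick a boundary point $x_{0}^{*}\in T\cap\partial T_{0}$ and a sequence $x_{n}\to x_{0}^{*}$ with $x_n\in T_0$. The moment conditions bound the branch points (they are roots of a monic polynomial of degree four whose first four elementary symmetric functions are controlled by $M_{1},\dots,M_{4}$, with $M_{4}$ algebraically determined by the lower moments on $\partial T$), so a subsequence of the branch point data converges to some $(A^{*},B^{*},C^{*},D^{*})$ that continues to satisfy \eqref{eq:g1-moments}--\eqref{eq:g1-Boutroux} at $x_{0}^{*}$. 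If these limiting points are distinct, the local implicit function argument applies at $x_{0}^{*}$ and extends $T_{0}$, a contradiction. If two of them coalesce, then the genus drops and the configuration reduces to the genus-zero data of \S\ref{genus-zero-g}; by Proposition \ref{genus-zero-proposition} and the characterization of $\partial T$ as the locus where the critical point $-S(x)/2$ collides with a band endpoint, such coalescence forces $x_{0}^{*}\in\partial T$, again contradicting $x_{0}^{*}\in T$. Hence $T_{0}=T$, and uniqueness propagates by connectedness from the local uniqueness given by the implicit function theorem.

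\emph{Main obstacle.} The most delicate step is the non-degeneracy of the $8\times 8$ Jacobian: the moment equations give a clean Vandermonde contribution, but the two Boutroux equations involve transcendental period integrals whose derivatives must be combined with the moment block to show that the real Jacobian does not vanish at the highly symmetric base point. Ruling out branch point coalescence strictly inside $T$ in the continuation step is the other place where the precise geometric definition of $T$ (rather than an abstract domain) plays an essential role.
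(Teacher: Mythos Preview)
Your overall strategy is sound, but the central Jacobian step is not actually carried out, and the paper's proof shows a much cleaner way to handle it. Rather than confront the full $8\times 8$ real Jacobian in the variables $(\mathrm{Re}\,A,\ldots,\mathrm{Im}\,D)$, the paper first observes that the three moment conditions \eqref{eq:g1-moments} force $R(z)^2=z^4+\tfrac{2}{3}x_0z^2-\tfrac{4}{3}z+(u+iv)$ with $u+iv=ABCD$ the \emph{only} free coefficient, and then applies the implicit function theorem in two stages: once to solve the Boutroux conditions (in the form \eqref{eq:BoutrouxConditions}) for the two real unknowns $(u,v)$, and once to recover $A,B,C,D$ from $M_1,M_2,M_3,\Pi$. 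The second Jacobian is exactly your Vandermonde product, but the first is the key: it equals $\tfrac{1}{4}\mathrm{Im}(\Omega_\mathfrak{a}\Omega_\mathfrak{b}^*)$, whose strict negativity for a nonsingular genus-one curve is a classical fact. Your ``appropriate combination of differential periods'' is precisely this quantity, but without the reduction to $(u,v)$ you would have to extract it from the eight-variable chain rule, which you have not done and which is genuinely awkward at a point with full $\mathbb{Z}/3$ symmetry.

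Your continuation argument also has gaps. The boundedness claim for the branch points is confused: $M_4$ (equivalently $\Pi$) is \emph{not} algebraically determined by $M_1,M_2,M_3$; it is fixed by the Boutroux conditions, and you give no a~priori bound on it as $x_0$ varies in $T$. More seriously, you only treat the case where exactly two branch points coalesce and loosely invoke Proposition~\ref{genus-zero-proposition} to place $x_0^*$ on $\partial T$. The paper instead systematically eliminates every possible degeneration: a fourfold root and two pairs of double roots are each incompatible with $M_1=0$ and $M_3=4$; a triple root forces $x_0$ to a corner of $T$; and a single double root reduces to the genus-zero equations \eqref{ab-conditions}, whereupon the one surviving Boutroux condition becomes exactly $\mathrm{Re}(\mathfrak{c}(x_0))=0$, which places $x_0$ on $\partial T$ (or, for the other branches of the cubic \eqref{cubic-equation}, on the exterior Kapaev curves of Figure~\ref{fig:phantom-stokes}). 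Without this case analysis your continuation step is incomplete.
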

\begin{proof}
The proof is a consequence of the Implicit Function Theorem, which we shall use two consecutive times.  We begin by \emph{assuming} that the moment
conditions \eqref{eq:g1-moments} hold, and this assumption implies that the square of $R(z)$ can be written in the form
\begin{equation}
R^2 = z^4 + \frac{2}{3}x_0 z^2 -\frac{4}{3}z + u+iv,
\label{eq:R-rewrite-gen1}
\end{equation}
where the only undetermined coefficient is the product of the roots, which we write as $u+iv=ABCD$.
Given $x_0\in T$,  we first attempt to solve the Boutroux equations \eqref{eq:g1-Boutroux} 
for real functions $u=u(x_0)$ and $v=v(x_0)$.  
Using the representation $H'(z)=\tfrac{3}{2}R(z)$ and the representation \eqref{eq:R-rewrite-gen1} of $R(z)$, the 
Boutroux conditions may be equivalently formulated as\footnote{We are now thinking of $R$ as a function on the elliptic curve $\Gamma$ given by the equation \eqref{eq:R-rewrite-gen1}, which consists of two sheets.  On one of these sheets $z$ is a suitable coordinate and $R$ may be identified with the previously defined function $R(z)$.  On the other sheet $R$ takes the opposite sign.  A more suitable notation distinguishing the two interpretations of $R$ will be given in \S4.5.2.}
\begin{equation}
I_\mathfrak{a}(u,v;x_0):=\mathrm{Re}\left(\oint_\mathfrak{a} R\,dz\right)=0,\quad I_\mathfrak{b}(u,v;x_0):=\mathrm{Re}\left(\oint_\mathfrak{b} R\,dz\right)=0,
\label{eq:BoutrouxConditions}
\end{equation}
where $\mathfrak{a}=\mathfrak{a}(x_0)$ and $\mathfrak{b}=\mathfrak{b}(x_0)$ form a basis of homology 
cycles on the elliptic curve $\Gamma$ given by the equation \eqref{eq:R-rewrite-gen1} and compactified at infinity.
Note that these functions only depend on the 
homology \emph{classes} of the paths $\mathfrak{a}$ and $\mathfrak{b}$, because as a consequence of
 \eqref{eq:R-rewrite-gen1},
the residues of the meromorphic differential $R\,dz$ at its only poles (the two points of $\Gamma$ over $z=\infty$) are both real ($\pm\tfrac{2}{3}$ in fact).
Noting that
\begin{equation}
\omega_0:=\frac{dz}{R}
\label{eq:omega0define-gen1}
\end{equation}
spans the (one-dimensional) vector space of holomorphic differentials on $\Gamma$, we see
that the Jacobian matrix for the Boutroux conditions takes the form
\begin{equation}
J(u,v;x_0):=\begin{bmatrix}\displaystyle \frac{\partial I_\mathfrak{a}}{\partial u} &\displaystyle \frac{\partial I_\mathfrak{a}}{\partial v}\\\\
\displaystyle \frac{\partial I_\mathfrak{b}}{\partial u} &\displaystyle \frac{\partial I_\mathfrak{b}}{\partial v}\end{bmatrix}=
\frac{1}{2}\begin{bmatrix}\displaystyle \mathrm{Re}(\Omega_\mathfrak{a})&\displaystyle 
\mathrm{Re}(i\Omega_\mathfrak{a})\\
\displaystyle \mathrm{Re}(\Omega_\mathfrak{b}) &\displaystyle \mathrm{Re}(i\Omega_\mathfrak{b})
\end{bmatrix},\quad \Omega_{\mathfrak{a},\mathfrak{b}}=\Omega_{\mathfrak{a},\mathfrak{b}}(x_0):=\oint_{\mathfrak{a},\mathfrak{b}}\omega_0.
\end{equation}
Hence the Jacobian determinant is 
\begin{equation}
\begin{split}
\det J(u,v;x_0)&=\frac{1}{4}\left[\mathrm{Re}(\Omega_\mathfrak{a})
\mathrm{Re}(i\Omega_\mathfrak{b}) -\mathrm{Re}(\Omega_\mathfrak{b})
\mathrm{Re}(i\Omega_\mathfrak{a})\right] \\
&=\frac{1}{4}\left[\mathrm{Re}(\Omega_\mathfrak{b})\mathrm{Im}(\Omega_\mathfrak{a})
-\mathrm{Re}(\Omega_\mathfrak{a})\mathrm{Im}(\Omega_\mathfrak{b})\right]\\
&= \frac{1}{4}\mathrm{Im}(\Omega_\mathfrak{a}\Omega_\mathfrak{b}^*).
\end{split}
\end{equation}
Assuming that the roots of \eqref{eq:R-rewrite-gen1} are distinct (as they 
are when $x_0$ and $u+iv$ are both zero), $\Gamma$ is a smooth compact 
elliptic curve, and it is a basic result (see, for example,  \cite[Ch. II, Corollary 1]{Dubrovin81}) that the strict inequality 
$\mathrm{Im}(\Omega_\mathfrak{a}\Omega_\mathfrak{b}^*)<0$ holds.   Consequently,
$\det J(u,v;x_0)\neq 0$ and the Implicit Function Theorem implies that the solution $u+iv=0$ for $x_0=0$ of the 
Boutroux equations may be continued uniquely to nonzero $x_0$  so long as the 
curve $\Gamma$ does not degenerate (that is, as long as the branch points $A$, $B$, $C$, and $D$ remain distinct).  

Given $u$ and $v$,
we may then try to recover the roots $A$, $B$, $C$, and $D$ via the moment equations 
\eqref{eq:g1-moments} by adjoining to them the relation
\begin{equation}
\Pi(A,B,C,D):=ABCD=u+iv.
\end{equation}
Thus both $x_0$ and $u+iv$ are known, and we seek to solve four equations for the four 
unknowns $A$, $B$, $C$, and $D$.  This amounts to a second invocation of the Implicit Function Theorem.
Of course when $x_0=0$ (and then $u(0)+iv(0)=0$) we know that
\eqref{eq:zeroconfig} furnishes a solution of this system.  The Jacobian 
determinant of the system is
\begin{equation}
\label{Jacobian-gen1}
\begin{split}
\det\frac{\partial(M_1,M_2,M_3,\Pi)}{\partial(A,B,C,D)}&=
\det\begin{bmatrix} 1 & 1 & 1 & 1\\2A & 2B & 2C & 2D\\
3A^2 & 3B^2 & 3C^2 & 3D^2\\
BCD & ACD & ABD & ABC
\end{bmatrix}\\
&=-6(A-B)(A-C)(B-C)(A-D)(B-D)(C-D).
\end{split}
\end{equation}
This is nonzero under exactly the same condition that guarantees that $u$ and 
$v$ may be obtained from the Boutroux conditions as functions of $x_0$: the 
Riemann surface $\Gamma$ must be nonsingular.  

It remains only to show that the continuation of the solution from $x_0=0$ exists throughout the domain $T$, that is, that $\Gamma$ is nonsingular for all $x_0\in T$.  The elliptic curve $\Gamma$ becomes singular exactly when the quartic $R^2$ has fewer than 
four distinct roots.  We now consider all of the possible singular cases.
\subsubsection*{One fourth order root}
If all four roots coincide ($A=B=C=D$), then the condition $M_1(A,A,A,A)=0$ implies that $A=B=C=D=0$.  However, this configuration is then inconsistent with the condition $M_3=4$.
Hence there can be no degenerate configurations of this type.
\subsubsection*{One triple root and one simple root}
Three roots can coincide in four ways, depending on which of $A,B,C,D$ remains distinct from the other three coalescing roots; however all of the equations are symmetric under permutations of the roots, so without loss of generality we suppose that $A\neq B=C=D$.  Then the condition $M_1(A,B,B,B)=0$ implies that $A=-3B$, and then $M_2(-3B,B,B,B)=12B^2=-4x_0/3$ and $M_3(-3B,B,B,B)=-24B^3=4$.  From the latter equation we obtain three distinct solutions:  $B=C=D=-6^{-1/3}e^{2\pi in/3}$, $n=0,\pm 1$.  However the equation $12B^2=-4x_0/3$ is then only consistent if $x_0=-(81/4)^{1/3}e^{-2\pi in/3}$, that is, if $x_0$ coincides with one of the three corner points
of the domain $T$.  

This means that these three points in the $x_0$-plane are the only points for which the moment conditions \eqref{eq:g1-moments}
are satisfied by a configuration of roots of $R(z)^2$ in which exactly three roots coincide.
Moreover, in all of these cases, it is easy to see (by choosing homology cycle representatives in \eqref{eq:BoutrouxConditions} that shrink with the three coalescing branch points) that the Boutroux conditions \eqref{eq:g1-Boutroux} are satisfied exactly by such a  degenerate configuration.  Therefore the three corners of the domain $T$ indeed correspond to simultaneous solutions of \eqref{eq:g1-moments} and \eqref{eq:g1-Boutroux} having
exactly three coincident roots, and these are the only values of $x_0\in\mathbb{C}$ where this degeneracy can occur.

\subsubsection*{Two pairs of double roots}
If the quartic $R(z)^2$ has two distinct double roots, say $A=B\neq C=D$, then $M_1(A,A,C,C)=2A+2C=0$ so $A=-C$, but then the equation $M_3=4$ is obviously inconsistent.
Hence there can be no degenerate configurations of this type.

\subsubsection*{One double root and two simple roots}
If the quartic $R(z)^2$ has one double root, say $C=D$, and two additional distinct simple roots, $A$ and $B$, then the condition $M_1(A,B,C,C)=A+B+2C=0$ implies that $C=-(A+B)/2$.  With this information, it is easy to confirm that
\begin{equation}
M_2\left(A,B,-\frac{1}{2}(A+B),-\frac{1}{2}(A+B)\right)=-\frac{4}{3}x_0
\quad \Leftrightarrow\quad 6S^2+3\Delta^2=-8x_0,
\label{eq:g1-cubic}
\end{equation}
where $S=S(x_0):=A(x_0)+B(x_0)$ and $\Delta=\Delta(x_0):=B(x_0)-A(x_0)$, and 
also
\begin{equation}
M_3\left(A,B,-\frac{1}{2}(A+B),-\frac{1}{2}(A+B)\right)=4\quad
\Leftrightarrow\quad 3S\Delta^2=16.
\end{equation}
We immediately recognize these as the two equations \eqref{ab-conditions} that 
determine the sum $S$ and difference $\Delta$ of the roots $a$ and $b$ in the genus-zero case with only the replacement of $x$ by $x_0$.  The cubic equation for $S$ that results upon elimination of $\Delta$ has three distinct solutions for all $x_0\in\mathbb{C}$ different from the
three corner points of the domain $T$ (and these three points are not under consideration as they
correspond to a coalescence of three of the four points $A$, $B$, $C$, and $D$ as discussed above).  For $x_0\in T$, each of these three solutions corresponds to the analytic continuation of $S(x_0)$ as defined in \S\ref{section-gen0} from the exterior domain $x_0\in \mathbb{C}\setminus\overline{T}$ through exactly one of the three smooth arcs of $\partial T$ to all of $T$.  For each of these analytic functions of $x_0\in T$ we obtain corresponding candidate solution pairs $(A(x_0), B(x_0))$ up to permutation, and we may then attempt to apply the Boutroux conditions which we take in the
form \eqref{eq:BoutrouxConditions}.  
Taking one of the homology 
representatives, say $\beta$, to be a concrete loop surrounding the coalescing 
points $C$ and $D$, we easily see that $I_\beta=0$.  The remaining Boutroux 
condition $I_\alpha=0$ can be simplified as follows:
\begin{equation}
I_\alpha(u,v;x_0)=0\quad\Leftrightarrow\quad
\mathrm{Re}\left(\int_{A(x_0)}^{-S(x_0)/2}h'(z;x_0)\,dz\right)=0.
\end{equation}
In the case that $S$ agrees with the function defined concretely in \S\ref{section-gen0}, we have
$A(x_0)=a(x_0)$ and $B(x_0)=b(x_0)$, in which case we see that this condition is \emph{identical} to the condition $\mathrm{Re}(\mathfrak{c}(x))=0$ with $\mathfrak{c}$ defined by  \eqref{c-def-gen0} that defines the failure of the genus-zero ansatz from outside the domain $T$.  When $S$ coincides with one of the other two roots of the cubic equation \eqref{eq:g1-cubic}, one has an analogue of the condition $\mathrm{Re}(\mathfrak{c}(x))=0$ in which $S$ is replaced by one of the other roots in the definition \eqref{c-def-gen0}; this gives a different system of curves in the $x_0$-plane with three arcs emanating from each of the three corner points of $T$ into the \emph{exterior} region $\mathbb{C}\setminus \overline{T}$.  
These curves evidently coincide exactly with the unbounded Stokes curves 
discovered by Kapaev \cite{Kapaev:1997} and illustrated in Figure 
\ref{fig:phantom-stokes}.  
This analysis shows that there cannot be a value $x_0\in T$ for which exactly two of the four points $A$, $B$, $C$, and $D$ coincide.  On the other hand, this occurs exactly when $x_0$ lies on the three open arcs whose closure is $\partial T$, and for such $x_0$ the distinct pair of roots, say $\{A(x_0),B(x_0)\}$, coincides with the analytic continuation through the arc of $\partial T$ of the pair $\{a(x_0),b(x_0)\}$ of 
points defined in the exterior region $\mathbb{C}\setminus\overline{T}$ as explained in \S\ref{section-gen0}.

\subsubsection*{Conclusion}
It follows that the points $A$, $B$, $C$, and $D$ remain distinct and hence the simultaneous solution of the 
moment conditions \eqref{eq:g1-moments} and Boutroux conditions \eqref{eq:g1-Boutroux} can be uniquely and smoothly continued from $x_0=0$, where it is given by \eqref{eq:zeroconfig}, right
up to the boundary $\partial T$ of the domain $T$.
\end{proof}

\begin{remark}
Unlike the endpoints $a(x)$ and $b(x)$ of the band $\Sigma$ that appeared in the analysis for $x$ outside of the elliptic region $T$, which undergo an exchange permutation upon monodromy of $x$ once about the hole $T$ in the multiply connected domain $\mathbb{C}\setminus\overline{T}$, the four points $A$, $B$, $C$, and $D$ are determined once the roots of $R$ are labeled at any one point in $T$ as we have done at $x=0$.  A numerically useful condition to determine $D$ is that it is the root of $R(z)^2$ with the most positive real part.  Similar extreme conditions can be used to determine $A$ and $B$, and then $C$ is the remaining root.
\end{remark}

The Boutroux conditions \eqref{eq:g1-Boutroux} may be viewed as a statement about the relation
between the points $A$, $B$, $C$, and $D$ and the level curves of the
function $\mathrm{Re}(2H(z)+\Lambda)$:  all four of the points lie on its zero level curve.  The function
$\mathrm{Re}(2H(z)+\Lambda)$ is harmonic for $z\in\mathbb{C}\setminus\Sigma$, and it has no critical points in this domain of definition.  This means that the zero level curve consists solely of (possibly unbounded) arcs connected to the points $A$, $B$, $C$, and $D$, and the arcs may not intersect except at these points.  Local analysis near each of these points shows that there are exactly three level curves emanating from each at angles separated by $2\pi/3$.  Similar local analysis near $z=\infty$ shows that there are exactly six arcs of the level curve that tend to infinity at angles $\pm\pi/6$, $\pm\pi/2$, and $\pm 5\pi/6$.  From this information it can be shown that exactly one of the four points $A$, $B$, $C$, and $D$
is directly connected only to the other three points, each of which is additionally connected to exactly two unbounded arcs.  Since the level curve itself deforms continuously as $x_0$ varies throughout $T$, we let $C$ denote the central point connected to each of the others, while the unbounded arcs from $A$ have asymptotic angles $5\pi/6$ and $\pi/2$, those from $B$ have asymptotic angles $-5\pi/6$ and $-\pi/2$, and those from $D$ have asymptotic angles $\pm\pi/6$.
The last choice we make is to take the arcs of $\Sigma$ to coincide with the connected union of bounded arcs of the zero level curve of $\mathrm{Re}(2H(z)+\Lambda)$, which implies (because of the Boutroux conditions \eqref{eq:g1-Boutroux}, really) that the latter function actually extends continuously to $\Sigma$. 
In this situation the special contour $\Sigma$ is sometimes called the \emph{Stokes graph} of
the radical $R(z)$.  The Stokes graph $\Sigma$ and the full level curve $\mathrm{Re}(2H(z)+\Lambda)=0$
are easily constructed numerically by implementing a root-finding scheme to find the points $A$, $B$, $C$, and $D$ from the moment equations \eqref{eq:g1-moments} and Boutroux conditions \eqref{eq:g1-Boutroux} and then computing curves emanating from each of the four points along which $\mathrm{Re}(H'(z)\,dz)=0$.  The results of several such computations are given in Figure~\ref{H-contours-inside}.
\begin{figure}[h]
\includegraphics[width=1.5in]{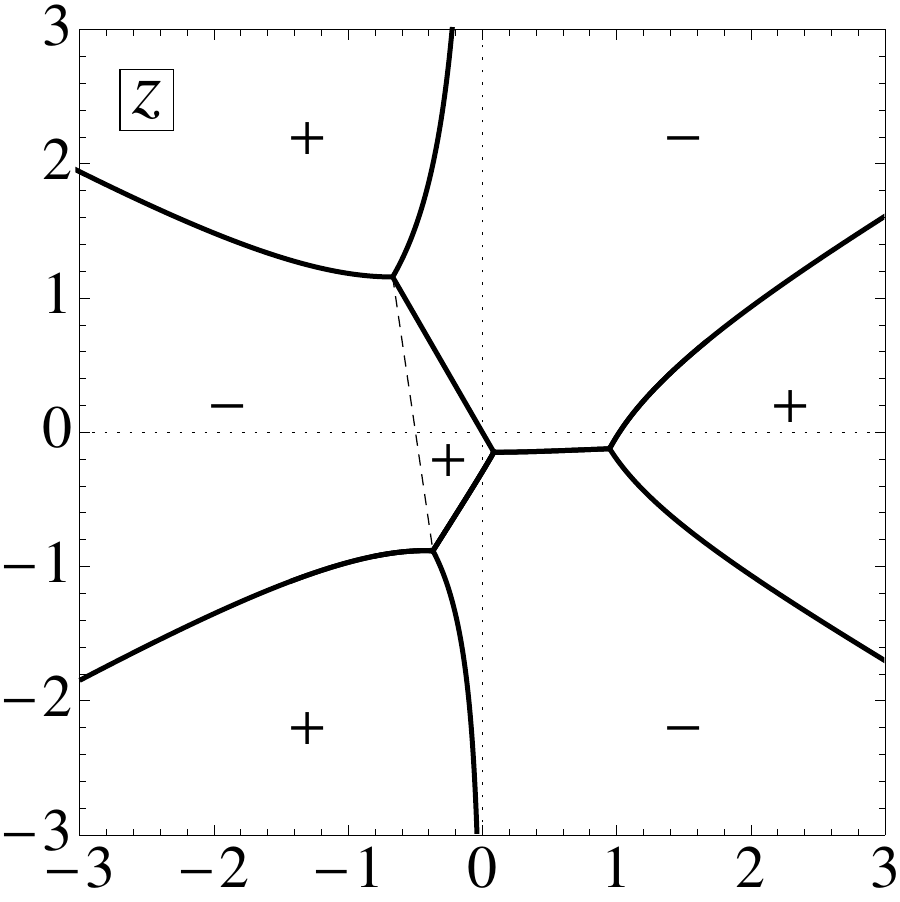}\\
\includegraphics[width=1.5in]{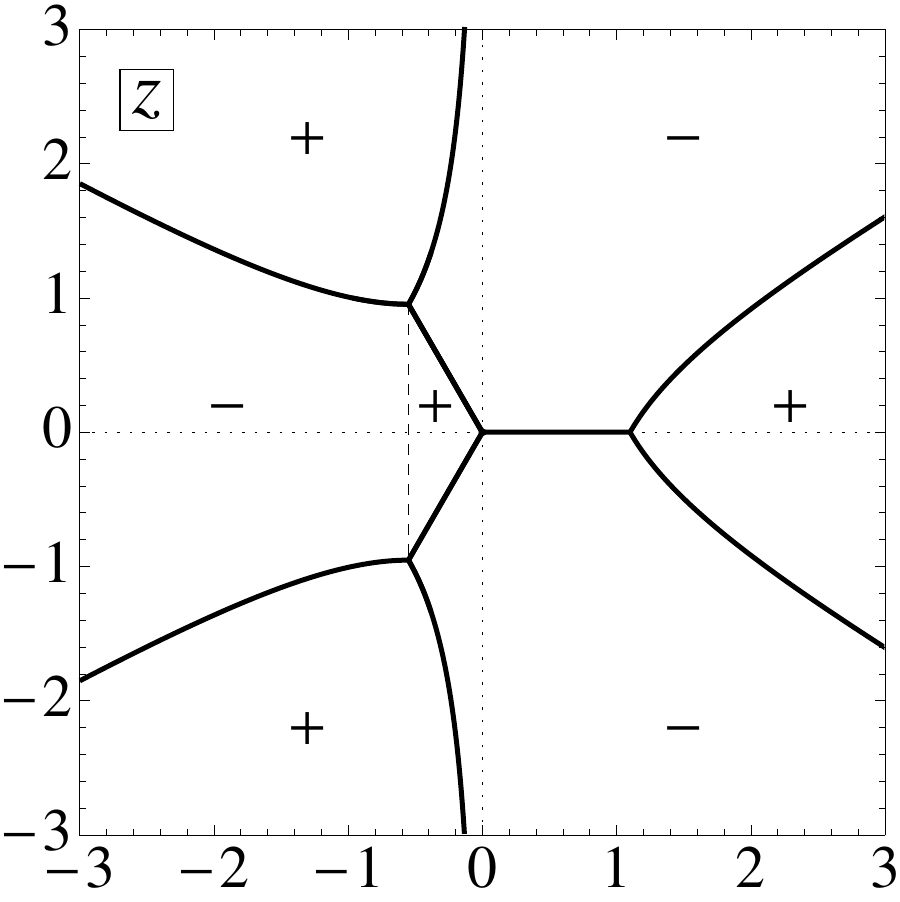}
\includegraphics[width=1.5in]{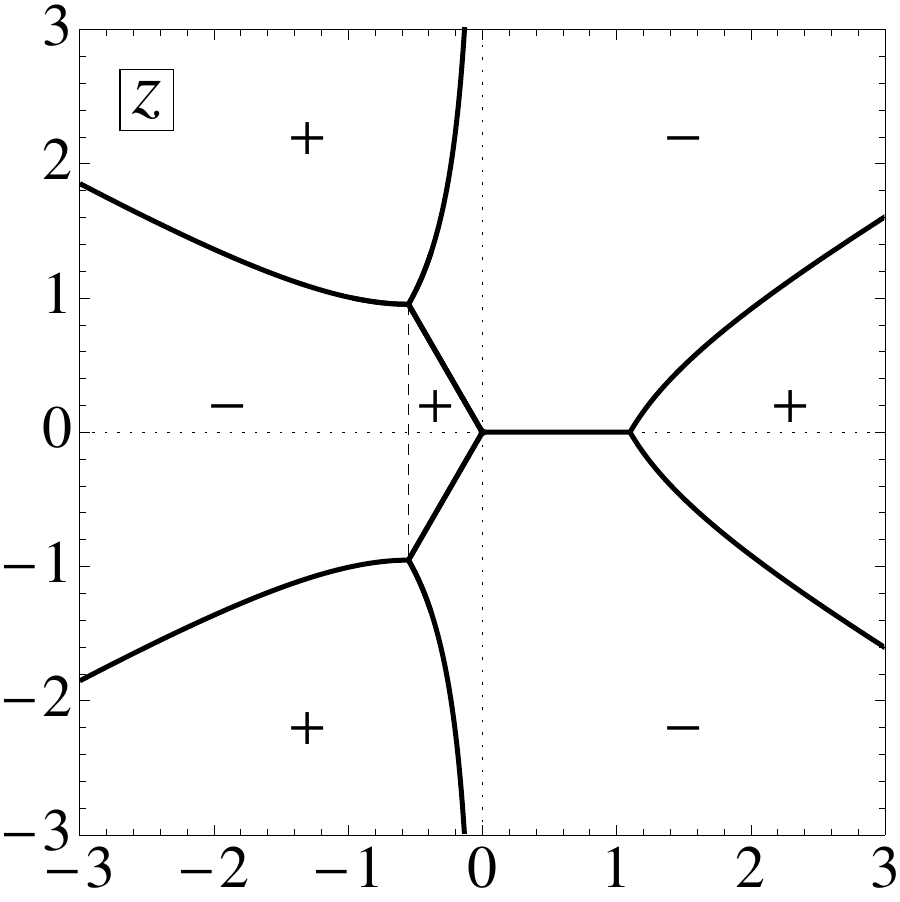}
\includegraphics[width=1.5in]{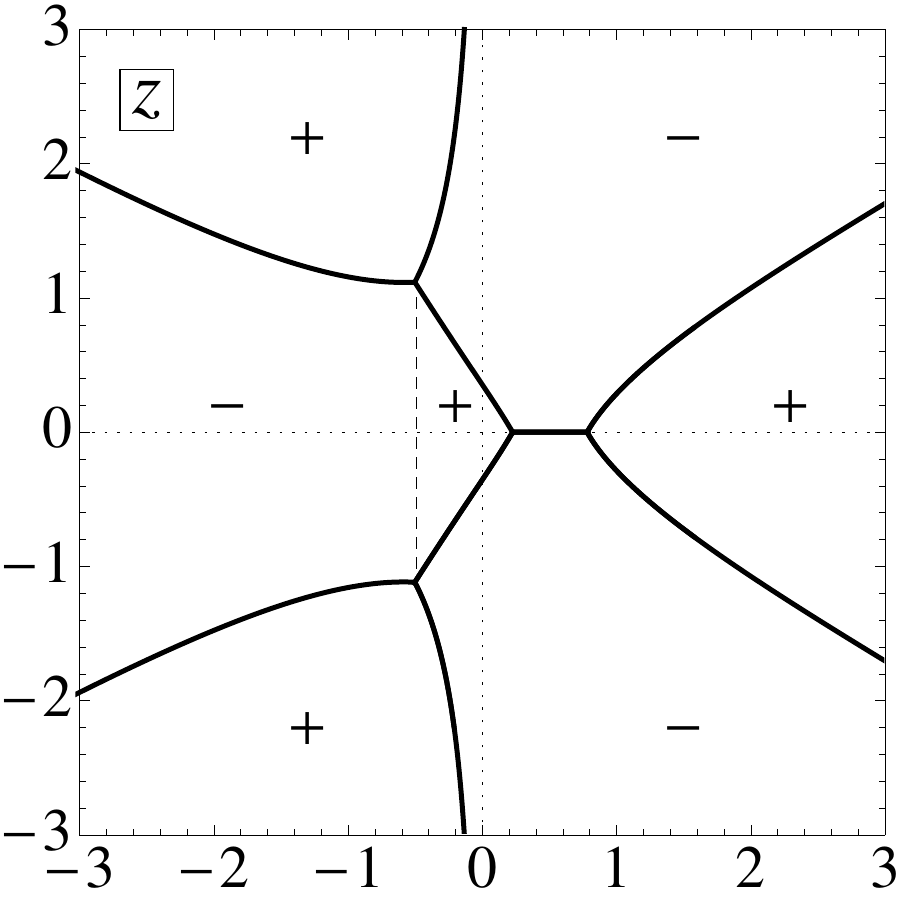}\\
\includegraphics[width=1.5in]{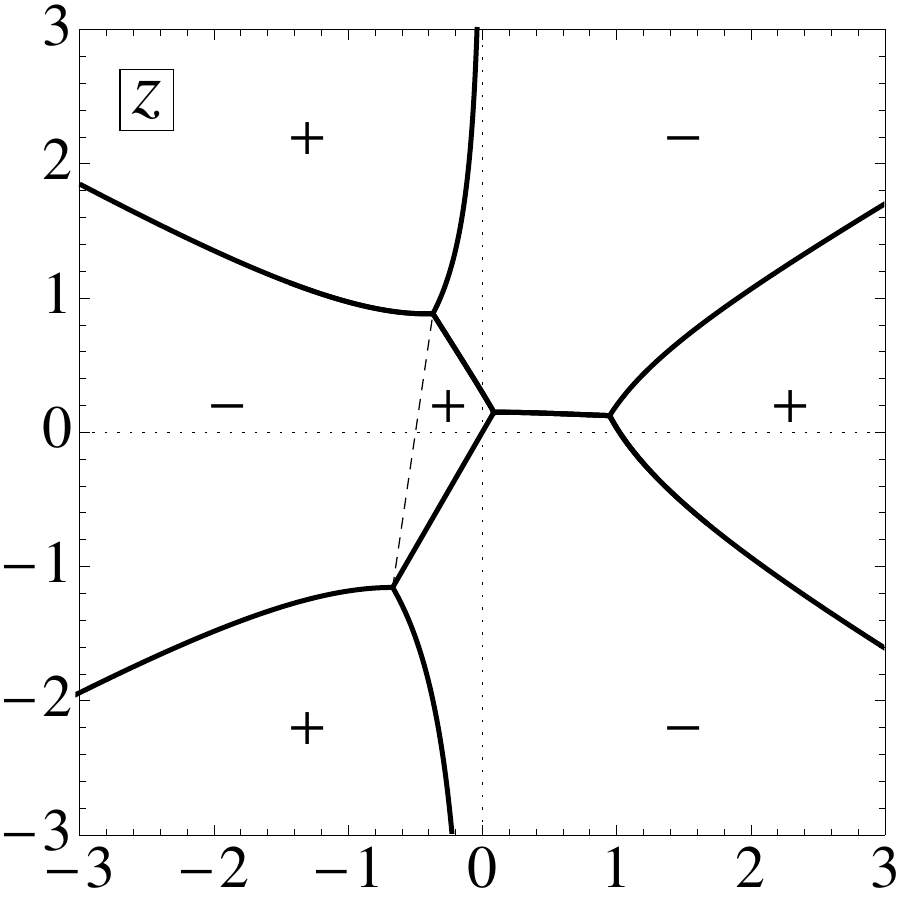}
\caption{\emph{
Signature charts for $\mathrm{Re}(2H(z)+\Lambda)$ in the 
complex $z$ plane for various choices of $x_0$ inside the boundary curve with 
$-\pi/3\leq\arg(x_0)\leq\pi/3$.  
Clockwise from the right-most chart:  
$x_0=1$, $x_0=e^{-i\pi/3}$, $x_0=0$, $x_0=e^{i\pi/3}$.
The center plot illustrates the relation of the chosen $x_0$ values to the 
boundary curve.  In (numerical) practice, among the four points $\{A,B,C,D\}$ determined up to permutation by applying Newton iteration to the moment conditions \eqref{eq:g1-moments} and Boutroux conditions \eqref{eq:g1-Boutroux}, the point $D(x_0)$ maximizes $\mathrm{Re}(\cdot)$, the point $A(x_0)$ maximizes $\mathrm{Re}(e^{-2\pi i/3}\cdot)$, and the point $B(x_0)$ maximizes $\mathrm{Re}(e^{2\pi i/3}\cdot)$.
}
}
\label{H-contours-inside}
\end{figure}%

Note that for $x_0\in T\cap\mathbb{R}$, the Stokes graph $\Sigma$ is Schwarz-symmetric, as illustrated qualitatively in Figure~\ref{fig:N-jumps-g1}, and that in this case $R(z^*)=R(z)^*$, $G(z^*)=G(z)^*$, and $H(z^*)=H(z)^*$, implying further that $\Phi_-=-\Phi_+$ (both real by the Boutroux conditions).

Note also that as a result of the moment conditions \eqref{eq:g1-moments} and Boutroux conditions \eqref{eq:g1-Boutroux}, all of the quantities defined in this section depend parametrically on $x_0\in T$.  The basic quantities are $(A,B,C,D)=(A(x_0),B(x_0),C(x_0),D(x_0))$, and then we also have $G(z)=G(z;x_0)$, $H(z)=H(z;x_0)$, $\Lambda=\Lambda(x_0)$, $\Phi_\pm=\Phi_\pm(x_0)$, $\Sigma=\Sigma(x_0)$, and $R(z)=R(z;x_0)$.  While the parametric dependence on $(\mathrm{Re}(x_0),\mathrm{Im}(x_0))$ is infinitely smooth, there is no analyticity with respect to $x_0$ as a complex variable; indeed, by implicit differentiation of the Boutroux conditions in the form \eqref{eq:BoutrouxConditions} one can prove that
\begin{equation}
\overline{\partial}_{x_0}\Pi:=\frac{1}{2}\left[\frac{\partial\Pi}{\partial\mathrm{Re}(x_0)} +i\frac{\partial\Pi}{\partial\mathrm{Im}(x_0)}\right]=-\frac{4\pi}{3\mathrm{Im}(\Omega_\mathfrak{a}\Omega_\mathfrak{b}^*)}>0
\label{eq:g1-dbarPi}
\end{equation}
holds for all $x_0\in T$, where $\Pi:=u+iv=ABCD$.  In other words, the smooth functions $u$ and $v$ \emph{do not} satisfy the Cauchy-Riemann equations.

Finally, as a consequence of the fact that we can identify (after appropriate contour deformation and analytic continuation) $h(z;x)$ with $H(z;x)$ for any $x\in\partial T$, we have the following result.
\begin{proposition}
If $x_0$ belongs to either of the two maximal smooth arcs of $\partial T$ that meet at the point $x_c<0$, then $\Lambda(x_0)=\lambda(x_0)$.  If $x_0$ belongs to the remaining arc of $\partial T$ that crosses the positive real axis at $x_e$, then $\mathrm{Re}(\Lambda(x_0))=\mathrm{Re}(\lambda(x_0))$ but the imaginary parts do not agree, even modulo $2\pi$.
\label{prop:g1-Lambdalambda}
\end{proposition}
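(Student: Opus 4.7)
The plan is to exploit the fact that on $\partial T$ the elliptic curve defining $R(z;x_0)$ degenerates: the quartic $R^2$ acquires a double root at the critical point $z=-S(x_0)/2$, so that $R(z;x_0)=(z+S(x_0)/2)r(z;x_0)$ with an appropriate sign, where $r$ is the genus-zero radical of \S\ref{genus-zero-g}. Consequently $G'(z;x_0)=g'(z;x_0)$ pointwise off the cuts, and the problem of comparing $\Lambda$ and $\lambda$ reduces to comparing the logarithmic branches of their antiderivatives $G$ and $g$.

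First I would identify the degeneration pattern on each of the three smooth sub-arcs of $\partial T$. From the ``one double root and two simple roots'' analysis in the preceding proof, the three sub-arcs of $\partial T$ are in bijection with the three branches of the cubic \eqref{cubic-equation}, each extending through one sub-arc into the exterior region to form the genus-zero configuration of \S\ref{section-gen0}. Tracking the labels $A$, $B$, $C$, $D$ continuously from the reference configuration \eqref{eq:zeroconfig} at $x_0=0$ via the signature charts of Figure~\ref{H-contours-inside} and the extremal labeling rule recorded in its caption, I would show that on the two sub-arcs of $\partial T$ meeting at $x_c$ the coalescence is of $C$ with one of $\{A,B\}$, so the log-cut anchor $D$ remains equal to $b(x_0)$, while on the sub-arc crossing $x_e$ the coalescence is $C=D$, so the log-cut anchor migrates into the interior of the segment $\overrightarrow{ab}$ at the critical point $z=-S(x_0)/2$.

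In the first case, $D=b$ allows the log cuts of $G$ and $g$ to be chosen identically, and the common asymptote $\log(-z)+O(z^{-1})$ at infinity then forces $G\equiv g$, hence $H\equiv h$. Evaluating $\Lambda=-(H_+(D)+H_-(D))$ from \eqref{eq:g1-Lambda} and $\lambda=-(h_+(z^\ast)+h_-(z^\ast))$ at $z^\ast=b=D$ from Proposition~\ref{prop:g0-g-function} then gives $\Lambda(x_0)=\lambda(x_0)$ directly, using that $h$ is continuous across the band cut at $b$ since $r(b)=0$.

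The main obstacle is the sub-arc of $\partial T$ through $x_e$. Here $G'\equiv g'$ still holds, but the log cuts now anchor at different points, so that $G-g$ acquires a $\pm 2\pi i$ jump across the segment from $-S/2$ to $b$. I would make this precise by writing $G(z)-g(z)$ as a difference of two logarithms plus an additive constant fixed by matching at infinity, then read off $\Lambda-\lambda$ using the boundary-value conventions of \eqref{eq:g1-Lambda} and Proposition~\ref{prop:g0-g-function}. The real parts necessarily agree, since both constants are boundary values along the same zero level set of the common harmonic function $\mathrm{Re}(2H(z)+\Lambda)=\mathrm{Re}(2h(z)+\lambda)$ on $\partial T$. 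The imaginary discrepancy, however, combines the $2\pi i$ log monodromy with the imaginary part of the branch-cut average $\tfrac{1}{2}(H_+(-S/2)+H_-(-S/2))-h(-S/2)$, and showing that this sum is not of the form $2\pi n$ for any $n\in\mathbb{Z}$ requires a direct computation against the explicit formula \eqref{eq:lambdadef} for $\lambda$ versus the corresponding degenerate-limit integral formula for $\Lambda$; the noncommensurate nature of this discrepancy is consistent with the plots in Figure~\ref{fig:Lambdalambda}.
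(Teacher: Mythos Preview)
Your proposal is correct and follows essentially the same approach as the paper: identify which pair of branch points coalesces on each arc of $\partial T$ (namely $C$ with $A$ or $B$ on the two arcs through $x_c$, and $C$ with $D$ on the arc through $x_e$), observe that in the first case $D$ survives as the genus-zero endpoint $b$ so the logarithmic cuts and hence $H$ and $h$ coincide, while in the second case $\Lambda$ is built from boundary values of $H$ at $-S/2$ rather than at an endpoint of $\Sigma$, and finally use the defining level-set condition of $\partial T$ to conclude equality of real parts. The paper's proof is terser than yours and, like yours, does not actually carry out a verification that the imaginary-part discrepancy is nonzero modulo $2\pi$; your remark that this would require a direct computation (consistent with Figure~\ref{fig:Lambdalambda}) accurately reflects the situation.
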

\begin{proof}
Recall that $\Lambda(x_0)$ may be expressed in terms of the boundary values of $H(z;x_0)$ from above and below at the point $D\in\Sigma\cap L$.  If $x_0\in T$ tends to one of the two edges of $\partial T$ that meet at $x_c<0$, then either $A$ or $B$ collides with $C$, and the collision point coincides with $-S(x_0)/2$.  It is easy to see in this degeneration that $\Lambda(x_0)=\lambda(x_0)$ because the latter function is similarly defined in terms of the boundary values of $h(z;x_0)$ on the surviving branch cut of $R$ (which becomes the branch cut for $r$), and $D$ survives as one of the endpoints of this cut.  If instead $x_0\in T$ tends to the third edge of $\partial T$, then it is $D$ that coalesces with $C$ in the limit, so $\Lambda(x_0)$ tends to the sum of boundary values of $H$ at the collision point, which again coincides with $-S(x_0)/2$.  This value can no longer be generally identified with $\lambda(x_0)$, because it is the branch points $A$ and $B$ that become the roots of $r^2$ in the limit, and $\lambda(x_0)$ is defined in terms of the values taken by $h(z;x_0)$ on a cut connecting these two points.  However, the condition that defines $\partial T$ is that the point $-S(x_0)/2$ lies on the same level of $\mathrm{Re}(H(z;x_0))$ as do the branch points $A$ and $B$, and this guarantees that on the third edge we nonetheless have $\mathrm{Re}(\Lambda(x_0))=\mathrm{Re}(\lambda(x_0))$.
\end{proof}

\subsection{Introduction of the $g$-function}
Given $x_0\in T$ and corresponding Stokes graph $\Sigma$ with points $A$, $B$, $C$, and $D$,
we consider the matrix $\mathbf{N}(z)$ analytic on the complement of the jump contour $\Sigma^{(\mathbf{N})}$ illustrated in Figure~\ref{fig:N-jumps-g1}, satisfying the jump condition $\mathbf{N}_+(z)=\mathbf{N}_-(z)\mathbf{V}^{(\mathbf{N})}(z)$ where the jump matrix $\mathbf{V}^{(\mathbf{N})}(z)$ is defined as indicated in that figure on each arc, and normalized at infinity by the condition $\mathbf{N}(z)(-z)^{-\sigma_3/\epsilon}=\mathbb{I}+\mathcal{O}(z^{-1})$ as $z\to\infty$.  Using the function $G(z)$ defined for each $x_0\in T$ as described in \S\ref{genus-one-g}, we introduce the substitution
\begin{equation}
\mathbf{O}(z)=\mathbf{O}(z;x_0,w,\epsilon):=e^{-\Lambda\sigma_3/(2\epsilon)}\mathbf{N}(z)e^{-G(z)\sigma_3/\epsilon}e^{\Lambda\sigma_3/(2\epsilon)}.
\label{eq:g1-OfromN}
\end{equation}
Then, $\mathbf{O}(z)$ is analytic exactly where $\mathbf{N}(z)$ is (that is, for $z\in\mathbb{C}\setminus\Sigma^{(\mathbf{O})}$ where $\Sigma^{(\mathbf{O})}=\Sigma^{(\mathbf{N})}$), and with the help of the condition 
\eqref{eq:g1-G-asymp} one checks that $\mathbf{O}(z)=\mathbb{I}+\mathcal{O}(z^{-1})$
as $z\to\infty$.  On the arcs of the jump contour $\Sigma^{(\mathbf{O})}$ illustrated in Figure~\ref{fig:N-jumps-g1}, 
$\mathbf{O}(z)$ satisfies the jump condition $\mathbf{O}_+(z)=\mathbf{O}_-(z)\mathbf{V}^{(\mathbf{O})}(z)$, where the jump matrix $\mathbf{V}^{(\mathbf{O})}(z)$ is a systematic modification of $\mathbf{V}^{(\mathbf{N})}(z)$ as follows:
\begin{equation}
\mathbf{V}^{(\mathbf{N})}=\mathbf{L}\quad\implies\quad\mathbf{V}^{(\mathbf{O})}(z)=\begin{bmatrix}
1 & 0\\ie^{(2H(z)+\Lambda)/\epsilon}e^{wz} & 1\end{bmatrix},
\label{eq:g1-Ojump-1}
\end{equation}
\begin{equation}
\mathbf{V}^{(\mathbf{N})}=\mathbf{L}^{-1}\quad\implies\quad\mathbf{V}^{(\mathbf{O})}(z)=\begin{bmatrix}
1 & 0\\-ie^{(2H(z)+\Lambda)/\epsilon}e^{wz} & 1\end{bmatrix},
\label{eq:g1-Ojump-2}
\end{equation}
\begin{equation}
\mathbf{V}^{(\mathbf{N})}=\mathbf{U}\quad\implies\quad\mathbf{V}^{(\mathbf{O})}(z)=
\begin{bmatrix} 1 & ie^{-(2H(z)+\Lambda)/\epsilon}e^{-wz}\\ 0 & 1\end{bmatrix},
\label{eq:g1-Ojump-3}
\end{equation}
and
\begin{equation}
\mathbf{V}^{(\mathbf{N})}=-\mathbf{U}\quad\implies\quad\mathbf{V}^{(\mathbf{O})}(z)=
\begin{bmatrix} 1 & ie^{-(2H_\pm(z)+\Lambda)/\epsilon}e^{-wz}\\ 0 & 1\end{bmatrix}.
\label{eq:g1-jump-L}
\end{equation}
To check \eqref{eq:g1-jump-L} one has to recall  the condition \eqref{eq:g1-G-jump} and the fact that $\epsilon^{-1}=m-\tfrac{1}{2}$, $m\in\mathbb{Z}$.  This also shows that it does not matter
which boundary value of $H$ is used in \eqref{eq:g1-jump-L}.  Finally, on the three remaining arcs along which $\mathbf{V}^{(\mathbf{N})}=\mathbf{T}$ or $\mathbf{T}^{-1}$, we have
\begin{equation}
\mathbf{V}^{(\mathbf{O})}(z)=\begin{bmatrix}0 & -ie^{-wz}\\-ie^{wz}& 0\end{bmatrix},\quad z\in \overrightarrow{CD},
\label{eq:g1-Ojump-5}
\end{equation}
\begin{equation}
\mathbf{V}^{(\mathbf{O})}(z)=\begin{bmatrix}0 & ie^{-i\Phi_+/\epsilon}e^{-wz}\\ie^{i\Phi_+/\epsilon}e^{wz} & 0\end{bmatrix},\quad z\in\overrightarrow{CA},
\label{eq:g1-Ojump-6}
\end{equation}
and
\begin{equation}
\mathbf{V}^{(\mathbf{O})}(z)=\begin{bmatrix}0 & ie^{-i\Phi_-/\epsilon}e^{-wz}\\
ie^{i\Phi_-/\epsilon}e^{wz} & 0\end{bmatrix},\quad z\in\overrightarrow{CB}.
\label{eq:g1-Ojump-7}
\end{equation}
To derive all of these formulae, we have used the fact that $x=x_0+\epsilon w$ implies that 
$\theta=\theta(z;x)=\theta(z;x_0) + \epsilon wz$.  

\subsection{The global parametrix for $\mathbf{O}(z)$} 
The free arcs of the jump contour $\Sigma^{(\mathbf{O})}$ for $\mathbf{O}(z)$ (those not part of the Stokes graph $\Sigma$)
are now chosen to lie in two complementary domains as follows:
\begin{itemize}
\item Those arcs along which either $\mathbf{V}^{(\mathbf{N})}=\mathbf{L}$ or $\mathbf{L}^{-1}$ are
placed so that (except at the points $A$, $B$, and $D$) the inequality $\mathrm{Re}(2H(z)+\Lambda)<0$ holds.
\item Those arcs along which either $\mathbf{V}^{(\mathbf{N})}=\mathbf{U}$ or $-\mathbf{U}$ are
placed so that (except at the points $A$, $B$, and $D$) the inequality $\mathrm{Re}(2H(z)+\Lambda)>0$ holds.
\end{itemize}
In both cases, we will in fact choose the arcs precisely so that, along each, $\mathrm{Im}(2H(z)+\Lambda)$ is constant near any of the four points $\{A,B,C,D\}$ at which the arc terminates.  This makes each arc a local steepest descent/ascent path for $\mathrm{Re}(2H(z)+\Lambda)$ near these four points.
Assuming that $w\in\mathbb{C}$ is bounded, it then follows easily that on all of these arcs the jump matrix $\mathbf{V}^{(\mathbf{O})}(z)$ is exponentially close to the identity matrix uniformly for $z$ bounded away from $A$, $B$, and $D$.  By ignoring these discontinuities, we are left with 
jump discontinuities only along the Stokes graph $\Sigma$, and assuming that $w$ avoids certain exceptional values (a discrete set depending on $x_0\in T$ and $\epsilon>0$, see \eqref{eq:g1-Lambda-p} below), we will be able to construct an exact solution of the corresponding jump conditions that is bounded in the complex $z$-plane except near the four points $A$, $B$, $C$, and $D$.  By combining this exact solution with suitable local parametrices defined near the four special points, in \S\ref{sec:g1-globalpar-def} we will construct an explicit global parametrix for $\mathbf{O}(z)$, that is, a globally-defined matrix function of $z$ that we will prove in \S\ref{sec:g1-error-analysis} is a close approximation to $\mathbf{O}(z)$ when $\epsilon$ is small.

\subsubsection{The outer parametrix.  Definition and basic properties}
The outer parametrix is a matrix function $\dot{\mathbf{O}}^\mathrm{(out)}(z)=\dot{\mathbf{O}}^{(\mathrm{out})}(z;x_0,w,\epsilon)$ that is required to be analytic for $z\in\mathbb{C}\setminus\Sigma$ with continuous boundary values except at the four points $z=A,B,C,D$ at which negative one-fourth power singularities are admissible.  The boundary values taken on each of the three smooth arcs of $\Sigma$ are related by exactly the same jump condition as for $\mathbf{O}(z)$, namely $\dot{\mathbf{O}}^\mathrm{(out)}_+(z)=\dot{\mathbf{O}}^\mathrm{(out)}_-(z)\mathbf{V}^{(\mathbf{O})}(z)$ for $z\in\Sigma\setminus\{A, B, C,D\}$.  Like $\mathbf{O}(z)$, we require that $\dot{\mathbf{O}}^\mathrm{(out)}(z)$ be normalized to the identity matrix in the limit $z\to\infty$.  It follows that the outer parametrix has a convergent Laurent series for $|z|$ sufficiently large, and in particular satisfies
\begin{equation}
\dot{\mathbf{O}}^\mathrm{(out)}(z)=\mathbb{I} +\dot{\mathbf{A}}z^{-1} + \dot{\mathbf{B}}z^{-2} +
\dot{\mathbf{C}}z^{-3}+\mathcal{O}(z^{-4}),\quad z\to\infty
\label{eq:dotOexpansion}
\end{equation}
for some matrix coefficients $\dot{\mathbf{A}}=\dot{\mathbf{A}}(w)$, $\dot{\mathbf{B}}=\dot{\mathbf{B}}(w)$, and $\dot{\mathbf{C}}=\dot{\mathbf{C}}(w)$ (also depending parametrically on $x_0\in T$ and $\epsilon>0$, or equivalently, $m\in \mathbb{Z}_+$).  This series is also differentiable term-by-term with respect to $w$.  

Before solving this problem, let us note some properties of its solution which we will show exists and is analytic for all $w\in\mathbb{C}$ with the exception of a certain $x_0$ and $\epsilon$-dependent doubly-periodic lattice of points at which the solution has poles.  The matrix function $\mathbf{F}(z;w):=\dot{\mathbf{O}}^\mathrm{(out)}(z)e^{-wz\sigma_3/2}$ necessarily has determinant equal to $1$, and clearly satisfies jump conditions independent of $w$; it follows that the matrix product $\mathbf{F}_w\mathbf{F}^{-1}$ is analytic with the possible exception of the four points $\{A,B,C,D\}$.  Moreover, the mild nature of the singularities admitted in $\dot{\mathbf{O}}^\mathrm{(out)}(z)$ at these points shows that all four points are removable singularities for $\mathbf{F}_w\mathbf{F}^{-1}$, which is therefore entire.  
From the expansion \eqref{eq:dotOexpansion} it follows that $\mathbf{F}_w\mathbf{F}^{-1}$ is a linear matrix function of $z$, and multiplication on the right by $\mathbf{F}$ and writing $\mathbf{F}$ in terms of $\dot{\mathbf{O}}^\mathrm{(out)}(z)$ yields the differential equation
\begin{equation}
\frac{\partial\dot{\mathbf{O}}^\mathrm{(out)}}{\partial w}=-\frac{1}{2}[\sigma_3,\dot{\mathbf{O}}^\mathrm{(out)}] z +\frac{1}{2}[\sigma_3,\dot{\mathbf{A}}]\dot{\mathbf{O}}^\mathrm{(out)}.
\label{eq:g1-Lax-w}
\end{equation}
Similar arguments show that the matrix $\mathbf{G}(z):=R(z)\dot{\mathbf{O}}^\mathrm{(out)}(z)\sigma_3\dot{\mathbf{O}}^\mathrm{(out)}(z)^{-1}$ is also entire, and with the use of \eqref{eq:dotOexpansion} and the Laurent expansion
\begin{equation}
R(z)=z^2 +\frac{1}{3}x_0 -\frac{2}{3}z^{-1}+\mathcal{O}(z^{-2}),\quad z\to\infty
\label{eq:g1-R-expansion}
\end{equation}
(following from \eqref{eq:R-rewrite-gen1} and the condition that $R(z)z^{-2}\to 1$ as $z\to\infty$) $\mathbf{G}(z)$ is identified with a quadratic matrix polynomial in $z$ with coefficients involving $\dot{\mathbf{A}}$ and $\dot{\mathbf{B}}$:
\begin{equation}
\mathbf{G}(z)=
\sigma_3z^2 -[\sigma_3,\dot{\mathbf{A}}]z +\left([\sigma_3,\dot{\mathbf{A}}]\dot{\mathbf{A}}-[\sigma_3,\dot{\mathbf{B}}]+\frac{1}{3}x_0\sigma_3\right).
\label{eq:GmatrixDefine}
\end{equation}
With $\mathbf{G}(z)$ determined in this way, we reinterpret the definition of $\mathbf{G}(z)$ in terms of $\dot{\mathbf{O}}^\mathrm{(out)}(z)$ as the relation\footnote{We owe special thanks to Alexander Its for pointing out the utility of this equation in the context of the present calculation.}
\begin{equation}
\mathbf{G}(z)\dot{\mathbf{O}}^\mathrm{(out)}(z)=
\dot{\mathbf{O}}^\mathrm{(out)}(z)\cdot R(z)\sigma_3
\label{eq:g1-Lax-ev}
\end{equation}
asserting that $\dot{\mathbf{O}}^\mathrm{(out)}(z)$ is a matrix of eigenvectors of $\mathbf{G}(z)$ with diagonal eigenvalue matrix $R(z)\sigma_3$.   Further equations can be deduced from the conditions characterizing the outer parametrix $\dot{\mathbf{O}}^\mathrm{(out)}(z)$, including a linear differential equation with respect to $z$ having Fuchsian singularities at $z=A,B,C,D$, but we will only need to use \eqref{eq:g1-Lax-w} and 
\eqref{eq:g1-Lax-ev}.

Using \eqref{eq:dotOexpansion}, we expand both sides of the differential equation \eqref{eq:g1-Lax-w} in Laurent series for large $z$.  The terms proportional to $z$ and $1$ give no information, but from the coefficient of $z^{-1}$ we obtain the equation
\begin{equation}
\frac{d\dot{\mathbf{A}}}{dw}=\frac{1}{2}[\sigma_3,\dot{\mathbf{A}}]\dot{\mathbf{A}}-\frac{1}{2}[\sigma_3,\dot{\mathbf{B}}]
\label{eq:dotAODE}
\end{equation}
and from the coefficient of $z^{-2}$ we obtain the equation
\begin{equation}
\frac{d\dot{\mathbf{B}}}{dw}=\frac{1}{2}[\sigma_3,\dot{\mathbf{A}}]\dot{\mathbf{B}}-\frac{1}{2}[\sigma_3,\dot{\mathbf{C}}]
\label{eq:dotBODE}
\end{equation}
Similarly, expanding both sides of \eqref{eq:g1-Lax-ev} using \eqref{eq:dotOexpansion}, \eqref{eq:g1-R-expansion}, and \eqref{eq:GmatrixDefine}, the terms proportional to $z^2$, $z$, and $1$ give no information, but from the terms proportional to $z^{-1}$ we obtain the identity
\begin{equation}
[\sigma_3,\dot{\mathbf{C}}]=[\sigma_3,\dot{\mathbf{A}}]\dot{\mathbf{B}}+[\sigma_3,\dot{\mathbf{B}}]\dot{\mathbf{A}}-[\sigma_3,\dot{\mathbf{A}}]\dot{\mathbf{A}}^2-\frac{1}{3}x_0[\sigma_3,\dot{\mathbf{A}}]-\frac{2}{3}\sigma_3
\end{equation}
which allows \eqref{eq:dotAODE} and \eqref{eq:dotBODE} to be recast as a closed autonomous system of first-order differential equations governing the $w$-dependence of the matrices $\dot{\mathbf{A}}$ and $\dot{\mathbf{B}}$:
\begin{equation}
\begin{split}
\frac{d\dot{\mathbf{A}}}{dw}&=\frac{1}{2}[\sigma_3,\dot{\mathbf{A}}]\dot{\mathbf{A}}-\frac{1}{2}[\sigma_3,\dot{\mathbf{B}}],\\
\frac{d\dot{\mathbf{B}}}{dw}&=\frac{1}{2}[\sigma_3,\dot{\mathbf{A}}]\dot{\mathbf{A}}^2-\frac{1}{2}[\sigma_3,\dot{\mathbf{B}}]\dot{\mathbf{A}}+\frac{1}{6}x_0[\sigma_3,\dot{\mathbf{A}}]+\frac{1}{3}\sigma_3.
\end{split}
\label{eq:ABsystem}
\end{equation}
Multiplying \eqref{eq:g1-Lax-ev} on the left by $\mathbf{G}(z)$ and using invertibility of $\dot{\mathbf{O}}^\mathrm{(out)}(z)$, one obtains the matrix identity
\begin{equation}
\mathbf{G}(z)^2=R(z)^2\mathbb{I} = \left(z^4 + \frac{2}{3}x_0z^2 -\frac{4}{3}z +\Pi\right)\mathbb{I}.
\end{equation}
Expanding out $\mathbf{G}(z)^2$ using \eqref{eq:GmatrixDefine} gives
\begin{equation}
\begin{split}
\mathbf{G}(z)^2&=\mathbb{I} z^4+\frac{2}{3}x_0\mathbb{I}z^2-4\left(\dot{\mathbf{A}}_\mathrm{OD}\dot{\mathbf{B}}_\mathrm{OD}+\dot{\mathbf{B}}_\mathrm{OD}\dot{\mathbf{A}}_\mathrm{OD}-
\dot{\mathbf{A}}_\mathrm{OD}^2\dot{\mathbf{A}}_\mathrm{D}-\dot{\mathbf{A}}_\mathrm{OD}
\dot{\mathbf{A}}_\mathrm{D}\dot{\mathbf{A}}_\mathrm{OD}\right)z \\
&\quad\quad{}+\left(2\sigma_3\dot{\mathbf{A}}_\mathrm{OD}^2+2\sigma_3\dot{\mathbf{A}}_\mathrm{OD}
\dot{\mathbf{A}}_\mathrm{D}-2\sigma_3\dot{\mathbf{B}}_\mathrm{OD}+\frac{1}{3}x_0\sigma_3\right)^2,
\end{split}
\label{eq:Gsquared}
\end{equation}
where the subscripts OD and D denote respectively the off-diagonal and diagonal parts of the matrix.
It is easy to see that the matrix coefficients on the right-hand side are in fact all multiples of the identity.

It will be convenient to introduce the following particular combinations of matrix elements of $\dot{\mathbf{A}}(w)$ and $\dot{\mathbf{B}}(w)$:
\begin{equation}
\dot{\mathcal{U}}^0_m(w)=\dot{\mathcal{U}}^0_m(w;x_0):=e^{-\Lambda/2-1/3}\dot{A}_{12}(w)\quad\text{and}\quad
\dot{\mathcal{V}}^0_m(w)=\dot{\mathcal{V}}^0_m(w;x_0):=e^{\Lambda/2+1/3}\dot{A}_{21}(w),
\label{eq:g1-dotU-dotV-def}
\end{equation}
\begin{equation}
\dot{\mathcal{P}}_m(w)=\dot{\mathcal{P}}_m(w;x_0):=\dot{A}_{22}(w)-\frac{\dot{B}_{12}(w)}{\dot{A}_{12}(w)}\quad\text{and}\quad\dot{\mathcal{Q}}_m(w)=\dot{\mathcal{Q}}_m(w;x_0):=-\dot{A}_{11}(w)+\frac{\dot{B}_{21}(w)}{\dot{A}_{21}(w)}.
\label{eq:g1-dotP-dotQ-def}
\end{equation}
From the off-diagonal terms in \eqref{eq:dotAODE} it follows that
\begin{equation}
\dot{\mathcal{P}}_m(w)=\frac{d}{dw}\log(\dot{\mathcal{U}}^0_m(w))=\frac{1}{\dot{\mathcal{U}}^0_m(w)}\frac{d\dot{\mathcal{U}}^0_m}{dw}(w)\quad\text{and}\quad
\dot{\mathcal{Q}}_m(w)=\frac{d}{dw}\log(\dot{\mathcal{V}}^0_m(w))=\frac{1}{\dot{\mathcal{V}}^0_m(w)}\frac{d\dot{\mathcal{V}}^0_m}{dw}(w).
\label{eq:g1-dotPQlogderivs}
\end{equation}
These alternate representations of $\dot{\mathcal{P}}_m(w)$ and $\dot{\mathcal{Q}}_m(w)$ will be useful later (see the derivation of \eqref{eq:g1-dotP-formula} from \eqref{eq:g1-dotU-formula}).  We combine the latter two functions into a diagonal matrix as follows:
\begin{equation}
\mathbf{D}(w):=\begin{bmatrix}\dot{\mathcal{P}}_m(w) & 0\\0 & -\dot{\mathcal{Q}}_m(w)\end{bmatrix}=\left(\dot{\mathbf{A}}_\mathrm{OD}
\dot{\mathbf{A}}_\mathrm{D}-\dot{\mathbf{B}}_\mathrm{OD}\right)\dot{\mathbf{A}}_\mathrm{OD}^{-1}.
\end{equation}
Using the equations \eqref{eq:ABsystem}, one may easily calculate $d\mathbf{D}/dw$, and then it is straightforward to confirm the matrix identity
\begin{equation}
\left(\frac{d\mathbf{D}}{dw}\right)^2=\mathbf{G}(\mathbf{D})^2, 
\end{equation}
where the right-hand side is computed by taking into account that the coefficients of $\mathbf{G}(z)^2$ are in fact scalars.  It follows that both functions $\dot{\mathcal{P}}_m$ and $-\dot{\mathcal{Q}}_m$ satisfy the same differential equation:
\begin{equation}
\left(\frac{d\dot{\mathcal{P}}_m}{dw}\right)^2=R(\dot{\mathcal{P}}_m)^2 = \dot{\mathcal{P}}_m^4 +\frac{2}{3}x_0\dot{\mathcal{P}}_m^2-\frac{4}{3}\dot{\mathcal{P}}_m+\Pi.
\label{eq:g1-dotP-ellipticeqn}
\end{equation}
Note that for $x_0\in T$ the quartic polynomial on the right-hand side has distinct roots.  We have therefore proved the following.  
\begin{proposition}
Suppose that $x_0\in T$ and let $w\in \mathbb{C}$ be a value for which the outer parametrix $\dot{\mathbf{O}}^{(\mathrm{out})}(z)$ exists and is differentiable with respect to $w$.  Then, the quantities $\dot{\mathcal{U}}^0_m(w;x_0)$, $\dot{\mathcal{V}}^0_m(w;x_0)$, $\dot{\mathcal{P}}_m(w;x_0)$, and $\dot{\mathcal{Q}}_m(w;x_0)$ defined therefrom by \eqref{eq:g1-dotU-dotV-def}--\eqref{eq:g1-dotP-dotQ-def} are related by the differential equations \eqref{eq:g1-dotPQlogderivs}, and the functions $\dot{\mathcal{P}}_m(w;x_0)$ and $-\dot{\mathcal{Q}}_m(w;x_0)$ are both elliptic functions of $w$ that satisfy the Boutroux ansatz differential equation \eqref{eq:BoutrouxElliptic} where the integration constant $\Pi=u+iv$ is determined as a (non-analytic) function of $x_0$ by the Boutroux conditions \eqref{eq:g1-Boutroux}.
\label{prop:g1-diffeq}
\end{proposition}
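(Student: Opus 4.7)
The proposition is essentially a consequence of the matrix identities already developed in the preceding subsection, so the plan is to extract its three claims in three stages. First, the log-derivative relations \eqref{eq:g1-dotPQlogderivs} follow by reading off the $(1,2)$ and $(2,1)$ entries of the ODE \eqref{eq:dotAODE}: these entries are $d\dot A_{12}/dw = \dot A_{22}\dot A_{12}-\dot B_{12}$ and $d\dot A_{21}/dw=\dot B_{21}-\dot A_{11}\dot A_{21}$, and dividing through by $\dot A_{12}$ or $\dot A_{21}$ respectively yields $\dot{\mathcal{P}}_m$ and $\dot{\mathcal{Q}}_m$ exactly as in \eqref{eq:g1-dotP-dotQ-def}. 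The constant exponential factors in \eqref{eq:g1-dotU-dotV-def} are $w$-independent and drop out under logarithmic differentiation.

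Second, I would establish the central matrix identity $(d\mathbf{D}/dw)^2=\mathbf{G}(\mathbf{D})^2$. Starting from $\mathbf{D}(w)=(\dot{\mathbf{A}}_\mathrm{OD}\dot{\mathbf{A}}_\mathrm{D}-\dot{\mathbf{B}}_\mathrm{OD})\dot{\mathbf{A}}_\mathrm{OD}^{-1}$, differentiate using the autonomous system \eqref{eq:ABsystem} and the product rule, splitting everything into diagonal and off-diagonal parts. The two equations of \eqref{eq:ABsystem} are calibrated so that the off-diagonal contributions cancel, leaving a diagonal matrix whose entries are identified with the diagonal of $\mathbf{G}(\mathbf{D})$. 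Squaring and invoking the scalar identity $\mathbf{G}(z)^2=R(z)^2\mathbb{I}$ from \eqref{eq:Gsquared} (valid because the matrix coefficients of $\mathbf{G}(z)^2$ reduce to scalars, so $\mathbf{G}$ evaluated at a diagonal matrix $\mathbf{D}$ is computed by substituting each diagonal entry for $z$) gives the desired identity.

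Third, extracting the $(1,1)$ and $(2,2)$ diagonal entries of $(d\mathbf{D}/dw)^2=\mathbf{G}(\mathbf{D})^2$ and using $R(z)^2 = z^4+\tfrac{2}{3}x_0z^2-\tfrac{4}{3}z+\Pi$ produces exactly the ODE
\[
\left(\frac{df}{dw}\right)^2 = f^4+\frac{2}{3}x_0 f^2-\frac{4}{3}f+\Pi
\]
satisfied by $f=\dot{\mathcal{P}}_m$ and by $f=-\dot{\mathcal{Q}}_m$, which is precisely the Boutroux ansatz equation \eqref{eq:BoutrouxElliptic}. The integration constant is read off via Vieta as $\Pi=ABCD=u+iv$ from the factorization $R(z)^2=(z-A)(z-B)(z-C)(z-D)$; by the preceding proposition, this is a smooth but non-analytic function of $x_0\in T$ fixed by the Boutroux conditions \eqref{eq:g1-Boutroux}. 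The elliptic character is then automatic: for $x_0\in T$ the four roots $A,B,C,D$ are distinct, so the quartic on the right is nondegenerate and the general solution is a Jacobi-type elliptic function of $w$ (uniquely determined up to translation), obtained by inverting the incomplete elliptic integral $\int dz/R(z)$ on the elliptic curve $\Gamma$; its doubly-periodic structure is inherited from the $\mathfrak{a}$- and $\mathfrak{b}$-periods of the holomorphic differential $\omega_0$ in \eqref{eq:omega0define-gen1}.

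The main obstacle is the second step: the algebraic verification that differentiating $\mathbf{D}(w)$ along \eqref{eq:ABsystem} collapses into the compact form $\mathbf{G}(\mathbf{D})$. This is a bookkeeping computation among $2\times 2$ matrix blocks, but the cancellations between the diagonal and off-diagonal pieces require using both equations of \eqref{eq:ABsystem} in concert, and the identification with $\mathbf{G}(\mathbf{D})$ hinges on the already-noted scalar nature of the coefficients in the expansion \eqref{eq:Gsquared}. Once this identity is established, the remaining claims are either immediate or are classical facts about elliptic curves.
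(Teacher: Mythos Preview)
Your proposal is correct and follows essentially the same route as the paper: the proposition is a summary of the derivations immediately preceding it, and you have correctly identified the three ingredients---the off-diagonal entries of \eqref{eq:dotAODE} for the logarithmic-derivative relations, the differentiation of $\mathbf{D}(w)$ along the closed system \eqref{eq:ABsystem} to obtain $(d\mathbf{D}/dw)^2=\mathbf{G}(\mathbf{D})^2$, and the scalar nature of the coefficients in \eqref{eq:Gsquared} to read off the Boutroux ODE entrywise. One small caution: your intermediate claim that $d\mathbf{D}/dw$ coincides with ``the diagonal of $\mathbf{G}(\mathbf{D})$'' is stronger than what the paper asserts or needs---the paper only checks the \emph{squared} identity directly, with the right-hand side interpreted through the scalar polynomial $R(z)^2$; since $\mathbf{G}(\mathbf{D})$ itself has nontrivial off-diagonal part, your phrasing there should be tightened, but it does not affect the validity of the argument.
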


\subsubsection{The outer parametrix.  Explicit construction}
We begin by introducing a scalar function $k(z)$ defined as follows:
\begin{equation}
k(z):=\frac{1}{2}wz +\frac{\Phi_+R(z)}{2\pi\epsilon}\int_C^A\frac{du}{(u-z)R_+(u)} + \frac{\Phi_-R(z)}{2\pi\epsilon}\int_C^B\frac{du}{(u-z)R_+(u)},\quad z\in\mathbb{C}\setminus\Sigma,
\end{equation}
where in each case the integrals are taken over appropriate oriented arcs of $\Sigma$ (and $R_+(u)$ indicates the boundary value of $R$ taken on these arcs from the left side).  Note that $k$ depends parametrically on $x_0\in T$ and $\epsilon>0$ or equivalently $m\in\mathbb{Z}_+$.  For sufficiently large $|z|$, $k(z)$ has a convergent Laurent expansion that yields an asymptotic representation:
\begin{equation}
k(z)=\left(\frac{1}{2}w + \frac{\kappa_1}{\epsilon}\right)z + \frac{\kappa_0}{\epsilon} + \mathcal{O}\left(\frac{1}{z}\right),\quad z\to\infty,
\label{eq:g1-k-expansion}
\end{equation}
where
\begin{equation}
\kappa_1=\kappa_1(x_0):=-\frac{\Phi_+}{2\pi}\int_C^A\frac{du}{R_+(u)} -\frac{\Phi_-}{2\pi}\int_C^B\frac{du}{R_+(u)}
\label{eq:g1-k1}
\end{equation}
and
\begin{equation}
\kappa_0=\kappa_0(x_0):=-\frac{\Phi_+}{2\pi}\int_C^A\frac{u\,du}{R_+(u)} -\frac{\Phi_-}{2\pi}\int_C^B\frac{u\,du}{R_+(u)}.
\label{eq:g1-k0}
\end{equation}
To obtain \eqref{eq:g1-k0} we have taken into account the first of the moment conditions \eqref{eq:g1-moments}.  The dependence of the coefficients $\kappa_1$ and $\kappa_0$  on $x_0$ enters through the moment conditions \eqref{eq:g1-moments} and the Boutroux conditions \eqref{eq:g1-Boutroux}.  In the case that $x_0\in T\cap\mathbb{R}$, Schwarz symmetry of $\Sigma$ and $R$ and the coincident relation $\Phi_-=-\Phi_+\in\mathbb{R}$ imply that $\kappa_0$ and $\kappa_1$ are both real.
It is easy to check that for each $\epsilon\neq 0$ and for each $x_0\in T$, $k(z)$ is bounded for bounded $z$ (including at the four points $z=A,B,C,D$), and that it satisfies
\begin{equation}
k_+(z)+k_-(z)=\begin{cases} wz,&\quad z\in\overrightarrow{CD},\\
wz+i\epsilon^{-1}\Phi_+,&\quad z\in\overrightarrow{CA},\\
wz+i\epsilon^{-1}\Phi_-,&\quad z\in\overrightarrow{CB}.
\end{cases}
\end{equation}
Hence, the matrix $\mathbf{P}(z)$ related explicitly to $\dot{\mathbf{O}}^\mathrm{(out)}(z)$ by
\begin{equation}
\mathbf{P}(z):=e^{\kappa_0\sigma_3/\epsilon}\dot{\mathbf{O}}^\mathrm{(out)}(z)e^{-k(z)\sigma_3},\quad z\in\mathbb{C}\setminus\Sigma
\end{equation}
is easily checked to be analytic where defined and to satisfy the normalization condition
\begin{equation}
\mathbf{P}(z)e^{wz\sigma_3/2}e^{\kappa_1z\sigma_3/\epsilon}=\mathbb{I}+\mathcal{O}\left(\frac{1}{z}\right),\quad z\to\infty
\end{equation}
and modified jump conditions on the arcs of $\Sigma$ of the form $\mathbf{P}_+(z)=\mathbf{P}_-(z)\mathbf{V}^{(\mathbf{P})}(z)$, where the jump matrix is now piecewise constant and of a universal form:
\begin{equation}
\mathbf{V}^{(\mathbf{P})}(z)=\begin{cases}i\sigma_1,&\quad z\in\overrightarrow{CA}\cup\overrightarrow{CB},\\
-i\sigma_1,&\quad z\in\overrightarrow{CD}.
\end{cases}.
\end{equation}
Negative one-fourth power singularities are admissible for $\mathbf{P}(z)$ at the points $z=A,B,C,D$ but otherwise the boundary values are required to be continuous.

Next, we define a second scalar function $\beta(z)$ to be analytic for $z\in\mathbb{C}\setminus\Sigma$, to satisfy the equation
\begin{equation}
\beta(z)^4 = \frac{(z-C)(z-D)}{(z-A)(z-B)},\quad z\in\mathbb{C}\setminus\Sigma,
\end{equation}
and with the particular branch chosen so that $\beta(\infty)=1$.  Note that on the three oriented arcs of the Stokes graph $\Sigma$ we have
\begin{equation}
\frac{\beta_+(z)}{\beta_-(z)}=\begin{cases}-i,&\quad z\in\overrightarrow{CA}\cup\overrightarrow{CB},\\
i,&\quad z\in\overrightarrow{CD}.
\end{cases}
\end{equation}
The matrix function $\mathbf{P}(z)$ is then transformed into another matrix function $\mathbf{Q}(z)$ by the invertible relation
\begin{equation}
\mathbf{Q}(z)=\beta(z)\mathbf{P}(z),\quad z\in\mathbb{C}\setminus\Sigma.
\end{equation}
The matrix function $\mathbf{Q}(z)$ is analytic where defined, takes continuous boundary values on $\Sigma$ except at the points $z=A$ and $z=B$ at which negative one-half power singularities are admissible, and satisfies the normalization condition
\begin{equation}
\mathbf{Q}(z)e^{wz\sigma_3/2}e^{\kappa_1z\sigma_3/\epsilon}=\mathbb{I}+\mathcal{O}\left(\frac{1}{z}\right),\quad z\to\infty.
\end{equation}
The other effect of the factor $\beta$ is that all jump conditions have the same involutive form:
$\mathbf{Q}_+(z)=\mathbf{Q}_-(z)\sigma_1$ holds on each arc of $\Sigma$.

The fact that the jump conditions for $\mathbf{Q}(z)$ correspond to a simple exchange of the columns across each arc of $\Sigma$ suggests that the two columns should be viewed as different branches of the same vector-valued function defined on a double covering of the complex $z$-plane.  To make this notion precise, recall the elliptic curve $\Gamma=\Gamma(x_0)$ consisting of pairs $P=(z,\mathscr{R})\in\mathbb{C}^2$ satisfying the relation $\mathscr{R}^2=(z-A)(z-B)(z-C)(z-D)$, and compactified with two points denoted $\infty^\pm$ corresponding to $z=\infty$.  The curve $\Gamma$ may be cut into two sheets, each of which may be identified with $\mathbb{C}\setminus\Sigma$ and on which $z$ is a suitable coordinate.  On the sheet $\Gamma^\pm$ containing the point $\infty^\pm$, $\mathscr{R}$ and $z$ are explicitly related by $\mathscr{R}=\pm R(z)$.  We now proceed to define several functions on $\Gamma$ from which we shall construct $\mathbf{Q}(z)$ later.

We first define meromorphic functions $f^\pm:\Gamma\to\overline{\mathbb{C}}$ by the formulae
\begin{equation}
f^\pm(P):=\frac{1}{2}\left(1\pm\frac{(z-C)(z-D)}{\mathscr{R}}\right).
\label{eq:g1-fpm-define}
\end{equation}
Because $A$, $B$, $C$, and $D$ are all distinct, these functions both have simple poles at the branching points $P=(A,0)$ and $P=(B,0)$ of $\Gamma$ and no other singularities (the simple pole nature of the singularities becomes obvious upon introducing appropriate holomorphic local coordinates in the neighborhood of these branch points).  Also,
\begin{equation}
f^\pm((C,0))=f^\pm((D,0))=\frac{1}{2},\quad f^\pm(\infty^\pm)=1,\quad\text{and}\quad f^\pm(\infty^\mp)=0.
\end{equation}
The zero of $f^\pm$ at $\infty^\mp$ is simple, and $f^\pm$ has exactly one other simple zero on $\Gamma$, denoted $Q^\pm$.  Note that the product $f^+(P)f^-(P)$ is a function of $z$ alone:
\begin{equation}
f^+(P)f^-(P)=\frac{1}{4}\left(1-\frac{(z-C)(z-D)}{(z-A)(z-B)}\right).
\end{equation}
Moreover, the function $4(z-A)(z-B)f^+(P)f^-(P)$ is analytic and non-vanishing at the branch points $z=A$ and $z=B$ and is given by
\begin{equation}
4(z-A)(z-B)f^+(P)f^-(P)=(z-A)(z-B)-(z-C)(z-D) = (C+D-A-B)z+(AB-CD).
\end{equation}
The points $Q^\pm$ therefore correspond to the same $z$-value:
\begin{equation}
z_Q:=\frac{AB-CD}{A+B-C-D}.
\end{equation}
This is a finite value, since the equation $A+B=C+D$ taken together with the moment condition $M_1(A,B,C,D)=0$ implies that $A+B=C+D=0$, and this is then inconsistent with the moment condition $M_3(A,B,C,D)=4$ (see \eqref{eq:g1-moments} and \eqref{eq:g1-moments-define}).
Also, $z_Q$ cannot coincide with any of the four values $A$, $B$, $C$, or $D$.  Since the functions $f^\pm$ are hyperelliptic involutes of each other, it follows that $Q^+$ and $Q^-$ are distinct points of $\Gamma$ related by hyperelliptic involution (corresponding to opposite finite and nonzero values of $\mathscr{R}$).  In the special case $x_0=0$, we have $z_Q=-\frac{1}{2}(\frac{4}{3})^{1/3}$, which lies to the left of the interval $[C,D]$.  Generalizing to nonzero real values of $x_0$ one sees easily that $z_Q$ is a real and continuous function of $x_0\in\mathbb{R}$ that necessarily satisfies 
$z_Q<C<D$ since $z_Q=C$  is impossible.  For real $x_0$, the function $R(z)$ is by definition positive real for $z<C$ (and for $z>D$), so it follows that $Q^\pm$ lies on the sheet $\Gamma^\mp$, that is, $Q^\pm=(z_Q,\mp R(z_Q))$.  More generally, from \eqref{eq:g1-fpm-define}, the condition $f^\pm(Q^\pm)=0$ implies that 
\begin{equation}
Q^\pm=(z_Q,\mp(z_Q-C)(z_Q-D)).
\end{equation}

Next we construct two \emph{Baker-Akhiezer functions} on $\Gamma$.  We begin by choosing a homology basis $\{\mathfrak{a},\mathfrak{b}\}$ of cycles on $\Gamma$ with the property that $\mathfrak{b}$ intersects $\mathfrak{a}$ exactly once from the right as $\mathfrak{a}$ is traversed according to its orientation (so $\mathfrak{a}\circ\mathfrak{b}=1$).  Define a normalization constant $c_1=c_1(x_0)$  by
\begin{equation}
c_1:=2\pi i\left(\oint_\mathfrak{a}\omega_0\right)^{-1}
\label{eq:g1-c1}
\end{equation}
so that the holomorphic differential 
\begin{equation}
\omega:=c_1\omega_0\quad\text{satisfies} \quad\oint_\mathfrak{a}\omega=2\pi i.
\label{eq:g1-omega-norm}
\end{equation}
The other period of $\omega$ is denoted $\mathcal{H}=\mathcal{H}(x_0)$:
\begin{equation}
\mathcal{H}:=\oint_\mathfrak{b}\omega.
\label{eq:g1-H-def}
\end{equation}
It is a basic result that, regardless of how the homology basis is chosen consistent with the condition $\mathfrak{a}\circ\mathfrak{b}=1$, $\mathrm{Re}(\mathcal{H})<0$ holds strictly.  We make a concrete choice of cycles as illustrated for the case of $x_0\in T\cap\mathbb{R}$ in Figure~\ref{fig:HomologyCycles}.  The homology basis deforms continuously with the branch points of $\Gamma$ as $x_0$ leaves the real axis (meaning that the integrals over $\mathfrak{a}$ and $\mathfrak{b}$ of $\omega_0$ defined by \eqref{eq:omega0define-gen1} vary continuously with $x_0$ in $T$).  
\begin{figure}[h]
\begin{center}
\includegraphics{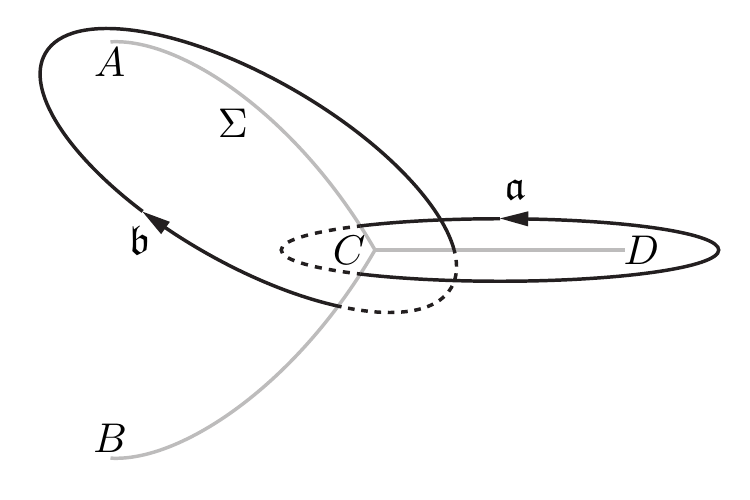}
\end{center}
\caption{\emph{The homology cycles $\mathfrak{a}$ and $\mathfrak{b}$ for a configuration of points $A$, $B$, $C$, and $D$ corresponding to $x_0\in T\cap\mathbb{R}$.   Paths on $\Gamma^+$ are shown with solid curves and paths on $\Gamma^-$ are shown with broken curves.  The Stokes graph $\Sigma$ is shown with shaded curves.}}
\label{fig:HomologyCycles}
\end{figure}
With this choice of homology, $c_1$ is real and strictly positive for all $x_0\in T\cap\mathbb{R}$.  Also, by deforming the cycle $\mathfrak{b}$ to pass through the branch points $A$, $B$, and $D$, it is not difficult to see that for general $x_0\in T$ we have
\begin{equation}
\mathcal{H}=i\pi +\frac{1}{2}\mathcal{H}_0,\quad\mathcal{H}_0=\mathcal{H}_0(x_0):=2c_1\int_A^D\frac{dz}{R(z)} + 2c_1\int_B^D\frac{dz}{R(z)}
\label{eq:g1-H-H0}
\end{equation}
where the paths of integration in the $z$-plane lie in the domain $\mathbb{C}\setminus\Sigma$.  It follows that for $x_0\in T\cap\mathbb{R}$, $\mathcal{H}_0$ is real and strictly negative (for more general $x_0\in T$ we simply have $\mathrm{Re}(\mathcal{H}_0)<0$).  An antiderivative of $\omega$ is given by the \emph{Abel map} $\mathscr{A}:\Gamma\to\mathbb{C}$ defined by
\begin{equation}
\mathscr{A}(P):=\int_{(D,0)}^P\omega.
\end{equation}
Actually, since integrals of $\omega$ have periods $2\pi i$ and $\mathcal{H}$, the Abel map is well-defined if the image is taken to be the Jacobian torus $\mathbb{C}/\{2\pi in_1+\mathcal{H}n_2 | n_j\in\mathbb{Z}\}$.  It is well-known that $\mathscr{A}$ is an injective map in this setting (it follows from Abel's Theorem that if $\mathscr{A}(P)=\mathscr{A}(Q)$ up to periods, then there is a nontrivial meromorphic function on $\Gamma$ with a simple pole at $P$ and a simple zero at $Q$, but such a function cannot exist unless $P=Q$ because the sum of residues of any meromorphic function on $\Gamma$ must vanish).  We choose to define a concrete value of $\mathscr{A}(P)$ for each $P=(z,\mathscr{R})$ for which $z\in\overline{\mathbb{C}}\setminus\Sigma$ by taking the path of integration to lie completely (after the initial point) in the sheet of $\Gamma$ containing the point $P$.  Next, consider the differential $\Omega_0$ defined on $\Gamma$ by the formula
\begin{equation}
\Omega_0:=\frac{z^2}{\mathscr{R}}\,dz.
\end{equation}
Making use of holomorphic local coordinates for $\Gamma$ at its four branch points, one sees that the only singularities of $\Omega_0$ on $\Gamma$ lie at the points $P=\infty^\pm$, and that
\begin{equation}
\Omega_0 = \pm\left(1+\mathcal{O}\left(\frac{1}{z^2}\right)\right)\,dz,\quad P\to\infty^\pm,
\end{equation}
showing that the singularities are double poles (in the local coordinate $\zeta:=1/z$) and that, as a consequence of the moment condition $M_1(A,B,C,D)=0$ (see \eqref{eq:g1-moments}), there are no residues; hence $\Omega_0$ is an abelian differential of the second kind on $\Gamma$.  Now let $c_2=c_2(x_0)$ be the normalization constant 
given by
\begin{equation}
c_2:=-\frac{1}{2\pi i}\oint_\mathfrak{a}\Omega_0,
\label{eq:g1-c2}
\end{equation} 
so that
\begin{equation}
\Omega:=\Omega_0+c_2 \omega\quad\text{satisfies}\quad\oint_\mathfrak{a}\Omega=0.
\label{eq:g1-Omega-norm}
\end{equation}
The other period of $\Omega$ is denoted $U=U(x_0)$:
\begin{equation}
U:=\oint_\mathfrak{b}\Omega.
\label{eq:frequencydef}
\end{equation}
By integrating the differential $\mathscr{A}\Omega$ around the boundary of the canonical dissection of $\Gamma$ obtained by cutting $\Gamma$ along the cycles of the homology basis (see \cite[Lemma B.1]{BuckinghamMMemoir}), one obtains the identity
\begin{equation}
U=2c_1.
\label{eq:Uc1-identity}
\end{equation}
Hence, $U(x_0)$ is real and strictly positive for all $x_0\in T\cap\mathbb{R}$.  Note that the limits
\begin{equation}
E^\pm:=\lim_{P\to\infty^\pm}\left[\int_{(D,0)}^P\Omega\mp z\right]
\end{equation}
both exist modulo integer multiples of $U$.  Choosing the path of integration to lie (except for the initial point) in the sheet $\Gamma^\pm$ gives unambiguous values to these limits:  $E^-=-E^+$, and
\begin{equation}
E^+=E^+(x_0):=\int_D^{z_0}\frac{z^2+c_1c_2}{R(z)}\,dz+\int_{z_0}^\infty\left[\frac{z^2+c_1c_2}{R(z)}-1\right]\,dz-z_0,
\label{eq:g1-Eplus-rewrite}
\end{equation}
where $z_0\in\mathbb{C}\setminus\Sigma$ is arbitrary.  It is obvious that $E^+$ is real and continuous for $x_0\in T\cap\mathbb{R}$, and by taking $z_0\ge D$ one sees that $E^+$ has a finite limit as $x_0$ decreases to the left-most real boundary point of $T$ (where $A$, $B$, and $C$ coalesce).  In fact, $E^+$ also has a finite limit as $x_0$ increases to the right-most real boundary point of $T$, or more generally when $x_0$ tends to a point on $\partial T$ in the sector $|\arg(x_0)|<\pi/3$ (where $C$ and $D$ coalesce); to see this one should add and subtract $\tfrac{1}{2}U$ with $U$ in the form \eqref{eq:frequencydef} to replace the  integral above that terminates at the branch point $D$ by one that is obviously convergent in this degeneration of $\Gamma$,  and then use the identity \eqref{eq:Uc1-identity} and observe that $c_1$ clearly remains finite.  In other words, we have the following two equivalent formulae for $E^+$:
\begin{equation}
\begin{split}
E^+&=\int_A^{z_0}\frac{z^2+c_1c_2}{R(z)}\,dz +\int_{z_0}^\infty\left[\frac{z^2+c_1c_2}{R(z)}-1\right]\,dz-z_0-c_1\\
&=\int_B^{z_0}\frac{z^2+c_1c_2}{R(z)}\,dz +\int_{z_0}^\infty\left[\frac{z^2+c_1c_2}{R(z)}-1\right]\,dz-z_0-c_1,
\end{split}
\label{eq:g1-Eplus-rewrite-again}
\end{equation}
where again $z_0\in\mathbb{C}\setminus\Sigma$ is arbitrary.  It is convenient to use \eqref{eq:g1-Eplus-rewrite} for $x_0$ bounded away from the edge of $T$ along which $C$ and $D$ coalesce, and the first (respectively second) of the formulae \eqref{eq:g1-Eplus-rewrite-again} for $x_0$ bounded away from the edge of $T$ along which $A$ (respectively $B$) and $C$ coalesce.  Thus, regardless of where $x_0$ lies within $T$, we have available a computationally robust formula for $E^+$.

The \emph{Riemann theta function} is an entire function of $z$ defined by the Fourier-type series\footnote{In the notation of \cite{DLMF}, $\Theta(z;\mathcal{H})=\theta_3(w|\tau)$, where $z=2iw$ and $\mathcal{H}=2\pi i\tau$.}:
\begin{equation}
\Theta(z)=\Theta(z;\mathcal{H}):=\sum_{n=-\infty}^\infty e^{\mathcal{H}n^2/2}e^{nz}.
\end{equation}
It satisfies the identities
\begin{equation}
\Theta(z+2\pi i)=\Theta(z),\quad\Theta(z+\mathcal{H})=e^{-\mathcal{H}/2}e^{-z}\Theta(z),\quad
\Theta(-z)=\Theta(z).
\label{eq:g1-theta-identities}
\end{equation}
Defining the \emph{Riemann constant} by
\begin{equation}
\mathcal{K}=\mathcal{K}(x_0):=i\pi +\frac{1}{2}\mathcal{H}=\frac{3}{2}i\pi +\frac{1}{4}\mathcal{H}_0,
\label{eq:g1-RiemannConstant}
\end{equation}
the function $\Theta(z)$ has simple zeros only at the points $z=\mathcal{K}+2\pi i n_1+\mathcal{H}n_2$ for $n_j\in\mathbb{Z}$.  

The Baker-Akhiezer functions $\varphi_0^\pm:\Gamma\setminus\{\infty^+,\infty^-\}\to\overline{\mathbb{C}}$ are explicitly written in terms of these ingredients by Krichever's formula:
\begin{equation}
\varphi_0^\pm(P):=\frac{\Theta(\mathscr{A}(P)-\mathscr{A}(Q^\pm)-\mathcal{K}-(\tfrac{1}{2}w+\epsilon^{-1}\kappa_1)U)}{\Theta(\mathscr{A}(P)-\mathscr{A}(Q^\pm)-\mathcal{K})}
\exp\left(-\left(\frac{1}{2}w+\epsilon^{-1}\kappa_1\right)\int_{(D,0)}^P\Omega\right),
\label{eq:g1-Krichever}
\end{equation}
in which it is understood that the path of integration in the exponent is arbitrary but is the same as the path used to define the Abel map $\mathscr{A}(P)$.  The values of $\mathscr{A}(Q^\pm)$ are determined by the concrete choice of the Abel map described above.  This formula defines $\varphi_0^\pm(P)$ unambiguously even if (once the value of $\mathscr{A}(Q^\pm)$ is fixed) the paths in the Abel map $\mathscr{A}(P)$ and the exponent are augmented by adding cycles.  Indeed, the three factors in the formula are individually invariant under adding  the $\mathfrak{a}$-cycle to each path, as this contributes nothing to the exponent by \eqref{eq:g1-Omega-norm} and increments $\mathscr{A}(P)$ by $2\pi i$ according to \eqref{eq:g1-omega-norm}, leaving each theta function invariant by \eqref{eq:g1-theta-identities}.  On the other hand, if the paths of integration are augmented with the $\mathfrak{b}$-cycle, the exponential acquires a factor of $\exp(-(\tfrac{1}{2}w+\epsilon^{-1}\kappa_1)U)$ according to \eqref{eq:frequencydef} while $\mathscr{A}(P)$ is incremented by $\mathcal{H}$ according to \eqref{eq:g1-H-def} and the ratio of theta functions acquires a compensating factor of $\exp((\tfrac{1}{2}w+\epsilon^{-1}\kappa_1)U)$ by \eqref{eq:g1-theta-identities}.  

Although the formula \eqref{eq:g1-Krichever} makes sense with an arbitrary choice of the path of integration in the Abel map $\mathscr{A}(P)$ (and in the exponent), for convenience when evaluating $\varphi_0^\pm(P)$ for points $P$ whose projections on the $z$-plane are disjoint from the Stokes graph $\Sigma$, \emph{we will always use the concrete definition of $\mathscr{A}(P)$ in which the path lies on the same sheet as does $P$ but is otherwise arbitrary}.  

The only singularities of the function $\varphi_0^\pm(P)$ on $\Gamma$ are at the points $P=Q^\pm$ (a simple pole) and $P=\infty^\pm$ (essential singularities).  The asymptotic behavior near the essential singularities is easy to determine:
\begin{equation}
\begin{split}
\varphi_0^\pm(P)\exp\left(\left(\frac{1}{2}w+\epsilon^{-1}\kappa_1\right)z\right)&=N_+^\pm+\mathcal{O}(z^{-1}),
\quad P\to\infty^+,\\
\varphi_0^\pm(P)\exp\left(-\left(\frac{1}{2}w+\epsilon^{-1}\kappa_1\right)z\right)&=N_-^\pm+\mathcal{O}(z^{-1}),
\quad P\to\infty^-,
\end{split}
\label{eq:g1-varphi0-essential}
\end{equation}
where
\begin{equation}
\begin{split}
N_+^\pm&:=\frac{\Theta(\mathscr{A}(\infty^+)-\mathscr{A}(Q^\pm)-\mathcal{K}-(\tfrac{1}{2}w+\epsilon^{-1}\kappa_1)U)}{\Theta(\mathscr{A}(\infty^+)-\mathscr{A}(Q^\pm)-\mathcal{K})}\exp\left(-\left(\frac{1}{2}w+\epsilon^{-1}\kappa_1\right)E^+\right),\\
N_-^\pm&:=\frac{\Theta(\mathscr{A}(\infty^-)-\mathscr{A}(Q^\pm)-\mathcal{K}-(\tfrac{1}{2}w+\epsilon^{-1}\kappa_1)U)}{\Theta(\mathscr{A}(\infty^-)-\mathscr{A}(Q^\pm)-\mathcal{K})}\exp\left(-\left(\frac{1}{2}w+\epsilon^{-1}\kappa_1\right)E^-\right).
\end{split}
\end{equation}
The denominators cannot vanish because $z(Q^\pm)=z_Q$ is necessarily a finite value.
The outer parametrix will exist as long as both $N^+_+$ and $N^-_-$ are nonzero, which is a condition on $w\in\mathbb{C}$.  The exceptional condition on $w$ that $N^+_+=0$ is exactly the same condition that $N^-_-=0$; this follows from the fact that
\begin{equation}
\left(\mathscr{A}(\infty^+)-\mathscr{A}(Q^+)\right)-\left(\mathscr{A}(\infty^-)-\mathscr{A}(Q^-)\right)=\mathscr{A}(\infty^+-Q^+-\infty^-+Q^-)=\mathscr{A}\left(\left(\frac{f^-}{f^+}\right)\right),
\label{eq:g1-AbelsTheorem-application}
\end{equation}
that is, the difference in arguments of the theta functions in the numerators of $N^+_+$ and $N^-_-$
is the Abel map evaluated on the (principal) divisor of a meromorphic function globally defined on $\Gamma$, namely $f^-(P)/f^+(P)$,
and this is necessarily an integer linear combination of the periods $2\pi i$ and $\mathcal{H}$ by Abel's Theorem.  The exceptional values of $w$ therefore form a regular lattice $\mathscr{P}_m=\mathscr{P}_m(x_0)$ in the $w$-plane defined by the condition
\begin{equation}
w\in\mathscr{P}_m(x_0)\quad\text{if and only if}\quad \frac{1}{2}Uw=\mathscr{A}(\infty^+)-\mathscr{A}(Q^+) -\frac{\kappa_1U}{\epsilon}\pmod{\mathbb{L}},\quad\epsilon = (m-\tfrac{1}{2})^{-1},
\label{eq:g1-Lambda-p}
\end{equation}
where $\mathbb{L}=\mathbb{L}(x_0)$ is the (algebraic) lattice defined as
\begin{equation}
\mathbb{L}(x_0):=\{2\pi i n_1+\mathcal{H}(x_0)n_2: (n_1,n_2)\in\mathbb{Z}^2\}.
\label{eq:g1-LatticeL-def}
\end{equation}

Assuming that $w\not\in\mathscr{P}_m$, we normalize the Baker-Akhiezer functions by setting
\begin{equation}
\varphi^\pm(P):=\frac{1}{N_\pm^\pm}\varphi_0^\pm(P),\quad P\in\Gamma\setminus\{\infty^+,\infty^-\}.
\end{equation}
We will now construct the matrix $\mathbf{Q}(z)$, and hence the outer parametrix $\dot{\mathbf{O}}^\mathrm{(out)}(z)$, out of the two scalar functions $f^+(z)\varphi^+(z)$ and $f^-(z)\varphi^-(z)$.  We simply set
\begin{equation}
\mathbf{Q}(z):=\begin{bmatrix}f^+((z,R(z)))\varphi^+((z,R(z))) & f^+((z,-R(z)))\varphi^+((z,-R(z)))\\
f^-((z,R(z)))\varphi^-((z,R(z))) & f^-((z,-R(z)))\varphi^-((z,-R(z)))\end{bmatrix},\quad z\in\mathbb{C}\setminus\Sigma.
\end{equation}
Therefore
\begin{equation}
\dot{\mathbf{O}}^\mathrm{(out)}(z)=\frac{1}{\beta(z)}e^{-\kappa_0\sigma_3/\epsilon}
\begin{bmatrix}f^+((z,R(z)))\varphi^+((z,R(z))) & f^+((z,-R(z)))\varphi^+((z,-R(z)))\\
f^-((z,R(z)))\varphi^-((z,R(z))) & f^-((z,-R(z)))\varphi^-((z,-R(z)))\end{bmatrix}e^{k(z)\sigma_3},\quad
z\in\mathbb{C}\setminus\Sigma.
\label{eq:g1-dot-Oout-formula}
\end{equation}

This construction, along with the basic conditions defining the outer parametrix,  yields the following important fact.  For each $x_0\in T$, $\epsilon>0$ (or alternately, $m\in\mathbb{Z}_+$), and $\delta>0$, let $\mathscr{S}_m(x_0,\delta)$ denote the subset of the complex $w$-plane defined by
\begin{equation}
\mathscr{S}_m(x_0,\delta):=\{w\in\mathbb{C}: |w|\le\delta^{-1}\text{ and }|w-\mathscr{P}_m(x_0)|\ge \delta\},
\label{eq:g1-cheese-def}
\end{equation}
where $|w-\mathscr{P}_m(x_0)|$ denotes the distance from $w$ to the closest point of the lattice $\mathscr{P}_m(x_0)$.  For $\delta$ sufficiently small, $\mathscr{S}_m(x_0,\delta)$ resembles a slice perpendicular to the axis from a wheel of Swiss cheese of radius $\delta^{-1}$ with omitted holes of radius $\delta$ centered at the points of the lattice $\mathscr{P}_m(x_0)$.
\begin{proposition}
Let $K\subset T$ be a compact subset of the open region $T$.  Fix a small constant $\delta>0$, and define the set $\Upsilon:=\{(z;x_0,m,w):  x_0\in K, m\in\mathbb{Z}_+,w\in\mathscr{S}_m(x_0,\delta), |z-\{A(x_0),B(x_0),C(x_0),D(x_0)\}|\ge \delta\}$, where $|z-\{A,B,C,D\}|$ denotes the distance from $z$ to the nearest of $\{A,B,C,D\}$.  Then $\dot{\mathbf{O}}^\mathrm{(out)}(z)$ is uniformly bounded on $\Upsilon$.
\label{prop:g1-dot-Oout-bound}
\end{proposition}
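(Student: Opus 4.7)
The plan is to bound each factor in the explicit formula \eqref{eq:g1-dot-Oout-formula} for $\dot{\mathbf{O}}^{(\mathrm{out})}(z)$ separately, exploiting compactness of $K$ and the Swiss-cheese condition $w\in\mathscr{S}_m(x_0,\delta)$ to prevent the theta-function normalization from collapsing.  Throughout I write $M$ for a generic positive constant depending only on $K$ and $\delta$.  Because $K\subset T$ is compact, the branch points $A,B,C,D$ and all continuous $x_0$-dependent ingredients (the periods $\mathcal{H},U$; the scalars $\kappa_0,\kappa_1,c_1,c_2$; the image $\mathscr{A}(Q^\pm)$; the endpoint values $E^\pm$; etc.) admit uniform two-sided bounds on $K$.

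For the algebraic factors, the hypothesis $|z-\{A,B,C,D\}|\ge\delta$ yields uniform two-sided bounds $M^{-1}\le|\beta(z)|\le M$ from its defining equation, and similarly $|f^\pm((z,\pm R(z)))|\le M$ and $|R(z)|\ge M^{-1}$; these bounds include the correct limits as $z\to\infty$ through $\beta(\infty)=1$ and $f^\pm\to\{0,1\}$ at $\infty^\mp,\infty^\pm$.  For the combined exponential conjugation, the leading expansion \eqref{eq:g1-k-expansion} of $k(z)$ was designed precisely to cancel the essential-singular behavior \eqref{eq:g1-varphi0-essential} of $\varphi_0^\pm$ against $e^{-\kappa_0\sigma_3/\epsilon}$, so that $\dot{\mathbf{O}}^{(\mathrm{out})}(z)\to\mathbb{I}$ uniformly as $|z|\to\infty$.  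Thus for $|z|$ larger than a sufficiently large $R=R(K,\delta)$, one obtains $\|\dot{\mathbf{O}}^{(\mathrm{out})}(z)-\mathbb{I}\|\le M/|z|$.  For $|z|\le R$ with $|z-\{A,B,C,D\}|\ge\delta$, the Abel map $\mathscr{A}(P)$, the integral $\int_{(D,0)}^P\Omega$, and all theta arguments lie in compact sets depending only on $K$, $\delta$, and $R$, so the numerator of $\varphi_0^\pm$ is uniformly bounded above, while the denominator $\Theta(\mathscr{A}(P)-\mathscr{A}(Q^\pm)-\mathcal{K})$ is uniformly bounded below using $z(Q^\pm)=z_Q\notin\{A,B\}$ together with continuity of $\mathscr{A}$ at the branch points.

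The principal technical obstacle is the uniform-in-$m$ lower bound $|N_\pm^\pm|\ge M^{-1}$ on the Baker--Akhiezer normalization.  By \eqref{eq:g1-Lambda-p}, $N_\pm^\pm$ vanishes exactly on the lattice $\mathscr{P}_m(x_0)$, and the Swiss-cheese condition places $w$ at distance at least $\delta$ from this lattice; for fixed $m$ the required lower bound then follows from the simple vanishing of $\Theta$ at its zero locus.  However, the translation $\kappa_1U/\epsilon=(m-\tfrac12)\kappa_1U$ in the theta argument $\mathscr{A}(\infty^\pm)-\mathscr{A}(Q^\pm)-\mathcal{K}-(\tfrac12w+\kappa_1/\epsilon)U$ drifts freely on the Jacobian torus as $m$ grows, so uniformity must be recovered by combining cancellations.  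The key observation is that $N_\pm^\pm$ enters $\dot{\mathbf{O}}^{(\mathrm{out})}(z)$ only through the ratio $\varphi^\pm=\varphi_0^\pm/N_\pm^\pm$: writing $(\tfrac12w+\kappa_1/\epsilon)U=2\pi in_1+\mathcal{H}n_2+\xi_m$ with $\xi_m$ in a fixed fundamental parallelogram for $\mathbb{L}$, the quasi-periodicity identities \eqref{eq:g1-theta-identities} produce prefactors of the form $e^{-n_2^2\mathcal{H}/2}e^{n_2(\mathrm{argument})}$; the large quadratic-in-$n_2$ factors cancel identically between numerator and denominator in the ratio, and the residual linear-in-$n_2$ factors combine with the $m$-dependent portion of the exponential in the Krichever formula \eqref{eq:g1-Krichever} to yield a bounded expression, because $\int_{(D,0)}^P\Omega$ differs from $\pm z$ by a bounded quantity related to $E^\pm$.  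In the reduced fundamental parallelogram the theta function has a single simple zero, and the Swiss-cheese hypothesis forces $\mathrm{dist}(\xi_m,\text{zero})\ge\tfrac12|U|\delta$, giving $|\Theta(\xi_m+\mathrm{const})|\ge M^{-1}$ uniformly.  Assembling these estimates with those of the previous paragraph completes the proof.
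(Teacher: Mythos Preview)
Your approach—bounding each factor in the explicit formula \eqref{eq:g1-dot-Oout-formula}—is fundamentally different from the paper's, and as written it has a genuine gap.  In your second paragraph you assert that for $|z|\le R$ ``all theta arguments lie in compact sets depending only on $K$, $\delta$, and $R$, so the numerator of $\varphi_0^\pm$ is uniformly bounded above.''  This is false: the numerator argument $\mathscr{A}(P)-\mathscr{A}(Q^\pm)-\mathcal{K}-(\tfrac12w+\epsilon^{-1}\kappa_1)U$ contains the term $\epsilon^{-1}\kappa_1 U$, which is unbounded as $m\to\infty$ (and $\kappa_1$ is in general complex for $x_0\notin\mathbb{R}$), and $\Theta$ itself is unbounded.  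Similarly, the conjugating factors $e^{k(z)\sigma_3}$ and $e^{-\kappa_0\sigma_3/\epsilon}$ carry large complex exponents proportional to $\epsilon^{-1}$ for bounded $z$, and you never treat these away from $z=\infty$.  Your third paragraph attempts a quasi-periodicity reduction, but only for the ratio $\varphi_0^\pm/N_\pm^\pm$; the required cancellation is global across \emph{all} $\epsilon^{-1}$-dependent factors—the numerator theta, the Krichever exponential, $e^{k(z)}$, and $e^{-\kappa_0/\epsilon}$ simultaneously—and your sketch does not track them together.  Indeed, the paper remarks just before its proof that boundedness ``does not seem obvious from the explicit formula \eqref{eq:g1-dot-Oout-formula}, which contains both exponential factors with large complex exponents and theta functions with large complex arguments,'' anticipating exactly this difficulty with your route.

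The paper avoids the explicit-cancellation morass entirely.  Its key observation is that $\epsilon$ enters the \emph{defining Riemann--Hilbert conditions} for $\dot{\mathbf{O}}^{(\mathrm{out})}$ only through the two unimodular phases $e^{i\Phi_\pm(x_0)/\epsilon}$ (unimodular precisely because the Boutroux conditions \eqref{eq:g1-Boutroux} force $\Phi_\pm\in\mathbb{R}$).  One may therefore replace $\Phi_\pm/\epsilon$ by independent angles $(\gamma_-,\gamma_+)\in\mathbb{T}^2$, rerun the entire construction with $\gamma_\pm$ as free parameters, and obtain $\dot{\mathbf{O}}^{(\mathrm{out})}$ as a continuous function on the compact set $K\times\mathbb{T}^2\times\{w:\mathrm{dist}(w,\mathscr{P})\ge\delta,\,|w|\le\delta^{-1}\}\times\{|z-\{A,B,C,D\}|\ge\delta,\,|z|\le 2M_K\}$, hence bounded; the Maximum Modulus Principle in $1/z$ then removes the restriction $|z|\le 2M_K$.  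Restoring $\gamma_\pm=\Phi_\pm/\epsilon$ restricts to a subset and preserves the bound.  This argument never touches the explicit formula and so never has to confront the individually unbounded factors.
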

Note that the set $\Upsilon$ simultaneously bounds $z$ away from the $x_0$-dependent branch points $\{A,B,C,D\}$ and  restricts $w$ to lie within the bounded region $\mathscr{S}_m(x_0,\delta)$.  The key part of this result is the fact that, under the conditions defining $\Upsilon$, 
$\dot{\mathbf{O}}^{(\mathrm{out})}$ remains bounded as $m\to +\infty$, or equivalently as $\epsilon\downarrow 0$, a fact that does not seem obvious from the explicit formula \eqref{eq:g1-dot-Oout-formula}, which contains both exponential factors with large complex exponents and theta functions with large complex arguments.
\begin{proof}
The main idea is to note that  the parameter $\epsilon=(m-\tfrac{1}{2})^{-1}$ enters into the properties characterizing $\dot{\mathbf{O}}^{(\mathrm{out})}$ only through the two phase factors $e^{i\Phi_\pm(x_0)/\epsilon}$.  We may define real (due to the Boutroux conditions \eqref{eq:g1-Boutroux} satisfied by the elliptic curve $\Gamma=\Gamma(x_0)$) angles $\gamma_\pm:=\Phi_\pm(x_0)/\epsilon$ and generalize the formulation of the problem by allowing $\gamma_\pm$ to be independent (both of each other, and of $x_0$) real parameters.  In fact,  since the angles appear only in exponents, we may consider the pair $(\gamma_-,\gamma_+)$ to be an element of $\mathbb{T}^2$, the real two-dimensional torus, a compact manifold.  It is easy to check that all steps of the construction of $\dot{\mathbf{O}}^{(\mathrm{out})}$ in terms of Baker-Akhiezer functions go through in this generalization, the only effect being that the factors $\Phi_\pm(x_0)/\epsilon$ appearing in the definition of $k(z)$ and the ratios $\kappa_0/\epsilon$ and $\kappa_1/\epsilon$ are replaced by the angles $\gamma_\pm$.

It is then clear from the construction that $\dot{\mathbf{O}}^{(\mathrm{out})}$ is a continuous function of $x_0\in T$, $w\in\mathbb{C}$, and $(\gamma_-,\gamma_+)\in\mathbb{T}^2$, as long as $z$ avoids the branch points $\{A(x_0),B(x_0), C(x_0), D(x_0)\}$ and $w$ avoids the lattice of poles $\mathscr{P}_m$ defined by \eqref{eq:g1-Lambda-p}, which in the generalized context depends (only) on $x_0$, $\gamma_-$, and $\gamma_+$.  The branch points depend smoothly on $x_0\in T$, so given the compact set $K\subset T$ there exists a constant $M_K>0$ such that all four branch points lie within the disk $|z|<M_K$ whenever $x_0\in K\subset T$.  If we impose the conditions $x_0\in K\subset T$, $w\in\mathscr{S}_m(x_0,\delta)$ (note that in the generalized context, $\mathscr{S}_m(x_0,\delta)$ is a set in the complex $w$-plane depending only on $x_0$, $\delta$, $\gamma_-$, and $\gamma_+$), $|z-\{A(x_0),B(x_0),C(x_0),D(x_0)\}|\ge \delta$, and $|z|\le 2M_K$, we get a compact subset $\Upsilon_0'$ of
$\mathbb{C}^3\times\mathbb{T}^2\ni (z,x_0,w,(\gamma_-,\gamma_+))$ on which $\dot{\mathbf{O}}^{(\mathrm{out})}$ is continuous, and hence bounded.  Now since $\dot{\mathbf{O}}^{(\mathrm{out})}$ is analytic in $z$ for $|z|>M_K$ and by normalization $\dot{\mathbf{O}}^{(\mathrm{out})}=\mathbb{I}$ for $z=\infty$, an application of the Maximum Modulus Principle in the variable $1/z$ shows that the same bound holds on the noncompact set $\Upsilon'$ containing $\Upsilon'_0$ and obtained simply by dropping the condition $|z|\le 2M_K$.

With this bound in hand, we may restore the dependence on $\epsilon>0$ by substituting $\gamma_\pm=\Phi_\pm(x_0)/\epsilon$.  Thus, for a given value of $x_0$ we no longer have available the whole torus $\mathbb{T}^2$ but rather a linear orbit that is a proper subset of $\mathbb{T}^2$ (the orbit is dense if and only if the ratio $\Phi_+/\Phi_-$ is irrational and otherwise is periodic).  Thus the set $\Upsilon$ in the statement of the proposition is realized as a proper subset of $\Upsilon'$ on which boundedness of $\dot{\mathbf{O}}^{(\mathrm{out})}$ has been established.
\end{proof}

Another elementary consequence of the conditions defining $\dot{\mathbf{O}}^{(\mathrm{out})}(z)$ that is not completely obvious from the explicit formula \eqref{eq:g1-dot-Oout-formula} is the following.
\begin{proposition}
The outer parametrix satisfies $\det(\dot{\mathbf{O}}^\mathrm{(out)}(z))=1$ where defined.  
\label{prop:g1-dot-Oout-det}
\end{proposition}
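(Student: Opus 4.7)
The plan is to verify the classical three-step argument for unimodularity of a parametrix: check that the jump matrix has determinant one, show that the determinant extends to an entire function, and then apply Liouville's theorem.

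First, I would compute $\det \mathbf{V}^{(\mathbf{O})}(z)$ on each of the three arcs of the Stokes graph $\Sigma$ using the formulas \eqref{eq:g1-Ojump-5}--\eqref{eq:g1-Ojump-7}. In each case the jump matrix is off-antidiagonal of the form $\begin{bmatrix}0 & \alpha \\ \beta & 0\end{bmatrix}$ with $\alpha\beta = (\pm i)^2 e^{-wz\pm i\Phi_{\pm}/\epsilon}e^{wz\mp i\Phi_{\pm}/\epsilon} = -1$, giving $\det \mathbf{V}^{(\mathbf{O})}(z) \equiv 1$ on $\Sigma\setminus\{A,B,C,D\}$. Therefore the scalar function $d(z):=\det \dot{\mathbf{O}}^{(\mathrm{out})}(z)$ satisfies $d_+(z)=d_-(z)$ on every open arc of $\Sigma$ and hence extends analytically across each such arc into a function holomorphic on $\mathbb{C}\setminus\{A,B,C,D\}$.

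Next I would argue that the four remaining singularities are removable. Since the conditions defining $\dot{\mathbf{O}}^{(\mathrm{out})}$ permit at worst $|z-P|^{-1/4}$ blow-up at each $P\in\{A,B,C,D\}$, the determinant (a sum of products of two such entries) is at worst $O(|z-P|^{-1/2})$. Any singularity of order strictly less than $1$ for a function holomorphic on a punctured disc is removable by the standard Laurent-series estimate, so $d$ in fact extends to an entire function on $\mathbb{C}$.

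Finally, the normalization $\dot{\mathbf{O}}^{(\mathrm{out})}(z)\to\mathbb{I}$ as $z\to\infty$ (encoded in \eqref{eq:dotOexpansion}) gives $d(z)\to 1$ as $z\to\infty$, so $d$ is a bounded entire function; Liouville's theorem then yields $d(z)\equiv 1$. No step here is a serious obstacle — the only point requiring care is the removability at the branch points, which hinges on the admitted singularities being of strictly fractional order less than one; this is why it was essential to fix the admissible endpoint behavior as $(z-P)^{-1/4}$ (rather than $(z-P)^{-1/2}$) when formulating the outer parametrix problem.
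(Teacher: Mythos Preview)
Your argument is correct and is precisely the standard Liouville argument the paper has in mind: the paper does not write out a proof but merely calls the result an ``elementary consequence of the conditions defining $\dot{\mathbf{O}}^{(\mathrm{out})}(z)$,'' namely the unimodular jump matrices on $\Sigma$, the admissible $(z-P)^{-1/4}$ endpoint behavior, and the identity normalization at infinity. Your three steps --- unit-determinant jumps, removability at the branch points, and Liouville --- are exactly what is required to fill this in.
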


\subsubsection{Properties of functions of $w$ derived from the outer parametrix}
Recall that, by definition (see \eqref{eq:g1-dotU-dotV-def}), we have
\begin{equation}
\dot{\mathcal{U}}^0_m(w;x_0):=e^{-\Lambda/2-1/3}\lim_{z\to\infty}z\dot{O}_{12}^\mathrm{(out)}(z)\quad\text{and}\quad
\dot{\mathcal{V}}^0_m(w;x_0):=e^{\Lambda/2+1/3}\lim_{z\to\infty}z\dot{O}_{21}^\mathrm{(out)}(z).
\end{equation}
Therefore, combining \eqref{eq:g1-k-expansion}, \eqref{eq:g1-fpm-define}, \eqref{eq:g1-varphi0-essential}, \eqref{eq:g1-dot-Oout-formula}, and using the facts that $R(z)=z^2+\mathcal{O}(1)$ and $\beta(z)=1+\mathcal{O}(z^{-1})$ as $z\to\infty$,
\begin{equation}
\begin{split}
\dot{\mathcal{U}}^0_m(w;x_0)&=e^{-\Lambda/2-1/3}e^{-2\kappa_0/\epsilon}\frac{N^+_-}{N^+_+}\lim_{P\to\infty^-}zf^+(P)\\
&=
e^{-\Lambda/2-1/3}\frac{C+D}{2}e^{-2\kappa_0/\epsilon}\frac{N^+_-}{N^+_+}\\
&=e^{-\Lambda/2-1/3}\frac{C+D}{2}\frac{\Theta(\mathscr{A}(\infty^+)-\mathscr{A}(Q^+)-\mathcal{K})}{\Theta(\mathscr{A}(\infty^-)-\mathscr{A}(Q^+)-\mathcal{K})}
\frac{\Theta(\mathscr{A}(\infty^-)-\mathscr{A}(Q^+)-\mathcal{K}-(\tfrac{1}{2}w+\epsilon^{-1}\kappa_1)U)}
{\Theta(\mathscr{A}(\infty^+)-\mathscr{A}(Q^+)-\mathcal{K}-(\tfrac{1}{2}w+\epsilon^{-1}\kappa_1)U)}\\
&\quad\quad\quad\quad{}\times
\exp\left(wE^++\frac{2}{\epsilon}\left(\kappa_1E^+-\kappa_0\right)\right)
\end{split}
\label{eq:g1-dotU-formula}
\end{equation}
and
\begin{equation}
\begin{split}
\dot{\mathcal{V}}^0_m(w;x_0)&=e^{\Lambda/2+1/3}e^{2\kappa_0/\epsilon}\frac{N^-_+}{N_-^-}\lim_{P\to\infty^+}zf^-(P)\\ &=
e^{\Lambda/2+1/3}\frac{C+D}{2}e^{2\kappa_0/\epsilon}\frac{N^-_+}{N_-^-}\\
&=e^{\Lambda/2+1/3}\frac{C+D}{2}\frac{\Theta(\mathscr{A}(\infty^-)-\mathscr{A}(Q^-)-\mathcal{K})}{\Theta(\mathscr{A}(\infty^+)-\mathscr{A}(Q^-)-\mathcal{K})}
\frac{\Theta(\mathscr{A}(\infty^+)-\mathscr{A}(Q^-)-\mathcal{K}-(\tfrac{1}{2}w+\epsilon^{-1}\kappa_1)U)}
{\Theta(\mathscr{A}(\infty^-)-\mathscr{A}(Q^-)-\mathcal{K}-(\tfrac{1}{2}w+\epsilon^{-1}\kappa_1)U)}\\
&\quad\quad\quad\quad{}\times\exp\left(-wE^+-\frac{2}{\epsilon}\left(\kappa_1E^+-\kappa_0\right)\right).
\end{split}
\label{eq:g1-dotV-formula}
\end{equation}
Using these formulae and recalling the relations \eqref{eq:g1-dotPQlogderivs} we obtain
\begin{multline}
\dot{\mathcal{P}}_m(w;x_0)=\\
E^+ +\frac{U}{2}\left[\frac{\Theta'(\mathscr{A}(\infty^+)-\mathscr{A}(Q^+)-\mathcal{K}-(\tfrac{1}{2}w+\epsilon^{-1}\kappa_1)U)}{\Theta(\mathscr{A}(\infty^+)-\mathscr{A}(Q^+)-\mathcal{K}-(\tfrac{1}{2}w+\epsilon^{-1}\kappa_1)U)}-
\frac{\Theta'(\mathscr{A}(\infty^-)-\mathscr{A}(Q^+)-\mathcal{K}-(\tfrac{1}{2}w+\epsilon^{-1}\kappa_1)U)}
{\Theta(\mathscr{A}(\infty^-)-\mathscr{A}(Q^+)-\mathcal{K}-(\tfrac{1}{2}w+\epsilon^{-1}\kappa_1)U)}\right]
\label{eq:g1-dotP-formula}
\end{multline}
and 
\begin{multline}
\dot{\mathcal{Q}}_m(w;x_0)=\\-E^+-\frac{U}{2}\left[\frac{\Theta'(\mathscr{A}(\infty^+)-\mathscr{A}(Q^-)-\mathcal{K}-(\tfrac{1}{2}w+\epsilon^{-1}\kappa_1)U)}{\Theta(\mathscr{A}(\infty^+)-\mathscr{A}(Q^-)-\mathcal{K}-(\tfrac{1}{2}w+\epsilon^{-1}\kappa_1)U)}-\frac{\Theta'(\mathscr{A}(\infty^-)-\mathscr{A}(Q^-)-\mathcal{K}-(\tfrac{1}{2}w+\epsilon^{-1}\kappa_1)U)}{\Theta(\mathscr{A}(\infty^-)-\mathscr{A}(Q^-)-\mathcal{K}-(\tfrac{1}{2}w+\epsilon^{-1}\kappa_1)U)}\right].
\label{eq:g1-dotQ-formula}
\end{multline}
All four of the functions $\dot{\mathcal{U}}_m^0(w;x_0)$, $\dot{\mathcal{V}}_m^0(w;x_0)$,
$\dot{\mathcal{P}}_m(w;x_0)$, and $\dot{\mathcal{Q}}_m^0(w;x_0)$ are meromorphic functions of $w\in\mathbb{C}$ for each $x_0\in T$ and $m\in\mathbb{Z}_+$ (recall that $\epsilon:=(m-\tfrac{1}{2})^{-1}$).
Note that from \eqref{eq:g1-dotP-formula} and \eqref{eq:g1-dotQ-formula} it follows that
\begin{equation}
\dot{\mathcal{Q}}_m(w;x_0)=-\dot{\mathcal{P}}_m(w + 2U^{-1}(\mathscr{A}(Q^-)-\mathscr{A}(Q^+));x_0),
\end{equation}
a fact that is consistent with Proposition~\ref{prop:g1-diffeq}.
The functions $\dot{\mathcal{U}}^0_m(w;x_0)$ and $\dot{\mathcal{V}}^0_m(w;x_0)$ have simple poles (only) at the points $w$ in the lattice $\mathscr{P}_m(x_0)$, and $\dot{\mathcal{U}}_m^0(w;x_0)$ has simple zeros (only) at the points of the lattice $\mathscr{Z}_m[\dot{\mathcal{U}}]=\mathscr{Z}_m[\dot{\mathcal{U}}](x_0)$ given by the
conditions
\begin{equation}
w\in\mathscr{Z}_m[\dot{\mathcal{U}}](x_0)\quad\text{if and only if}\quad
\frac{1}{2}Uw=\mathscr{A}(\infty^-)-\mathscr{A}(Q^+) -\frac{\kappa_1U}{\epsilon} \pmod{\mathbb{L}},\quad\epsilon = (m-\tfrac{1}{2})^{-1},
\label{eq:g1-Lambda-z}
\end{equation}
while $\dot{\mathcal{V}}^0_m(w;x_0)$ has simple zeros (only) at the points of the lattice $\mathscr{Z}_m[\dot{\mathcal{V}}]=\mathscr{Z}_m[\dot{\mathcal{V}}](x_0)$ given by the conditions
\begin{equation}
w\in\mathscr{Z}_m[\dot{\mathcal{V}}](x_0)\quad\text{if and only if}\quad \frac{1}{2}Uw=\mathscr{A}(\infty^+)-\mathscr{A}(Q^-) -\frac{\kappa_1U}{\epsilon} \pmod{\mathbb{L}},\quad
\epsilon=(m-\tfrac{1}{2})^{-1},
\label{eq:g1-Lambda-z-V}
\end{equation}
where we recall the definition \eqref{eq:g1-LatticeL-def} of $\mathbb{L}$.
By injectivity of the Abel map $\mathscr{A}$, neither $\mathscr{Z}_m[\dot{\mathcal{U}}](x_0)$ nor $\mathscr{Z}_m[\dot{\mathcal{V}}](x_0)$ can agree with $\mathscr{P}_m(x_0)$ for any $x_0\in T$.

In the special case that $x_0\in T\cap\mathbb{R}$, we recall that $\kappa_0$ and $\kappa_1$ are both real;  by our choice of homology basis $\mathcal{H}_0$ and $U$ are also both real; and by our specific choice of path for the integrals in the Abel map and the exponent of the Baker-Akhiezer functions, we also have $\mathscr{A}(\infty^\pm)$, $\mathscr{A}(Q^\pm)$, and $E^+$ all real.  Recalling the relation \eqref{eq:g1-H-H0}, the definition \eqref{eq:g1-RiemannConstant} of $\mathcal{K}$, and the fact that
\begin{equation}
\Theta(y-\tfrac{3}{2}i\pi;\mathcal{H})=\Theta(y-\tfrac{3}{2}i\pi;  i\pi +\tfrac{1}{2}\mathcal{H}_0)=\sum_{n=-\infty}^\infty i^{n^2+n}e^{\mathcal{H}_0n^2/4}e^{ny},
\end{equation}
we see that since $n^2+n$ is always even for integer $n$,  the above formulae for $\dot{\mathcal{U}}^0_m(w;x_0)$, $\dot{\mathcal{V}}^0_m(w;x_0)$,  $\dot{\mathcal{P}}_m(w;x_0)$, and 
$\dot{\mathcal{Q}}_m(w;x_0)$ are all real for $w\in\mathbb{R}$ and hence are meromorphic Schwarz-symmetric functions of $w$.

We already know from Proposition~\ref{prop:g1-diffeq} 
that $\dot{\mathcal{P}}_m(w)=\dot{\mathcal{P}}_m(w;x_0)$ is an elliptic function of $w$ with independent fundamental periods $2\pi i/c_1$ and $\mathcal{H}/c_1$, and this fact is also evident from the formula \eqref{eq:g1-dotP-formula}.  Indeed, the formula \eqref{eq:g1-dotP-formula} is explicitly $2\pi i/c_1$-periodic by \eqref{eq:Uc1-identity} and \eqref{eq:g1-theta-identities}.  
Next, observe from \eqref{eq:g1-theta-identities} that
\begin{equation}
p(z):=\frac{\Theta'(z)}{\Theta(z)}+\frac{z}{\mathcal{H}}
\end{equation}
is a function with period $\mathcal{H}$.  We can rewrite \eqref{eq:g1-dotP-formula} in the form
\begin{multline}
\dot{\mathcal{P}}_m(w)=E^+-\frac{\mathscr{A}(\infty^+)}{\mathcal{H}}U  \\{}+
\frac{U}{2}\left[p\left(\mathscr{A}(\infty^+)-\mathscr{A}(Q^+)-\mathcal{K}-(\tfrac{1}{2}w+\epsilon^{-1}\kappa_1)U\right)-
p\left(\mathscr{A}(\infty^-)-\mathscr{A}(Q^+)-\mathcal{K}-(\tfrac{1}{2}w+\epsilon^{-1}\kappa_1)U\right)\right],
\label{eq:g1-dotP-rewrite}
\end{multline}
where we have used the relation $\mathscr{A}(\infty^-)=-\mathscr{A}(\infty^+)$ holding for the choice of path we have described. 
It is then clear from this formula and \eqref{eq:Uc1-identity} that $\dot{\mathcal{P}}_m(w)$ is periodic with period $\mathcal{H}/c_1$.
From \eqref{eq:g1-H-H0} we see that
$\mathcal{H}_0/c_1$ is also a (non-fundamental) period of $\dot{\mathcal{P}}_m(w)$.  It follows that in the real case $x_0\in T\cap\mathbb{R}$, $\dot{\mathcal{P}}_m(w)$ is a real-valued, $\mathcal{H}_0/c_1$-periodic meromorphic function of $w\in\mathbb{R}$.  Of course similar statements hold for $\dot{\mathcal{Q}}_m(w)$.

The mean value  of the elliptic function $\dot{\mathcal{P}}_m(w;x_0)$ is defined as follows.  Fix $w_0\in\mathbb{C}$, and let $\pgram\subset\mathbb{C}$ denote the period parallelogram with vertices $w_0$, $w_0+2\pi i/c_1$, $w_0+\mathcal{H}/c_1$, and $w_0+2\pi i/c_1+\mathcal{H}/c_1$.
Then, for each fixed $x_0\in T$, we set
\begin{equation}
\langle\dot{\mathcal{P}}\rangle:=\frac{\displaystyle\iint_\pgram\dot{\mathcal{P}}_m(w;x_0)\,dA}
{\displaystyle\iint_\pgram\,dA}.
\label{eq:g1-dotP-average-define}
\end{equation}
Here, $dA=d\mathrm{Re}(w)\,d\mathrm{Im}(w)$ is the real area element in the plane.  The integral in the numerator is well-defined because as a function of $w$, $\dot{\mathcal{P}}_m$ has only simple poles and hence is absolutely integrable in two dimensions.  Also, by double-periodicity of $\dot{\mathcal{P}}_m$, $\langle\dot{\mathcal{P}}\rangle$ is independent of both $w_0\in\mathbb{C}$ and $m\in\mathbb{Z}_+$.

\begin{proposition}
The mean value of $\dot{\mathcal{P}}_m$ is a function of $x_0\in T$ alone,  given explicitly by
\begin{equation}
\langle\dot{\mathcal{P}}\rangle = E^+-2c_1\frac{\mathscr{A}(\infty^+)+\mathscr{A}(\infty^+)^*}{\mathcal{H}+\mathcal{H}^*}.
\label{eq:g1-dotPaverage-final}
\end{equation}
\label{prop:g1-dotPaverage}
\end{proposition}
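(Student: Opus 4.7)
The plan is to compute the mean directly from the explicit formula \eqref{eq:g1-dotP-formula}, reducing matters to the mean of a single non-doubly-periodic building block. Introduce the shorthand $\Phi(\zeta):=\Theta'(\zeta)/\Theta(\zeta)$ and $\alpha_\pm:=\mathscr{A}(\infty^\pm)-\mathscr{A}(Q^+)-\mathcal{K}-\epsilon^{-1}\kappa_1U$, so that, using $U=2c_1$ from \eqref{eq:Uc1-identity}, equation \eqref{eq:g1-dotP-formula} becomes
\[
\dot{\mathcal{P}}_m(w;x_0)=E^++\frac{U}{2}\bigl[\Phi(\alpha_+-c_1w)-\Phi(\alpha_--c_1w)\bigr].
\]

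The key step is to extract the (non-)periodic part of $\Phi$. From the identities \eqref{eq:g1-theta-identities} one has $\Phi(\zeta+2\pi i)=\Phi(\zeta)$ and $\Phi(\zeta+\mathcal{H})=\Phi(\zeta)-1$, so the real-analytic (but non-holomorphic) function
\[
\psi(\zeta):=\Phi(\zeta)+\frac{\mathrm{Re}(\zeta)}{\mathrm{Re}(\mathcal{H})}
\]
is doubly periodic with respect to the full lattice generated by $2\pi i$ and $\mathcal{H}$, as a direct check using $\mathrm{Re}(2\pi i)=0$ and $\mathrm{Re}(\mathcal{H})/\mathrm{Re}(\mathcal{H})=1$ confirms. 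Since $\dot{\mathcal{P}}_m$ is doubly periodic with fundamental periods $2\pi i/c_1$ and $\mathcal{H}/c_1$, the change of variables $\zeta=\alpha_\pm-c_1w$ maps the period parallelogram $\pgram$ in the $w$-plane bijectively onto a translate $\pgram'_\pm$ of a fundamental parallelogram for the $\zeta$-lattice, based at $\alpha_\pm$; moreover this change of variables has constant Jacobian, so means transform into means.

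Thus the mean of $\Phi(\alpha_\pm-c_1w)$ over $\pgram$ equals the mean of $\Phi(\zeta)=\psi(\zeta)-\mathrm{Re}(\zeta)/\mathrm{Re}(\mathcal{H})$ over $\pgram'_\pm$. Since $\psi$ is doubly periodic, its mean over $\pgram'_\pm$ is an $\alpha_\pm$-independent constant $\langle\psi\rangle$. The affine part is computed by finding the centroid of $\pgram'_\pm$ (whose vertices are $\alpha_\pm,\alpha_\pm-2\pi i,\alpha_\pm-\mathcal{H},\alpha_\pm-2\pi i-\mathcal{H}$), namely $\alpha_\pm-\pi i-\mathcal{H}/2$, whose real part is $\mathrm{Re}(\alpha_\pm)-\tfrac{1}{2}\mathrm{Re}(\mathcal{H})$. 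Hence
\[
\text{mean of }\Phi(\alpha_\pm-c_1w)\text{ over }\pgram=\langle\psi\rangle+\tfrac{1}{2}-\frac{\mathrm{Re}(\alpha_\pm)}{\mathrm{Re}(\mathcal{H})},
\]
and the $\langle\psi\rangle+\tfrac{1}{2}$ contributions cancel upon subtraction.

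Using $\alpha_+-\alpha_-=\mathscr{A}(\infty^+)-\mathscr{A}(\infty^-)=2\mathscr{A}(\infty^+)$ (from the antisymmetry $\mathscr{A}(\infty^-)=-\mathscr{A}(\infty^+)$ established for the chosen paths), the mean of the bracketed expression is $-2\mathrm{Re}(\mathscr{A}(\infty^+))/\mathrm{Re}(\mathcal{H})$. Multiplying by $U/2=c_1$ and adding $E^+$ yields
\[
\langle\dot{\mathcal{P}}\rangle=E^+-\frac{2c_1\mathrm{Re}(\mathscr{A}(\infty^+))}{\mathrm{Re}(\mathcal{H})}=E^+-2c_1\frac{\mathscr{A}(\infty^+)+\mathscr{A}(\infty^+)^*}{\mathcal{H}+\mathcal{H}^*},
\]
which is \eqref{eq:g1-dotPaverage-final}. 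The only mild obstacle is the construction of $\psi$: one must recognize that the one-sided quasi-periodicity of $\Phi$ can be killed by an $\mathbb{R}$-linear, rather than $\mathbb{C}$-linear, additive correction, reflecting the non-analytic (but smooth) dependence of $\Pi$ on $x_0$ already emphasized in \eqref{eq:g1-dbarPi}.
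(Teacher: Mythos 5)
Your proof is correct, and it takes a genuinely different route from the paper's. The paper computes $\langle\dot{\mathcal{P}}\rangle$ via a Stokes'-theorem/$\overline{\partial}$ argument applied to $w^*\dot{\mathcal{P}}_m(w)$: the mean is expressed through residues at the one zero and one pole of $\dot{\mathcal{P}}_m$ per period parallelogram plus boundary integrals of $w^*\dot{\mathcal{P}}_m$ over $\partial\pgram$. That route produces integer ambiguities $n_1-n_2'$ and $n_2-n_1'$ (lattice representatives and logarithm branches) which the paper must then pin down by an explicit computation at $x_0=0$, including a numerically verified winding-number count for the theta ratio $\rho(w)$. Your approach instead averages the explicit formula \eqref{eq:g1-dotP-formula} directly, and its key device is the observation that the one-sided quasi-periodicity $\Phi(\zeta+\mathcal{H})=\Phi(\zeta)-1$, $\Phi(\zeta+2\pi i)=\Phi(\zeta)$ of $\Phi=\Theta'/\Theta$ can be removed by the $\mathbb{R}$-linear (non-holomorphic) correction $\psi(\zeta)=\Phi(\zeta)+\mathrm{Re}(\zeta)/\mathrm{Re}(\mathcal{H})$, which is genuinely doubly periodic --- in contrast to the paper's holomorphic periodization $p(z)=\Phi(z)+z/\mathcal{H}$ in \eqref{eq:g1-dotP-rewrite}, which is only $\mathcal{H}$-periodic. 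The doubly periodic part $\langle\psi\rangle$ then cancels in the difference of the two $\Phi$-terms, and the affine remainder is evaluated exactly at the centroid, so no integer ambiguities ever arise and no numerical verification is needed. This is arguably cleaner and fully analytic; what the paper's argument buys in exchange is that it works from the intrinsic elliptic-function structure (simple poles of residues $\pm1$ at $\mathscr{Z}_m[\dot{\mathcal{U}}]$ and $\mathscr{P}_m$) rather than from the theta-function formula, and its discussion of why $n_1-n_2'$ and $n_2-n_1'$ are locally constant is reused elsewhere. One small slip in your write-up: the image of $\pgram$ under $w\mapsto\alpha_\pm-c_1w$ is based at $\alpha_\pm-c_1w_0$, not at $\alpha_\pm$, so the centroid is $\alpha_\pm-c_1w_0-\pi i-\tfrac{1}{2}\mathcal{H}$; the extra $-c_1w_0$ is common to both parallelograms and cancels upon subtraction, so the final formula is unaffected, but you should carry it for correctness.
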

\begin{proof}
Fix $x_0\in T$.  Without loss of generality, we choose $w_0$ so that $\dot{\mathcal{P}}_m$ is bounded along the boundary $\partial\pgram$ of the parallelogram $\pgram$.
The parallelogram $\pgram$ then contains in its interior exactly one point each of the
lattices $\mathscr{Z}_m[\dot{\mathcal{U}}]$ and $\mathscr{P}_m$, which we denote by $w_\mathrm{z}$ and $w_\mathrm{p}$ respectively.  Let $B_\delta(w)$ denote the open disk of radius $\delta>0$ centered at $w$.  Then we have
\begin{equation}
\langle\dot{\mathcal{P}}\rangle = \frac{\displaystyle \lim_{\delta\downarrow 0}\iint_{\pgram\setminus (B_\delta(w_\mathrm{z})\cup B_\delta(w_\mathrm{p}))}\dot{\mathcal{P}}_m(w)\,dA}
{\displaystyle \iint_\pgram\,dA}=\frac{\displaystyle\lim_{\delta\downarrow 0}\iint_{\pgram\setminus (B_\delta(w_\mathrm{z})\cup B_\delta(w_\mathrm{p}))}\overline{\partial}\left(w^*\dot{\mathcal{P}}_m(w)\right)\,dA}{\displaystyle \iint_\pgram \overline{\partial}(w^*)\,dA},
\end{equation}
where $\overline{\partial}$ denotes the partial derivative with respect to $w^*$ (holding $w$ fixed).  By Stokes' Theorem, we then obtain
\begin{equation}
\langle\dot{\mathcal{P}}\rangle=\frac{\displaystyle \oint_{\partial\pgram}w^*\dot{\mathcal{P}}_m(w)\,dw -
\lim_{\delta\downarrow 0}\left[\oint_{\partial B_\delta(w_\mathrm{z})}
w^*\dot{\mathcal{P}}_m(w)\,dw
+\oint_{\partial B_\delta(w_\mathrm{p})}w^*\dot{\mathcal{P}}_m(w)\,dw\right]}{\displaystyle \oint_{\partial\pgram}w^*\,dw}.
\end{equation}
But $\dot{\mathcal{P}}_m(w)$ has residue $1$ at the point $w_\mathrm{z}$ and has residue $-1$ at the point $w_\mathrm{p}$, and as both poles are simple it follows that
\begin{equation}
\lim_{\delta\downarrow 0}\oint_{\partial B_\delta(w_\mathrm{z})}w^*\dot{\mathcal{P}}_m(w)\,dw=
2\pi iw_\mathrm{z}^*\quad\text{and}\quad
\lim_{\delta\downarrow 0}\oint_{\partial B_\delta(w_\mathrm{p})}w^*\dot{\mathcal{P}}_m(w)\,dw=
-2\pi iw_\mathrm{p}^*.
\end{equation}
Therefore, since $w_\mathrm{z}\in\mathscr{Z}_m[\dot{\mathcal{U}}]$ while $w_\mathrm{p}\in\mathscr{P}_m$, we see from \eqref{eq:g1-Lambda-p} and \eqref{eq:g1-Lambda-z} that
\begin{equation}
\begin{split}
\lim_{\delta\downarrow 0}\left[\oint_{\partial B_\delta(w_\mathrm{z})}w^*\dot{\mathcal{P}}_m(w)\,dw +\oint_{\partial B_\delta(w_\mathrm{p})}w^*\dot{\mathcal{P}}_m(w)\,dw\right]&=-2\pi i\cdot(w_\mathrm{p}-w_\mathrm{z})^*\\&=-2\pi i\frac{2}{U^*}\left[(\mathscr{A}(\infty^+)-\mathscr{A}(\infty^-))^*-2\pi in_1 +\mathcal{H}^*n_2\right]\\&=
-\frac{2\pi i}{c_1^*}\left[(\mathscr{A}(\infty^+)-\mathscr{A}(\infty^-))^*-2\pi i n_1 +\mathcal{H}^*n_2\right],
\end{split}
\label{eq:g1-average-deltaloops}
\end{equation}
for some integers $n_1$ and $n_2$, where we have used \eqref{eq:Uc1-identity} to express $U$ in terms of $c_1$.
Next, 
\begin{equation}
\oint_{\partial\pgram}w^*\,dw=\int_{w_0}^{w_0+2\pi i/c_1}\left(w^*-(w+\mathcal{H}/c_1)^*\right)\,dw +\int_{w_0+\mathcal{H}/c_1}^{w_0}\left(w^*-(w+2\pi i/c_1)^*\right)\,dw
= -\frac{2\pi i (\mathcal{H}+\mathcal{H}^*)}{|c_1|^2}.
\label{eq:g1-dotP-average-denominator}
\end{equation}
It remains to calculate the integral around $\partial\pgram$ of $w^*\dot{\mathcal{P}}_m(w)$.  By double-periodicity of $\dot{\mathcal{P}}_m(w)$, 
\begin{equation}\begin{split}
\oint_{\partial\pgram}w^*\dot{\mathcal{P}}_m(w)\,dw &= \int_{w_0}^{w_0+2\pi i/c_1}\left(w^*-(w+\mathcal{H}/c_1)^*\right)\dot{\mathcal{P}}_m(w)\,dw +\int_{w_0+\mathcal{H}/c_1}^{w_0}\left(w^*-(w+2\pi i/c_1)^*\right)\dot{\mathcal{P}}_m(w)\,dw\\
&=-\frac{\mathcal{H}^*}{c_1^*}\int_{w_0}^{w_0+2\pi i/c_1}\dot{\mathcal{P}}_m(w)\,dw
+\frac{2\pi i}{c_1^*}\int_{w_0+\mathcal{H}/c_1}^{w_0}\dot{\mathcal{P}}_m(w)\,dw.
\end{split}
\label{eq:g1-wbar-dotP-boundary}
\end{equation}
We evaluate this expression with the help of the formula \eqref{eq:g1-dotP-formula}.  The constant term $E^+$ contributes to the integral on the left-hand side of \eqref{eq:g1-wbar-dotP-boundary} the product of $E^+$ and the right-hand side of \eqref{eq:g1-dotP-average-denominator}.  The remaining terms in $\dot{\mathcal{P}}_m(w)$ constitute the logarithmic derivative with respect to $w$ of a ratio of theta functions, and hence their contribution to the integrals on the right-hand side of \eqref{eq:g1-wbar-dotP-boundary} can be evaluated by the Fundamental Theorem of Calculus (and a careful accounting of branches of the logarithm) together with the automorphic identities \eqref{eq:g1-theta-identities}.  The result is that
\begin{equation}
\oint_{\partial\pgram}w^*\dot{\mathcal{P}}_m(w)\,dw = -\frac{2\pi i(\mathcal{H}+\mathcal{H}^*)E^+}{|c_1|^2} -\frac{\mathcal{H}^*}{c_1^*}\cdot 2\pi i n_1'
+\frac{2\pi i}{c_1^*}\left( \mathscr{A}(\infty^+)-\mathscr{A}(\infty^-)+ 2\pi i n_2'\right),
\end{equation}
where $n_1'$ and $n_2'$ are two specific integers.  
Putting the results together and using the fact that $\mathscr{A}(\infty^-)=-\mathscr{A}(\infty^+)$ for the specific branch of the Abel mapping in force yields
\begin{equation}
\langle\dot{\mathcal{P}}\rangle =E^+-
\frac{c_1}{\mathcal{H}+\mathcal{H}^*}\left[2\mathscr{A}(\infty^+) +2\mathscr{A}(\infty^+)^*-2\pi i(n_1-n_2') +\mathcal{H}^*(n_2-n_1')\right].
\label{eq:g1-dotP-average-penultimate}
\end{equation}

The fact that only the combinations of integers $n_1-n_2'$ and $n_2-n_1'$ enter is related to the arbitrary nature of the base point $w_0$ for the parallelogram $\pgram$.  Indeed, varying $w_0$ can effectively drive one of the points $w_\mathrm{z}$ and $w_\mathrm{p}$ to $\partial\pgram$, and as a pole of $\dot{\mathcal{P}}_m$ passes through $\partial\pgram$ a new point of the corresponding lattice $\mathscr{Z}_m[\dot{\mathcal{U}}]$ or $\mathscr{P}_m$ enters through the opposite edge of $\partial\pgram$.  This simultaneously amounts to an increment/decrement in either $n_1$ or $n_2$ (to re-calculate the difference $w_\mathrm{p}-w_\mathrm{z}$) and a corresponding change in either $n_2'$ or $n_1'$ (a residue contribution to one of the two integrals on the right-hand side of \eqref{eq:g1-wbar-dotP-boundary}).  Therefore, although the integers $n_1$, $n_2$, $n_1'$, and $n_2'$ can jump as $w_0$ varies, the differences $n_1-n_2'$ and $n_2-n_1'$ remain fixed.  A similar argument holds for variations of $x_0$, which result in corresponding variations of $c_1$, $\mathcal{H}$, $E^+$, and $\mathscr{A}(\infty^\pm)$, and/or variations of $\epsilon> 0$ taken as a continuous parameter (which also affect the lattices $\mathscr{P}_m$ and $\mathscr{Z}_m[\dot{\mathcal{U}}]$ through $m=\tfrac{1}{2}+\epsilon^{-1}$); the differences $n_1-n_2'$ and $n_2-n_1'$ are also independent of $x_0$ and $\epsilon>0$ (hence also $m$).
To determine the way that $\langle\dot{\mathcal{P}}\rangle$ depends on $x_0\in T$, it therefore suffices to 
calculate the differences $n_1-n_2'$ and $n_2-n_1'$ for any fixed $x_0\in T$, say $x_0=0$, and arbitrary $\epsilon>0$.

When $x_0=0$, we can use the symmetry $R(e^{2\pi i/3}z)=e^{-2\pi i/3}R(z)$ to evaluate $\mathcal{H}_0\in\mathbb{R}$ and $\mathscr{A}(\infty^+)\in\mathbb{R}$.  Indeed, this symmetry yields
\begin{equation}
\mathscr{A}(\infty^+):=c_1\int_D^\infty\frac{dz}{R(z)} = c_1e^{2\pi i/3}\int_A^\infty\frac{dz}{R(z)}.
\end{equation}
Therefore, by Cauchy's Theorem,
\begin{equation}
(1-e^{-2\pi i/3})\mathscr{A}(\infty^+)=c_1\int_D^A\frac{dz}{R(z)},
\end{equation}
and by our choice of homology cycles and the definitions of $c_1$ and $\mathcal{H}$ the right-hand side is equal to $i\pi-\tfrac12\mathcal{H}$.  Writing $\mathcal{H}$ in terms of $\mathcal{H}_0\in\mathbb{R}$ yields the identity
\begin{equation}
(1-e^{-2\pi i/3})\mathscr{A}(\infty^+)=\frac{1}{2}\pi i -\frac{1}{4}\mathcal{H}_0.
\label{eq:g1-x0zero-Ainftyplus}
\end{equation}
We wish to represent $\mathscr{A}(\infty^+)$ uniquely in the form $\mathscr{A}(\infty^+)=2\pi i\alpha +\mathcal{H}\beta$ for real $\alpha$ and $\beta$.  Taking the imaginary part of this equation using the fact that $\mathscr{A}(\infty^+)\in\mathbb{R}$ and the representation $\mathcal{H}=i\pi +\tfrac12\mathcal{H}_0$ with $\mathcal{H}_0\in\mathbb{R}$ yields $\beta=-2\alpha$, and then using \eqref{eq:g1-x0zero-Ainftyplus} allows us to solve for both $\mathcal{H}_0$ and $\alpha$.  Thus we obtain
\begin{equation}
\mathcal{H}_0=-2\pi\sqrt{3}\quad\text{and}\quad \mathscr{A}(\infty^+)=\frac{1}{3}\pi\sqrt{3}.
\label{eq:g1-x0zero-Adiff}
\end{equation}
This information allows us to calculate the fundamental lattice vectors $2\pi i/c_1$ and $\mathcal{H}/c_1$ and the relative shift $(\mathscr{A}(\infty^+)-\mathscr{A}(\infty^-))/c_1=2\mathscr{A}(\infty^+)/c_1$ of the zero and pole lattices $\mathscr{Z}_m[\dot{\mathcal{U}}](0)$ and $\mathscr{P}_m(0)$, with the result being as illustrated in Figure~\ref{fig:g1-x0zero-lattices}.
\begin{figure}[h]
\begin{center}
\includegraphics{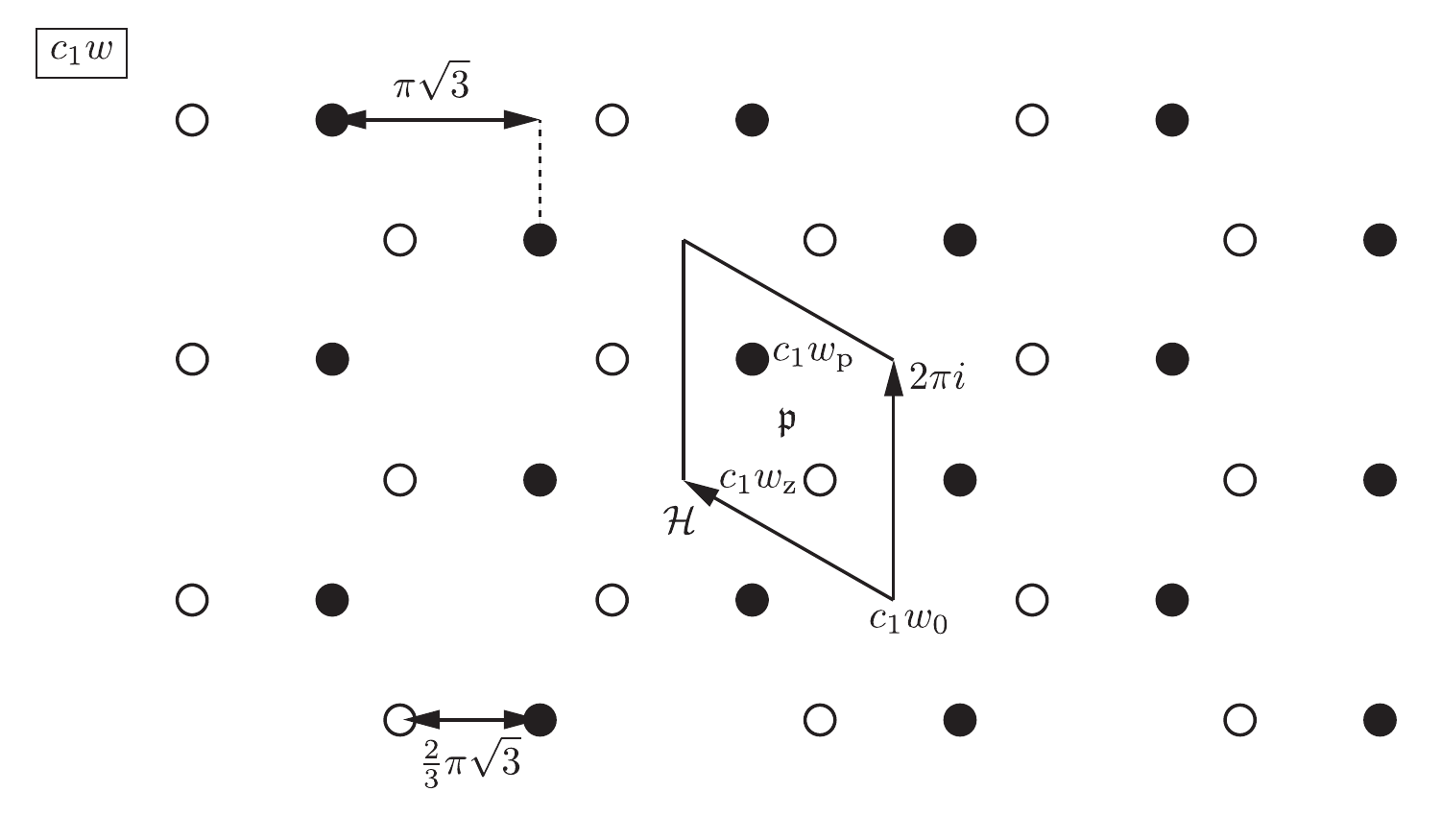}
\end{center}
\caption{\emph{The lattices $\mathscr{Z}_m[\dot{\mathcal{U}}](x_0)$ (open circles) and $\mathscr{P}_m(x_0)$ (filled circles) are regular hexagonal lattices in the $c_1w$-plane for $x_0=0$.  Also shown is the period parallelogram $\pgram$ and the distinguished points $w_0$, $w_\mathrm{z}\in\mathscr{Z}_m[\dot{\mathcal{U}}](0)$, and $w_\mathrm{p}\in\mathscr{P}_m(0)$.}}
\label{fig:g1-x0zero-lattices}
\end{figure}
Consider a period parallelogram $\pgram$ with base point $w_0$ set in the $c_1w$-plane as indicated in Figure~\ref{fig:g1-x0zero-lattices}.  To calculate the integers $n_1$ and $n_2$ we need to compute the difference between the specific lattice representatives $w_\mathrm{z}\in\mathscr{Z}_m[\dot{\mathcal{U}}](0)\cap\pgram$ and $w_\mathrm{p}\in\mathscr{P}_m(0)\cap\pgram$ that lie within $\pgram$.  From the diagram and \eqref{eq:g1-x0zero-Adiff} it is clear that
\begin{equation}
w_\mathrm{p}-w_\mathrm{z}=\frac{1}{c_1}\left(i\pi-\frac{1}{3}\pi\sqrt{3}\right)=\frac{1}{c_1}\left(2\mathscr{A}(\infty^+)+\mathcal{H}\right)=\frac{1}{c_1}\left(\mathscr{A}(\infty^+)-\mathscr{A}(\infty^-)+\mathcal{H}\right)
\end{equation}
so, comparing with \eqref{eq:g1-average-deltaloops}, we see that $n_1=0$ but $n_2=1$.  Now, to calculate $n_1'$ and $n_2'$ we need to consider how the (multivalued) logarithm of the ratio 
\begin{equation}
\rho(w):=\frac{\Theta(c_1 w;\mathcal{H})}{\Theta(c_1 w-2\mathscr{A}(\infty^+);\mathcal{H})},\quad
\text{for}\quad\mathcal{H}=i\pi-\pi\sqrt{3}\quad\text{and}\quad 2\mathscr{A}(\infty^+)=\frac{2}{3}\pi\sqrt{3}
\label{eq:g1-rhodef}
\end{equation}
varies as $w$ varies from $w_0$ to $w_0+2\pi i/c_1$ (to calculate $n_1'$) or from $w_0+\mathcal{H}/c_1$ to $w_0$ (to calculate $n_2'$).  
It is easy to deduce the correct (integer-valued) winding numbers from a sufficiently resolved numerical calculation of the theta series to track the way that $\rho(w)$ varies; the result of these numerical calculations is that $n_1'=1$ while $n_2'=0$.  Therefore $n_1-n_2'=0-0=0$ while $n_2-n_1'=1-1=0$.  Using this information in \eqref{eq:g1-dotP-average-penultimate} completes the proof. 
\end{proof}

Recall that, for fixed $x_0\in T\cap\mathbb{R}$, $\dot{\mathcal{P}}_m$ is a real periodic function of $w\in\mathbb{R}$ with fundamental period $-\mathcal{H}_0/c_1>0$.  We may therefore try also to define an average of this function over real $w$.  Since $\dot{\mathcal{P}}_m$ has real simple poles (exactly two per fundamental period, with opposite residues), the integrals involved in the average have to be regularized, and we choose the Cauchy principal value to preserve reality.  Thus we define
\begin{equation}
\langle\dot{\mathcal{P}}\rangle_\mathbb{R}:=-\frac{c_1}{\mathcal{H}_0}\dashint_{w_0}^{w_0-\mathcal{H}_0/c_1}\dot{\mathcal{P}}_m(w;x_0)\,dw,\quad x_0\in T\cap\mathbb{R},
\label{eq:g1-dotP-real-average}
\end{equation}
where $w_0$ is an arbitrary real point disjoint from the lattices $\mathscr{Z}_m[\dot{\mathcal{U}}](x_0)$ and $\mathscr{P}_m(x_0)$.
Since averaging over a curve in the complex $w$-plane is quite a different thing from averaging over a two-dimensional region, one generally cannot expect any relation between $\langle\dot{\mathcal{P}}\rangle_\mathbb{R}$ and $\langle\dot{\mathcal{P}}\rangle$.  Nonetheless, the following shows that the two are the same for those $x_0$ where they can be compared:
\begin{proposition}
\begin{equation}
\langle\dot{\mathcal{P}}\rangle_\mathbb{R}=\langle\dot{\mathcal{P}}\rangle \text{ for all } x_0\in T\cap\mathbb{R}.
\end{equation}
\label{prop:g1-equivalent-averages}
\end{proposition}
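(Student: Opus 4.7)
The plan is to compute $\langle\dot{\mathcal{P}}\rangle_\mathbb{R}$ from its definition \eqref{eq:g1-dotP-real-average} using the explicit formula \eqref{eq:g1-dotP-formula} for $\dot{\mathcal{P}}_m(w;x_0)$, then match with the simplified form of \eqref{eq:g1-dotPaverage-final}. First, I would specialize Proposition~\ref{prop:g1-dotPaverage} to $x_0\in T\cap\mathbb{R}$: by the reality analysis preceding that proposition together with \eqref{eq:g1-H-H0}, we have $c_1,\mathcal{H}_0,\kappa_1,\mathscr{A}(\infty^\pm),\mathscr{A}(Q^\pm),E^+\in\mathbb{R}$, and $\mathcal{H}=i\pi+\tfrac12\mathcal{H}_0$, so $\mathcal{H}+\mathcal{H}^*=\mathcal{H}_0$ and $\mathscr{A}(\infty^+)+\mathscr{A}(\infty^+)^*=2\mathscr{A}(\infty^+)$. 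Thus \eqref{eq:g1-dotPaverage-final} collapses to
\begin{equation}
\langle\dot{\mathcal{P}}\rangle = E^+ - \frac{4c_1\mathscr{A}(\infty^+)}{\mathcal{H}_0}.
\end{equation}

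Second, using $\tfrac{d}{dw}[(\tfrac12 w+\epsilon^{-1}\kappa_1)U]=\tfrac12 U$, I would rewrite \eqref{eq:g1-dotP-formula} as $\dot{\mathcal{P}}_m(w;x_0) = E^+ - \tfrac{d}{dw}\log\rho(w)$, where $\rho(w):=\Theta(\alpha^+-\beta(w))/\Theta(\alpha^--\beta(w))$ with $\alpha^\pm:=\mathscr{A}(\infty^\pm)-\mathscr{A}(Q^+)-\mathcal{K}$ and $\beta(w):=(\tfrac12 w+\epsilon^{-1}\kappa_1)U$. Writing $L:=-\mathcal{H}_0/c_1>0$ for the real period, \eqref{eq:g1-dotP-real-average} then yields
\begin{equation}
\langle\dot{\mathcal{P}}\rangle_\mathbb{R} = E^+ + \frac{c_1}{\mathcal{H}_0}\,\dashint_{w_0}^{w_0+L}\frac{d}{dw}\log\rho(w)\,dw,
\end{equation}
so the task is reduced to evaluating the principal-value integral $I$ of $(\log\rho)'$ over one real period.

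Third, I would handle the principal value by contour deformation. The real poles of $(\log\rho)'$ on $[w_0,w_0+L]$ are exactly the real representatives (one each) of $\mathscr{P}_m(x_0)$ and $\mathscr{Z}_m[\dot{\mathcal{U}}](x_0)$, with real residues $+1$ and $-1$ respectively. Replacing the straight segment by a path $C_+$ obtained via small upper-half-plane semicircular detours around each real pole contributes a purely imaginary amount $\pm i\pi$ per detour, so that $I=\mathrm{Re}\int_{C_+}(\log\rho)'(w)\,dw$ by reality of the Cauchy principal value. Along $C_+$ the integrand is an exact differential, so the integral equals $\log\rho(w_0+L)-\log\rho(w_0)$ for any branch of the logarithm continuous along $C_+$.

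Fourth, I would use the theta automorphy \eqref{eq:g1-theta-identities} to compute this difference explicitly. Since $\mathcal{H}_0=2\mathcal{H}-2\pi i$, we have $\beta(w_0+L)-\beta(w_0)=c_1 L=-\mathcal{H}_0=2\pi i-2\mathcal{H}$, and a direct application of $\Theta(z+2\pi i)=\Theta(z)$ and $\Theta(z+\mathcal{H})=e^{-\mathcal{H}/2}e^{-z}\Theta(z)$ (applied twice) gives $\Theta(\alpha-\beta(w_0+L))=e^{-2\mathcal{H}-2(\alpha-\beta(w_0))}\Theta(\alpha-\beta(w_0))$. The $e^{-2\mathcal{H}}$ factors cancel in the ratio $\rho$, producing
\begin{equation}
\rho(w_0+L)=e^{-2(\alpha^+-\alpha^-)}\rho(w_0)=e^{-4\mathscr{A}(\infty^+)}\rho(w_0),
\end{equation}
using $\mathscr{A}(\infty^-)=-\mathscr{A}(\infty^+)$. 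Hence $\log\rho(w_0+L)-\log\rho(w_0)=-4\mathscr{A}(\infty^+)+2\pi i N$ for some integer winding number $N$ along $C_+$. Taking real parts (using $\mathscr{A}(\infty^+)\in\mathbb{R}$) gives $I=-4\mathscr{A}(\infty^+)$, and substituting back yields $\langle\dot{\mathcal{P}}\rangle_\mathbb{R}=E^+-4c_1\mathscr{A}(\infty^+)/\mathcal{H}_0=\langle\dot{\mathcal{P}}\rangle$.

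The main obstacle is the careful justification of the principal-value-as-real-part identity and the contour deformation: one must verify that the off-axis integral along $C_+$ has a well-defined limit as the detour radii shrink, that no other zeros of the two $\Theta$-factors appear in the deformation region, and that the explicit contribution of each semicircular detour is exactly $\mp i\pi$ (from the local Laurent expansion of $(\log\rho)'$). Once the real-axis poles are thereby separated cleanly from the analytic contribution, the remainder of the argument is essentially an exercise in the theta automorphy already used to prove Proposition~\ref{prop:g1-dotPaverage}, with the crucial simplification that the integer winding-number ambiguity is annihilated upon taking real parts precisely because $\mathscr{A}(\infty^+)$ happens to be real for $x_0\in T\cap\mathbb{R}$.
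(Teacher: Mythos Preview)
Your proof is correct and follows essentially the same route as the paper: both reduce the principal-value integral to the change in a logarithm of the theta-function ratio $\rho$ across one real period, evaluated via the automorphy identities \eqref{eq:g1-theta-identities}, arriving at $\langle\dot{\mathcal{P}}\rangle_\mathbb{R}=E^+-4c_1\mathscr{A}(\infty^+)/\mathcal{H}_0$. The only difference is cosmetic --- the paper works directly with the real-valued $\log|\rho(w)|$ and applies the Fundamental Theorem of Calculus on the subintervals between the two real singularities (the one-sided logarithmic divergences cancel in the $\delta\downarrow 0$ limit), whereas you deform into the upper half-plane and take real parts to discard the purely imaginary semicircle and winding-number contributions.
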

\begin{proof}
From \eqref{eq:g1-dotP-formula} and \eqref{eq:g1-rhodef}, we can write
\begin{equation}
\langle\dot{\mathcal{P}}\rangle_\mathbb{R}=-\frac{c_1}{\mathcal{H}_0}\dashint_{w_0}^{w_0-\mathcal{H}_0/c_1}\left[E^++\frac{d}{dw}\log|\rho(w)|\right]\,dw = 
E^+-\frac{c_1}{\mathcal{H}_0}\dashint_{w_0}^{w_0-\mathcal{H}_0/c_1}\frac{d}{dw}\log|\rho(w)|\,dw,
\end{equation}
where $\log|\rho(w)|$ denotes the real-valued principal branch, which is differentiable away from the pole at $w=w_\mathrm{p}$ and the zero at $w=w_\mathrm{z}$ of $\rho(w)$, the only singularities of the integrand in the interval of integration.  Assuming without loss of generality that $w_0<w_\mathrm{z}<w_\mathrm{p}<w_0-\mathcal{H}_0/c_1$, we then obtain, by the Fundamental Theorem of Calculus applied separately on the intervals $[w_0,w_\mathrm{z}-\delta]$, $[w_\mathrm{z}+\delta,w_\mathrm{p}-\delta]$, and $[w_\mathrm{p}+\delta,w_0-\mathcal{H}_0/c_1]$ for $\delta>0$ small,
\begin{equation}
\begin{split}
\langle\dot{\mathcal{P}}\rangle_\mathbb{R}&=E^+-\frac{c_1}{\mathcal{H}_0}\lim_{\delta\downarrow 0}\bigg(\log|\rho(w_\mathrm{z}-\delta)|-\log|\rho(w_0)| \\
&\qquad\qquad\qquad{}+\log|\rho(w_\mathrm{p}-\delta)|-\log|\rho(w_\mathrm{z}+\delta)| \\
&\qquad\qquad\qquad{}+\log|\rho(w_0-\mathcal{H}_0/c_1)|-\log|\rho(w_\mathrm{p}+\delta)|\bigg)\\
&=E^+-\frac{c_1}{\mathcal{H}_0}\bigg(\log|\rho(w_0-\mathcal{H}_0/c_1)|-\log|\rho(w_0)|\bigg)\\
&=E^+-\frac{c_1}{\mathcal{H}_0}\log\left|\frac{\rho(w_0-\mathcal{H}_0/c_1)}{\rho(w_0)}\right|.
\end{split}
\end{equation}
Now using the fact that $\mathcal{H}_0=2\mathcal{H}-2\pi i$ and applying the automorphic identities \eqref{eq:g1-theta-identities} to the formula \eqref{eq:g1-rhodef}, one sees that
\begin{equation}
\frac{\rho(w_0-\mathcal{H}_0/c_1)}{\rho(w_0)}=e^{4\mathscr{A}(\infty^+)}.
\end{equation}
Of course $\mathscr{A}(\infty^+)\in\mathbb{R}$ for $x_0\in T\cap \mathbb{R}$, so we have shown that
\begin{equation}
\langle\dot{\mathcal{P}}\rangle_\mathbb{R}=E^+-4c_1\frac{\mathscr{A}(\infty^+)}{\mathcal{H}_0}.
\end{equation}
The result then follows from the observation that, for $x_0\in T\cap \mathbb{R}$, $\mathcal{H}_0\in\mathbb{R}$, so $\mathcal{H}_0=\mathcal{H}+\mathcal{H}^*$ and $2\mathscr{A}(\infty^+)=\mathscr{A}(\infty^+)+\mathscr{A}(\infty^+)^*$.
\end{proof}

\begin{proposition}
Let $x\in \partial T$ be a point on the boundary of the elliptic region.  Then
\begin{equation}
\mathop{\lim_{x_0\to x}}_{x_0\in T}\langle\dot{\mathcal{P}}\rangle(x_0) = -\frac{1}{2}S(x),
\label{eq:g1-average-T-boundary}
\end{equation}
where $S(x)$ denotes the solution of the cubic equation \eqref{cubic-equation} with large $x$ asymptotic behavior \eqref{S-large-x} continued analytically to $\partial T$.  Also,
\begin{equation}
\mathop{\lim_{x_0\to x_c}}_{x_0\in T\cap\mathbb{R}}\langle\dot{\mathcal{P}}\rangle_\mathbb{R}(x_0)=-\frac{1}{2}S(x_c)\quad\text{and}\quad
\mathop{\lim_{x_0\to x_e}}_{x_0\in T\cap\mathbb{R}}\langle\dot{\mathcal{P}}\rangle_\mathbb{R}(x_0)=-\frac{1}{2}S(x_e),
\label{eq:g1-average-real-boundary}
\end{equation}
where we recall that $x_c:=\inf (T\cap\mathbb{R})$ and $x_e:=\sup (T\cap\mathbb{R})$.
\label{prop:g1-average-boundary}
\end{proposition}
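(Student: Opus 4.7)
The plan is to work directly from the explicit formula $\langle\dot{\mathcal{P}}\rangle = E^+ - 2c_1(\mathscr{A}(\infty^+) + \mathscr{A}(\infty^+)^*)/(\mathcal{H}+\mathcal{H}^*)$ furnished by Proposition~\ref{prop:g1-dotPaverage} and to analyze each ingredient as $x_0 \to x \in \partial T$. The structural input is the classification already implicit in the solvability argument for the moment and Boutroux conditions: on each smooth arc of $\partial T$ exactly two of the branch points $A,B,C,D$ coalesce at the critical point $s := -S(x)/2$, where $S(x)$ is the analytic continuation of the genus-zero solution of \eqref{cubic-equation}, while the remaining two approach the outer points $a(x), b(x)$ of \S\ref{genus-zero-g}; at each of the three corners of $\partial T$ three branch points coalesce at $s$ simultaneously. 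Without loss of generality I treat the edge on which $C(x_0)$ and $D(x_0)$ tend to $s$, the other two being handled by the obvious rotational symmetry.  In this nodal degeneration $R(z) \to (z-s)r(z;x)$ with $r(z;x)^2 = (z-a(x))(z-b(x))$.

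The cycle $\mathfrak{a}$ enclosing the coalescing pair $\{C,D\}$ shrinks to a small loop about $z=s$; writing $C=s-t$, $D=s+t$ and using the local model $R^2 \approx r_*^2((z-s)^2 - t^2)$ a direct residue computation gives $c_1 = 2\pi i/\oint_\mathfrak{a}\omega_0 \to r_* := r(s;x)$ with an appropriate sign convention. The integrals defining $\mathcal{H}$ through \eqref{eq:g1-H-H0} and $\mathscr{A}(\infty^+)$ through the path from $D$ to $\infty^+$ each develop a logarithmic divergence at $z=s$ whose coefficient is determined by the expansion $R(z)\sim (z-s)r_*$; since $\mathcal{H}$ picks up contributions from both paths $A\to D$ and $B\to D$ while $\mathscr{A}(\infty^+)$ receives a single opposite contribution, splitting into divergent and bounded parts yields
\begin{equation*}
\lim_{x_0\to x}\, \frac{\mathscr{A}(\infty^+)+\mathscr{A}(\infty^+)^*}{\mathcal{H}+\mathcal{H}^*} = -\frac{1}{2},
\end{equation*}
so the second term in the formula for $\langle\dot{\mathcal{P}}\rangle$ tends to $r_*$. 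For the constant $E^+$, I first observe---by applying the same local residue analysis to $c_2 = -(2\pi i)^{-1}\oint_\mathfrak{a}\Omega_0$ and expanding $z^2$ about $z=s$ on the shrinking cycle---that $c_1c_2 \to -s^2$. This cancels the near-$s$ singularity in the integrand of \eqref{eq:g1-Eplus-rewrite}, reducing it in the limit to
\begin{equation*}
\frac{z^2-s^2}{(z-s)r(z;x)} = \frac{z-S(x)/2}{r(z;x)} = r'(z;x),
\end{equation*}
where the last equality is the identity $2r(z;x)r'(z;x)=2z-S(x)$ derivable from $r(z;x)^2=z^2-S(x)z+(S(x)^2-\Delta(x)^2)/4$. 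The Fundamental Theorem of Calculus together with $r(z;x) = z - S(x)/2 + \mathcal{O}(z^{-1})$ as $z\to\infty$ then yields $E^+ \to -r_* - S(x)/2$, so that the two contributions combine to give $\langle\dot{\mathcal{P}}\rangle \to (-r_* - S(x)/2) + r_* = -\tfrac{1}{2}S(x)$.

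The limit $\langle\dot{\mathcal{P}}\rangle_\mathbb{R}(x_0)\to -S(x_e)/2$ follows immediately from Proposition~\ref{prop:g1-equivalent-averages}, since $x_e$ lies in the interior of a smooth arc of $\partial T$.  The remaining case---the limit at the corner $x_c$---is the main obstacle: there three branch points coalesce simultaneously at $s=-S(x_c)/2$, producing a cuspidal rather than a nodal degeneration of $\Gamma$, and the logarithmic divergence analysis above must be refined to account for two shrinking cycles at once. The cleanest strategy is to show that $\langle\dot{\mathcal{P}}\rangle_\mathbb{R}(x_0) + S(x_0)/2$ extends continuously from $T\cap\mathbb{R}$ to $x_c$ by combining the edge result just proved with a uniform estimate on the near-corner behavior of the divergent integrals in $\mathscr{A}(\infty^+)$, $\mathcal{H}$, and $E^+$. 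Verifying that the delicate cancellations established above survive uniformly as $x_0\to x_c$ along the real axis---and in particular that the quadratic growth of $z^2$ on the doubly-shrinking cycle still kills the singular contributions in exactly the combinations that yield $-S(x_c)/2$---constitutes the technically most demanding part of the argument.
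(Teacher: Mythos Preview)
Your analysis of the edge on which $C$ and $D$ coalesce is correct and essentially follows the paper's argument: both compute $c_1\to r(-S/2)$, $c_1c_2\to -S^2/4$, identify the limiting integrand of $E^+$ with $r'(z;x)$, and obtain $E^+\to -S/2 - r(-S/2)$ and a second-term contribution of $r(-S/2)$. The only difference is cosmetic---the paper passes to the averaged form \eqref{eq:g1-Eplus-rewrite-again} (with lower limits $A,B\to a,b$) precisely to avoid the moving singular endpoint $D\to s$ in \eqref{eq:g1-Eplus-rewrite}; you instead argue that $c_1c_2\to -s^2$ cancels the nascent $1/(z-s)$ singularity, which is fine once justified.

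The gap is the sentence ``the other two being handled by the obvious rotational symmetry.'' The formula \eqref{eq:g1-dotPaverage-final} is written in terms of $E^+$, $\mathscr{A}(\infty^+)$, $c_1$, and $\mathcal{H}$, each of which is tied to a specific labeling of the branch points (integrals based at $D$, a fixed homology basis with $\mathfrak{a}$ encircling $\{C,D\}$). The three-fold symmetry of the Painlev\'e functions does not act on these ingredients in any transparent way, so reducing to one edge is not automatic at this level. More to the point, the paper's direct treatment of the other two edges shows that the mechanism is \emph{different} there: when it is $A$ (or $B$) that coalesces with $C$, one finds $c_1\to 0$, so the entire second term $-2c_1(\mathscr{A}(\infty^+)+\mathscr{A}(\infty^+)^*)/(\mathcal{H}+\mathcal{H}^*)$ vanishes rather than contributing $r(-S/2)$, and it is $E^+$ alone (computed now from the robust formula \eqref{eq:g1-Eplus-rewrite} with lower limit $D\to b$) that tends to $-S/2$. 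This case is in fact simpler than the one you carried out; it just has to be done.

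As for the corner $x_c$: you are right that it is a corner, but the paper does not mount a separate cuspidal-degeneration analysis either. It simply invokes Proposition~\ref{prop:g1-equivalent-averages} to reduce \eqref{eq:g1-average-real-boundary} to \eqref{eq:g1-average-T-boundary}, treating the latter as established on all of $\partial T$ from the edge computations. Your worry about uniformity at the corner is reasonable, but the elaborate program you outline is not what the paper does.
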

\begin{proof}
According to Proposition~\ref{prop:g1-equivalent-averages}, the second statement \eqref{eq:g1-average-real-boundary} follows from the first, so it suffices to prove \eqref{eq:g1-average-T-boundary}.
We first suppose that $|\arg(x)|<\pi/3$,
so that the limit $x_0\to x$ corresponds to $A\to a(x)$ and $B\to b(x)$, while $C\to -S(x)/2$ and $D\to -S(x)/2$ and $a(x)+b(x)=S(x)$.
The proof relies on the observation that, for the values of $z$ that will be needed below, $R(z)$ degenerates to the product $(z+S/2)r(z)$ in the limit $x_0\to x$, where $r$ is the simpler quadratic radical with branch points $a$ and $b$ defined in \S\ref{section-gen0}.
Directly from \eqref{eq:g1-c1} and \eqref{eq:g1-c2}, passing to the limit under the integral sign and evaluating the resulting integrals by residues at the pole arising from the coalescence of $C$ and $D$ at $-S/2$, we have
\begin{equation}
\mathop{\lim_{x_0\to x}}_{x_0\in T} c_1= r(-S/2)\quad\text{and}\quad\mathop{\lim_{x_0\to x}}_{x_0\in T}c_2=-\frac{S^2}{4r(-S/2)}.
\end{equation}
Also, the real part of $\mathcal{H}$ tends to negative infinity as $x_0\to x$.  Since by simple contour deformations we can write $\mathscr{A}(\infty^+)$ as a sum of $-\mathcal{H}/2$ and an integral that remains bounded as $x_0\to x$, we therefore obtain
\begin{equation}
\mathop{\lim_{x_0\to x}}_{x_0\in T}\langle\dot{\mathcal{P}}\rangle=r(-S/2)+\mathop{\lim_{x_0\to x}}_{x_0\in T}E^+.
\end{equation}
To compute the limit of $E^+$, we average the formulae in \eqref{eq:g1-Eplus-rewrite-again} suitable for use near the arc of $\partial T$ with $|\arg(x)|<\pi/3$ and pass to the limit to obtain
\begin{equation}
\mathop{\lim_{x_0\to x}}_{x_0\in T}E^+=\frac{1}{2}\int_{a}^{z_0}\frac{z-S/2}{r(z)}\,dz +\frac{1}{2}\int_b^{z_0}\frac{z-S/2}{r(z)}\,dz +\int_{z_0}^\infty\left[\frac{z-S/2}{r(z)}-1\right]\,dz-z_0-r(-S/2),
\end{equation}
where we have used the fact that $c_1c_2$ tends to $-S^2/4$.  Now it only remains to observe that
$r'(z)=(z-S/2)/r(z)$ to integrate exactly; since $r(a)=r(b)=0$ one finds that
\begin{equation}
\mathop{\lim_{x_0\to x}}_{x_0\in T}E^+=-\frac{S}{2}-r(-S/2)
\end{equation}
(the term $-S/2$ arises from the limit of $r(z)-z$ as $z\to\infty$ that is needed to compute the convergent improper integral) and the proof is therefore complete for $x\in \partial T$ with $|\arg(x)|<\pi/3$.

If instead $x$ lies on one of the other two arcs of $\partial T$, then it is either $A$ or $B$ coalescing with $C$ at $-S(x)/2$, and we denote the limiting value of $D$ as $b$ and the limiting value of the other non-coalescing root as $a$; we again have $a+b=S$.  In this situation, one can easily show that the second term in the formula \eqref{eq:g1-dotPaverage-final} tends to zero as $x_0\to x$, so the limit we seek to compute is just that of $E^+$.  To compute this limit, we first write $E^+$ in the form \eqref{eq:g1-Eplus-rewrite}, and observe that
although $c_1$ tends to zero while $c_2$ blows up, the product $c_1c_2$ tends to a finite limit of $-S^2/4$ as $x_0\to x$.  Therefore, we may pass to the limit to obtain
\begin{equation}
\mathop{\lim_{x_0\to x}}_{x_0\in T} E^+=\int_b^{z_0}\frac{z-S/2}{r(z)}\,dz + \int_{z_0}^\infty\left[\frac{z-S/2}{r(z)}-1\right]\,dz-z_0.
\end{equation}
Evaluating the integrals in closed form using $r'(z)=(z-S/2)/r(z)$ gives the desired result.
\end{proof}

Next, given $x_0\in T$, we may define the planar density $\dot{\sigma}_\mathrm{P}$ of poles of $\dot{\mathcal{U}}_m^0$  as follows.  
We use the representation of $x\in T$ as $x=x_0+\epsilon w$ with $x_0\in T$ fixed to define
\begin{equation}
\dot{\sigma}_\mathrm{P}(x_0):=\lim_{M\uparrow\infty}\frac{\#\{\text{poles $w$ of $\dot{\mathcal{U}}_m^0(w;x_0)$ with $|w|<M$}\}}{\pi M^2},\quad x_0\in T.
\label{eq:dotsigmaPdefine}
\end{equation}
Similarly, given $x_0\in T\cap\mathbb{R}$, we may define the linear density of $\dot{\sigma}_\mathrm{L}$ of real poles of $\dot{\mathcal{U}}_m^0$ as
\begin{equation}
\dot{\sigma}_\mathrm{L}(x_0):=\lim_{M\uparrow\infty}\frac{\#\{\text{real poles $w$ of $\dot{\mathcal{U}}^0_m(w;x_0)$ in $(-M,M)$}\}}{2M},\quad x_0\in T\cap \mathbb{R}.
\label{eq:dotsigmaLdefine}
\end{equation}
Since the poles of $\dot{\mathcal{U}}^0_m$ in the $w$-plane for fixed $x_0$ correspond to points of the regular lattice $\mathscr{P}_m(x_0)$, or more specifically for $x_0\in\mathbb{R}$ to equally spaced real points with spacing $-\mathcal{H}_0/c_1$, it is easy to see that $\dot{\sigma}_\mathrm{P}$ is simply the reciprocal of the area of a fundamental period parallelogram $\pgram$ in the $w$-plane, while $\dot{\sigma}_\mathrm{L}$ is the reciprocal of the length of the period interval.  Therefore,
\begin{equation}
\dot{\sigma}_\mathrm{P}(x_0)=-\frac{|c_1|^2}{2\pi\mathrm{Re}(\mathcal{H})}=\left[\mathrm{Im}\left(\left(\oint_\mathfrak{a}\omega_0\right)^*\oint_\mathfrak{b}\omega_0\right)\right]^{-1}>0\quad x_0\in T
\label{eq:dotsigmaPformula}
\end{equation}
(see \cite[Ch. II, Corollary 1]{Dubrovin81} for the final inequality) and
\begin{equation}
\dot{\sigma}_\mathrm{L}(x_0)=-\frac{c_1}{\mathcal{H}_0}=\left[\oint_\mathfrak{a}\omega_0-2\oint_\mathfrak{b}\omega_0\right]^{-1}=\left[2\int_D^A\frac{dz}{R(z)} +2\int_D^B\frac{dz}{R(z)}\right]^{-1}>0,\quad x_0\in T\cap\mathbb{R}.
\label{eq:dotsigmaLformula}
\end{equation}

\subsubsection{Inner (Airy) parametrices near $z=A$ and $z=B$}  Given $x_0\in T$, let $\mathbb{D}_A$ and $\mathbb{D}_B$ be $\epsilon$-independent open disks of sufficiently small radius containing the points $z=A$ and $z=B$, respectively.  Consider the equations
\begin{equation}
\tau_A^3 = (2H(z)+\Lambda-i\Phi_+)^2,\quad z\in \mathbb{D}_A\quad\text{and}\quad
\tau_B^3 = (2H(z)+\Lambda-i\Phi_-)^2,\quad z\in \mathbb{D}_B.
\label{eq:AiryAB-conformalmaps}
\end{equation}
The functions on the right-hand side are analytic and vanish to precisely third order at the the points $z=A$ and $z=B$, respectively.  This implies the existence of two $\epsilon$-independent univalent functions $\tau_A:\mathbb{D}_A\to\mathbb{C}$ and $\tau_B:\mathbb{D}_B\to\mathbb{C}$ that are determined by \eqref{eq:AiryAB-conformalmaps} up to factors of the cube roots of unity.  We select these factors
so that $\tau_A(z)$ and $\tau_B(z)$ are positive real in $\mathbb{D}_A$ and $\mathbb{D}_B$, respectively, along the arcs of the jump contour $\Sigma^{(\mathbf{O})}$ for $\mathbf{O}(z)$ where $\mathbf{V}^{(\mathbf{N})}(z)=\mathbf{U}$.  The conformal mappings $\tau_A$ and $\tau_B$ satisfy $\tau_A(A)=\tau_B(B)=0$.

We now define matrix functions $\mathbf{H}_A:\mathbb{D}_A\to SL(2,\mathbb{C})$ and $\mathbf{H}_B:\mathbb{D}_B\to SL(2,\mathbb{C})$
by the formulae
\begin{equation}
\mathbf{H}_{A}(z):=\dot{\mathbf{O}}^{(\mathrm{out})}(z)e^{-wz\sigma_3/2}e^{i\pi\sigma_3/4}
e^{-i\Phi_+\sigma_3/(2\epsilon)}\mathbf{V}^{-1}\tau_A(z)^{-\sigma_3/4},\quad z\in \mathbb{D}_A
\label{eq:g1-HA}
\end{equation}
and
\begin{equation}
\mathbf{H}_{B}(z):=\dot{\mathbf{O}}^{(\mathrm{out})}(z)e^{-wz\sigma_3/2}e^{i\pi\sigma_3/4}
e^{-i\Phi_-\sigma_3/(2\epsilon)}\mathbf{V}^{-1}\tau_B(z)^{-\sigma_3/4},\quad z\in \mathbb{D}_B,
\label{eq:g1-HB}
\end{equation}
where $\mathbf{V}$ is the unimodular and unitary matrix defined by \eqref{eq:AiryAppendix-EigenvectorMatrix}.  (Where  $\tau_{A,B}^{-\sigma_3/4}$ and $\dot{\mathbf{O}}^{(\mathrm{out})}$ both have jump discontinuities along $\Sigma\cap \mathbb{D}_{A,B}$, either boundary value suffices and gives the same value for $\mathbf{H}_{A,B}$.)  Since $\tau_{A,B}(z)^{\sigma_3/4}\mathbf{V}$ satisfies the same jump conditions as does $\dot{\mathbf{O}}^{(\mathrm{out})}(z)$ within the disk $\mathbb{D}_{A,B}$, and since $\mathbf{H}_A=\mathcal{O}((z-A)^{-1/2})$ and 
$\mathbf{H}_B=\mathcal{O}((z-B)^{-1/2})$, the matrices defined by \eqref{eq:g1-HA} and \eqref{eq:g1-HB} are analytic functions within their respective disks of definition, and hence they are controlled by their size on $\partial\mathbb{D}_A$ and $\partial\mathbb{D}_B$, respectively, via the Maximum Modulus Principle.  It follows from the Boutroux conditions \eqref{eq:g1-Boutroux} and
from Proposition~\ref{prop:g1-dot-Oout-bound} that the functions $\mathbf{H}_{A,B}(z)$ are bounded independent of $\epsilon$ uniformly for $x_0$ within a compact subset $K\subset T$ and $w\in\mathscr{S}_m(x_0,\delta)$ for some $\delta>0$ (see \eqref{eq:g1-cheese-def}).  From Proposition~\ref{prop:g1-dot-Oout-det} and \eqref{eq:AiryAppendix-EigenvectorMatrix} it follows that $\det(\mathbf{H}_{A,B}(z))=1$ for $z\in \mathbb{D}_{A,B}$, and hence similar uniform bounds hold for $\mathbf{H}_{A,B}(z)^{-1}$.

We use these matrix functions to define parametrices near $A$ and $B$ as follows.  Set
\begin{equation}
\dot{\mathbf{O}}^{(A)}(z)=\dot{\mathbf{O}}^{(A)}(z;x_0,w,\epsilon):=\mathbf{H}_A(z)\epsilon^{\sigma_3/6}\mathbf{A}(\epsilon^{-2/3}\tau_A(z))e^{wz\sigma_3/2}e^{-i\pi\sigma_3/4}e^{i\Phi_+\sigma_3/(2\epsilon)},\quad
z\in \mathbb{D}_A
\end{equation}
and
\begin{equation}
\dot{\mathbf{O}}^{(B)}(z)=\dot{\mathbf{O}}^{(B)}(z;x_0,w,\epsilon):=\mathbf{H}_B(z)\epsilon^{\sigma_3/6}\mathbf{A}(\epsilon^{-2/3}\tau_B(z))e^{wz\sigma_3/2}e^{-i\pi\sigma_3/4}e^{i\Phi_+\sigma_3/(2\epsilon)},\quad
z\in \mathbb{D}_B,
\end{equation}
where the matrix function $\mathbf{A}$ is defined by \eqref{eq:AiryAppendix-ParametrixDef-I}--\eqref{eq:AiryAppendix-ParametrixDef-IV}.  One then has
\begin{equation}
\begin{split}
\dot{\mathbf{O}}^{(A)}(z)\dot{\mathbf{O}}^{(\mathrm{out})}(z)^{-1} &= \mathbf{H}_A(z)
\epsilon^{\sigma_3/6}\mathbf{A}(\epsilon^{-2/3}\tau_A(z))\mathbf{V}^{-1}[\epsilon^{-2/3}\tau_A(z)]^{-\sigma_3/4}\epsilon^{-\sigma_3/6}\mathbf{H}_A(z)^{-1}\\
&=\mathbb{I}+\begin{bmatrix}\mathcal{O}(\epsilon^2) & \mathcal{O}(\epsilon)\\
\mathcal{O}(\epsilon) & \mathcal{O}(\epsilon^2)\end{bmatrix},\quad z\in\partial \mathbb{D}_A,
\end{split}
\label{eq:g1-Airy-A-estimate}
\end{equation}
and similarly
\begin{equation}
\begin{split}
\dot{\mathbf{O}}^{(B)}(z)\dot{\mathbf{O}}^{(\mathrm{out})}(z)^{-1} &= \mathbf{H}_B(z)
\epsilon^{\sigma_3/6}\mathbf{A}(\epsilon^{-2/3}\tau_B(z))\mathbf{V}^{-1}[\epsilon^{-2/3}\tau_B(z)]^{-\sigma_3/4}\epsilon^{-\sigma_3/6}\mathbf{H}_B(z)^{-1}\\
&=\mathbb{I}+\begin{bmatrix}\mathcal{O}(\epsilon^2) & \mathcal{O}(\epsilon)\\
\mathcal{O}(\epsilon) & \mathcal{O}(\epsilon^2)\end{bmatrix},\quad z\in\partial \mathbb{D}_B,
\end{split}
\label{eq:g1-Airy-B-estimate}
\end{equation}
where we have used \eqref{eq:AiryAppendix-ParametrixAsymp} and the fact that $\tau_{A,B}(z)$ is bounded away from zero on  $\partial\mathbb{D}_{A,B}$.  Also, from \eqref{eq:AiryAppendix-AiryJump-I}--\eqref{eq:AiryAppendix-AiryJump-III} it follows that $\dot{\mathbf{O}}^{(A,B)}(z)$ satisfies exactly the same jump conditions within $\mathbb{D}_{A,B}$ as does $\mathbf{O}(z)$.

\subsubsection{Inner (Airy) parametrix near $z=D$}  Although it is also an endpoint of the Stokes graph $\Sigma$, the point $z=D$ is unlike $A$ and $B$ because it is the origin of an additional branch cut of the function $H$, namely the ray $L$.  Given $x_0\in T$, let $\mathbb{D}_D$ be an $\epsilon$-independent open disk of sufficiently small radius containing the point $z=D$.  The arcs of the jump contour $\Sigma^{(\mathbf{O})}$ for $\mathbf{O}(z)$ near $z=D$ along which either $\mathbf{V}^{(\mathbf{N})}(z)=\mathbf{T}^{-1}$ or $\mathbf{V}^{(\mathbf{N})}(z)=-\mathbf{U}$ divide $\mathbb{D}_D$ into two complementary parts, $\mathbb{D}^+_D$ on the left and $\mathbb{D}^-_D$ on the right by orientation (see Figure~\ref{fig:N-jumps-g1}).  We define an $\epsilon$-independent univalent function $\tau_D(z)$ in $\mathbb{D}_D$ to satisfy the equation
\begin{equation}
\tau_D^3 = \begin{cases}(2H(z)+\Lambda-2\pi i)^2,&\quad z\in \mathbb{D}^+_D,\\
(2H(z)+\Lambda+2\pi i)^2,&\quad z\in \mathbb{D}^-_D.
\end{cases}
\end{equation}
It is a consequence of \eqref{eq:g1-H}, \eqref{eq:g1-G-jump}, and \eqref{eq:g1-Lambda}  that the right-hand side defines an analytic function on the whole disk $\mathbb{D}_D$ that vanishes exactly to third order at $z=D$; we choose for $\tau_D:\mathbb{D}_D\to\mathbb{C}$ the analytic branch that is positive real for $z\in L\cap\mathbb{D}_D$.  Note that $\tau_D(D)=0$.

Next, define an analytic matrix function $\mathbf{H}_D:\mathbb{D}_D\to SL(2,\mathbb{C})$ by ($\mathbf{V}$ is defined by \eqref{eq:AiryAppendix-EigenvectorMatrix})
\begin{equation}
\mathbf{H}_D(z):=\dot{\mathbf{O}}^{(\mathrm{out})}(z)e^{-wz\sigma_3/2}e^{-i\pi\sigma_3/4}\mathbf{V}^{-1}\tau_D(z)^{-\sigma_3/4},\quad z\in \mathbb{D}_D.
\end{equation}
The fact that $\mathbf{H}_D$ is analytic in $\mathbb{D}_D$ although both $\dot{\mathbf{O}}^{(\mathrm{out})}(z)$ and $\tau_D(z)^{-\sigma_3/4}$ have jump discontinuities along the arc of the jump contour for $\mathbf{O}(z)$ within $\mathbb{D}_D$ that coincides with the Stokes graph $\Sigma$ follows by a direct calculation.  Due to \eqref{eq:g1-Boutroux}, Proposition~\ref{prop:g1-dot-Oout-bound}, and the Maximum Modulus Principle, the elements of $\mathbf{H}_D$ are bounded independently of $\epsilon$ uniformly for  $x_0$ within a compact subset $K\subset T$ and $w\in\mathscr{S}_m(x_0,\delta)$ for some $\delta>0$.  By Proposition~\ref{prop:g1-dot-Oout-det} and \eqref{eq:AiryAppendix-EigenvectorMatrix} the same is true for the elements of $\mathbf{H}_D(z)^{-1}$.

Now we define the parametrix near $z=D$ by setting ($\mathbf{A}$ is defined by \eqref{eq:AiryAppendix-ParametrixDef-I}--\eqref{eq:AiryAppendix-ParametrixDef-IV})
\begin{equation}
\dot{\mathbf{O}}^{(D)}(z)=\dot{\mathbf{O}}^{(D)}(z;x_0,w,\epsilon):=\mathbf{H}_D(z)\epsilon^{\sigma_3/6}\mathbf{A}(\epsilon^{-2/3}\tau_D(z))e^{wz\sigma_3/2}e^{i\pi\sigma_3/4},\quad z\in \mathbb{D}_D.
\end{equation}
Due to \eqref{eq:AiryAppendix-AiryJump-I}--\eqref{eq:AiryAppendix-AiryJump-III}, this matrix satisfies exactly the same jump conditions within $\mathbb{D}_D$ as does $\mathbf{O}(z)$, and 
as a consequence of \eqref{eq:AiryAppendix-ParametrixAsymp}, 
\begin{equation}
\begin{split}
\dot{\mathbf{O}}^{(D)}(z)\dot{\mathbf{O}}^{(\mathrm{out})}(z)^{-1}&=\mathbf{H}_D(z)\epsilon^{\sigma_3/6}\mathbf{A}(\epsilon^{-2/3}\tau_D(z))\mathbf{V}^{-1}[\epsilon^{-2/3}\tau_D(z)]^{-\sigma_3/4}\epsilon^{-\sigma_3/6}\mathbf{H}_D(z)^{-1}\\
&=\mathbb{I}+\begin{bmatrix}\mathcal{O}(\epsilon^{2}) & \mathcal{O}(\epsilon)\\
\mathcal{O}(\epsilon) & \mathcal{O}(\epsilon^{2})\end{bmatrix},\quad z\in \partial \mathbb{D}_D,
\end{split}
\label{eq:g1-Airy-D-estimate}
\end{equation}
because $\tau_D(z)$ is bounded away from zero for $z\in\partial\mathbb{D}_D$.

\subsubsection{Inner (Airy) parametrix near $z=C$}
The point $z=C$ is distinguished as the self-intersection point of the Stokes graph $\Sigma$.
Let $\mathbb{D}_C$ be an $\epsilon$-independent open disk of sufficiently small radius containing $z=C$, and note that the jump contour $\Sigma^{(\mathbf{O})}$ for $\mathbf{O}$ divides $\mathbb{D}_C$ into six complementary regions.  Beginning with the region to the left of the oriented arc $\overrightarrow{CD}$, we label these regions in counterclockwise order as $\mathbb{D}^\mathrm{I}_C$, $\mathbb{D}^\mathrm{II}_C$,
$\mathbb{D}^\mathrm{III}_C$, $\mathbb{D}^\mathrm{IV}_C$, $\mathbb{D}^\mathrm{V}_C$,
and $\mathbb{D}^\mathrm{VI}_C$.  Furthermore, the Stokes graph $\Sigma$ separates $\mathbb{D}_C$ into three disjoint regions:  $\mathbb{D}^0_C$ bounded by $\overrightarrow{CA}$ and $\overrightarrow{CB}$, $\mathbb{D}^+_C$ bounded by $\overrightarrow{CA}$ and $\overrightarrow{CD}$, and finally $\mathbb{D}^-_C$ bounded by $\overrightarrow{CB}$ and $\overrightarrow{CD}$.  Upon restriction to $\mathbb{D}_C$, the function $H$ therefore becomes three different analytic functions, $H_0:\mathbb{D}_C^0\to\mathbb{C}$, $H_+:\mathbb{D}^+_C\to\mathbb{C}$, and $H_-:\mathbb{D}^-_C\to\mathbb{C}$.  The function $H_0$ has an analytic continuation to $\mathbb{D}_C\setminus\overrightarrow{CD}$ (we denote this continuation also by $H_0$), and because the sum of the boundary values taken by $H$ along each arc of the Stokes graph $\Sigma$ is constant, 
it is easy to obtain the relations
\begin{equation}
H_\pm(z)=i\Phi_\pm-\Lambda-H_0(z),\quad z\in\mathbb{D}^\pm_C.
\end{equation}
Note that the boundary value $H_0(C)$ is finite and unambiguous.  We define an $\epsilon$-independent univalent analytic function $\tau_C:\mathbb{D}_C\to\mathbb{C}$ by solving the equation
\begin{equation}
\tau_C^3 = (2H_0(C)-2H_0(z))^2,\quad z\in\mathbb{D}_C.
\end{equation}
The right-hand side extends to $\overrightarrow{CD}$ as an analytic function in all of $\mathbb{D}_C$ that vanishes to precisely third order at $z=C$.  We choose for $\tau_C(z)$ the analytic solution of this equation that is positive real on the common boundary of $\mathbb{D}_C^\mathrm{III}$ and $\mathbb{D}_C^\mathrm{IV}$.

We define a matrix function $\mathbf{H}_C$ as follows:
\begin{equation}
\mathbf{H}_C(z):=\dot{\mathbf{O}}^{(\mathrm{out})}(z)e^{-wz\sigma_3/2}e^{-i\pi\sigma_3/4}
e^{(\Lambda+2H_0(C)-2i\Phi_+)\sigma_3/(2\epsilon)}\mathbf{V}^{-1}\tau_C(z)^{-\sigma_3/4},\quad z\in
\mathbb{D}_C^\mathrm{I}\cup\mathbb{D}_C^\mathrm{II},
\end{equation}
\begin{multline}
\mathbf{H}_C(z):=-\dot{\mathbf{O}}^{(\mathrm{out})}(z)\begin{bmatrix}0 & ie^{-i\Phi_+/\epsilon}e^{-wz}\\ie^{i\Phi_+/\epsilon}e^{wz} & 0\end{bmatrix}\\{}\times e^{-wz\sigma_3/2}e^{-i\pi\sigma_3/4}
e^{(\Lambda+2H_0(C)-2i\Phi_+)\sigma_3/(2\epsilon)}\mathbf{V}^{-1}\tau_C(z)^{-\sigma_3/4},\quad z\in
\mathbb{D}_C^\mathrm{III},
\end{multline}
\begin{multline}
\mathbf{H}_C(z):=-\dot{\mathbf{O}}^{(\mathrm{out})}(z)\begin{bmatrix}0 & ie^{-i\Phi_-/\epsilon}e^{-wz}\\ie^{i\Phi_-/\epsilon}e^{wz} & 0\end{bmatrix}\\{}\times e^{-wz\sigma_3/2}e^{-i\pi\sigma_3/4}
e^{(\Lambda+2H_0(C)-2i\Phi_-)\sigma_3/(2\epsilon)}\mathbf{V}^{-1}\tau_C(z)^{-\sigma_3/4},\quad z\in
\mathbb{D}_C^\mathrm{IV},
\end{multline}
\begin{equation}
\mathbf{H}_C(z):=-\dot{\mathbf{O}}^{(\mathrm{out})}(z)e^{-wz\sigma_3/2}e^{-i\pi\sigma_3/4}
e^{(\Lambda+2H_0(C)-2i\Phi_-)\sigma_3/(2\epsilon)}\mathbf{V}^{-1}\tau_C(z)^{-\sigma_3/4},\quad z\in
\mathbb{D}_C^\mathrm{V}\cup\mathbb{D}_C^\mathrm{VI}.
\end{equation}
With the help of the identity (obtained by considering the constant values of the sum of boundary values of $H$ along each of the three arcs of $\Sigma$)
\begin{equation}
i(\Phi_++\Phi_-)=2H_0(C)+\Lambda,
\label{eq:HconstsIdentity}
\end{equation}
it is straightforward to check that $\mathbf{H}_C$ extends to the whole disk $\mathbb{D}_C$ as an analytic function $\mathbf{H}_C:\mathbb{D}_C\to SL(2,\mathbb{C})$ that (by arguments parallel to those given in the construction of Airy parametrices near $z=A,B,D$) is bounded independent of $\epsilon$ uniformly for $(x_0,w)$ for which $x_0$ lies within a compact subset $K$ of $T$ and $w\in\mathscr{S}_m(x_0,\delta)$, and similar estimates hold for $\mathbf{H}_C(z)^{-1}$.

Now we define the parametrix $\dot{\mathbf{O}}^{(C)}(z)=\dot{\mathbf{O}}^{(C)}(z;x_0,w,\epsilon)$ near $C$ as follows.  Set
\begin{equation}
\dot{\mathbf{O}}^{(C)}(z):=\mathbf{H}_C(z)\epsilon^{\sigma_3/6}\mathbf{A}(\epsilon^{-2/3}\tau_C(z))e^{-(\Lambda+2H_0(C)-2i\Phi_+)\sigma_3/(2\epsilon)}e^{wz\sigma_3/2}e^{i\pi\sigma_3/4},\quad z\in \mathbb{D}_C^\mathrm{I}\cup\mathbb{D}_C^\mathrm{II},
\end{equation}
\begin{multline}
\dot{\mathbf{O}}^{(C)}(z):=\mathbf{H}_C(z)\epsilon^{\sigma_3/6}\mathbf{A}(\epsilon^{-2/3}\tau_C(z))\\
{}\times e^{-(\Lambda+2H_0(C)-2i\Phi_+)\sigma_3/(2\epsilon)}e^{wz\sigma_3/2}e^{i\pi\sigma_3/4}\begin{bmatrix}
0 & ie^{-i\Phi_+/\epsilon}e^{-wz}\\ie^{i\Phi_+/\epsilon}e^{wz} & 0\end{bmatrix},\quad z\in \mathbb{D}_C^\mathrm{III},
\end{multline}
\begin{multline}
\dot{\mathbf{O}}^{(C)}(z):=\mathbf{H}_C(z)\epsilon^{\sigma_3/6}\mathbf{A}(\epsilon^{-2/3}\tau_C(z))\\
{}\times e^{-(\Lambda+2H_0(C)-2i\Phi_-)\sigma_3/(2\epsilon)}e^{wz\sigma_3/2}e^{i\pi\sigma_3/4}\begin{bmatrix}
0 & ie^{-i\Phi_-/\epsilon}e^{-wz}\\ie^{i\Phi_-/\epsilon}e^{wz} & 0\end{bmatrix},\quad z\in \mathbb{D}_C^\mathrm{IV},
\end{multline}
\begin{equation}
\dot{\mathbf{O}}^{(C)}(z):=-\mathbf{H}_C(z)\epsilon^{\sigma_3/6}\mathbf{A}(\epsilon^{-2/3}\tau_C(z))e^{-(\Lambda+2H_0(C)-2i\Phi_-)\sigma_3/(2\epsilon)}e^{wz\sigma_3/2}e^{i\pi\sigma_3/4},\quad z\in \mathbb{D}_C^\mathrm{V}\cup\mathbb{D}_C^\mathrm{VI}.
\end{equation}
With the help of \eqref{eq:HconstsIdentity} and the Airy jump conditions \eqref{eq:AiryAppendix-AiryJump-I}--\eqref{eq:AiryAppendix-AiryJump-III}, one proves easily that $\dot{\mathbf{O}}^{(C)}(z)$ satisfies exactly the same jump conditions within $\mathbb{D}_C$ as does $\mathbf{O}(z)$.  Also, by direct calculation using \eqref{eq:AiryAppendix-ParametrixAsymp} one sees that
\begin{equation}
\begin{split}
\dot{\mathbf{O}}^{(C)}(z)\dot{\mathbf{O}}^{(\mathrm{out})}(z)^{-1}&=\mathbf{H}_C(z)
\epsilon^{\sigma_3/6}\mathbf{A}(\epsilon^{-2/3}\tau_C(z))\mathbf{V}^{-1}[\epsilon^{-2/3}\tau_C(z)]^{-\sigma_3/4}\epsilon^{-\sigma_3/6}\mathbf{H}_C(z)^{-1}\\
&=\mathbb{I}+\begin{bmatrix}\mathcal{O}(\epsilon^2) & \mathcal{O}(\epsilon)\\\mathcal{O}(\epsilon) & \mathcal{O}(\epsilon^2)\end{bmatrix},\quad z\in \partial\mathbb{D}_C
\end{split}
\label{eq:g1-Airy-C-estimate}
\end{equation}
because $\tau_C$ is bounded away from zero for $z\in\partial\mathbb{D}_C$.

\subsubsection{Definition of the global parametrix}
\label{sec:g1-globalpar-def}
The global parametrix $\dot{\mathbf{O}}(z)$ is a model for the unknown matrix function $\mathbf{O}(z)$ that is defined throughout the complex plane as follows:
\begin{equation}
\dot{\mathbf{O}}(z)=\dot{\mathbf{O}}(z;x_0,w,\epsilon):=\begin{cases}
\dot{\mathbf{O}}^{(A)}(z),&\quad z\in\mathbb{D}_A,\\
\dot{\mathbf{O}}^{(B)}(z),&\quad z\in\mathbb{D}_B,\\
\dot{\mathbf{O}}^{(C)}(z),&\quad z\in\mathbb{D}_C,\\
\dot{\mathbf{O}}^{(D)}(z),&\quad z\in\mathbb{D}_D,\\
\dot{\mathbf{O}}^{(\mathrm{out})}(z),&\quad z\in\mathbb{C}\setminus \overline{\mathbb{D}_A\cup\mathbb{D}_B\cup\mathbb{D}_C\cup\mathbb{D}_D}.
\end{cases}
\end{equation}

\subsection{Error analysis}
\label{sec:g1-error-analysis}
The accuracy of the explicit global parametrix as a model for the unknown matrix $\mathbf{O}(z)$ can be gauged by consideration of the \emph{error}, namely the matrix $\mathbf{E}(z)$ defined by the formula
\begin{equation}
\mathbf{E}(z)=\mathbf{E}(z;x_0,w,\epsilon):=\mathbf{O}(z)\dot{\mathbf{O}}(z)^{-1}
\end{equation}
for all $z$ for which both factors on the right-hand side are unambiguously defined and analytic.  Thus, $\mathbf{E}(z)$ is well-defined and analytic for $z\in\Sigma^{(\mathbf{O})}\cup\partial\mathbb{D}_A\cup\partial\mathbb{D}_B\cup\partial\mathbb{D}_C\cup\partial\mathbb{D}_D$.
However, it is easy to check that, since $\mathbf{O}(z)$ and $\dot{\mathbf{O}}(z)$ satisfy the
same jump conditions on the three arcs of the Stokes graph $\Sigma$ as well as on the remaining arcs of $\Sigma^{(\mathbf{O})}$ that lie within the disks $\mathbb{D}_A$, $\mathbb{D}_B$, $\mathbb{D}_C$, and $\mathbb{D}_D$, the matrix $\mathbf{E}(z)$ can be analytically continued to these contour arcs, and hence $\mathbf{E}(z)$ is analytic for $z\in\Sigma^{(\mathbf{E})}$, where 
$\Sigma^{(\mathbf{E})}$ is the arc-wise oriented contour shown in Figure~\ref{fig:E-jumps-g1}.
\begin{figure}[h]
\begin{center}
\includegraphics{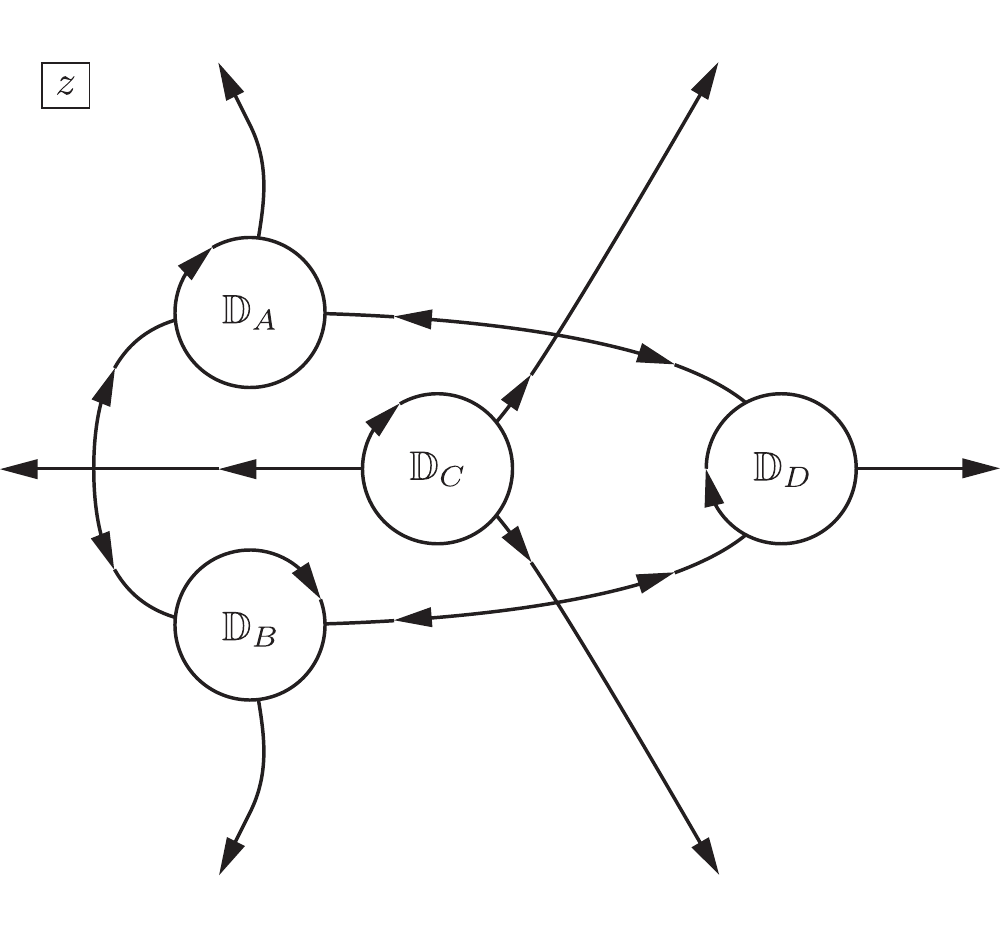}
\end{center}
\caption{\emph{The jump contour $\Sigma^{(\mathbf{E})}$ of the error matrix $\mathbf{E}(z)$.  All arcs of the four disk boundaries are taken to be oriented in the negative (clockwise) direction.}}
\label{fig:E-jumps-g1}
\end{figure}
It is important that all arcs of $\Sigma^{(\mathbf{E})}$ are sufficiently smooth (of class $C^1$), that the six unbounded arcs of $\Sigma^{(\mathbf{E})}$ are chosen to coincide for large enough $|z|$ with the rays $\arg(z)\in \pi\mathbb{Z}/3$, and that at all self-intersection points no intersecting arcs meet tangentially.

Although $\mathbf{E}(z)$ is unknown, it can be characterized completely in terms of known quantities.  First note that since by definition we have $\mathbf{O}(z)\to\mathbb{I}$ as $z\to\infty$
and for sufficiently large $|z|$ we have $\dot{\mathbf{O}}(z)=\dot{\mathbf{O}}^{(\mathrm{out})}(z)$
with the latter being a matrix equal to $\mathbb{I}$ for $z=\infty$, from the definition of $\mathbf{E}(z)$ we deduce that
\begin{equation}
\lim_{z\to\infty}\mathbf{E}(z)=\mathbb{I}
\end{equation}
where (as it will be seen) the limit may be taken in any direction, including along either side of the unbounded arcs of $\Sigma^{(\mathbf{E})}$.  Also, the jump matrix $\mathbf{V}^{(\mathbf{E})}(z):=
\mathbf{E}_-(z)^{-1}\mathbf{E}_+(z)$ can be calculated explicitly for $z$ lying in each arc of $\Sigma^{(\mathbf{E})}$.  Indeed, if $*$ stands for any of the symbols $A,B,C,D$, along each arc
of the disk boundary $\partial\mathbb{D}_*$ we have $\mathbf{O}_+(z)=\mathbf{O}_-(z)$, so
\begin{equation}
\mathbf{V}^{(\mathbf{E})}(z)=[\mathbf{O}_-(z)\dot{\mathbf{O}}_-(z)^{-1}]^{-1}\mathbf{O}_+(z)\dot{\mathbf{O}}_+(z)^{-1}=\dot{\mathbf{O}}^{(*)}(z)\dot{\mathbf{O}}^{(\mathrm{out})}(z)^{-1},\quad z\in\partial\mathbb{D}_*\subset\Sigma^{(\mathbf{E})},
\label{eq:g1-VE-circles}
\end{equation}
while on all other arcs of $\Sigma^{(\mathbf{E})}$ we have $\dot{\mathbf{O}}(z)=\dot{\mathbf{O}}^{(\mathrm{out})}(z)$ having no jump discontinuity while $\mathbf{O}_+(z)=\mathbf{O}_-(z)\mathbf{V}^{(\mathbf{O})}(z)$ with $\mathbf{V}^{(\mathbf{O})}(z)$ being explicitly given by \eqref{eq:g1-Ojump-1}--\eqref{eq:g1-Ojump-7}, so
\begin{equation}
\begin{split}
\mathbf{V}^{(\mathbf{E})}(z)&=[\mathbf{O}_-(z)\dot{\mathbf{O}}_-(z)^{-1}]^{-1}
\mathbf{O}_+(z)\dot{\mathbf{O}}_+(z)^{-1}\\ &{}=\dot{\mathbf{O}}^{(\mathrm{out})}(z)\mathbf{V}^{(\mathbf{O})}(z)\dot{\mathbf{O}}^{(\mathrm{out})}(z)^{-1},\quad z\in\Sigma^{(\mathbf{E})}\setminus (\partial\mathbb{D}_A\cup\partial\mathbb{D}_B\cup\partial\mathbb{D}_C\cup\partial\mathbb{D}_D).
\end{split}
\label{eq:g1-VE-not-circles}
\end{equation}
We may therefore characterize $\mathbf{E}(z)$ as the solution of a matrix Riemann-Hilbert problem on the contour $\Sigma^{(\mathbf{E})}$ with identity normalization at $z=\infty$ and 
given, explicitly known jump matrices.

Not only is the jump matrix $\mathbf{V}^{(\mathbf{E})}:\Sigma^{(\mathbf{E})}\setminus\{\text{self-intersection points}\}\to SL(2,\mathbb{C})$ known explicitly, but also it is easily estimated in terms of $\epsilon>0$ under certain conditions.  Let $K\subset T$ be compact, and suppose that $x_0\in K$.  Then in particular the points $\{A,B,C,D\}$ are bounded away from each other, and hence the four disks $\mathbb{D}_{A,B,C,D}$ can be given a common small radius $\delta_1>0$ depending only on $K$ so that they are pairwise disjoint.  Invoking continuity of $H(z)$ with respect to the parameter $x_0$ then shows that on all of the arcs of $\Sigma^{(\mathbf{E})}$, except for the four circles $\partial\mathbb{D}_{A,B,C,D}$, we have $|\mathbf{V}^{(\mathbf{O})}(z)-\mathbb{I}|\le e^{-C/\epsilon}$ for all $\epsilon>0$, where $C>0$ depends only on $K$.  Since $H(z)\sim \tfrac{1}{2}z^3$ as $z\to\infty$, similar considerations show that on each of the six unbounded arcs of $\Sigma^{(\mathbf{E})}$ a pointwise estimate of the form $|\mathbf{V}^{(\mathbf{O})}(z)-\mathbb{I}|\le e^{-\delta_2|z|^3/\epsilon}$ holds for some $\delta_2>0$ depending only on $K$.
Now we invoke the assumption that, for some $\delta>0$ depending only on $K$, $w\in\mathscr{S}_m(x_0,\delta)$.  It then follows from Proposition~\ref{prop:g1-dot-Oout-bound} and Proposition~\ref{prop:g1-dot-Oout-det} and the formula \eqref{eq:g1-VE-not-circles} that the above estimates for $\mathbf{V}^{(\mathbf{O})}(z)-\mathbb{I}$ carry over to $\mathbf{V}^{(\mathbf{E})}(z)-\mathbb{I}$ up to constants depending only on $K$.  It then remains to estimate $\mathbf{V}^{(\mathbf{E})}(z)-\mathbb{I}$ for $z$ in the disk boundaries $\partial\mathbb{D}_{A,B,C,D}$.
However, combining the formula \eqref{eq:g1-VE-circles} with \eqref{eq:g1-Airy-A-estimate},
\eqref{eq:g1-Airy-B-estimate}, \eqref{eq:g1-Airy-D-estimate}, and \eqref{eq:g1-Airy-C-estimate},
one sees that uniformly for $x_0\in K\subset T$, $w\in \mathscr{S}_m(x_0,\delta)$, and $\epsilon>0$, 
\begin{equation}
\sup_{z\in\partial\mathbb{D}_{A,B,C,D}}|\mathbf{V}^{(\mathbf{E})}(z)-\mathbb{I}|=\mathcal{O}(\epsilon).
\end{equation}
These estimates clearly imply that 
\begin{equation}
\|\mathbf{V}^{(\mathbf{E})}-\mathbb{I}\|_{L^\infty(\Sigma^{(\mathbf{E})})}=\mathcal{O}(\epsilon),
\label{eq:g1-VE-Linfty}
\end{equation}
with the dominant contribution coming from the off-diagonal matrix elements for $z$ on the four disk boundaries.
Moreover,  if $\epsilon\le 1$, say,
\begin{equation}
\begin{split}
\int_{\Sigma^{(\mathbf{E})}} |\mathbf{V}^{(\mathbf{E})}(z)-\mathbb{I}|\,|z|^2\,|dz| &\lesssim
 \|\mathbf{V}^{(\mathbf{E})}-\mathbb{I}\|_{L^\infty(\Sigma^{(\mathbf{E})})} +\int_{\Sigma^{(\mathbf{E})}\cap\{|z|>R\}}|\mathbf{V}^{(\mathbf{E})}(z)-\mathbb{I}|\,|z|^2\,|dz|\\
 &\lesssim  \|\mathbf{V}^{(\mathbf{E})}-\mathbb{I}\|_{L^\infty(\Sigma^{(\mathbf{E})})} +
 \int_{\Sigma^{(\mathbf{E})}\cap\{|z|>R\}} e^{-\delta_2|z|^3/\epsilon}|z|^2\,|dz|\\
 &\le \|\mathbf{V}^{(\mathbf{E})}-\mathbb{I}\|_{L^\infty(\Sigma^{(\mathbf{E})})} +
 e^{-\delta_2 R^3/(2\epsilon)}\int_{\Sigma^{(\mathbf{E})}\cap\{|z|>R\}}
 e^{-\delta_2|z|^3/2}|z|^2\,|dz|\\
 &=\mathcal{O}(\epsilon).
 \end{split}
 \label{eq:g1-VE-weighted-L1}
\end{equation}
The order estimates \eqref{eq:g1-VE-Linfty} and \eqref{eq:g1-VE-weighted-L1} hold uniformly
on the set $0<\epsilon<1$, $x_0\in K\subset T$, and $w\in \mathscr{S}_m(x_0,\delta)$ for some small $\delta>0$.

Now we wish to make an important point regarding the above estimates and the contour $\Sigma^{(\mathbf{E})}$, namely that for each point $x_0'\in K$ there exists a corresponding $\eta(x_0')>0$ such that, whenever $x_0$ satisfies $|x_0-x_0'|<\eta(x_0')$, the contour $\Sigma^{(\mathbf{E})}$ can be taken to be exactly the same, and exactly the same estimates \eqref{eq:g1-VE-Linfty}--\eqref{eq:g1-VE-weighted-L1} will hold for $\mathbf{V}^{(\mathbf{E})}-\mathbb{I}$ (even with the same implicit constants).  This is a consequence of (i) the fact that the open disks $\mathbb{D}_{A,B,C,D}$ simply have to be independent of $\epsilon$ and have to contain the corresponding points $\{A(x_0), B(x_0),C(x_0),D(x_0)\}$ which vary continuously with $x_0$, and (ii) the fact that the function $\mathbf{H}(z)=\mathbf{H}(z;x_0)$ depends continuously on $x_0$, while the related inequalities that hold on the arcs of $\Sigma^{(\mathbf{E})}\setminus(\partial\mathbb{D}_A\cup\partial\mathbb{D}_B\cup\partial\mathbb{D}_C\cup\partial\mathbb{D}_D)$ are strict.  An additional technical point is that the arcs of $\Sigma^{(\mathbf{O})}$ lying inside of the four disks have been chosen to depend on $x_0$ in a precise manner (agreeing with steepest descent curves of $\mathrm{Re}(2H(z;x_0)+\Lambda(x_0))$) and so for these moving arcs to join onto fixed arcs outside of the disks a small ``jog'' is required in the arcs of $\Sigma^{(\mathbf{O})}$ that cross the boundaries of the disks, giving rise to an additional factor in $\mathbf{V}^{(\mathbf{E})}$ where $\Sigma^{(\mathbf{O})}$ overlaps with the disk boundaries; however the same exponential estimate holds for the difference of this factor from the identity as holds for $\mathbf{V}^{(\mathbf{O})}-\mathbb{I}$ on the arcs of  $\Sigma^{(\mathbf{O})}$ exterior to the disks.  Now $K\subset T$ is compact, and hence the open covering of $K$ consisting of the union of disks $|x_0-x_0'|<\eta(x_0')$ over $x_0'\in K$ has a finite sub-covering.
This implies that, as $x_0$ varies over $K$, only a finite number of different contours $\Sigma^{(\mathbf{E})}$ are required in order that the estimates \eqref{eq:g1-VE-Linfty}--\eqref{eq:g1-VE-weighted-L1} hold for the corresponding jump matrix $\mathbf{V}^{(\mathbf{E})}$.

Therefore, as $x_0$ varies over the compact set $K\subset T$, and assuming that $0<\epsilon<1$ and that $w\in \mathscr{S}_m(x_0,\delta)$ for some $\delta>0$, the error matrix $\mathbf{E}(z)$ satisfies the conditions of a Riemann-Hilbert problem of small-norm type as described in Appendix~\ref{small-norm-app}, and moreover, this small-norm problem is formulated relative to a finite number of admissible contours $\Sigma^{(\mathbf{E})}$.  The latter fact means that the constants $K_\Sigma'$ and $K_\Sigma''$ in the statement of Proposition~\ref{prop:moments-control} in Appendix~\ref{small-norm-app} can be assumed to depend only on the compact set $K$ and not on the specific point $x_0\in K$.  It then follows from this proposition
and the estimates \eqref{eq:g1-VE-Linfty}--\eqref{eq:g1-VE-weighted-L1} that the error matrix $\mathbf{E}(z)$ is uniquely characterized by its jump contour $\Sigma^{(\mathbf{E})}$, jump matrix $\mathbf{V}^{(\mathbf{E})}$, and identity normalization condition at $z=\infty$, and the 
the first two moments $\mathbf{E}_1$ and $\mathbf{E}_2$ defined by the expansion
\begin{equation}
\mathbf{E}(z)=\mathbb{I}+\frac{\mathbf{E}_1}{z} +\frac{\mathbf{E}_2}{z^2}+o\left(\frac{1}{z^2}\right),\quad z\to\infty
\label{eq:g1-E-expansion}
\end{equation}
(valid for $z$ along any ray distinct from the six unbounded arcs of $\Sigma^{(\mathbf{E})}$) both exist and satisfy
\begin{equation}
|\mathbf{E}_1|=\mathcal{O}(\epsilon)\quad\text{and}\quad|\mathbf{E}_2|=\mathcal{O}(\epsilon)
\end{equation}
uniformly on the set $0<\epsilon<1$ and $x_0\in K\subset T$ with $w\in \mathscr{S}_m(x_0,\delta)$ for some $\delta>0$.  In fact, it can easily be shown that $\mathbf{E}(z)$ is a far more regular object than Proposition~\ref{prop:moments-control} guarantees.  Indeed, $\mathbf{E}(z)$ arises from the original unknown matrix $\mathbf{M}(z)$ by piecewise analytic substitutions, and when $\mathbf{M}(z)$ exists it can be obtained 
explicitly for each $m=\epsilon^{-1}+\tfrac{1}{2}\in\mathbb{Z}$ via iterated Darboux-type transformations \cite{BuckinghamMcritical}.  This means that $\mathbf{E}(z)$ assumes its boundary values in a completely classical sense, and it implies also that the expansion \eqref{eq:g1-E-expansion} is in fact valid as $z\to\infty$ uniformly in any direction whatsoever.

\subsection{Approximate formulae for the rational Painlev\'e-II functions for $x_0\in T$}
\label{sec:g1-approximate-formulae}
From \eqref{eq:Fexpansion}--\eqref{eq:vprime} and \eqref{M-def}--\eqref{M-normalization},
we may begin with the following exact formulae:
\begin{equation}
\mathcal{U}_m((m-\tfrac{1}{2})^{2/3}x)=(m-\tfrac{1}{2})^{2m/3}M_{1,12}(x,(m-\tfrac{1}{2})^{-1}),
\end{equation}
\begin{equation}
\mathcal{V}_m((m-\tfrac{1}{2})^{2/3}x)=(m-\tfrac{1}{2})^{-2(m-1)/3}M_{1,21}(x,(m-\tfrac{1}{2})^{-1}),
\end{equation} 
\begin{equation}
\mathcal{P}_m((m-\tfrac{1}{2})^{2/3}x)=\frac{\mathcal{U}_m'((m-\tfrac{1}{2})^{2/3}x)}{\mathcal{U}_m((m-\tfrac{1}{2})^{2/3}x)}=(m-\tfrac{1}{2})^{1/3}\left[M_{1,22}(x,(m-\tfrac{1}{2})^{-1})-
\frac{M_{2,12}(x,(m-\tfrac{1}{2})^{-1})}{M_{1,12}(x,(m-\tfrac{1}{2})^{-1})}\right],
\end{equation}
and
\begin{equation}
\mathcal{Q}_m((m-\tfrac{1}{2})^{2/3}x)=\frac{\mathcal{V}_m'((m-\tfrac{1}{2})^{2/3}x)}{\mathcal{V}_m((m-\tfrac{1}{2})^{2/3}x)}=(m-\tfrac{1}{2})^{1/3}\left[\frac{M_{2,21}(x,(m-\tfrac{1}{2})^{-1})}{M_{1,21}(x,(m-\tfrac{1}{2})^{-1})}-M_{1,11}(x,(m-\tfrac{1}{2})^{-1})\right],
\end{equation}
where $M_{1,ij}(x,\epsilon)$ and $M_{2,ij}(x,\epsilon)$ are the matrix elements of the moments
$\mathbf{M}_1(x,\epsilon)$ and $\mathbf{M}_2(x,\epsilon)$ defined by the asymptotic expansion
\begin{equation}
\mathbf{M}(z;x,\epsilon)=\mathbb{I} + \mathbf{M}_1(x,\epsilon)z^{-1} +
\mathbf{M}_2(x,\epsilon)z^{-2}+o(z^{-2}),\quad z\to\infty.
\end{equation}
For sufficiently large $|z|$, 
\begin{equation}
\begin{split}
\mathbf{M}(z;x_0+\epsilon w,\epsilon)&=\mathbf{N}(z;x_0+\epsilon w,\epsilon)\\&=
e^{\Lambda(x_0)\sigma_3/(2\epsilon)}\mathbf{O}(z;x_0,w,\epsilon)
e^{-\Lambda(x_0)\sigma_3/(2\epsilon)}e^{G(z;x_0)\sigma_3/\epsilon} \\
&= e^{\Lambda(x_0)\sigma_3/(2\epsilon)}\mathbf{E}(z;x_0,w,\epsilon)\dot{\mathbf{O}}(z;x_0,w,\epsilon)e^{-\Lambda(x_0)\sigma_3/(2\epsilon)}e^{G(z;x_0)\sigma_3/\epsilon}\\
&=e^{\Lambda(x_0)\sigma_3/(2\epsilon)}\mathbf{E}(z;x_0,w,\epsilon)\dot{\mathbf{O}}^{(\mathrm{out})}(z;x_0,w,\epsilon)e^{-\Lambda(x_0)\sigma_3/(2\epsilon)}e^{G(z;x_0)\sigma_3/\epsilon}.
\end{split}
\end{equation}
Recalling the Laurent expansion \eqref{eq:dotOexpansion} of $\dot{\mathbf{O}}^{(\mathrm{out})}(z)$ and the asymptotic expansion \eqref{eq:g1-E-expansion} of $\mathbf{E}(z)$, as well as the definitions \eqref{eq:g1-dotU-dotV-def} and \eqref{eq:g1-dotP-dotQ-def}, we arrive at the exact formulae
\begin{equation}
\mathcal{U}_m((m-\tfrac{1}{2})^{2/3}x)=e^{1/3}(m-\tfrac{1}{2})^{2m/3}e^{m\Lambda(x_0)}
\left[\dot{\mathcal{U}}^0_m(w;x_0)+e^{-\Lambda(x_0)/2-1/3}E_{1,12}\right],
\label{eq:g1-UmExact}
\end{equation}
\begin{equation}
\mathcal{V}_m((m-\tfrac{1}{2})^{2/3}x)=e^{-1/3}(m-\tfrac{1}{2})^{-2(m-1)/3}e^{-m\Lambda(x_0)}
\left[\dot{\mathcal{V}}^0_m(w;x_0)+e^{\Lambda(x_0)/2+1/3}E_{1,21}\right],
\label{eq:g1-VmExact}
\end{equation}
\begin{multline}
\mathcal{P}_m((m-\tfrac{1}{2})^{2/3}x)=\\
(m-\tfrac{1}{2})^{1/3}\left[
\frac{\dot{\mathcal{U}}^0_m(w;x_0)\dot{\mathcal{P}}_m(w;x_0) +(E_{1,22}-E_{1,11})\dot{\mathcal{U}}^0_m(w;x_0)+e^{-\Lambda(x_0)/2-1/3}(E_{1,12}E_{1,22}-E_{2,12})}{\dot{\mathcal{U}}^0_m(w;x_0)+e^{-\Lambda(x_0)/2-1/3}E_{1,12}}\right],
\end{multline}
and 
\begin{multline}
\mathcal{Q}_m((m-\tfrac{1}{2})^{2/3}x)=\\(m-\tfrac{1}{2})^{1/3}
\left[\frac{\dot{\mathcal{V}}^0_m(w;x_0)\dot{\mathcal{Q}}_m(w;x_0)+(E_{1,22}-E_{1,11})\dot{\mathcal{V}}^0_m(w;x_0)+e^{\Lambda(x_0)/2+1/3}(E_{2,21}-E_{1,11}E_{1,21})}{\dot{\mathcal{V}}^0_m(w;x_0)+e^{\Lambda(x_0)/2+1/3}E_{1,21}}\right],
\end{multline}
where on the right-hand side $x_0$ and $w$ are chosen so that $x=x_0+(m-\tfrac{1}{2})^{-1}w$.
Being as
\begin{equation}
e^{\pm m\Lambda(x_0)}=e^{\pm m\Lambda(x-\epsilon w)}=e^{\pm m\Lambda(x)}
e^{\pm(w\partial\Lambda(x_0)+w^*\overline{\partial}\Lambda(x_0))}(1+\mathcal{O}(m^{-1}))
\end{equation}
holds uniformly for $x_0$ in compact subsets of $T$ and for bounded $w$, where $\partial$ and $\overline{\partial}$ denote the partial derivatives (applied to the non-analytic function $\Lambda$) with respect to $x_0$ and $x_0^*$, respectively, it makes sense from \eqref{eq:g1-UmExact} and \eqref{eq:g1-VmExact} to introduce the functions (no longer analytic with respect to $w$ due to the explicit presence of $w^*$)
\begin{equation}
\dot{\mathcal{U}}_m(w;x_0):=\dot{\mathcal{U}}_m^0(w;x_0)e^{-(w\partial\Lambda(x_0)+w^*\overline{\partial}\Lambda(x_0))}\quad\text{and}\quad
\dot{\mathcal{V}}_m(w;x_0):=\dot{\mathcal{V}}_m^0(w;x_0)e^{w\partial\Lambda(x_0)+w^*\overline{\partial}\Lambda(x_0)}.
\label{eq:g1-UVdot-redefine}
\end{equation}
Note also that, according to \eqref{eq:g1-dotPQlogderivs}, the products $\dot{\mathcal{U}}^0_m\dot{\mathcal{P}}_m$ and $\dot{\mathcal{V}}^0_m\dot{\mathcal{Q}}_m$ can be rewritten as $\dot{\mathcal{U}}^{0\prime}_m$ and $\dot{\mathcal{V}}^{0\prime}_m$ respectively, where prime denotes the partial derivative with respect to $w$ ($x_0$ held fixed).  Since $e^{1/3}(m-\tfrac{1}{2})^{2m/3}=m^{2m/3}(1+\mathcal{O}(m^{-1}))$ and $(m-\tfrac{1}{2})^{1/3}=m^{1/3}(1+\mathcal{O}(m^{-1}))$ as $m\to\infty$, and 
since the matrix elements of $\mathbf{E}_1$ and $\mathbf{E}_2$ are $\mathcal{O}(m^{-1})$ as $m\to\infty$ uniformly for $x_0\in K\subset T$ and $w\in\mathscr{S}_m(x_0,\delta)$,
we have proven the following result.
\begin{proposition}
Suppose that $K\subset T$ is compact, and a small number $\delta>0$ is given.  Then
\begin{equation}
m^{-2m/3}e^{-m\Lambda(x)}\mathcal{U}_m((m-\tfrac{1}{2})^{2/3}x)=\dot{\mathcal{U}}_m(w;x_0) + \mathcal{O}(m^{-1}),
\label{eq:g1-Um-cheese}
\end{equation}
\begin{equation}
m^{2(m-1)/3}e^{m\Lambda(x)}\mathcal{V}_m((m-\tfrac{1}{2})^{2/3}x)=\dot{\mathcal{V}}_m(w;x_0) +
\mathcal{O}(m^{-1}),
\end{equation}
\begin{equation}
m^{-1/3}\mathcal{P}_m((m-\tfrac{1}{2})^{2/3}x)=\frac{\dot{\mathcal{U}}^0_m(w;x_0)\dot{\mathcal{P}}_m(w;x_0) + \mathcal{O}(m^{-1})}{\dot{\mathcal{U}}^0_m(w;x_0)+\mathcal{O}(m^{-1})} = 
\frac{\dot{\mathcal{U}}^{0\prime}_m(w;x_0) + \mathcal{O}(m^{-1})}{\dot{\mathcal{U}}^0_m(w;x_0)+\mathcal{O}(m^{-1})},
\end{equation}
and
\begin{equation}
m^{-1/3}\mathcal{Q}_m((m-\tfrac{1}{2})^{2/3}x)=\frac{\dot{\mathcal{V}}^0_m(w;x_0)\dot{\mathcal{Q}}_m(w;x_0) + \mathcal{O}(m^{-1})}{\dot{\mathcal{V}}^0_m(w;x_0)+\mathcal{O}(m^{-1})} =
\frac{\dot{\mathcal{V}}^{0\prime}_m(w;x_0)+\mathcal{O}(m^{-1})}{\dot{\mathcal{V}}^0_m(w;x_0)+\mathcal{O}(m^{-1})},
\end{equation}
where $x=x_0+(m-\tfrac{1}{2})^{-1}w$, all hold in the limit $m\to +\infty$ uniformly for $x_0\in K$
and $w\in\mathscr{S}_m(x_0,\delta)$.  If also $w$ is bounded away from 
$\mathscr{Z}_m[\dot{\mathcal{U}}](x_0)$ or $\mathscr{Z}_m[\dot{\mathcal{V}}](x_0)$, then
\begin{equation}
m^{-1/3}\mathcal{P}_m((m-\tfrac{1}{2})^{2/3}x)=\dot{\mathcal{P}}_m(w;x_0) + \mathcal{O}(m^{-1}),
\end{equation}
or 
\begin{equation}
m^{-1/3}\mathcal{Q}_m((m-\tfrac{1}{2})^{2/3}x)=\dot{\mathcal{Q}}_m(w;x_0)+\mathcal{O}(m^{-1})
\end{equation}
hold, respectively.
\label{prop:g1-cheese-approx}
\end{proposition}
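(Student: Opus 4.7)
The plan is to deduce the stated asymptotics directly from the exact identities \eqref{eq:g1-UmExact}--\eqref{eq:g1-VmExact} and their $\mathcal{P}_m$, $\mathcal{Q}_m$ analogues already displayed in the preceding paragraph, combined with the uniform bounds $|\mathbf{E}_1|,|\mathbf{E}_2|=\mathcal{O}(\epsilon)$ established at the end of \S\ref{sec:g1-error-analysis}. First I would dispose of the mismatch between $e^{\pm m\Lambda(x_0)}$ on the right-hand side of the exact formula and $e^{\pm m\Lambda(x)}$ in the statement. Because $x=x_0+\epsilon w$ with $\epsilon=(m-\tfrac12)^{-1}$ and $w$ bounded, and because $\Lambda$ is smooth (though non-analytic) in the real coordinates on $T$, a first-order Taylor expansion gives
\begin{equation*}
m\Lambda(x_0)=m\Lambda(x)-\bigl(w\,\partial\Lambda(x)+w^*\,\overline{\partial}\Lambda(x)\bigr)+\mathcal{O}(m^{-1}),
\end{equation*}
uniformly for $x_0\in K$ and bounded $w$, since the second real partial derivatives of $\Lambda$ are bounded on the compact $K\subset T$ and the remainder carries a factor $m\epsilon^2=\mathcal{O}(m^{-1})$. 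Exponentiating, the linear term in $w,w^*$ is exactly what is absorbed into the redefined non-analytic amplitude $\dot{\mathcal{U}}_m(w;x_0)$ in \eqref{eq:g1-UVdot-redefine}, and the prefactors are handled via $e^{1/3}(m-\tfrac12)^{2m/3}=m^{2m/3}(1+\mathcal{O}(m^{-1}))$ and $(m-\tfrac12)^{1/3}=m^{1/3}(1+\mathcal{O}(m^{-1}))$. Substituting the bound $|E_{1,ij}|=\mathcal{O}(m^{-1})$ into the $\mathcal{U}_m$ identity then yields \eqref{eq:g1-Um-cheese}; the $\mathcal{V}_m$ formula follows from an identical manipulation with exponents reversed in sign.

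Next, for $\mathcal{P}_m$ and $\mathcal{Q}_m$ I would invoke the identities $\dot{\mathcal{U}}_m^0\dot{\mathcal{P}}_m=\dot{\mathcal{U}}_m^{0\prime}$ and $\dot{\mathcal{V}}_m^0\dot{\mathcal{Q}}_m=\dot{\mathcal{V}}_m^{0\prime}$ from \eqref{eq:g1-dotPQlogderivs} to recast the numerators of the exact ratio formulae in the divided form appearing in the statement, and then insert $|E_{1,ij}|,|E_{2,ij}|=\mathcal{O}(m^{-1})$. Under only $w\in\mathscr{S}_m(x_0,\delta)$, the denominator $\dot{\mathcal{U}}_m^0(w;x_0)+e^{-\Lambda/2-1/3}E_{1,12}$ may still vanish, so only the conditional ratio form is claimed. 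When additionally $w$ is kept at distance at least $\delta$ from the zero lattice $\mathscr{Z}_m[\dot{\mathcal{U}}](x_0)$, the denominator is bounded below in absolute value (uniformly in $m$ and in $x_0\in K$, by the compactness/continuity argument described below), and elementary algebra gives $m^{-1/3}\mathcal{P}_m=\dot{\mathcal{P}}_m+\mathcal{O}(m^{-1})$; the $\mathcal{Q}_m$ estimate is entirely analogous.

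The principal technical obstacle is the uniformity of all these estimates over the $m$-dependent set $w\in\mathscr{S}_m(x_0,\delta)$, but this has effectively already been handled: the error bounds on $\mathbf{E}_1,\mathbf{E}_2$ were proved precisely on this set, and Proposition~\ref{prop:g1-dot-Oout-bound} furnishes a uniform upper bound for $\dot{\mathbf{O}}^{(\mathrm{out})}$ (and hence, by extracting matrix coefficients from \eqref{eq:dotOexpansion} via the Maximum Modulus Principle applied in $1/z$, for $\dot{\mathcal{U}}_m^0$ and $\dot{\mathcal{V}}_m^0$). The remaining ingredient needed for the final two formulae is a uniform lower bound for $|\dot{\mathcal{U}}_m^0(w;x_0)|$ (respectively $|\dot{\mathcal{V}}_m^0(w;x_0)|$) when $w$ is at distance at least $\delta$ from $\mathscr{Z}_m[\dot{\mathcal{U}}](x_0)$ (respectively $\mathscr{Z}_m[\dot{\mathcal{V}}](x_0)$); this is obtained by the same generalized-parameter torus argument used in the proof of Proposition~\ref{prop:g1-dot-Oout-bound}, applied to the reciprocal meromorphic function (whose pole lattice is now $\mathscr{Z}_m$ and whose construction as a ratio of theta functions via Krichever's formula \eqref{eq:g1-Krichever} is entirely parallel). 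With these ingredients assembled, the proof reduces to bookkeeping in direct substitution.
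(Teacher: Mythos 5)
Your proposal is correct and follows essentially the same route as the paper: the result is obtained by substituting the uniform bounds $|\mathbf{E}_1|,|\mathbf{E}_2|=\mathcal{O}(m^{-1})$ into the exact formulae \eqref{eq:g1-UmExact}--\eqref{eq:g1-VmExact} and their $\mathcal{P}_m,\mathcal{Q}_m$ analogues, Taylor-expanding $e^{\pm m\Lambda(x_0)}$ about $x$ so that the linear terms in $w,w^*$ are absorbed into the redefinition \eqref{eq:g1-UVdot-redefine}, and invoking \eqref{eq:g1-dotPQlogderivs} for the alternative numerators. Your explicit treatment of the uniform lower bound on $|\dot{\mathcal{U}}_m^0|$ away from $\mathscr{Z}_m[\dot{\mathcal{U}}](x_0)$ (via the same torus-compactness argument as Proposition~\ref{prop:g1-dot-Oout-bound}) is a detail the paper leaves implicit, and it is the right way to justify the final two formulae.
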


Now we discuss how the above asymptotic formulae can be extended to any bounded set in the $w$-plane, that is, how the holes in the ``Swiss cheese'' $\mathscr{S}_m(x_0,\delta)$, which we recall are centered at the points of the pole lattice $\mathscr{P}_m(x_0)$ of the approximating functions $\dot{\mathcal{U}}^0_m(\cdot;x_0)$ and $\dot{\mathcal{V}}^0_m(\cdot;x_0)$, can be filled in.  The main idea here is to make use of the B\"acklund transformations \eqref{backlund-positive}--\eqref{backlund-negative} to exchange $m$ for $m\pm 1$.  For example, according to
\eqref{backlund-positive}, if $n:=m+1$, then
\begin{equation}
\begin{split}
\mathcal{U}_m((m-\tfrac{1}{2})^{2/3}(x_0+(m-\tfrac{1}{2})^{-1}w))&=
\mathcal{V}_n((m-\tfrac{1}{2})^{2/3}(x_0+(m-\tfrac{1}{2})^{-1}w))^{-1}\\ &=
\mathcal{V}_n((n-\tfrac{1}{2})^{2/3}(x_0+(n-\tfrac{1}{2})^{-1}(w-\tfrac{2}{3}x_0 +\mathcal{O}(m^{-1}))))^{-1}.
\end{split}
\label{eq:g1-Uflip}
\end{equation}
Now Proposition~\ref{prop:g1-cheese-approx} provides a valid approximation for the right-hand side, even if $w$ is near the lattice $\mathscr{P}_m(x_0)$, as long as (neglecting the unimportant $\mathcal{O}(m^{-1})$ term) $w-\tfrac{2}{3}x_0\in\mathscr{S}_n(x_0,\delta)$ for some $\delta>0$.  So, suppose that $w\in\mathscr{P}_m(x_0)$ so that $w\not\in\mathscr{S}_m(x_0,\delta)$ as $w$ is a pole of $\dot{\mathcal{U}}_m(\cdot;x_0)$. We wish to show that, for some $\delta>0$ sufficiently small, 
$|w-(\tfrac{2}{3}x_0+\mathscr{P}_{m+1}(x_0))|>\delta$ so that $w\in\mathscr{S}_{m+1}(x_0,\delta)$ and hence Proposition~\ref{prop:g1-cheese-approx} can be applied to the right-hand side of \eqref{eq:g1-Uflip} even as it cannot be applied to the left-hand side.  It suffices to show that the lattices $\mathscr{P}_m(x_0)$ and $\tfrac{2}{3}x_0+\mathscr{P}_{m+1}(x_0)$ are disjoint.
\begin{proposition}
For all $x_0\in T$ and all $m\in \mathbb{Z}_+$, $\frac{2}{3}x_0+\mathscr{P}_{m+1}(x_0)=\mathscr{Z}_m[\dot{\mathcal{U}}](x_0)$ and $\tfrac{2}{3}x_0+\mathscr{Z}_{m+1}[\dot{\mathcal{V}}](x_0)=\mathscr{P}_m(x_0)$.
\label{prop:g1-lattice-equivalence}
\end{proposition}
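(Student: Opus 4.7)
Both identities in the proposition unify into a single modular congruence. Substituting $w = w' + \tfrac{2}{3}x_0$ in the defining condition \eqref{eq:g1-Lambda-p} for $\mathscr{P}_{m+1}(x_0)$ and comparing with \eqref{eq:g1-Lambda-z} for $\mathscr{Z}_m[\dot{\mathcal{U}}](x_0)$ reveals that the first identity is equivalent to the $m$-independent congruence
\begin{equation*}
2\mathscr{A}(\infty^+) + \tfrac{1}{3}Ux_0 \equiv \kappa_1 U\pmod{\mathbb{L}(x_0)},
\end{equation*}
where I have used the cancellation $(m+\tfrac{1}{2})\kappa_1 U - (m-\tfrac{1}{2})\kappa_1 U = \kappa_1 U$ and the identity $\mathscr{A}(\infty^-)=-\mathscr{A}(\infty^+)$. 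The parallel substitution recasts the second identity as $\mathscr{A}(Q^+)-\mathscr{A}(Q^-)+\tfrac{1}{3}Ux_0\equiv \kappa_1 U\pmod{\mathbb{L}(x_0)}$, and since the meromorphic function $f^-/f^+$ on $\Gamma$ has principal divisor $\infty^+ + Q^- - \infty^- - Q^+$, Abel's theorem (already extracted in \eqref{eq:g1-AbelsTheorem-application}) yields $\mathscr{A}(Q^+)-\mathscr{A}(Q^-)\equiv 2\mathscr{A}(\infty^+)\pmod{\mathbb{L}}$ so that the two congruences are equivalent. Hence the proposition reduces to the single displayed congruence.

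To prove this congruence, I will apply the Riemann bilinear relation to the pair of differentials $(\omega,\eta)$ on $\Gamma$, where $\omega$ is the $\mathfrak{a}$-normalized holomorphic differential of \eqref{eq:g1-omega-norm} and
\begin{equation*}
\eta:=H'(z)\,dz=\tfrac{3}{2}\mathscr{R}\,dz
\end{equation*}
is a meromorphic differential whose only singularities are fourth-order poles at $\infty^\pm$ with residues $\pm 1$, as computed directly from the Laurent expansion \eqref{eq:g1-R-expansion}. Letting $X:=\oint_{\mathfrak{a}}\eta$, $Y:=\oint_{\mathfrak{b}}\eta$, and $F_\omega(P):=\int_{(D,0)}^P\omega$, the bilinear identity reads
\begin{equation*}
2\pi i\,Y-\mathcal{H}\,X=2\pi i\bigl(\mathrm{Res}_{\infty^+}(F_\omega\eta)+\mathrm{Res}_{\infty^-}(F_\omega\eta)\bigr).
\end{equation*}
The key mechanism is that the constant term $F_\omega(\infty^\pm)=\pm\mathscr{A}(\infty^+)$ multiplied by the residue $\pm 1$ of $\eta$ at $\infty^\pm$ contributes $\mathscr{A}(\infty^+)$ to each residue, while the subleading term $\tfrac{1}{3}x_0$ in the Laurent expansion $R(z)=z^2+\tfrac{1}{3}x_0-\tfrac{2}{3}z^{-1}+\cdots$ combines with the linear growth of $F_\omega$ at infinity to contribute $\tfrac{1}{6}Ux_0$ (using $U=2c_1$) to each residue. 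Summing, $Y - \tfrac{\mathcal{H}}{2\pi i}X = 2\mathscr{A}(\infty^+) + \tfrac{1}{3}Ux_0$. The argument is then completed by showing $Y - \tfrac{\mathcal{H}}{2\pi i}X \equiv \kappa_1 U \pmod{\mathbb{L}}$, which is done by deforming $\mathfrak{a}$ and $\mathfrak{b}$ to sums of the auxiliary cycles $\alpha_A,\alpha_B,\alpha_D$ encircling the three arcs of $\Sigma$, expressing the resulting periods of $\eta$ in terms of the boundary values of $H$ at the branch points (which via the definitions of $\Phi_\pm$ and the Boutroux conditions \eqref{eq:g1-Boutroux} collapse to $\tfrac{1}{2}i\Phi_\pm$ modulo $\Lambda$), and similarly expressing $\oint_\mathfrak{a}\omega=2\pi i$ and $\oint_\mathfrak{b}\omega=\mathcal{H}$ in terms of the integrals $I_A,I_B$ appearing in the formula \eqref{eq:g1-k1} for $\kappa_1$. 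A final algebraic comparison, with any discrepancies absorbed by the $2\pi i$-jump of $G$ across $L$ at $D$, reproduces $\kappa_1 U$ modulo $\mathbb{L}$.

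The principal obstacle lies in the orientation and sign bookkeeping in this final step: the decomposition of $(\mathfrak{a},\mathfrak{b})$ into the auxiliary cycles $\alpha_X$ (subject to the single relation $\alpha_A+\alpha_B+\alpha_D=0$), the branches of $H$ encountered along each arc of $\Sigma$, and the convention used for the homology basis in Figure~\ref{fig:HomologyCycles} all enter into the precise form of the linear identity. A cleaner alternative avoids this bookkeeping entirely: the difference $D(x_0):=2\mathscr{A}(\infty^+)+\tfrac{1}{3}Ux_0-\kappa_1 U$ depends continuously on $x_0\in T$ and takes values in the discrete, continuously-varying lattice $\mathbb{L}(x_0)$; since $T$ is simply connected, it suffices to verify $D(0)\in\mathbb{L}(0)$ directly using the closed-form evaluations $\mathcal{H}_0=-2\pi\sqrt{3}$ and $\mathscr{A}(\infty^+)=\tfrac{1}{3}\pi\sqrt{3}$ from \eqref{eq:g1-x0zero-Adiff}, and then show by differentiation that the integer coordinates of $D(x_0)$ relative to the basis $\{2\pi i,\mathcal{H}(x_0)\}$ of $\mathbb{L}(x_0)$ are locally constant, using variational formulas for $\mathscr{A}(\infty^+)$, $U$, and $\kappa_1$ analogous to \eqref{eq:g1-dbarPi}. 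Either approach establishes the key congruence and hence the proposition.
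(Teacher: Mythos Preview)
Your reduction to the single congruence $2\mathscr{A}(\infty^+)+\tfrac{1}{3}Ux_0\equiv\kappa_1 U\pmod{\mathbb{L}}$ via Abel's theorem, and the use of the Riemann bilinear relation for the pair $(\omega,\eta)$ with $\eta=\tfrac{3}{2}\mathscr{R}\,dz$, are exactly the paper's approach: the paper calls your $\eta$ by the name $\Omega_H$, integrates $\mathscr{A}\Omega_H$ around $\partial\widetilde\Gamma$, and obtains the same residue sum $\tfrac{2}{3}x_0c_1+2\mathscr{A}(\infty^+)$ that you do.

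The gap is that you stop precisely where the real work begins. Your phrase ``a final algebraic comparison\ldots reproduces $\kappa_1 U$ modulo $\mathbb{L}$'' is the entire second half of the paper's proof, and you do not carry it out. The paper does this explicitly: it writes both $\oint_{\mathfrak a}\Omega_H$ and $\oint_{\mathfrak b}\Omega_H$ as sums of integrals $\tfrac{3}{2}\int R(z)\,dz$ over the three oriented arcs $D\!\to\!A$, $A\!\to\!B$, $B\!\to\!D$, then rewrites $-2c_1\kappa_1$ (via the formulas for $\Phi_\pm$ as integrals of $H'$ along the arcs, together with the residue identity $\tfrac{3}{2}\oint_\Sigma R\,dz=-2\pi i$) in terms of the same three integrals, and compares term by term to obtain $\tfrac{2}{3}x_0c_1+2\mathscr{A}(\infty^+)-2c_1\kappa_1=\mathcal{H}$ exactly. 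There is no escape from this bookkeeping; your decomposition into auxiliary cycles $\alpha_A,\alpha_B,\alpha_D$ is just a relabeling of the same computation.

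Your proposed ``cleaner alternative'' is not actually cleaner. Writing $D(x_0)=2\pi i\,\alpha(x_0)+\mathcal{H}(x_0)\beta(x_0)$ with real $\alpha,\beta$ and showing $\alpha'=\beta'=0$ by differentiation would require explicit variational formulas for $\partial_{x_0}\mathscr{A}(\infty^+)$, $\partial_{x_0}c_1$, $\partial_{x_0}\kappa_1$, and $\partial_{x_0}\mathcal{H}$ (and their $\overline\partial$ counterparts, since none of these is analytic in $x_0$), and then a cancellation among them---work at least as delicate as the direct computation. You also have not verified $D(0)\in\mathbb{L}(0)$: the values in \eqref{eq:g1-x0zero-Adiff} give $2\mathscr{A}(\infty^+)$ and $\mathcal{H}$, but $\kappa_1(0)$ still has to be evaluated.
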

\begin{proof}
The lattices $\tfrac{2}{3}x_0+\mathscr{P}_{m+1}(x_0)$ and $\mathscr{Z}_m[\dot{\mathcal{U}}](x_0)$ agree if and only if the identity $\tfrac{1}{3}x_0U+\mathscr{A}(\infty^+)-\tfrac{1}{2}U\kappa_1=\mathscr{A}(\infty^-)+\tfrac{1}{2}U\kappa_1$ holds modulo integer multiples of $2\pi i$ and $\mathcal{H}$.  Similarly, the lattices $\tfrac{2}{3}x_0+\mathscr{Z}_m[\dot{\mathcal{V}}](x_0)$ and $\mathscr{P}_{m+1}(x_0)$ agree if and only if $\tfrac{1}{3}x_0U+\mathscr{A}(Q^+)-\tfrac{1}{2}U\kappa_1=\mathscr{A}(Q^-) +\tfrac{1}{2}U\kappa_1$ holds under the same modular condition.  Using \eqref{eq:g1-AbelsTheorem-application} and applying Abel's Theorem to the divisor of the meromorphic function $f^-/f^+$ shows that these two equations are equivalent, so it is enough to prove that $\tfrac{2}{3}x_0+\mathscr{P}_{m+1}(x_0)=\mathscr{Z}_m[\dot{\mathcal{U}}](x_0)$.  Since regardless of how the path of integration is chosen we have $\mathscr{A}(\infty^+)+\mathscr{A}(\infty^-)=0$ modulo integer multiples of $2\pi i$ and $\mathcal{H}$, using
\eqref{eq:Uc1-identity} shows that it is sufficient to establish the identity
\begin{equation}
\frac{2}{3}x_0c_1+2\mathscr{A}(\infty^+)-2c_1\kappa_1=2\pi in_1+\mathcal{H}n_2 \quad \text{for all } x_0\in T,
\quad n_j\in\mathbb{Z}.
\label{eq:g1-lattice-equivalence-modulo}
\end{equation}

To prove this, consider the meromorphic differential $\Omega_H:=\tfrac{3}{2}\mathscr{R}\,dz$ which has poles of order $4$ at each of the two points $\infty^\pm\in\Gamma$ and no other singularities.  The function $H(z)$ is the restriction of an integral of $\Omega_H$ to the sheet $\Gamma^+$ with an additional cut along the contour $L$ connecting $z=D$ to $z=\infty$.  Using $\mathscr{R}^2=z^4+\tfrac{2}{3}x_0z^2-\tfrac{4}{3}z+\Pi$ we see that
\begin{equation}
\Omega_H=\mp\left(\frac{3}{2}\frac{1}{v^{4}} +\frac{1}{2}x_0\frac{1}{v^2}-\frac{1}{v}+\mathcal{O}(1)\right)\,dv,\quad P\to\infty^\pm,
\label{eq:g1-OmegaH-expansion}
\end{equation}
where $v=1/z$ is a local holomorphic coordinate near each of the points $P=\infty^\pm$.  Therefore, $\Omega_H$ has residues $\pm 1$ at $P=\infty^\pm$.   Similarly, the Abel map $\mathscr{A}$ can be expanded near the singularities of $\Omega_H$ as follows:
\begin{equation}
\mathscr{A}(P)=\mathscr{A}(\infty^\pm)\mp c_1\left[v-\frac{1}{9}x_0v^3 + \mathcal{O}(v^4)\right],\quad P\to\infty^\pm.
\label{eq:g1-AbelMap-expansion}
\end{equation}
Here the precise values of $\mathscr{A}(\infty^\pm)$ will be determined only modulo integer multiples of $2\pi i$ and $\mathcal{H}$, as the Abel map depends on the path of integration in the integral that defines it.
We now follow a line of reasoning similar to that used to prove the identity \eqref{eq:Uc1-identity}, for which we have given \cite[Lemma B.1]{BuckinghamMMemoir} as a reference.  However, some details are different in this case so we give some more of the steps.  Let $\widetilde{\Gamma}$ denote the canonical dissection of $\Gamma$ obtained by cutting $\Gamma$ along the cycles $\mathfrak{a}$ and $\mathfrak{b}$.  Thus $\widetilde{\Gamma}$ is a parallelogram in the complex plane
whose positively-oriented boundary $\partial\widetilde{\Gamma}$ is the ordered sequence of paths $\mathfrak{a}$, $\mathfrak{b}$, $-\mathfrak{a}$, $-\mathfrak{b}$, and all four vertices correspond to the same point of $\Gamma$.  We obtain a single-valued branch of the Abel map on $\widetilde{\Gamma}$ by insisting that the path of integration from the point of $\widetilde{\Gamma}$ corresponding to $(D,0)\in\Gamma$ to that corresponding to $P\in\Gamma$ to remain in $\widetilde{\Gamma}$, and we will in this proof (re-)use the symbol $\mathscr{A}(P)$ for this function (generally this is a different branch from the precise one we have used so far).  Now consider the product $\mathscr{A}\Omega_H$ as a meromorphic differential on $\widetilde{\Gamma}$ and consider the integral $I$ defined by the formula
\begin{equation}
I:=\frac{1}{2\pi i}\oint_{\partial\widetilde{\Gamma}} \mathscr{A}\Omega_H,
\end{equation}
where the boundary is positively oriented.  We evaluate $I$ two different ways.

First, we can evaluate $I$ by residues.  Using the expansions \eqref{eq:g1-OmegaH-expansion} and \eqref{eq:g1-AbelMap-expansion}, we easily obtain
\begin{equation}
I=\frac{2}{3}x_0c_1+\mathscr{A}(\infty^+) -\mathscr{A}(\infty^-).
\label{eq:g1-I-residues}
\end{equation}
On the other hand, we can also evaluate $I$ directly, using the fact that the increment of the Abel map along any of the edges of $\partial\widetilde{\Gamma}$ is easy to calculate.  The resulting formula is 
\begin{equation}
\begin{split}
I&=-\frac{\mathcal{H}}{2\pi i}\oint_\mathfrak{a}\Omega_H + \oint_\mathfrak{b}\Omega_H\\
&=-\frac{\mathcal{H}}{2\pi i}\left(\frac{3}{2}\int_B^DR(z)\,dz +\frac{3}{2}\int_D^AR(z)\,dz -\frac{3}{2}\int_A^BR(z)\,dz\right) \\
&\quad\quad{}+
\left(\frac{3}{2}\int_B^DR(z)\,dz-\frac{3}{2}\int_D^AR(z)\,dz -\frac{3}{2}\int_A^BR(z)\,dz\right).
\end{split}
\label{eq:g1-I-direct}
\end{equation}
Therefore, comparing \eqref{eq:g1-I-residues} and \eqref{eq:g1-I-direct} we obtain the identity
\begin{multline}
\frac{2}{3}x_0c_1+\mathscr{A}(\infty^+)-\mathscr{A}(\infty^-) +\frac{\mathcal{H}}{2\pi i}\left(\frac{3}{2}\int_B^DR(z)\,dz +\frac{3}{2}\int_D^AR(z)\,dz -\frac{3}{2}\int_A^BR(z)\,dz\right)\\
{}-
\left(\frac{3}{2}\int_B^DR(z)\,dz -\frac{3}{2}\int_D^AR(z)\,dz-\frac{3}{2}\int_A^BR(z)\,dz\right)=0.
\label{eq:g1-lattice-identity-raw}
\end{multline}
Next, recall that, by definition,
\begin{equation}
-2c_1\kappa_1=\frac{\Phi_+}{\pi}c_1\int_C^A\frac{dz}{R_+(z)} +\frac{\Phi_-}{\pi}c_1\int_C^B\frac{dz}{R_+(z)},
\end{equation}
where $R_+$ indicates the boundary value taken on the left as the indicated arc of $\Sigma$ is traversed.  These integrals can be evaluated explicitly in terms of the fundamental periods $2\pi i$ and $\mathcal{H}$ of the differential $\omega$:
\begin{equation}
-2c_1\kappa_1=i\Phi_+\frac{\mathcal{H}}{2\pi i} + i\Phi_-\left(1-\frac{\mathcal{H}}{2\pi i}\right).
\end{equation}
Next, we rewrite the definitions of $\Phi_+$ and $\Phi_-$ in terms of integrals of derivatives of $H(z)$ as follows:
\begin{equation}
\begin{split}
\Phi_+&=-i\left(\frac{3}{2}\int_D^AR(z)\,dz-\frac{3}{2}\int_A^BR(z)\,dz-\frac{3}{2}\int_B^DR(z)\,dz\right),\\
\Phi_-&=-i\left(\frac{3}{2}\int_D^AR(z)\,dz+\frac{3}{2}\int_A^BR(z)\,dz-\frac{3}{2}\int_B^DR(z)\,dz\right).
\end{split}
\end{equation}
Therefore,
\begin{equation}
-2c_1\kappa_1=\frac{\mathcal{H}}{2\pi i}\left(-2\cdot\frac{3}{2}\int_A^BR(z)\,dz\right) +
\left(\frac{3}{2}\int_D^AR(z)\,dz+\frac{3}{2}\int_A^BR(z)\,dz-\frac{3}{2}\int_B^DR(z)\,dz\right).
\label{eq:g1-minus2c1kappa1}
\end{equation}
But since the residue of $\tfrac{3}{2}R(z)$ at $z=\infty$ is $-1$, we have the identity
\begin{equation}
\frac{3}{2}\int_D^AR(z)\,dz +\frac{3}{2}\int_A^BR(z)\,dz +\frac{3}{2}\int_B^DR(z)\,dz=-2\pi i,
\end{equation}
and using this in the first term on the right-hand side of \eqref{eq:g1-minus2c1kappa1} gives
\begin{multline}
-2c_1\kappa_1=\mathcal{H} +\frac{\mathcal{H}}{2\pi i}\left(\frac{3}{2}\int_D^AR(z)\,dz -\frac{3}{2}\int_A^BR(z)\,dz +\frac{3}{2}\int_B^DR(z)\,dz\right)\\
{}+\left(\frac{3}{2}\int_D^AR(z)\,dz+\frac{3}{2}\int_A^BR(z)\,dz-\frac{3}{2}\int_B^DR(z)\,dz\right).
\end{multline}
Comparing with \eqref{eq:g1-lattice-identity-raw} we obtain
\begin{equation}
\frac{2}{3}x_0c_1+\mathscr{A}(\infty^+)-\mathscr{A}(\infty^-)-2c_1\kappa_1=\mathcal{H}.
\end{equation}
Noting that  $\mathscr{A}(\infty^+)+\mathscr{A}(\infty^-)=2\mathscr{A}(\infty^+)$ holds modulo integer multiples of $2\pi i$ and $\mathcal{H}$, the proof is complete.  (Numerical calculations show that with $\mathscr{A}(P)$ defined as it is elsewhere in this paper aside from this proof, the identity \eqref{eq:g1-lattice-equivalence-modulo} actually holds with $n_1=n_2=0$ for all $x_0\in T$.)
\end{proof}

\begin{proposition}
For all $x_0\in T$ and $m\in\mathbb{Z}_+$, the lattices $\mathscr{P}_m(x_0)$ and $\tfrac{2}{3}x_0+\mathscr{P}_{m+1}(x_0)$ are disjoint.
\label{prop:g1-lattices-disjoint}
\end{proposition}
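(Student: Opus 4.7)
The plan is to reduce this to a fact already in hand, namely the statement (following \eqref{eq:g1-Lambda-z-V}) that the Abel map's injectivity forces $\mathscr{P}_m(x_0)$ to be disjoint from both $\mathscr{Z}_m[\dot{\mathcal{U}}](x_0)$ and $\mathscr{Z}_m[\dot{\mathcal{V}}](x_0)$. Proposition~\ref{prop:g1-lattice-equivalence} has just identified $\tfrac{2}{3}x_0+\mathscr{P}_{m+1}(x_0)$ with $\mathscr{Z}_m[\dot{\mathcal{U}}](x_0)$ as subsets of $\mathbb{C}$, so the desired disjointness is literally the same statement written in two different forms.

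To spell this out, first I would recall from the definitions \eqref{eq:g1-Lambda-p} and \eqref{eq:g1-Lambda-z} that $\mathscr{P}_m(x_0)$ and $\mathscr{Z}_m[\dot{\mathcal{U}}](x_0)$ are cosets of the same lattice $\tfrac{2}{U}\mathbb{L}$ in the $w$-plane, differing by the translation $\tfrac{2}{U}(\mathscr{A}(\infty^+)-\mathscr{A}(\infty^-))$. Hence the two lattices coincide if and only if $\mathscr{A}(\infty^+)\equiv\mathscr{A}(\infty^-)\pmod{\mathbb{L}}$, and otherwise they are disjoint as cosets. But the Abel map is injective into the Jacobian $\mathbb{C}/\mathbb{L}$, and $\infty^+\neq\infty^-$ on $\Gamma$, so $\mathscr{A}(\infty^+)\not\equiv\mathscr{A}(\infty^-)\pmod{\mathbb{L}}$; this is the same observation already invoked just after \eqref{eq:g1-Lambda-z-V}. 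Therefore $\mathscr{P}_m(x_0)\cap\mathscr{Z}_m[\dot{\mathcal{U}}](x_0)=\emptyset$.

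Combining this with Proposition~\ref{prop:g1-lattice-equivalence}, which gives the equality $\tfrac{2}{3}x_0+\mathscr{P}_{m+1}(x_0)=\mathscr{Z}_m[\dot{\mathcal{U}}](x_0)$, yields $\mathscr{P}_m(x_0)\cap(\tfrac{2}{3}x_0+\mathscr{P}_{m+1}(x_0))=\emptyset$, which is the claim. There is no real obstacle here: Proposition~\ref{prop:g1-lattice-equivalence} did all the genuine computational work (identifying the translational offset between the two pole lattices of consecutive $m$ in terms of $\mathscr{A}(\infty^+)-\mathscr{A}(\infty^-)$ via the identity \eqref{eq:g1-lattice-equivalence-modulo}), and the present proposition is the one-line corollary that extracts the consequence needed to apply Proposition~\ref{prop:g1-cheese-approx} after the B\"acklund shift in \eqref{eq:g1-Uflip}. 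Accordingly, I would present the proof as a brief deduction rather than introducing any new machinery.
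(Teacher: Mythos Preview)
Your proposal is correct and follows essentially the same route as the paper's own proof: invoke Proposition~\ref{prop:g1-lattice-equivalence} to replace $\tfrac{2}{3}x_0+\mathscr{P}_{m+1}(x_0)$ by $\mathscr{Z}_m[\dot{\mathcal{U}}](x_0)$, then observe from \eqref{eq:g1-Lambda-p} and \eqref{eq:g1-Lambda-z} that these two cosets of $\tfrac{2}{U}\mathbb{L}$ coincide only if $\mathscr{A}(\infty^+)\equiv\mathscr{A}(\infty^-)\pmod{\mathbb{L}}$, which is ruled out by injectivity of the Abel map. Your extra remark that two cosets of a lattice are either equal or disjoint makes explicit the step the paper leaves implicit.
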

\begin{proof}
Due to Proposition~\ref{prop:g1-lattice-equivalence} it suffices to show that $\mathscr{P}_m(x_0)$
and $\mathscr{Z}_m[\dot{\mathcal{U}}](x_0)$ are disjoint.  
From \eqref{eq:g1-Lambda-p} and \eqref{eq:g1-Lambda-z}, the condition that these lattices coincide is that $\mathscr{A}(\infty^+)=\mathscr{A}(\infty^-)$ modulo integer multiples of $2\pi i$ and $\mathcal{H}$.  But the Abel map is injective in this setting, and hence the lattices cannot coincide as $\infty^+$ and $\infty^-$ are distinct points on the elliptic curve $\Gamma(x_0)$.
\end{proof}

According to Proposition~\ref{prop:g1-lattices-disjoint}, to approximate $\mathcal{U}_m((m-\tfrac{1}{2})^{2/3}x)$ for $x=x_0+(m-\tfrac{1}{2})^{-1}w$ and $w$ near a point of $\mathscr{P}_m(x_0)$, one can apply Proposition~\ref{prop:g1-cheese-approx} to the right-hand side of \eqref{eq:g1-Uflip} to obtain
\begin{equation}
m^{-2m/3}e^{-m\Lambda(x)}\mathcal{U}_m((m-\tfrac{1}{2})^{2/3}x)=
\frac{e^{2/3+\Lambda(x_0)}e^{-2x_0\partial\Lambda(x_0)/3-2x_0^*\overline{\partial}\Lambda(x_0)/3}(1+\mathcal{O}(m^{-1}))}{\dot{\mathcal{V}}_{m+1}(w-\tfrac{2}{3}x_0+\mathcal{O}(m^{-1});x_0)+\mathcal{O}(m^{-1})},
\label{eq:g1-Um-holes-1}
\end{equation}
where, on the left-hand side, $x=x_0+(m-\tfrac{1}{2})^{-1}w$.
\begin{proposition}
For all $w\in\mathbb{C}$, $x_0\in T$, and $m\in\mathbb{Z}_+$, the following identity holds:
\begin{equation}
\dot{\mathcal{U}}_m(w;x_0)=\frac{e^{2/3+\Lambda(x_0)}e^{-2x_0\partial\Lambda(x_0)/3-2x_0^*\overline{\partial}\Lambda(x_0)/3}}{\dot{\mathcal{V}}_{m+1}(w-\tfrac{2}{3}x_0;x_0)}.
\label{eq:g1-exact-identity}
\end{equation}  
\label{prop:exact-identity}
\end{proposition}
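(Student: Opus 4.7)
The plan is to prove the identity by establishing that the product
\[
F(w):=\dot{\mathcal{U}}_m(w;x_0)\dot{\mathcal{V}}_{m+1}\!\left(w-\tfrac{2}{3}x_0;x_0\right)
\]
is an entire, nowhere-vanishing function of $w$, and is in fact constant in $w$, and then identifying the constant. Throughout I abbreviate $\epsilon_k:=(k-\tfrac{1}{2})^{-1}$, so that $\epsilon_{m+1}^{-1}-\epsilon_m^{-1}=1$, a fact which will play a critical role.

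First I would match divisors. By \eqref{eq:g1-Lambda-p}, \eqref{eq:g1-Lambda-z}, and \eqref{eq:g1-Lambda-z-V}, the function $w\mapsto\dot{\mathcal{V}}_{m+1}(w-\tfrac{2}{3}x_0;x_0)$ has simple poles on $\tfrac{2}{3}x_0+\mathscr{P}_{m+1}(x_0)$ and simple zeros on $\tfrac{2}{3}x_0+\mathscr{Z}_{m+1}[\dot{\mathcal{V}}](x_0)$; by Proposition~\ref{prop:g1-lattice-equivalence} these coincide respectively with $\mathscr{Z}_m[\dot{\mathcal{U}}](x_0)$ and $\mathscr{P}_m(x_0)$, which are exactly the zeros and poles of $\dot{\mathcal{U}}_m(w;x_0)$. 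All poles and zeros therefore cancel, so $F(w)$ is entire and nowhere zero.

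Next I would show that $F(w)$ is independent of $w$. Substituting the explicit Krichever formulas \eqref{eq:g1-dotU-formula} and \eqref{eq:g1-dotV-formula} and the rescaling \eqref{eq:g1-UVdot-redefine}, the $w$-linear exponential factors $e^{wE^+}$ and $e^{-(w-\tfrac{2}{3}x_0)E^+}$ pair off into the $w$-independent factor $e^{\tfrac{2}{3}x_0E^+}$, and the antiholomorphic prefactors $e^{-w\partial\Lambda-w^*\overline{\partial}\Lambda}$ and $e^{(w-\tfrac{2}{3}x_0)\partial\Lambda+(w-\tfrac{2}{3}x_0)^*\overline{\partial}\Lambda}$ combine to the $w$-independent factor $e^{-\tfrac{2}{3}(x_0\partial\Lambda+x_0^*\overline{\partial}\Lambda)}$. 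The remaining $w$-dependence is carried by a product of four theta-function values. Using $\epsilon_{m+1}^{-1}-\epsilon_m^{-1}=1$, the arguments of the $\dot{\mathcal{V}}^0_{m+1}$-theta factors differ from those of the $\dot{\mathcal{U}}^0_m$-theta factors by the translation $(\kappa_1-\tfrac{1}{3}x_0)U+(\mathscr{A}(Q^+)-\mathscr{A}(Q^-))$ modulo $\mathbb{L}$. By the integer-sharp form of the identity \eqref{eq:g1-lattice-equivalence-modulo} proved inside Proposition~\ref{prop:g1-lattice-equivalence}, this translation agrees with $2\mathscr{A}(\infty^+)-2\mathscr{A}(Q^+)$ modulo $\mathbb{L}$. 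Application of the automorphic rules \eqref{eq:g1-theta-identities} then causes the theta-function ratios to collapse to a $w$-independent expression, giving $F(w)\equiv F_\star(x_0,m)$.

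Finally I would identify the constant. The asymptotic identity \eqref{eq:g1-Um-holes-1}, itself derived as an exact consequence of the B\"acklund transformation \eqref{backlund-positive} combined with Proposition~\ref{prop:g1-cheese-approx} (applied on either side of \eqref{eq:g1-Uflip}), yields
\[
F_\star(x_0,m)=e^{2/3+\Lambda(x_0)}e^{-\tfrac{2}{3}(x_0\partial\Lambda(x_0)+x_0^*\overline{\partial}\Lambda(x_0))}+\mathcal{O}(m^{-1})
\]
as $m\to\infty$ at fixed $(x_0,w)$ with $w$ in general position. Since the right-hand side is already $m$-independent, matching $m$-dependent pieces in $F_\star$ (which are tracked by the same Abel-map identity used in step two, now applied with opposite sign to the $m$-dependent parts of the theta arguments and exponents) shows that $F_\star$ is actually $m$-independent, and equal to the stated constant. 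The main obstacle is the careful theta-function bookkeeping in step two: one must verify that the quasi-period translations produced by \eqref{eq:g1-theta-identities} compensate \emph{exactly}, not merely modulo the period lattice $\mathbb{L}$, which is why the sharp (integer-resolved) version of \eqref{eq:g1-lattice-equivalence-modulo} is essential rather than just its modular content.
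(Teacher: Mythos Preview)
Your overall architecture matches the paper's: show the product is entire by divisor-matching via Proposition~\ref{prop:g1-lattice-equivalence}, prove it is constant in $w$, then pin down the constant by comparing the two asymptotic formulae~\eqref{eq:g1-Um-cheese} and~\eqref{eq:g1-Um-holes-1}. Where you diverge is in the mechanism for constancy in $w$ and in $m$, and here you have made things harder than necessary and introduced a reliance on something the paper does not actually prove.

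You claim the ``integer-sharp form'' of~\eqref{eq:g1-lattice-equivalence-modulo} is essential and was proved inside Proposition~\ref{prop:g1-lattice-equivalence}. It was not: the proof there establishes~\eqref{eq:g1-lattice-equivalence-modulo} only modulo $\mathbb{L}$ (the residue argument uses a branch of $\mathscr{A}$ on the canonical dissection, different from the one used elsewhere), and the sharp version with $n_1=n_2=0$ is only verified numerically in the parenthetical remark at the end of that proof. So your step two, as written, rests on an unproved statement. The paper sidesteps this entirely: once you know the product $\dot{\mathcal{U}}^0_m(w;x_0)\dot{\mathcal{V}}^0_{m+1}(w-\tfrac{2}{3}x_0;x_0)$ is entire, one checks directly from~\eqref{eq:g1-dotU-formula}--\eqref{eq:g1-dotV-formula} and the automorphic rules~\eqref{eq:g1-theta-identities} that it is doubly periodic with periods $2\pi i/c_1$ and $\mathcal{H}/c_1$ (the exponential factors $e^{\pm wE^+}$ cancel, and the theta-ratio quasi-periods cancel because the sum of numerator arguments equals the sum of denominator arguments). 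Liouville then gives constancy in $w$ with no need for any sharp lattice identity. For $m$-independence the paper is equally economical: from the explicit formulae, $m$ enters the product only through the combination $\tfrac{1}{2}w+m\kappa_1$, so constancy in $w$ immediately yields constancy in $m$. Your route via matching asymptotic $m$-dependent pieces is workable but redundant once this observation is made.
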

\begin{proof}
According to \eqref{eq:g1-UVdot-redefine}, 
\begin{equation}
\dot{\mathcal{U}}_m(w;x_0)\dot{\mathcal{V}}_{m+1}(w-\tfrac{2}{3}x_0;x_0)e^{2x_0\partial\Lambda(x_0)/3+2x_0^*\overline{\partial}\Lambda(x_0)/3}=
\dot{\mathcal{U}}_m^0(w;x_0)\dot{\mathcal{V}}^0_{m+1}(w-\tfrac{2}{3}x_0;x_0),
\label{eq:g1-exact-identity-withzeros}
\end{equation}
and the right-hand side is a meromorphic function of $w\in\mathbb{C}$.
To prove \eqref{eq:g1-exact-identity}, we first show that 
the product $\dot{\mathcal{U}}^0_m(w;x_0)\dot{\mathcal{V}}^0_{m+1}(w-\tfrac{2}{3}x_0;x_0)$ is independent of both $m\in\mathbb{Z}_+$ and $w\in\mathbb{C}$.
Indeed, it follows from Proposition~\ref{prop:g1-lattice-equivalence} that the product $\dot{\mathcal{U}}^0_m(w;x_0)\dot{\mathcal{V}}^0_{m+1}(w-\tfrac{2}{3}x_0;x_0)$ is an entire function of $w$ for each $x_0\in T$ and each $m\in\mathbb{Z}_+$.  From the exact formulae \eqref{eq:g1-dotU-formula}
and \eqref{eq:g1-dotV-formula} and the first two identities in \eqref{eq:g1-theta-identities} it follows that this entire function is also doubly-periodic with independent periods $2\pi i/c_1$ and $\mathcal{H}/c_1$.  It therefore follows from Liouville's Theorem that it is a constant function of $w$.  Moreover, from the explicit formulae \eqref{eq:g1-dotU-formula}--\eqref{eq:g1-dotV-formula} it is easy to see that $m$ enters into the product $\dot{\mathcal{U}}^0_m(w;x_0)\dot{\mathcal{V}}^0_{m+1}(w-\tfrac{2}{3}x_0;x_0)$ only via the combination $\tfrac{1}{2}w+m\kappa_1$, and hence there can be no dependence on $m\in \mathbb{Z}_+$ either.

The product $\dot{\mathcal{U}}^0_m(w;x_0)\dot{\mathcal{V}}^0_{m+1}(w-\tfrac{2}{3}x_0;x_0)$ therefore depends on $x_0\in T$ only.  To determine its value as a function of $x_0$, let $x_0\in T$ be fixed.  Then, there exists a bounded sequence $\{w_m\}_{m=0}^\infty\subset\mathbb{C}$ such that $w_m\in \mathscr{S}_m(x_0,\delta)$ and $w_m-\tfrac{2}{3}x_0\in\mathscr{S}_{m+1}(x_0,\delta)$ for some fixed $\delta>0$.  Since both \eqref{eq:g1-Um-cheese} and \eqref{eq:g1-Um-holes-1} are valid for $w=w_m$, we can pass to the limit $m\to\infty$ to see that $\dot{\mathcal{U}}^0_m(w;x_0)\dot{\mathcal{V}}_{m+1}^0(w-\tfrac{2}{3}x_0;x_0)=e^{2/3+\Lambda(x_0)}$, and therefore the exact identity \eqref{eq:g1-exact-identity} holds\footnote{The identity \eqref{eq:g1-exact-identity} can be rewritten exactly as 
\begin{equation}
\left(\frac{C+D}{2}\right)^2\frac{\Theta(\mathcal{H}/2;\mathcal{H})^2}{\Theta(2\mathscr{A}(\infty^+)+\mathcal{H}/2;\mathcal{H})^2}=e^{2\mathscr{A}(\infty^+)(1+E^+/c_1)-2\kappa_0+2/3+\Lambda},
\label{eq:g1-exact-identity-rewrite}
\end{equation}
a fact which provides a second, direct proof that the product $\dot{\mathcal{U}}_m(w;x_0)\dot{\mathcal{V}}_{m+1}(2+\tfrac{2}{3}x_0;x_0)$ depends only on $x_0$.  While a direct proof that \eqref{eq:g1-exact-identity-rewrite} holds for all $x_0\in T$ eludes us, we have confirmed it numerically to high accuracy on a dense grid of sample points $x_0\in T$.  However the indirect proof we have given also suffices to establish this identity.} for each $x_0\in T$.
\end{proof}
It follows from Proposition~\ref{prop:exact-identity} that \eqref{eq:g1-Um-holes-1} can be written in the equivalent form
\begin{equation}
m^{-2m/3}e^{-m\Lambda(x)}\mathcal{U}_m((m-\tfrac{1}{2})^{2/3}x)=\frac{1}{\dot{\mathcal{U}}_m(w;x_0)^{-1}+\mathcal{O}(m^{-1})},
\end{equation}
which is valid as $m\to\infty$ with $|w-\mathscr{P}_m(x_0)|<\delta$.  Similar results can be obtained for the functions $\mathcal{V}_m$, $\mathcal{P}_m$, and $\mathcal{Q}_m$.  Combining these with Proposition~\ref{prop:g1-cheese-approx} yields the following result.
\begin{theorem}
Let $x=x_0+(m-\tfrac{1}{2})^{-1}w$.  The following asymptotic formulae hold in the limit $m\to+\infty$ uniformly for $x_0$ in compact subsets of $T$ and bounded $w$:
\begin{equation}
m^{-2m/3}e^{-m\Lambda(x)}\mathcal{U}_m((m-\tfrac{1}{2})^{2/3}x)=\frac{\dot{\mathcal{U}}_m(w;x_0)}{1+\mathcal{O}(m^{-1}\dot{\mathcal{U}}_m(w;x_0))},
\end{equation}
\begin{equation}
m^{2(m-1)/3}e^{m\Lambda(x)}\mathcal{V}_m((m-\tfrac{1}{2})^{2/3}x)=\frac{\dot{\mathcal{V}}_m(w;x_0)}{1+\mathcal{O}(m^{-1}\dot{\mathcal{V}}_m(w;x_0))},
\end{equation}
\begin{equation}
m^{-1/3}\mathcal{P}_m((m-\tfrac{1}{2})^{2/3}x)=\frac{\dot{\mathcal{P}}_m(w;x_0)}{1+\mathcal{O}(m^{-1}\dot{\mathcal{P}}_m(w;x_0))},
\end{equation}
and
\begin{equation}
m^{-1/3}\mathcal{Q}_m((m-\tfrac{1}{2})^{2/3}x)=\frac{\dot{\mathcal{Q}}_m(w;x_0)}{1+\mathcal{O}(m^{-1}\dot{\mathcal{Q}}_m(w;x_0))}.
\end{equation}
Here $\Lambda:T\to\mathbb{C}$ is defined in \eqref{eq:g1-Lambda} and 
$\dot\pp_m$, $\dot\pq_m$, $\dot\pu_m$, and $\dot\pv_m$ are defined in 
\eqref{eq:g1-dotU-formula}--\eqref{eq:g1-dotQ-formula} and 
\eqref{eq:g1-UVdot-redefine}.  
On subsets where $\dot{\mathcal{U}}_m(w;x_0)$, $\dot{\mathcal{V}}_m(w;x_0)$, $\dot{\mathcal{P}}_m(w;x_0)$, or $\dot{\mathcal{Q}}_m(w;x_0)$, respectively, is bounded, the above formulae can be written in the simpler form
\begin{equation}
m^{-2m/3}e^{-m\Lambda(x)}\mathcal{U}_m((m-\tfrac{1}{2})^{2/3}x)=\dot{\mathcal{U}}_m(w;x_0)+\mathcal{O}(m^{-1}),
\end{equation}
\begin{equation}
m^{2(m-1)/3}e^{m\Lambda(x)}\mathcal{V}_m((m-\tfrac{1}{2})^{2/3}x)=\dot{\mathcal{V}}_m(w;x_0)+\mathcal{O}(m^{-1}),
\end{equation}
\begin{equation}
m^{-1/3}\mathcal{P}_m((m-\tfrac{1}{2})^{2/3}x)=\dot{\mathcal{P}}_m(w;x_0)+\mathcal{O}(m^{-1}),
\end{equation}
and
\begin{equation}
m^{-1/3}\mathcal{Q}_m((m-\tfrac{1}{2})^{2/3}x)=\dot{\mathcal{Q}}_m(w;x_0)+\mathcal{O}(m^{-1}),
\end{equation}
respectively.
\label{theorem-g1-basic}
\end{theorem}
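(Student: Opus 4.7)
The proof of Theorem~\ref{theorem-g1-basic} extends Proposition~\ref{prop:g1-cheese-approx} from the Swiss-cheese set $\mathscr{S}_m(x_0,\delta)$ to all bounded $w$. The idea is that any bounded $w$-region is covered by $\mathscr{S}_m(x_0,\delta)$ together with small neighborhoods of the finitely many points of the pole lattice $\mathscr{P}_m(x_0)$ lying in that region. On $\mathscr{S}_m$ the relevant approximating functions $\dot{\mathcal{U}}_m$, $\dot{\mathcal{V}}_m$, $\dot{\mathcal{P}}_m$, $\dot{\mathcal{Q}}_m$ are uniformly bounded, so Proposition~\ref{prop:g1-cheese-approx} directly supplies the simpler additive form stated in the second half of Theorem~\ref{theorem-g1-basic}. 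Algebraically rewriting, e.g., $\dot{\mathcal{U}}_m+\mathcal{O}(m^{-1})$ as $\dot{\mathcal{U}}_m/(1+\mathcal{O}(m^{-1}\dot{\mathcal{U}}_m))$ is legitimate precisely when $\dot{\mathcal{U}}_m$ is bounded, and yields the unified ratio form in that regime.

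The main work is in covering the excluded disks around the lattice $\mathscr{P}_m(x_0)$. For $\mathcal{U}_m$ I would invoke the B\"acklund relation $\mathcal{U}_m(y)=1/\mathcal{V}_{m+1}(y)$ from \eqref{backlund-positive}. Writing $(m-\tfrac{1}{2})^{2/3}x=(m+\tfrac{1}{2})^{2/3}\tilde{x}$ with $\tilde{x}=x_0+(m+\tfrac{1}{2})^{-1}\tilde{w}$ forces $\tilde{w}=w-\tfrac{2}{3}x_0+\mathcal{O}(m^{-1})$. Propositions~\ref{prop:g1-lattice-equivalence} and~\ref{prop:g1-lattices-disjoint} imply that when $w$ lies within distance $\delta$ of $\mathscr{P}_m(x_0)$, the shifted point $\tilde{w}$ is bounded away from $\mathscr{P}_{m+1}(x_0)$ (in fact $\tilde{w}$ is close to $\mathscr{Z}_{m+1}[\dot{\mathcal{V}}](x_0)$), so $\tilde{w}\in\mathscr{S}_{m+1}(x_0,\delta')$ for some $\delta'>0$ independent of $m$. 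Applying Proposition~\ref{prop:g1-cheese-approx} to $\mathcal{V}_{m+1}$ at the cheese point $\tilde{w}$ and then reciprocating would produce the desired approximation of $\mathcal{U}_m^{-1}$ in terms of $\dot{\mathcal{V}}_{m+1}$; Proposition~\ref{prop:exact-identity} then translates $\dot{\mathcal{V}}_{m+1}(w-\tfrac{2}{3}x_0;x_0)$ back into $\dot{\mathcal{U}}_m(w;x_0)^{-1}$ up to an explicit constant. The upshot is the estimate
\[
m^{2m/3}e^{m\Lambda(x)}\mathcal{U}_m((m-\tfrac{1}{2})^{2/3}x)^{-1}=\dot{\mathcal{U}}_m(w;x_0)^{-1}+\mathcal{O}(m^{-1}),
\]
which upon inversion gives exactly the ratio form of the theorem. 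A completely parallel argument using \eqref{backlund-negative} handles $\mathcal{V}_m$. For $\mathcal{P}_m$ and $\mathcal{Q}_m$ the ratio already supplied by Proposition~\ref{prop:g1-cheese-approx} (via the second pair of formulae, with $\dot{\mathcal{U}}^{0\prime}_m$ in the numerator and $\dot{\mathcal{U}}^0_m$ in the denominator) absorbs the poles at zeros of $\dot{\mathcal{U}}_m$ in $\mathscr{S}_m$; to cover the lattice $\mathscr{P}_m(x_0)$, logarithmic differentiation of $\mathcal{U}_m=1/\mathcal{V}_{m+1}$ yields $\mathcal{P}_m=-\mathcal{Q}_{m+1}$ at the same argument, and Proposition~\ref{prop:g1-cheese-approx} applied at index $m+1$ together with the identity $\dot{\mathcal{P}}_m(w;x_0)=-\dot{\mathcal{Q}}_{m+1}(w-\tfrac{2}{3}x_0;x_0)$ (a consequence of logarithmically differentiating \eqref{eq:g1-exact-identity}) closes the argument.

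The main obstacle is the careful prefactor bookkeeping in the inversion step. One must verify that the three ``wrong-index'' factors combine to unity modulo $\mathcal{O}(m^{-1})$, namely
\[
(m/(m+1))^{2m/3}=e^{-2/3}(1+\mathcal{O}(m^{-1})),
\]
\[
m\Lambda(x)-(m+1)\Lambda(\tilde{x})=-\Lambda(x_0)+\tfrac{2}{3}x_0\partial\Lambda(x_0)+\tfrac{2}{3}x_0^*\overline{\partial}\Lambda(x_0)+\mathcal{O}(m^{-1}),
\]
and the explicit constant $e^{2/3+\Lambda(x_0)-2x_0\partial\Lambda(x_0)/3-2x_0^*\overline{\partial}\Lambda(x_0)/3}$ from Proposition~\ref{prop:exact-identity}. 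The delicate feature is that $\Lambda$ is only real-differentiable in $x_0$, so Taylor expansion must be carried out in the Wirtinger variables $\partial_{x_0}$ and $\overline{\partial}_{x_0}$; the appearance of the same non-analytic combination in the definition \eqref{eq:g1-UVdot-redefine} of $\dot{\mathcal{U}}_m$ from $\dot{\mathcal{U}}_m^0$ is precisely what makes these three contributions cancel. Once this cancellation is verified, the rest of the argument is routine manipulation of $\mathcal{O}(m^{-1})$ errors, noting that $\dot{\mathcal{V}}_{m+1}$ and its first derivative are uniformly bounded on $\mathscr{S}_{m+1}(x_0,\delta')$ so that the $\mathcal{O}(m^{-1})$ shift in the argument $\tilde{w}$ does not degrade the error.
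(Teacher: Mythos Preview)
Your proposal is correct and follows essentially the same route as the paper: Proposition~\ref{prop:g1-cheese-approx} handles the Swiss-cheese region, the B\"acklund relation $\mathcal{U}_m=1/\mathcal{V}_{m+1}$ together with Propositions~\ref{prop:g1-lattice-equivalence}--\ref{prop:g1-lattices-disjoint} fills the holes at $\mathscr{P}_m(x_0)$, and Proposition~\ref{prop:exact-identity} converts the resulting $\dot{\mathcal{V}}_{m+1}$-formula back into the claimed $\dot{\mathcal{U}}_m$-form. Your prefactor bookkeeping and the treatment of $\mathcal{P}_m$, $\mathcal{Q}_m$ via $\mathcal{P}_m=-\mathcal{Q}_{m+1}$ and the log-derivative of \eqref{eq:g1-exact-identity} make explicit what the paper compresses into the phrase ``Similar results can be obtained for the functions $\mathcal{V}_m$, $\mathcal{P}_m$, and $\mathcal{Q}_m$.''
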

One of the dominant features of the asymptotic description of $\mathcal{U}_m$ that is evident both from Theorem~\ref{main-genus-zero-thm} (for $x$ outside $T$) and from Theorem~\ref{theorem-g1-basic} (for $x$ inside $T$) is that $\mathcal{U}_m$ is proportional to an exponentially varying factor in each case, namely $e^{m\lambda(x)}$ for $x$ outside of $T$ and $e^{m\Lambda(x)}$ for $x$ inside of $T$.  We may study this factor by defining a single function for $x\in\mathbb{C}\setminus\partial T$ that equals $\lambda(x)$ for $x\in\mathbb{C}\setminus\overline{T}$ and equals $\Lambda(x)$ for $x\in T$.  Plots of the real part and the sine and cosine of the imaginary part (a phase, really) of this function are shown in Figure~\ref{fig:Lambdalambda}.
\begin{figure}[h]
\begin{center}
\includegraphics[width=2 in]{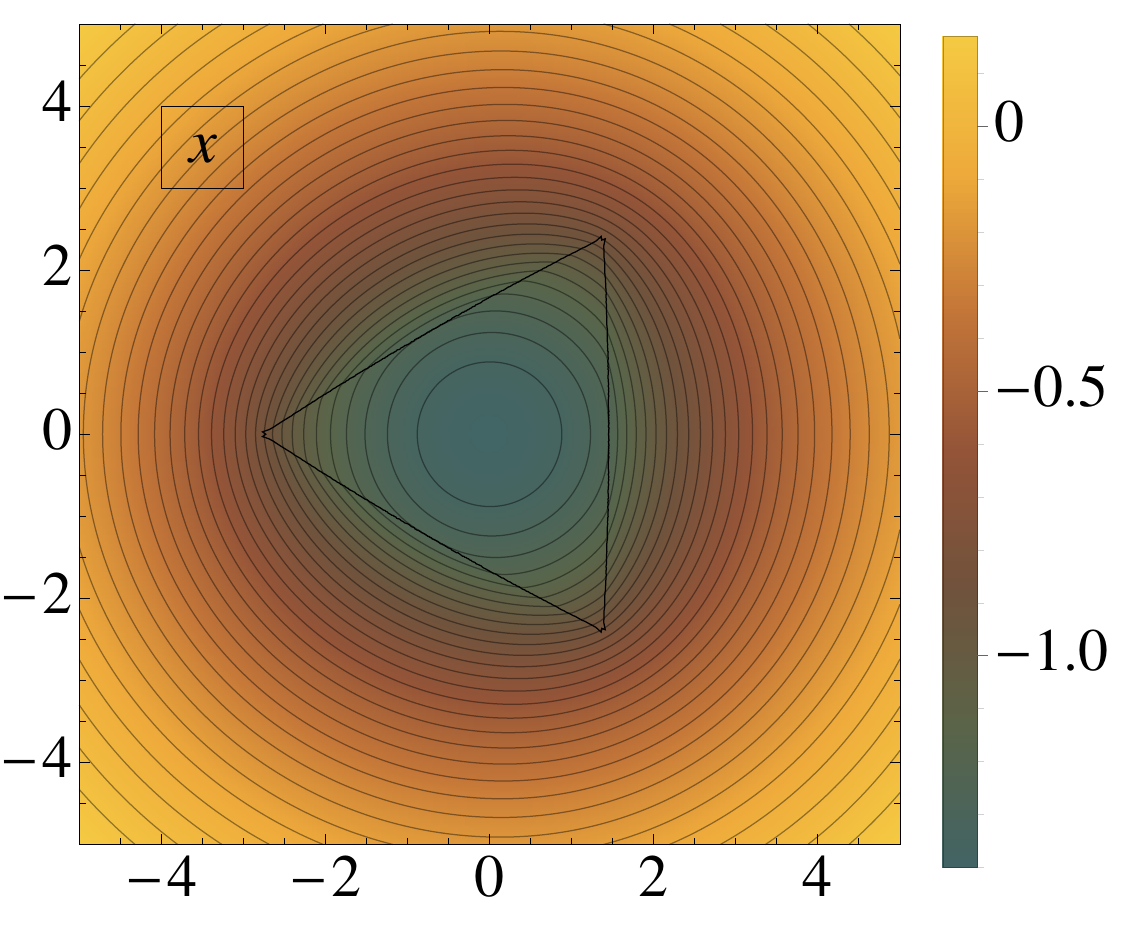}%
\includegraphics[width=2 in]{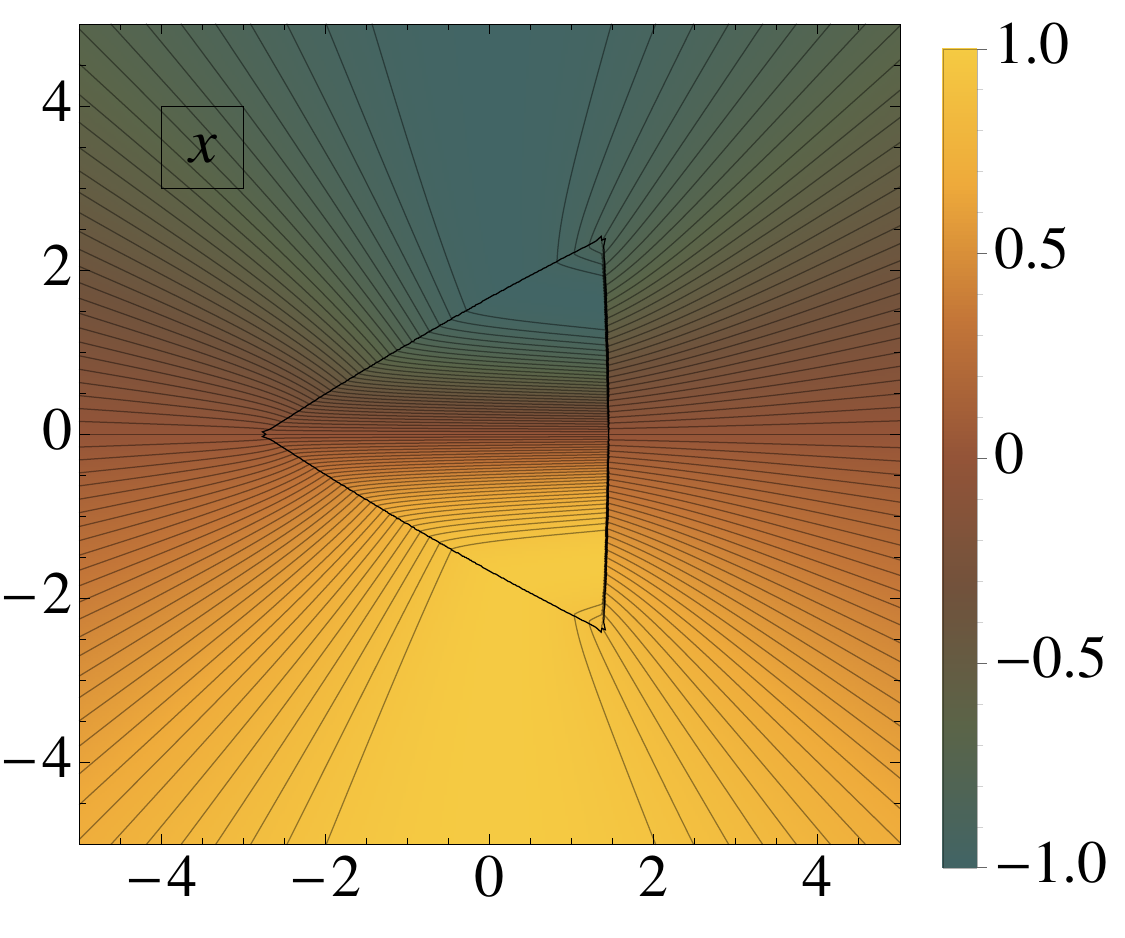}%
\includegraphics[width=2 in]{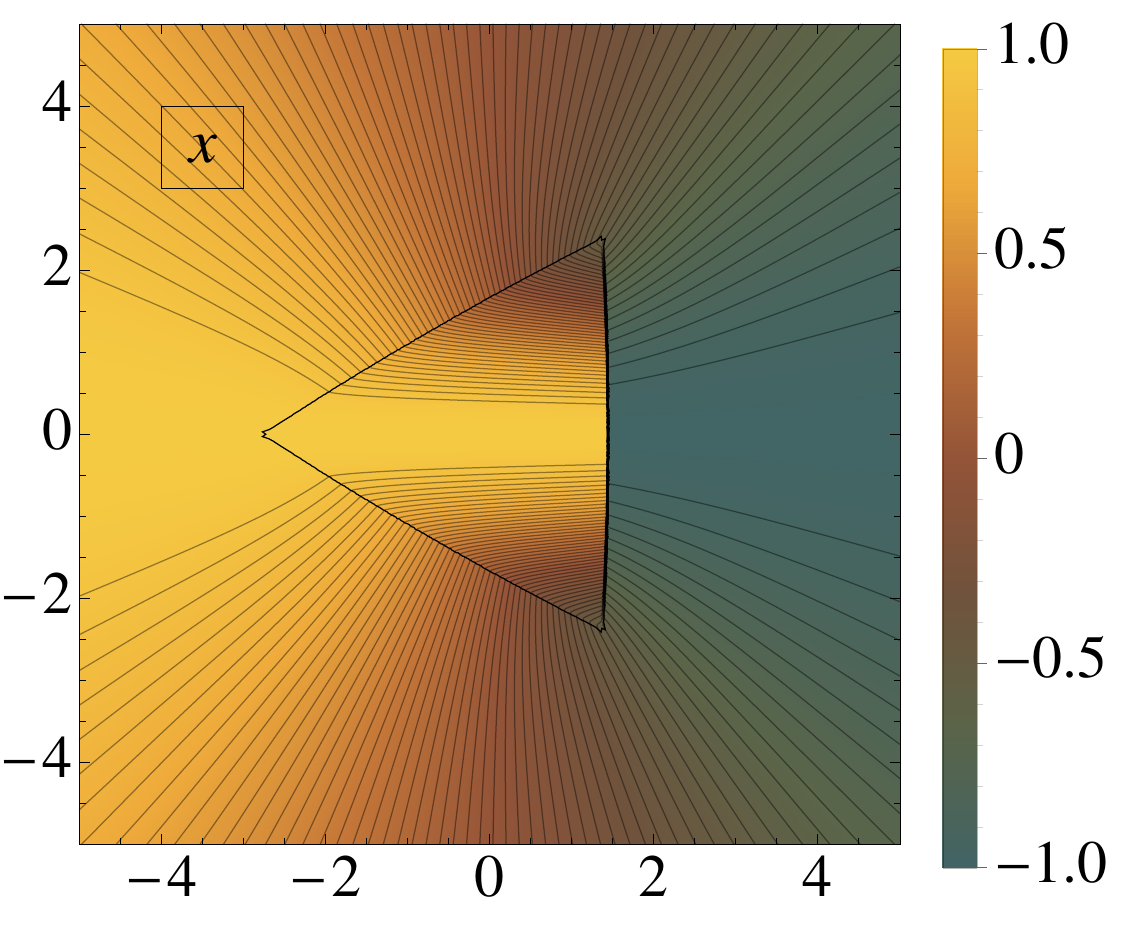}
\end{center}
\caption{\emph{Contour plots of the real part (left), the sine of the imaginary part (middle), and the cosine of the imaginary part (right) of the function defined as $\lambda(x)$ outside the region $T$ and $\Lambda(x)$ inside the region $T$.  The boundary of $T$ is superimposed on each plot for reference.}}
\label{fig:Lambdalambda}
\end{figure}
The plots clearly show what was observed in Proposition~\ref{prop:g1-Lambdalambda}; the phases of $e^{m\lambda(x)}$ and $e^{m\Lambda(x)}$ do not match along the (right) edge of $\partial T$ that crosses the positive real axis.  It turns out that the approximating function $\dot{\mathcal{U}}_m(w;x_0)$, which depends on $m$ unlike the corresponding approximating function $\dot{\pu}(x)$ valid for $x$ outside of $T$, contains an $m$-dependent phase factor as well (this can be seen by looking at the plots in Figure~\ref{fig:g1-U-compare-real} and comparing the behavior near $x=x_c$ and $x=x_e$), and this makes the product $e^{m\Lambda(x)}\dot{\mathcal{U}}_m(w;x_0)$ match onto the corresponding product for 
$x\in\mathbb{C}\setminus\overline{T}$ along the right edge of $\partial T$.  The plots in Figure~\ref{fig:Lambdalambda} also confirm that $\Lambda(x)$ is not analytic for $x\in T$, as one can see from the fact that the orthogonality of the level contours for the real and imaginary parts that holds for the analytic function $\lambda(x)$ defined outside of $T$ obviously fails to hold for $x\in T$.

It is known that all poles and zeros of the rational functions $\mathcal{U}_m$, $\mathcal{V}_m$, $\mathcal{P}_m$, and $\mathcal{Q}_m$ are simple.  The same is of course true of the approximating functions\footnote{Recall from \eqref{eq:g1-UVdot-redefine} that
$\dot{\mathcal{U}}_m(w;x_0)$ differs from the meromorphic-in-$w$ function $\dot{\mathcal{U}}_m^0(w;x_0)$ (respectively $\dot{\mathcal{V}}_m(w;x_0)$ differs from the meromorphic function $\dot{\mathcal{V}}_m^0(w;x_0)$) by a bounded and nonvanishing exponential factor that is non-analytic  in $w$.} $\dot{\mathcal{U}}^0_m(\cdot;x_0)$, $\dot{\mathcal{V}}^0_m(\cdot;x_0)$, $\dot{\mathcal{P}}_m(\cdot;x_0)$, and $\dot{\mathcal{Q}}_m(\cdot;x_0)$, and we can easily establish the following approximation result.
\begin{corollary}
Each simple pole (respectively simple zero) of $\mathcal{U}_m((m-\tfrac{1}{2})^{2/3}x)$ lies within a distance in the $w$-plane that is $\mathcal{O}(m^{-1})$ as $m\to+\infty$ from exactly one simple pole (respectively simple zero) of the approximating function $\dot{\mathcal{U}}^0_m(w;x_0)$.  The analogous result holds for poles and zeros of $\mathcal{V}_m((m-\tfrac{1}{2})^{2/3}x)$, $\mathcal{P}_m((m-\tfrac{1}{2})^{2/3}x)$, and
$\mathcal{Q}_m((m-\tfrac{1}{2})^{2/3}x)$ with corresponding approximating functions $\dot{\mathcal{V}}^0_m(w;x_0)$, $\dot{\mathcal{P}}_m(w;x_0)$, and $\dot{\mathcal{Q}}_m(w;x_0)$.
\label{cor:g1-pole-zero-approx}
\end{corollary}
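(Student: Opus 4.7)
The proof plan is a straightforward Rouché-type argument, built on top of the quantitative approximation already furnished by Theorem~\ref{theorem-g1-basic}. I focus on the statement for zeros of $\mathcal{U}_m$; the other eight sub-claims (poles of $\mathcal{U}_m$ and zeros/poles of $\mathcal{V}_m$, $\mathcal{P}_m$, $\mathcal{Q}_m$) are handled by repeating the argument with the appropriate approximating function, using the reciprocal form of Theorem~\ref{theorem-g1-basic} and, where necessary, Proposition~\ref{prop:exact-identity} and \eqref{eq:g1-Um-holes-1} to convert poles to zeros of a related meromorphic function.

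First I restate the approximation in a form better adapted to Rouch\'e. Since $\Lambda$ is $C^\infty$ in $(\mathrm{Re}\,x_0,\mathrm{Im}\,x_0)$, a Taylor expansion gives $e^{m\Lambda(x_0+\epsilon w)}=e^{m\Lambda(x_0)}e^{w\partial\Lambda(x_0)+w^*\overline\partial\Lambda(x_0)}(1+\mathcal{O}(m^{-1}))$ uniformly for $w$ bounded and $x_0$ in a compact subset of $T$. Multiplying the first formula of Theorem~\ref{theorem-g1-basic} by $m^{2m/3}e^{m\Lambda(x)}$ and using \eqref{eq:g1-UVdot-redefine} to cancel the non-analytic exponential, I obtain
\[
\frac{\mathcal{U}_m((m-\tfrac{1}{2})^{2/3}x)}{m^{2m/3}e^{m\Lambda(x_0)}} \;=\; \dot{\mathcal{U}}_m^0(w;x_0) + \mathcal{O}(m^{-1}),
\]
valid uniformly for $x_0$ in a compact $K\subset T$ and for $w$ bounded and bounded away from the pole lattice $\mathscr{P}_m(x_0)$. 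The left-hand side is holomorphic in $w$ for each $m$, and the right-hand side is meromorphic with known simple zeros and poles.

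The core step is Rouch\'e on disks of radius $r_m=C/m$. Fix a simple zero $w_0\in\mathscr{Z}_m[\dot{\mathcal{U}}](x_0)$; by injectivity of the Abel map and Proposition~\ref{prop:g1-lattices-disjoint}, $w_0$ is bounded away from $\mathscr{P}_m(x_0)$ by a constant $\delta_1>0$ depending only on $K$. Writing $\dot{\mathcal{U}}_m^0(w;x_0)=a_m(w-w_0)+\mathcal{O}((w-w_0)^2)$, the key quantitative point is the uniform lower bound $|a_m|\ge c>0$ for $x_0\in K$ and $m\in\mathbb{Z}_+$. I obtain this from Proposition~\ref{prop:exact-identity}: the identity
\[
\dot{\mathcal{U}}_m^0(w;x_0)\,\dot{\mathcal{V}}_{m+1}^0(w-\tfrac{2}{3}x_0;x_0)=e^{2/3+\Lambda(x_0)}
\]
combined with Proposition~\ref{prop:g1-lattice-equivalence} (which identifies $w_0$ as a simple pole of $w\mapsto\dot{\mathcal{V}}_{m+1}^0(w-\tfrac{2}{3}x_0;x_0)$) gives $a_m=e^{2/3+\Lambda(x_0)}/\rho_m$, where $\rho_m$ is the residue at that pole. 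Proposition~\ref{prop:g1-dot-Oout-bound} provides a uniform-in-$m$ upper bound on $\dot{\mathcal{V}}_{m+1}^0$ on $\mathscr{S}_{m+1}(x_0,\delta)$, hence by the Maximum Principle applied to $(w-w_0+\tfrac{2}{3}x_0)\dot{\mathcal{V}}_{m+1}^0$ a uniform bound $|\rho_m|\le C_0$, and therefore $|a_m|\ge c>0$.

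With this in hand, on the circle $|w-w_0|=r_m=C/m$ (where $C$ is chosen large and $m$ is large enough that $r_m<\delta_1/2$) one has $|\dot{\mathcal{U}}_m^0(w;x_0)|\ge \tfrac12|a_m|r_m\ge cC/(2m)$, which beats the $\mathcal{O}(m^{-1})$ error once $C$ is large enough. Rouch\'e's theorem then shows that the two meromorphic functions on either side of the restated approximation have the same number of zeros minus poles inside $|w-w_0|<r_m$; since neither has poles there, and $\dot{\mathcal{U}}_m^0$ has a single simple zero, $\mathcal{U}_m((m-\tfrac{1}{2})^{2/3}x)$ has exactly one simple zero in this disk, giving a zero of $\mathcal{U}_m$ within $\mathcal{O}(m^{-1})$ of $w_0$. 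A counting/exhaustion argument inside a large fixed disk (the number of zeros of $\dot{\mathcal{U}}_m^0$ in a ball of radius $R$ grows like $R^2 \dot{\sigma}_\mathrm{P}$ by \eqref{eq:dotsigmaPformula}, matching the order of vanishing of $\mathcal{U}_m$ there by the theorem) then shows the pairing is a bijection, with no stray zeros of $\mathcal{U}_m$. The main obstacle in executing this plan is the uniform lower bound on $|a_m|$: without invoking the B\"acklund identity of Proposition~\ref{prop:exact-identity} the derivative of a theta-function ratio at one of its zeros is not obviously bounded away from zero as $m$ varies, and it is precisely this bound that makes the Rouch\'e radius shrink at the correct rate $1/m$ in $w$ (equivalently $m^{-2}$ in $x$).
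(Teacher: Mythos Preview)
Your proposal is correct and uses exactly the same idea as the paper: the paper's entire proof reads ``This is an elementary consequence of Rouch\'e's Theorem.''  You have simply spelled out the details the authors leave implicit---in particular, the uniform lower bound on the derivative at a simple zero via the B\"acklund identity of Proposition~\ref{prop:exact-identity} and the torus-compactness bound of Proposition~\ref{prop:g1-dot-Oout-bound}, which is precisely the mechanism that makes the Rouch\'e radius scale like $m^{-1}$.
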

\begin{proof}
This is an elementary consequence of Rouch\'e's Theorem.
\end{proof}

The approximating functions $\dot{\mathcal{P}}_m(w;x_0)$ and $\dot{\mathcal{Q}}_m(w;x_0)$ defined by \eqref{eq:g1-dotP-formula}--\eqref{eq:g1-dotQ-formula} are meromorphic functions of $w$ but they are nowhere-analytic functions of $x_0\in T$, as follows from \eqref{eq:g1-dbarPi}.  However, in a certain sense these functions are nearly analytic in $x_0$.  A precise statement can be formulated based on Theorem~\ref{theorem-g1-basic} and the Triangle Inequality.
\begin{corollary}
Suppose that $x_0\in T$, that $\zeta\in\mathbb{C}$ is bounded, and that $w+\zeta\in\mathscr{S}_m(x_0,\delta)$.  
If also $w+\zeta$ is bounded away from $\mathscr{Z}_m[\dot{\mathcal{U}}]$, then
\begin{equation}
\dot{\mathcal{P}}_m(w;x_0+\epsilon\zeta)=\dot{\mathcal{P}}_m(w+\zeta;x_0)+\mathcal{O}(m^{-1}),
\end{equation}
while if $w+\zeta$ is bounded away from $\mathscr{Z}_m[\dot{\mathcal{V}}]$, then
\begin{equation}
\dot{\mathcal{Q}}_m(w;x_0+\epsilon\zeta)=\dot{\mathcal{Q}}_m(w+\zeta;x_0)+\mathcal{O}(m^{-1}).
\end{equation}
\label{cor-kernel}
\end{corollary}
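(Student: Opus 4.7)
The plan is to apply Theorem~\ref{theorem-g1-basic} twice, exploiting the freedom in the decomposition $x=x_0+\epsilon w$. The key observation is the identity
\begin{equation*}
x := x_0+\epsilon(w+\zeta)=(x_0+\epsilon\zeta)+\epsilon w,
\end{equation*}
which means that $\dot{\pp}_m(w+\zeta;x_0)$ and $\dot{\pp}_m(w;x_0+\epsilon\zeta)$ are simply two different approximations to the same quantity $m^{-1/3}\pp_m((m-\tfrac{1}{2})^{2/3}x)$, obtained from two admissible choices of base point. They must therefore agree to within the $\mathcal{O}(m^{-1})$ error common to both, which is precisely the content of the corollary.

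Executing this plan, I would first apply the simple (bounded) form of Theorem~\ref{theorem-g1-basic} with base point $x_0$ and local coordinate $w+\zeta$. The required hypotheses are exactly those assumed in the corollary: $x_0$ lies in a compact subset of $T$, the condition $w+\zeta\in\mathscr{S}_m(x_0,\delta)$ places $w+\zeta$ at distance $\delta$ from $\mathscr{P}_m(x_0)$, and the assumed separation from $\mathscr{Z}_m[\dot{\pu}]$ accounts for the remaining pole lattice of the elliptic function $\dot{\pp}_m(\cdot;x_0)=\dot{\pu}^{0\prime}_m(\cdot;x_0)/\dot{\pu}^0_m(\cdot;x_0)$, whose poles lie precisely on $\mathscr{P}_m(x_0)\cup\mathscr{Z}_m[\dot{\pu}](x_0)$. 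So $\dot{\pp}_m(w+\zeta;x_0)$ is bounded, yielding
\begin{equation*}
m^{-1/3}\pp_m((m-\tfrac{1}{2})^{2/3}x)=\dot{\pp}_m(w+\zeta;x_0)+\mathcal{O}(m^{-1}).
\end{equation*}
Next I would apply the general form of the same theorem with the shifted base point $x_0+\epsilon\zeta$ and local coordinate $w$. For $m$ large the shifted base point still lies in a compact subset of the open set $T$, and $w=(w+\zeta)-\zeta$ is bounded; the general form requires nothing further, and produces
\begin{equation*}
m^{-1/3}\pp_m((m-\tfrac{1}{2})^{2/3}x)=\frac{\dot{\pp}_m(w;x_0+\epsilon\zeta)}{1+\mathcal{O}(m^{-1}\dot{\pp}_m(w;x_0+\epsilon\zeta))}.
\end{equation*}

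Setting $A:=\dot{\pp}_m(w+\zeta;x_0)$ and $B:=\dot{\pp}_m(w;x_0+\epsilon\zeta)$, the two displays above combine to $A+\mathcal{O}(m^{-1})=B/(1+\mathcal{O}(m^{-1}B))$, i.e.\ $B=(A+\mathcal{O}(m^{-1}))(1+\mathcal{O}(m^{-1}B))$. Because $A$ is bounded, this relation can be solved algebraically: it gives first $|B|\le 2|A|+\mathcal{O}(m^{-1})$ for $m$ sufficiently large (so $B$ is bounded), and then $|B-A|=\mathcal{O}(m^{-1})$, which is the first claim. The second claim for $\dot{\pq}_m$ follows by the identical argument with $\pu,\dot{\pu}^0_m,\mathscr{Z}_m[\dot{\pu}]$ replaced throughout by $\pv,\dot{\pv}^0_m,\mathscr{Z}_m[\dot{\pv}]$. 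The main conceptual subtlety, and what a more direct argument would have to face, is that $\dot{\pp}_m(w;x_0)$ depends on $x_0$ in part through the factor $\kappa_1(x_0)/\epsilon$ appearing in the theta function arguments of \eqref{eq:g1-dotP-formula}, so an $\mathcal{O}(\epsilon)$ change in $x_0$ shifts the pole lattice of $\dot{\pp}_m(\cdot;x_0)$ by an $\mathcal{O}(1)$ amount in the $w$-plane. The corollary asserts that this shift is compensated exactly by the coordinate shift $w\mapsto w+\zeta$, up to $\mathcal{O}(m^{-1})$; the indirect route through Theorem~\ref{theorem-g1-basic}, which passes via the rational function $\pp_m$ itself and is manifestly base-point-independent, bypasses the delicate theta-function calculation that would otherwise be required.
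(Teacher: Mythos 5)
Your argument is correct and is precisely the route the paper indicates (the corollary is stated as "formulated based on Theorem~\ref{theorem-g1-basic} and the Triangle Inequality," with no further proof given): both $\dot{\mathcal{P}}_m(w+\zeta;x_0)$ and $\dot{\mathcal{P}}_m(w;x_0+\epsilon\zeta)$ approximate the same quantity $m^{-1/3}\mathcal{P}_m((m-\tfrac{1}{2})^{2/3}x)$ for $x=x_0+\epsilon(w+\zeta)=(x_0+\epsilon\zeta)+\epsilon w$, and your hypotheses correctly supply boundedness of the first approximant so that the two error forms of the theorem can be combined. Your closing remark about the $\mathcal{O}(1)$ lattice shift induced by the $\kappa_1/\epsilon$ dependence is exactly the subtlety this indirect argument is designed to avoid.
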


\begin{remark}
The near-analyticity in $x_0$ follows because small variations in $x_0$ can be identified to leading order with variations in the coordinate $w$, in which the functions $\dot{\mathcal{P}}_m(w;x_0)$ and $\dot{\mathcal{Q}}_m(w;x_0)$ are meromorphic.  
In fact, it is also true under some hypotheses to avoid the poles that $\dot{\mathcal{U}}_m(w;x_0+\epsilon\zeta)=\dot{\mathcal{U}}_m(w+\zeta;x_0)+\mathcal{O}(m^{-1})$ and that $\dot{\mathcal{V}}_m(w;x_0+\epsilon \zeta)=\dot{\mathcal{V}}_m(w+\zeta;x_0)+\mathcal{O}(m^{-1})$, but as $\dot{\mathcal{U}}_m$ and $\dot{\mathcal{V}}_m$ are not analytic for any $w\in\mathbb{C}$ these relations do not imply any corresponding near-analyticity in $x_0$ for $\dot{\mathcal{U}}_m(w;x_0)$ or $\dot{\mathcal{V}}_m(w;x_0)$.  
\end{remark}
A direct proof of Corollary~\ref{cor-kernel} would be somewhat uninspiring, and this is the reason why we chose from the outset of this section to resolve a given $x\in T$ into $x=x_0+\epsilon w$.  When we wish to obtain a uniform (away from poles) approximation to $\mathcal{P}_m$ or any of the other rational Painlev\'e-II functions, we can simply set $w=0$ throughout and hence $x_0=x$.  This is the easiest way to make plots comparing the approximating functions with the true rational solutions of the Painlev\'e-II equation, and it is the approach taken in Figure~\ref{fig:g1-U-compare-T}, which displays a measure of the magnitude of the uniform approximation $\dot{\mathcal{U}}_m(0;x)$ as $x$ varies in $T$.  This figure shows how remarkably accurately this approximation resolves the locations of the poles and zeros of the rational function $\mathcal{U}_m((m-\tfrac{1}{2})^{2/3}x)$.  The agreement is remarkable even for quite small values of $m$ (for example, note the accuracy in the upper left-hand panel of Figure~\ref{fig:g1-U-compare-T} where $m=2$).
On the other hand, when one wishes to understand the nature of the rational Painlev\'e-II solutions $(\mathcal{P}_m,\mathcal{Q}_m)$ as meromorphic functions, it is better to fix $x_0\in T$ and let $w$ vary, as it is quite difficult to extract the (true) local meromorphicity from the approximating functions $(\dot{\mathcal{P}}_m,\dot{\mathcal{Q}}_m)$ by expanding in the ``parametric'' argument $x_0$.
The idea we have in mind is that $x_0$ parametrizes the base space $T$, while for each $x_0\in T$, $w\in\mathbb{C}$ is a coordinate in the tangent space to $T$.  The four approximating functions $\dot{\mathcal{U}}_m$, $\dot{\mathcal{V}}_m$, $\dot{\mathcal{P}}_m$, and $\dot{\mathcal{Q}}_m$ may therefore be interpreted as functions on the tangent bundle to $T$.  For the functions $\dot{\mathcal{P}}_m$ and $\dot{\mathcal{Q}}_m$, Corollary~\ref{cor-kernel} relates the tangent space coordinates to local perturbations of the base point $x_0\in T$.  This gives rise to the idea of simpler ``tangent approximations'' of the rational functions in which a fixed point $x_0\in T$ is given and $w$ is written in the form $w=\epsilon^{-1}(x-x_0)$, which is understood to be bounded as $m\to +\infty$.  These approximations are simpler in their dependence on $x$, but they are only accurate near $x=x_0$.  

The uniform and tangent (based at $x_0=0$) approximations to $m^{-2m/3}e^{-m\Lambda(x)}\mathcal{U}_m((m-\tfrac{1}{2})^{2/3}x)$ are illustrated and compared for $x\in T\cap\mathbb{R}$ in Figure~\ref{fig:g1-U-compare-real}, and for $x\in e^{i\pi/6}\mathbb{R}\cap T$ in Figures~\ref{fig:g1-U-compare-PiBySix-RealParts} and \ref{fig:g1-U-compare-PiBySix-ImagParts}.  
In each case, the uniform accuracy of the ``uniform'' approximations (obtained by setting $w=0$) over arbitrary compact subsets (and avoiding poles) is remarkable, even for $m$ as small as $m=3$, and for $m\ge 9$ their graphs are virtually indistinguishable to the eye.  The tangent approximation based at the origin is also accurate, but obviously only in a shrinking neighborhood of $x=0$, as expected.  

\begin{remark}
To see how the various approximations improve as $m$ increases, it is
important to be able to make plots for several different values of
$m$, and this can be done efficiently because it is possible to first
calculate the $m$-independent ingredients (quantities such as the
theta-function parameter $\mathcal{H}:T\to\mathbb{C}$) on a dense grid
of points within the elliptic region $T$.  Once these values have been
computed, they can be stored and reused many times.  Generating a plot
of, say, $\dot{\pu}_m(0;x)$ for any given value of $m$ on the same
grid of points is then easy, since $m$ only enters into the theta-function 
arguments and exponents as an explicit multiplicative
parameter.  Of course, as $m$ increases, the approximation
$\dot{\pu}_m(0;x)$ oscillates more violently as a function of $x\in
T$, so the plots can only be accurate if the length scale of
oscillation is large compared to the spacing of the grid on which the
$m$-independent data has been previously computed.  So in practice the
data generated on a given grid can be used to plot the approximate
solutions up to some maximum value of $m$ determined by the grid
spacing.

To calculate the $m$-independent data on a suitable dense grid
requires first solving for the four branch points of the radical
$R(z)$, so one needs to implement an iterative solver for the Boutroux
equations \eqref{eq:g1-Boutroux} and then (once $\Pi=u+iv$ is found
for a given $x_0$) for the moment equations \eqref{eq:g1-moments}.
Since such a solver needs to have a sufficiently accurate initial
guess, it is convenient to organize the calculation so that one moves
sequentially from each grid point to a nearest neighbor, taking the
calculated values of $\Pi$ and the four branch points at one value of
$x_0$ as initial guesses for the iterative calculation at a nearby
value of $x_0$.

We used a grid consisting of equally-spaced points on a large number
of rays through the origin and implemented such an iterative
root-finding scheme to generate data files of the various
$m$-independent quantities needed to make the figures in this paper.
This is a useful approach because each ray begins at the origin
$x_0=0$, and the solution of the Boutroux equations and moment
equations is known explicitly for $x_0=0$.  All of our calculations
were done in Mathematica 9, and the Mathematica notebook that both
generated the data files and also that generated the figures in this
paper by reading back in the data from the files and varying the value
of $m$ is available from the authors on request.
\end{remark}

As a further application of the relationship between the base space coordinate $x_0$ and the ``microscopic'' tangent space coordinate $w$, we offer a proof of the following distributional convergence result.  Recall the definition \eqref{eq:g1-dotP-average-define} of the quantity $\langle\dot{\mathcal{P}}\rangle$ as a function of $x_0\in T$.
Also recall the space $\mathscr{D}'(T)$ of distributions dual to the space 
$\mathscr{D}(T)$, the latter consisting of complex-valued test functions with 
$C^\infty$ real and imaginary parts and compact support in $T$ and equipped with 
the standard seminorm-based topology.
\begin{theorem} As $m\to +\infty$, 
\begin{equation}
m^{-1/3}\mathcal{P}_m((m-\tfrac{1}{2})^{2/3}\cdot) \to \langle\dot{\mathcal{P}}\rangle(\cdot) \quad\text{in $\mathscr{D}'(T)$}.
\end{equation}
\label{theorem:g1-weak-limit}
\end{theorem}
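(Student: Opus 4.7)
The plan is to test against an arbitrary $\phi\in\mathscr{D}(T)$ and establish
$$
\iint_T m^{-1/3}\mathcal{P}_m((m-\tfrac{1}{2})^{2/3}x)\phi(x)\,dA(x)
\;\longrightarrow\;
\iint_T \langle\dot{\mathcal{P}}\rangle(x)\phi(x)\,dA(x),
$$
where both sides make sense because $\mathcal{P}_m$ and $\dot{\mathcal{P}}_m$ have only simple poles, and $1/|z|$ is locally integrable in two dimensions. First I would fix a compact set $K\subset T$ containing $\mathrm{supp}(\phi)$ and partition a neighborhood of $K$ into a disjoint union of cells $D_k$ of diameter $r=r_m$ centered at grid points $x_0^{(k)}$, where $r$ is chosen to satisfy $\epsilon \ll r \ll 1$ (a concrete choice such as $r=\epsilon^{1/2}$ will be optimized at the end). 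On each cell introduce the local coordinate $w=\epsilon^{-1}(x-x_0^{(k)})$, so $D_k$ corresponds to a region $\tilde D_k$ in the $w$-plane of diameter $r/\epsilon\to\infty$. Using smoothness of $\phi$ to replace $\phi(x_0^{(k)}+\epsilon w)$ by $\phi(x_0^{(k)})$ with error $O(r)$ in supremum norm, the task reduces to comparing $\epsilon^2\iint_{\tilde D_k}m^{-1/3}\mathcal{P}_m\,dA(w)$ with $\epsilon^2\iint_{\tilde D_k}\dot{\mathcal{P}}_m(w;x_0^{(k)})\,dA(w)$ and then averaging the latter.

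Next I would invoke Theorem~\ref{theorem-g1-basic} to approximate
$m^{-1/3}\mathcal{P}_m((m-\tfrac{1}{2})^{2/3}x) = \dot{\mathcal{P}}_m(w;x_0^{(k)}) + O(m^{-1})$ uniformly for $w\in\tilde D_k$ bounded away from the zero lattice $\mathscr{Z}_m[\dot{\mathcal{U}}](x_0^{(k)})$. For the averaged quantity, I would exploit the fact that $\dot{\mathcal{P}}_m(\,\cdot\,;x_0^{(k)})$ is a doubly periodic meromorphic function with fundamental periods $2\pi i/c_1$ and $\mathcal{H}/c_1$: $\tilde D_k$ contains of order $(r/\epsilon)^2\dot{\sigma}_{\mathrm P}(x_0^{(k)})$ fundamental period parallelograms and has boundary of length of order $r/\epsilon$, so by the definition \eqref{eq:g1-dotP-average-define}
$$
\mathrm{Area}(\tilde D_k)^{-1}\iint_{\tilde D_k}\dot{\mathcal{P}}_m(w;x_0^{(k)})\,dA(w)
=\langle\dot{\mathcal{P}}\rangle(x_0^{(k)})+O(\epsilon/r).
$$
Summing the cell contributions yields a Riemann sum
$\sum_k \mathrm{Area}(D_k)\,\phi(x_0^{(k)})\langle\dot{\mathcal{P}}\rangle(x_0^{(k)})$,
which converges to $\iint_K \phi\langle\dot{\mathcal{P}}\rangle\,dA$ as $r\downarrow 0$ by continuity of $\langle\dot{\mathcal{P}}\rangle$ on the compact set $K$ (a consequence of the smooth dependence of $\Pi$ on $x_0\in T$ noted after \eqref{eq:g1-dbarPi} and the explicit formula in Proposition~\ref{prop:g1-dotPaverage}).

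The main obstacle is controlling the neighborhoods of poles, where the remainder in Theorem~\ref{theorem-g1-basic} ceases to be small in absolute terms. To handle this I would excise, within each cell, small $w$-disks $B_\delta$ of radius $\delta$ about every point of $\mathscr{Z}_m[\dot{\mathcal{U}}](x_0^{(k)})\cap\tilde D_k$ and every point of $\mathscr{P}_m(x_0^{(k)})\cap\tilde D_k$. The combined area of these excised regions, pulled back to the $x$-plane via $x=x_0^{(k)}+\epsilon w$, is $O(\delta^2\,\mathrm{Area}(K))$ by the uniform planar density $\dot{\sigma}_{\mathrm P}$, and on them $\dot{\mathcal{P}}_m=1/(w-w_{\mathrm z})+\text{bounded}$ near each zero $w_{\mathrm z}$ so the excised integral is $O(\delta)$ times the number of poles, i.e.\ $O(\delta\cdot r^2/\epsilon^2)$ in $w$, giving an $x$-plane contribution of $O(\delta\,\mathrm{Area}(K))$. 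By Corollary~\ref{cor:g1-pole-zero-approx}, the actual simple poles of $\mathcal{P}_m$ lie within $O(\epsilon)$ of those of $\dot{\mathcal{P}}_m$; taking $\delta$ slightly larger than a fixed multiple of $\epsilon$ ensures each excised disk contains the corresponding pole of $\mathcal{P}_m$, and a residue-style argument exploiting the local structure $\mathcal{P}_m\sim 1/(\zeta-\zeta_{\mathrm{p}})$ and $\dot{\mathcal{P}}_m\sim 1/(w-w_{\mathrm z})$ shows that the two simple-pole contributions in the local integral cancel to leading order. Sending $\delta\downarrow 0$ after $m\to\infty$, and finally optimizing over $r$, eliminates all error terms and completes the proof.
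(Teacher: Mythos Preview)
Your overall architecture is sound and close in spirit to the paper's argument, but there is a genuine gap at the step where you invoke Theorem~\ref{theorem-g1-basic}. That theorem (and the underlying Proposition~\ref{prop:g1-cheese-approx}) gives the approximation $m^{-1/3}\mathcal{P}_m=\dot{\mathcal{P}}_m(w;x_0)+\mathcal{O}(m^{-1})$ only for \emph{bounded} $w$; indeed the Swiss-cheese domain $\mathscr{S}_m(x_0,\delta)$ in \eqref{eq:g1-cheese-def} explicitly imposes $|w|\le\delta^{-1}$. In your setup the local variable $w=\epsilon^{-1}(x-x_0^{(k)})$ ranges over $\tilde D_k$, a region of diameter $r/\epsilon\to\infty$, so the hypothesis fails and you cannot replace $m^{-1/3}\mathcal{P}_m$ by the tangent approximation $\dot{\mathcal{P}}_m(\,\cdot\,;x_0^{(k)})$ uniformly across the cell. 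The same obstruction prevents you from using Corollary~\ref{cor-kernel} to relate $\dot{\mathcal{P}}_m(0;x)$ to $\dot{\mathcal{P}}_m(w;x_0^{(k)})$ across the whole cell, since that corollary also requires the shift $\zeta$ to be bounded.

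The paper resolves this by choosing the cells very differently: rather than generic squares of an intermediate scale $r$, it introduces real coordinates $(\alpha(x),\beta(x))$ that decompose the fast phase $\kappa_1(x)U(x)/\epsilon$ along the lattice directions $2\pi i$ and $\mathcal{H}$, and takes the cells $T_{jk}$ to be preimages of $\epsilon\times\epsilon$ squares in $(\alpha,\beta)$. The macro--micro correspondence of Corollary~\ref{cor-kernel} then makes each $\epsilon^{-1}(T_{jk}-x_0^{jk})$ an $\mathcal{O}(m^{-1})$ perturbation of a single period parallelogram $\pgram$ in the $w$-plane. This keeps $w$ uniformly bounded (so Theorem~\ref{theorem-g1-basic} applies legitimately), makes the local average equal $\langle\dot{\mathcal{P}}\rangle(x_0^{jk})$ exactly rather than up to an $\mathcal{O}(\epsilon/r)$ boundary error, and eliminates the need for your intermediate scale $r$ and the optimization over it. Your pole-excision argument is a reasonable way to handle the integrability issue, but it does not repair the boundedness-of-$w$ problem; to salvage your approach you would effectively have to subdivide each $D_k$ into period-parallelogram-sized pieces, which is precisely the paper's construction.
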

\begin{proof}
We have to show that, given any test function $\phi\in\mathscr{D}(T)$, we have
\begin{equation}
\lim_{m\to +\infty}I_m[\phi] = \iint_T\langle\dot{\mathcal{P}}\rangle(x)\phi(x)\,dA(x)
\end{equation}
where
\begin{equation}
I_m[\phi]:=\iint_T m^{-1/3}\mathcal{P}_m((m-\tfrac{1}{2})^{2/3}x)\phi(x)\,dA(x)
\end{equation}
and where $dA(x)$ denotes the area element $d\mathrm{Re}(x)\,d\mathrm{Im}(x)$.  

We begin by subdividing the domain $T$ into a large number of small, curvilinear parallelograms in the following way.  Because the periods $2\pi i$ and $\mathcal{H}$ of $\omega$ are necessarily independent over the reals, for each $x\in T$ there exist unique real numbers $\alpha(x)$ and $\beta(x)$
such that $\kappa_1(x)U(x)=2\pi i\alpha(x)+\mathcal{H}(x)\beta(x)$; explicitly,
\begin{equation}
\beta(x):=\frac{\mathrm{Re}(\kappa_1(x)U(x))}{\mathrm{Re}(\mathcal{H}(x))}\quad\text{and}\quad
\alpha(x):=\frac{1}{2\pi}\left[\mathrm{Im}(\kappa_1(x)U(x))-\beta(x)\mathrm{Im}(\mathcal{H}(x))\right].
\end{equation}
The coordinates $(\alpha,\beta)$ locally resolve the complex-valued and rapidly-varying global nonlinear phase in the approximate formula \eqref{eq:g1-dotP-formula} into components in the direction of the fundamental periods of the local period lattice of $\dot{\mathcal{P}}_m$ viewed as a function of the tangent space coordinate $w$.  It is essentially the relationship between the macroscopic coordinate $x$ and the microscopic coordinate $w$ embodied in the statement of Corollary~\ref{cor-kernel} that implies that the mapping $\mu:(\mathrm{Re}(x),\mathrm{Im}(x))\mapsto (\alpha(x),\beta(x))$ is invertible on $T$.

With these real-valued coordinates defined on $T$, we may divide $T$ into small domains $T_{jk}$ doubly-indexed by integers $j,k\in\mathbb{Z}$ defined by simple inequalities:
\begin{equation}
T_{jk}:=\left\{x\in T: \epsilon (j-\tfrac{1}{2})\le \alpha(x) < \epsilon (j+\tfrac{1}{2}),\,\epsilon k\le\beta(x)<\epsilon(k+1)\right\},\quad \epsilon = (m-\tfrac{1}{2})^{-1}.
\end{equation}
That is, $T_{jk}$ is the preimage under $\mu$ of a small coordinate square in the $(\alpha,\beta)$-plane.
The tiling of $T$ by the subdomains $T_{jk}$ is illustrated in Figure~\ref{fig:ParallelogramPictures}, which (incidentally) also shows that the intersection points of the level curves of $\alpha$ and $\beta$ (the vertices of the curvilinear parallelograms $T_{jk}$) provide very accurate approximations to the locations of the poles of the rational function $\mathcal{U}_m$.
\begin{figure}[h]
\begin{center}
\includegraphics[width=0.4\linewidth]{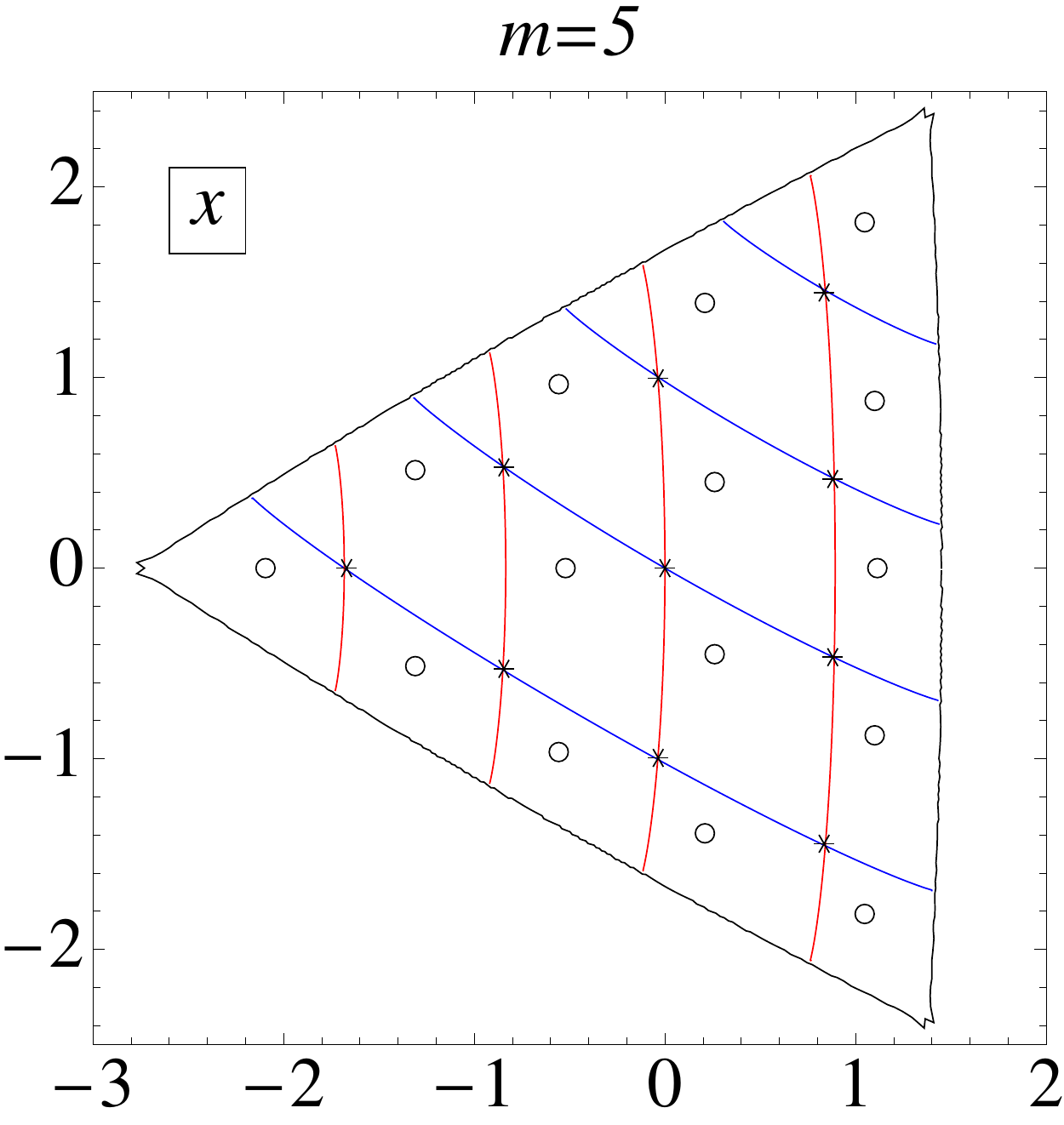}%
\hspace{0.15\linewidth}%
\includegraphics[width=0.4\linewidth]{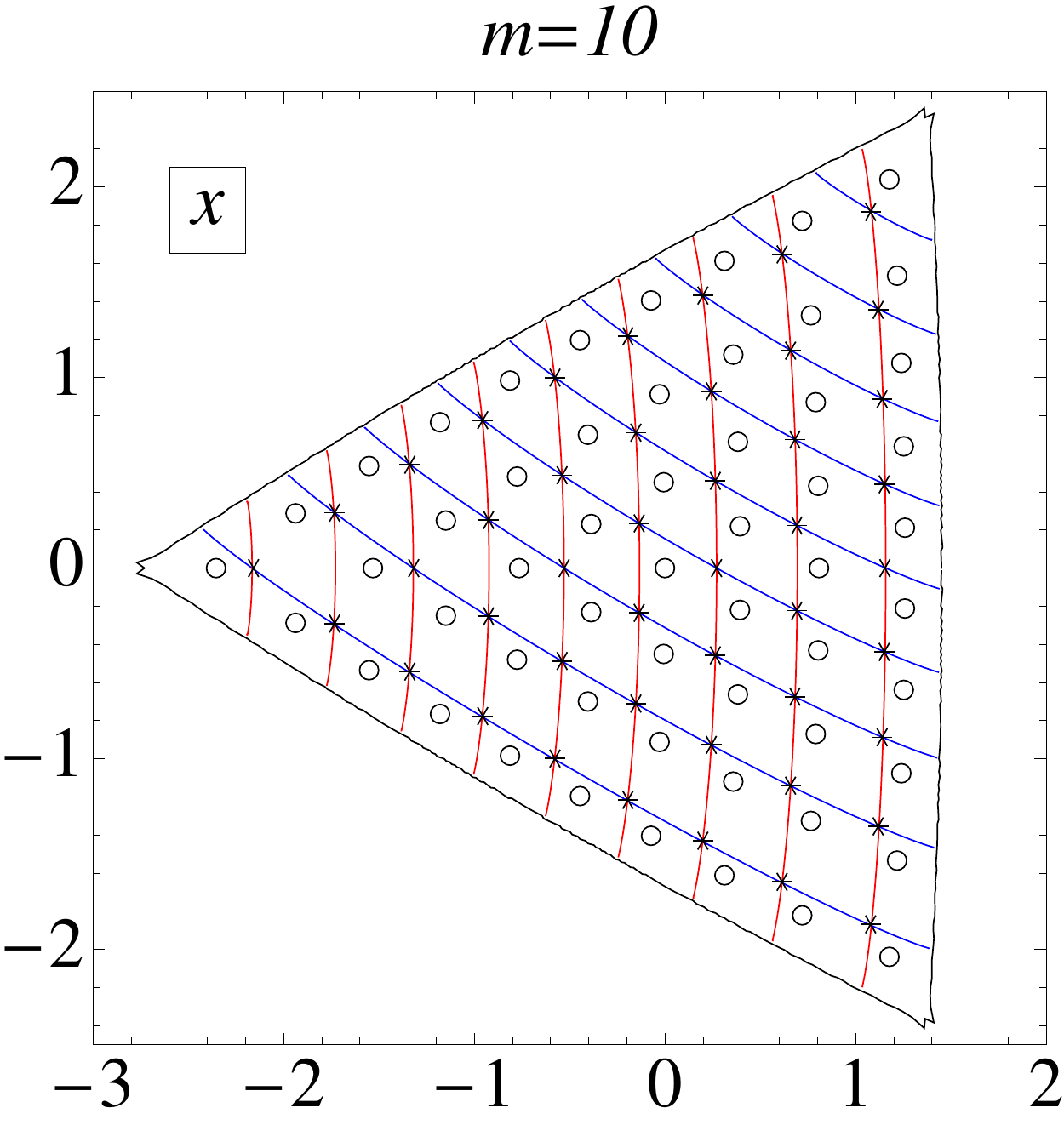}
\end{center}
\caption{\emph{The domains $T_{jk}$ are bounded by the $\epsilon(\mathbb{Z}-\tfrac{1}{2})$-level curves of $\alpha(x)$ (blue) and the $\epsilon\mathbb{Z}$-level curves of $\beta(x)$ (red).  For reference, in each figure are plotted the exact locations of the poles (with asterisks) and zeros (with circles) of $\mathcal{U}_m((m-\tfrac{1}{2})^{2/3}x)$, corresponding to poles of $\mathcal{P}_m((m-\tfrac{1}{2})^{2/3}x)$ with residues of $-1$ and $+1$ respectively. The level curves (deduced from an approximation) appear to pass nearly exactly through the poles.}}
\label{fig:ParallelogramPictures}
\end{figure}
Since $\phi$ has compact support in $T$, we can write
\begin{equation}
I_m[\phi]=\sum_{(j,k)\in\mathbb{Z}^2}\iint_{T_{jk}}m^{-1/3}\mathcal{P}_m((m-\tfrac{1}{2})^{2/3}x)\phi(x)\,dA(x),
\end{equation}
even though, strictly speaking, $T_{jk}$ is only properly defined for a finite number of pairs of indices $(j,k)$ for each given value of $m\in \mathbb{Z}_+$.  Let $x_0^{jk}$ denote an arbitrary point of $T_{jk}$.  Writing $x=x_0^{jk}+\epsilon w$ and using Theorem~\ref{theorem-g1-basic} we claim that
\begin{equation}
\iint_{T_{jk}}m^{-1/3}\mathcal{P}_m((m-\tfrac{1}{2})^{2/3}x)\phi(x)\,dA(x) = \epsilon^2\iint_{\epsilon (T_{jk}-x_0^{jk})}\dot{\mathcal{P}}_m(w;x_0^{jk})\phi(x_0^{jk}+\epsilon w)\,dA(w) + \mathcal{O}(m^{-3})
\end{equation}
holds, where the error term is uniform for all $(j,k)$ corresponding to contributions from the compact support of $\phi$ in $T$.  Now, due to the macro-micro correspondence afforded by Corollary~\ref{cor-kernel}, $\epsilon (T_{jk}-x_0^{jk})$ is a domain in the $w$-plane that is only slightly perturbed (by $\mathcal{O}(m^{-1})$) from a translate of the exact parallelogram $\pgram$ with sides $2\pi i/c_1$ and $\mathcal{H}/c_1$ (both functions of $x_0=x_0^{jk}$).  Moreover, $\phi(x_0^{jk}+\epsilon w)$ can be approximated by $\phi(x_0^{jk})$ to the same accuracy.  Therefore, we also have
\begin{equation}
\iint_{T_{jk}}m^{-1/3}\mathcal{P}_m((m-\tfrac{1}{2})^{2/3}x)\phi(x)\,dA(x)=
\epsilon^2\phi(x_0^{jk})\iint_{\pgram}\dot{\mathcal{P}}_m(w;x_0^{jk})\,dA(w) + \mathcal{O}(m^{-3})
\end{equation}
with the same caveat on the uniformity of the error term.   By definition of $\langle\dot{\mathcal{P}}\rangle$ (see \eqref{eq:g1-dotP-average-define}), we then have
\begin{equation}
\iint_{T_{jk}}m^{-1/3}\mathcal{P}_m((m-\tfrac{1}{2})^{2/3}x)\phi(x)\,dA(x)=\epsilon^2\langle\dot{\mathcal{P}}\rangle(x_0^{jk})\phi(x_0^{jk})
\iint_\pgram\,dA(w) + \mathcal{O}(m^{-3}).
\end{equation}
Again invoking Corollary~\ref{cor-kernel}, we claim that the Jacobian $J(x)$ of the mapping $\mu$ is exactly the double integral $\iint_\pgram\,dA(w)$.  By summing over $(j,k)\in\mathbb{Z}^2$, we therefore arrive at
\begin{equation}
I_m[\phi]=\epsilon^2\sum_{(j,k)\in\mathbb{Z}^2}\langle\dot{\mathcal{P}}\rangle(x_0^{jk})\phi(x_0^{jk})J(x_0^{jk}) + \mathcal{O}(m^{-1}).
\end{equation}
But this is a Riemann sum based on a partition of a region of the $(\alpha,\beta)$-plane into squares of equal area $\epsilon^2$, and consequently, as $\epsilon\to 0$ when $m\to +\infty$,
\begin{equation}
\lim_{m\to +\infty}I_m[\phi]=\iint_{\mu(T)}\left[\langle\dot{\mathcal{P}}\rangle\phi J\right]\circ\mu^{-1}(\alpha,\beta)\,dA(\alpha,\beta) = \iint_T\langle\dot{\mathcal{P}}\rangle(x)\phi(x)\,dA(x),
\end{equation}
as desired.
\end{proof}

\begin{corollary}
Let $\dot{\mathcal{P}}_\mathrm{macro}:\mathbb{C}\to\mathbb{C}$ be defined as follows:
\begin{equation}
\dot{\mathcal{P}}_\mathrm{macro}(x):=\begin{cases}
\langle\dot{\mathcal{P}}\rangle(x),\quad &x\in T,\\
-\tfrac{1}{2}S(x),\quad & x\in\mathbb{C}\setminus T.
\end{cases}
\end{equation}
(Note that $\dot{\mathcal{P}}_\mathrm{macro}$ is continuous across $\partial T$ as a consequence of Proposition~\ref{prop:g1-average-boundary}.)  Then the rescaled rational Painlev\'e-II function $m^{-1/3}\mathcal{P}_m((m-\tfrac{1}{2})^{2/3}\cdot)$ converges to $\dot{\mathcal{P}}_\mathrm{macro}(\cdot)$ in the sense of distributions in $\mathscr{D}'(\mathbb{C}\setminus\partial T)$.
\label{corollary:g1-global-weak-convergence}
\end{corollary}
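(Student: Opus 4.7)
The plan is to reduce to the two convergence results already established, one on each side of $\partial T$, via a trivial partition of the test function. Fix $\phi\in\mathscr{D}(\mathbb{C}\setminus\partial T)$ and let $K:=\mathrm{supp}(\phi)$. Because $\partial T$ is closed and $K$ is compact and disjoint from it, $\mathrm{dist}(K,\partial T)>0$, so the characteristic functions of the two connected components of $\mathbb{C}\setminus\partial T$, namely $T$ and $\mathbb{C}\setminus\overline{T}$, are constant on a neighborhood of $K$. Consequently $\phi$ decomposes as $\phi=\phi_T+\phi_\mathrm{ext}$ with $\phi_T:=\phi\cdot\mathbf{1}_T\in\mathscr{D}(T)$ and $\phi_\mathrm{ext}:=\phi\cdot\mathbf{1}_{\mathbb{C}\setminus\overline{T}}\in\mathscr{D}(\mathbb{C}\setminus\overline{T})$, each piece being smooth by virtue of $\phi$ vanishing in a neighborhood of $\partial T$.

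Next I would treat the two pieces separately. For the interior piece, Theorem~\ref{theorem:g1-weak-limit} yields directly
\begin{equation*}
\iint m^{-1/3}\mathcal{P}_m((m-\tfrac{1}{2})^{2/3}x)\phi_T(x)\,dA(x)\longrightarrow\iint_T\langle\dot{\mathcal{P}}\rangle(x)\phi_T(x)\,dA(x).
\end{equation*}
For the exterior piece, observe that $\mathrm{supp}(\phi_\mathrm{ext})$ is a compact subset of the open set $\mathbb{C}\setminus\overline{T}$ and hence, for all sufficiently large $m$, is contained in the region $K_m$ of \eqref{genus-zero-region-Km}. On this compact set Theorem~\ref{main-genus-zero-thm} provides the uniform (strong) asymptotic
\begin{equation*}
m^{-1/3}\mathcal{P}_m((m-\tfrac{1}{2})^{2/3}x)=-\tfrac{1}{2}S(x)+\mathcal{O}(m^{-1}),
\end{equation*}
which, integrated against the bounded compactly supported density $\phi_\mathrm{ext}$, gives
\begin{equation*}
\iint m^{-1/3}\mathcal{P}_m((m-\tfrac{1}{2})^{2/3}x)\phi_\mathrm{ext}(x)\,dA(x)\longrightarrow\iint_{\mathbb{C}\setminus\overline{T}}\bigl(-\tfrac{1}{2}S(x)\bigr)\phi_\mathrm{ext}(x)\,dA(x).
\end{equation*}
Adding the two limits and using the definition of $\dot{\mathcal{P}}_\mathrm{macro}$ (together with the continuity across $\partial T$ asserted parenthetically, which is immediate from Proposition~\ref{prop:g1-average-boundary}) identifies the sum as $\iint\dot{\mathcal{P}}_\mathrm{macro}(x)\phi(x)\,dA(x)$.

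There is no substantive obstacle in this argument; it is purely a gluing of the two main theorems across the Stokes curve $\partial T$, which is excluded from the domain of the distributional statement precisely so that no boundary layer analysis is required. The only nuance is verifying that $\phi_T$ and $\phi_\mathrm{ext}$ are genuine test functions on their respective open domains, which follows from the positive distance between $\mathrm{supp}(\phi)$ and $\partial T$. The asymmetry between the strong convergence available outside $T$ and the merely weak convergence available inside $T$ is harmless, since strong convergence on compact sets in particular implies the corresponding integrated convergence against any test function.
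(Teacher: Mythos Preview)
Your proof is correct and follows essentially the same approach as the paper: decompose the test function into its interior and exterior pieces using the positive distance of its support from $\partial T$, then invoke Theorem~\ref{theorem:g1-weak-limit} on $T$ and the uniform convergence from Theorem~\ref{main-genus-zero-thm} on $\mathbb{C}\setminus\overline{T}$. Your write-up is in fact somewhat more explicit than the paper's (in particular your remark that $\mathrm{supp}(\phi_\mathrm{ext})\subset K_m$ for large $m$ and that strong convergence implies the integrated limit), but the underlying argument is identical.
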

\begin{proof}
Given a test function $\phi\in\mathscr{D}(\mathbb{C}\setminus\partial T)$, the boundary of $T$ divides the support of $\phi$ into two disjoint compact components, one contained in $T$ and one contained in the open exterior of $\overline{T}$.  The part of the support in $T$ is handled by Theorem~\ref{theorem:g1-weak-limit}, while the corresponding weak convergence result for the complementary part of the support follows from local uniform convergence of $m^{-1/3}\mathcal{P}_m((m-\tfrac{1}{2})^{2/3}\cdot)$ to $-S(\cdot)/2$ as guaranteed by Theorem~\ref{main-genus-zero-thm}.
\end{proof}
Note that we cannot draw a conclusion in the case that the support of the test function straddles the boundary $\partial T$, even though the macroscopic limit function $\dot{\mathcal{P}}_\mathrm{macro}(\cdot)$ is continuous there.  Different analysis is required to study the asymptotic behavior of $m^{-1/3}\mathcal{P}_m((m-\tfrac{1}{2})^{2/3}x)$ for $x$ near $\partial T$; this will be the subject of a subsequent paper \cite{Buckingham-rational-crit}.

Plots of the real and imaginary parts of the continuous function $\dot{\mathcal{P}}_\mathrm{macro}:\mathbb{C}\to\mathbb{C}$ are shown in Figure~\ref{fig:g1-weak-limit}.
Although $\dot{\mathcal{P}}_\mathrm{macro}(\cdot)$ is continuous, analytic for $x\in\mathbb{C}\setminus\overline{T}$, and smooth but not analytic for $x\in T$, it is not (real) differentiable at $\partial T$, where it exhibits sharp gradients.
\begin{figure}[h]
\begin{center}
\includegraphics[width=2.5in]{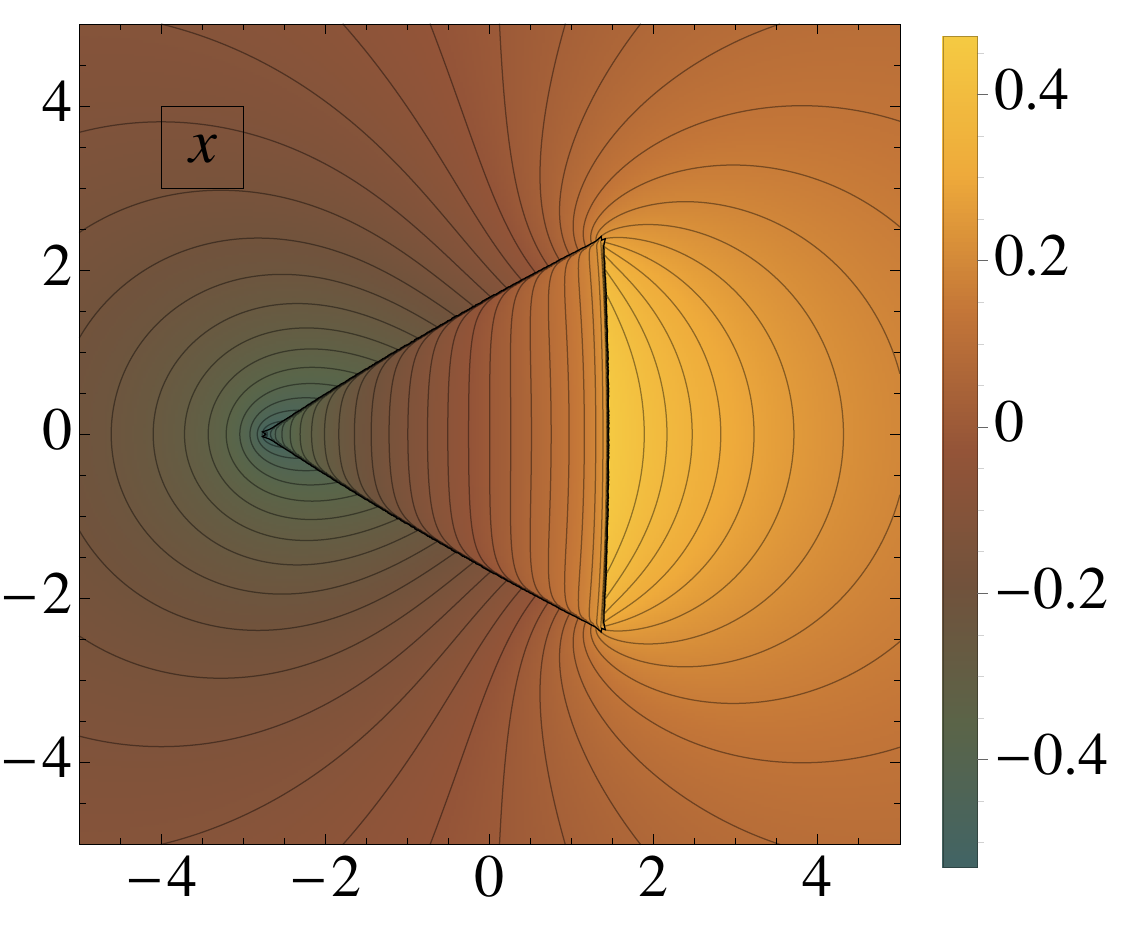}\hspace{0.5 in}%
\includegraphics[width=2.5in]{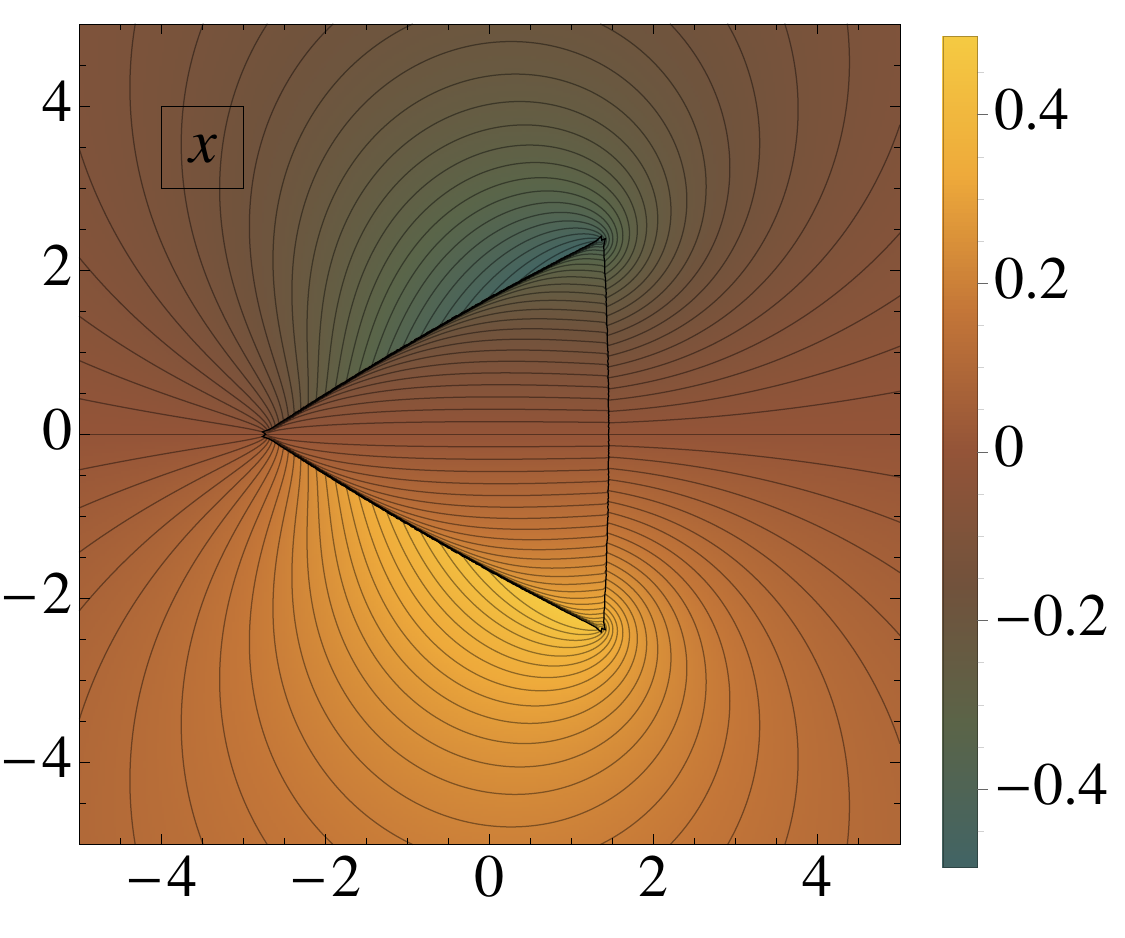}
\end{center}
\caption{\emph{Contour plots of the real (left) and imaginary (right) parts of the function $\dot{\mathcal{P}}_\mathrm{macro}:\mathbb{C}\to\mathbb{C}$.  The boundary of $T$ is superimposed for reference.}}
\label{fig:g1-weak-limit}
\end{figure}

The following results all have similar proofs.  Recall the quantity $\langle\dot{\mathcal{P}}\rangle_\mathbb{R}$ defined as a function of $x_0\in T\cap\mathbb{R}$ by \eqref{eq:g1-dotP-real-average}.  Also recall its relation with $\langle\dot{\mathcal{P}}\rangle$ as given by Proposition~\ref{prop:g1-equivalent-averages}.
\begin{theorem}
Let $\phi\in\mathscr{D}((x_c,x_e))$ be a test function with real compact support in the interval $(x_c,x_e)$.  Then
\begin{equation}
\lim_{m\to +\infty}\dashint_{x_c}^{x_e}m^{-1/3}\mathcal{P}_m((m-\tfrac{1}{2})^{2/3}x)\phi(x)\,dx = 
\int_{x_c}^{x_e}\langle\dot{\mathcal{P}}\rangle_\mathbb{R}(x)\phi(x)\,dx = 
\int_{x_c}^{x_e}\langle\dot{\mathcal{P}}\rangle (x)\phi(x)\,dx,
\end{equation}
where on the left-hand side the integral denotes the Cauchy-Hadamard principal value.
\label{theorem:g1-weak-limit-real}
\end{theorem}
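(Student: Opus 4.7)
The identity of the two integrals on the right-hand side is immediate from Proposition~\ref{prop:g1-equivalent-averages}, so the only claim requiring proof is the convergence statement itself. The strategy is a one-dimensional analogue of the proof of Theorem~\ref{theorem:g1-weak-limit}, carrying the Cauchy principal value through the argument to accommodate the real simple poles of $\mathcal{P}_m$ and $\dot{\mathcal{P}}_m$ on $T\cap\mathbb{R}$.

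First, I partition the compact support of $\phi$ by subintervals $I_k=[x_k,x_{k+1}]\subset(x_c,x_e)$ chosen so that, under the microscopic rescaling $x=x_0^k+\epsilon w$ with reference point $x_0^k\in I_k$ and $\epsilon=(m-\tfrac{1}{2})^{-1}$, the image of $I_k$ in the $w$-variable is exactly one fundamental real period $-\mathcal{H}_0(x_0^k)/c_1(x_0^k)$ of the approximating elliptic function $\dot{\mathcal{P}}_m(\,\cdot\,;x_0^k)$. Thus $|I_k|=\mathcal{O}(m^{-1})$, there are $\mathcal{O}(m)$ subintervals covering $\mathrm{supp}(\phi)$, and on each $I_k$ the smooth functions $\phi$, $\langle\dot{\mathcal{P}}\rangle_\mathbb{R}$, $c_1$, and $\mathcal{H}_0$ are constant up to relative error $\mathcal{O}(m^{-1})$. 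On each $I_k$ I apply Theorem~\ref{theorem-g1-basic} to replace $m^{-1/3}\mathcal{P}_m((m-\tfrac{1}{2})^{2/3}x)$ by $\dot{\mathcal{P}}_m(w;x_0^k)$ and invoke the definition \eqref{eq:g1-dotP-real-average} to obtain
\begin{equation*}
\dashint_{I_k} m^{-1/3}\mathcal{P}_m((m-\tfrac{1}{2})^{2/3}x)\phi(x)\,dx \;=\; -\epsilon\,\phi(x_0^k)\frac{\mathcal{H}_0(x_0^k)}{c_1(x_0^k)}\langle\dot{\mathcal{P}}\rangle_\mathbb{R}(x_0^k)\;+\;\mathrm{err}_k,
\end{equation*}
so that summation in $k$ produces a Riemann sum for $\int\langle\dot{\mathcal{P}}\rangle_\mathbb{R}(x)\phi(x)\,dx$ with step sizes $|I_k|=-\epsilon\mathcal{H}_0(x_0^k)/c_1(x_0^k)$ that converges as $m\to+\infty$ by continuity of the integrand.

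The principal-value accounting is the main technical obstacle and is also what distinguishes this argument from the two-dimensional version. By Corollary~\ref{cor:g1-pole-zero-approx} each real pole $x_p^{(m)}$ of $\mathcal{P}_m$ is paired with a real pole $\dot{x}_p^{(m)}$ of $\dot{\mathcal{P}}_m$ of the same residue $\pm 1$ and within distance $\mathcal{O}(m^{-2})$. I would write $\mathcal{P}_m-\dot{\mathcal{P}}_m$ locally on each $I_k$ as the sum of a regular remainder plus the explicit pole difference $\pm[(x-x_p^{(m)})^{-1}-(x-\dot{x}_p^{(m)})^{-1}]$, and show that the principal-value integral of the explicit difference against $\phi$ is $\mathcal{O}(m^{-2})$ (via standard smoothness of the Hilbert transform $\mathrm{PV}\int\phi(x)/(x-y)\,dx$ as a function of $y$), while the regular remainder is pointwise $\mathcal{O}(m^{-1})$ by Theorem~\ref{theorem-g1-basic}. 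Summing the per-interval error $\mathrm{err}_k=\mathcal{O}(m^{-2})$ over the $\mathcal{O}(m)$ subintervals gives total error $\mathcal{O}(m^{-1})=o(1)$. The essential difficulty is verifying that the per-pole $\mathcal{O}(m^{-2})$ bound holds with constants uniform in $k$, i.e., uniform across $\mathrm{supp}(\phi)$; this in turn requires the uniform control of Theorem~\ref{theorem-g1-basic} on compact subsets of $T$ together with the uniform-in-$k$ validity of the pole-pairing statement of Corollary~\ref{cor:g1-pole-zero-approx}. Once this uniformity is in place the remainder sum is $o(1)$, the Riemann-sum limit identifies the integral of $\langle\dot{\mathcal{P}}\rangle_\mathbb{R}\phi$, and the proof concludes.
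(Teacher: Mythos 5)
Your proposal is correct and is essentially the argument the paper intends: the paper gives no separate proof of Theorem~\ref{theorem:g1-weak-limit-real}, stating only that it has a proof ``similar'' to that of Theorem~\ref{theorem:g1-weak-limit}, and your one-dimensional adaptation — partitioning $\mathrm{supp}(\phi)$ into single real-period intervals, invoking Theorem~\ref{theorem-g1-basic} and the definition \eqref{eq:g1-dotP-real-average}, and passing to a Riemann sum — is exactly that adaptation. Your principal-value bookkeeping (pairing poles via Corollary~\ref{cor:g1-pole-zero-approx} and using smoothness of $\mathrm{PV}\!\int\phi(x)/(x-y)\,dx$ in $y$) supplies detail the paper leaves implicit and is sound, noting only that the residues of $m^{-1/3}\mathcal{P}_m$ and of $\dot{\mathcal{P}}_m$ in the $x$-variable are $\mathcal{O}(m^{-1})$ rather than $\pm 1$, which only improves your error estimates.
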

The graph of the weak (distributional) limit $\langle\dot{\mathcal{P}}\rangle_\mathbb{R}$ described by Theorem~\ref{theorem:g1-weak-limit-real} is superimposed on the plots of $m^{-1/3}\mathcal{P}_m((m-\tfrac{1}{2})^{2/3}x)$ and its approximations for real $x$ given in Figure~\ref{fig:g1-P-compare-real} as a ($m$-independent) brown curve.  While we do not have a good explanation for the fact, it is evident from these plots that $\langle\dot{\mathcal{P}}\rangle_\mathbb{R}(\cdot)$ also provides an accurate approximation of the upper envelope of the lower branches of the graph of $m^{-1/3}\mathcal{P}_m((m-\tfrac{1}{2})^{2/3}\cdot)$.

It is the idea of choosing test functions that are approximate characteristic functions (say of a small disk centered at a point in $T$) that gives rise to the interpretation of $\langle\dot{\mathcal{P}}\rangle(\cdot)$ as a limiting local average (over microscopic fluctuations happening on length scales $\Delta x\sim m^{-1}$) of the rescaled rational function $m^{-1/3}\mathcal{P}_m((m-\tfrac{1}{2})^{2/3}\cdot)$ within the region $T$.  In fact, Theorem~\ref{theorem:g1-weak-limit} also holds true if the test function $\phi\in\mathscr{D}(T)$ is simply replaced by a characteristic function of a measurable set.

Another quantity of interest is the macroscopic density of poles of $\mathcal{U}_m$ in the limit $m\to\infty$.  Actually, there are two different types of density we may consider:  planar density $\sigma_\mathrm{P}$ for complex $x_0\in T$, and linear density $\sigma_\mathrm{L}$ for $x_0\in T\cap\mathbb{R}$.  These may be defined as follows:
\begin{equation}
\label{g1-planar-density}
\sigma_\mathrm{P}(x_0):=\mathop{\lim_{m\to\infty}}_{m^{-1}\ll r_m\ll 1}\frac{1}{m^2}\frac{\#\{\text{poles $x$ of $\mathcal{U}_m((m-\tfrac{1}{2})^{2/3}x)$ with $|x-x_0|<r_m$}\}}{\pi r_m^2},\quad x_0\in T
\end{equation}
and 
\begin{equation}
\label{g1-linear-density}
\sigma_\mathrm{L}(x_0):=\mathop{\lim_{m\to\infty}}_{m^{-1}\ll r_m\ll 1}\frac{1}{m}\frac{\#\{\text{real poles $x$ of $\mathcal{U}_m((m-\tfrac{1}{2})^{2/3}x)$ in $(x_0-r_m,x_0+r_m)$}\}}{2r_m},\quad x_0\in T\cap\mathbb{R}
\end{equation}
provided that the limits exist.  Again with the help of the micro-macro correspondence following from the statement $\dot{\mathcal{U}}_m(w;x_0+\epsilon\zeta)=\dot{\mathcal{U}}_m(w+\zeta;x_0)+\mathcal{O}(m^{-1})$ and the pole-confinement result of Corollary~\ref{cor:g1-pole-zero-approx}, 
it is easy to deduce the following result.  Recall the functions $\dot{\sigma}_\mathrm{P}:T\to\mathbb{R}_+$ and $\dot{\sigma}_\mathrm{L}:T\cap\mathbb{R}\to\mathbb{R}_+$ defined by \eqref{eq:dotsigmaPdefine} and \eqref{eq:dotsigmaLdefine} respectively, and for which we have obtained the explicit formulae \eqref{eq:dotsigmaPformula}--\eqref{eq:dotsigmaLformula}.
\begin{theorem}
$\sigma_\mathrm{P}=\dot{\sigma}_\mathrm{P}$ and $\sigma_\mathrm{L}=\dot{\sigma}_\mathrm{L}$ hold as identities on $T$ and $T\cap\mathbb{R}$ respectively.
\label{theorem:g1-density-equalities}
\end{theorem}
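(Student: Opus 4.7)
The plan is to deduce both density identities from a counting argument built on two ingredients already in hand: the microscopic pole-to-pole correspondence from Corollary~\ref{cor:g1-pole-zero-approx}, which identifies each simple pole of $\mathcal{U}_m((m-\tfrac{1}{2})^{2/3}x)$ with a unique simple pole of the lattice-valued approximant $\dot{\mathcal{U}}_m^0(w;x_0)$ at microscopic distance $\mathcal{O}(m^{-1})$ in $w$ (hence $\mathcal{O}(m^{-2})$ in $x$), and the micro--macro chart change of Corollary~\ref{cor-kernel}, which allows the local coordinate $w$ to be reinterpreted as a small perturbation of the base point $x_0$. The latter is precisely what underlies the curvilinear-parallelogram tiling $\{T_{jk}\}$ constructed in the proof of Theorem~\ref{theorem:g1-weak-limit}: each $T_{jk}$ is the $\mu$-preimage of a square of area $\epsilon^2$ in the $(\alpha,\beta)$-coordinates adapted to the fundamental periods $2\pi i/c_1$ and $\mathcal{H}/c_1$ of the lattice $\mathscr{P}_m(x_0^{jk})$, and hence has area in the $x$-plane equal to $J(x_0^{jk})\epsilon^2(1+o(1))$, where $J(x)=\iint_{\mathfrak{p}}dA(w) = 1/\dot{\sigma}_\mathrm{P}(x)$ by \eqref{eq:dotsigmaPformula}.

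For the planar density, I fix $x_0\in T$ and a scale $r_m$ with $m^{-1}\ll r_m\ll 1$, and consider the disk $D_m:=\{x:|x-x_0|<r_m\}$. Covering $D_m$ by the parallelograms $T_{jk}$, I will argue that each such $T_{jk}$ contains exactly one pole of $\mathcal{U}_m((m-\tfrac{1}{2})^{2/3}x)$. Indeed, in the tangent-space coordinate $w$ at the base point $x_0^{jk}\in T_{jk}$, the image of $T_{jk}$ is, by Corollary~\ref{cor-kernel}, an $O(m^{-1})$-perturbation of a translate of the period parallelogram $\mathfrak{p}$ of $\mathscr{P}_m(x_0^{jk})$; this parallelogram contains exactly one pole of $\dot{\mathcal{U}}_m^0(\,\cdot\,;x_0^{jk})$, which by Rouché's theorem (the mechanism behind Corollary~\ref{cor:g1-pole-zero-approx}) gives rise to exactly one pole of $\mathcal{U}_m$ within an $O(m^{-2})$ neighborhood in $x$. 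Summing, the number of poles in $D_m$ is $N_m=\pi r_m^2/(J(x_0)\epsilon^2)\,(1+o(1))=\pi r_m^2 m^2\dot{\sigma}_\mathrm{P}(x_0)(1+o(1))$, where I absorb the variation of $J$ across $D_m$ into $1+o(1)$ since $r_m\to 0$. Dividing by $\pi r_m^2 m^2$ and sending $m\to\infty$ produces $\sigma_\mathrm{P}(x_0)=\dot{\sigma}_\mathrm{P}(x_0)$.

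The linear-density identity follows by a one-dimensional version of the same argument. For $x_0\in T\cap\mathbb{R}$ with $x_0+\epsilon w$ also real, I partition the interval $(x_0-r_m,x_0+r_m)$ into subintervals that, at each base point, pull back under the micro--macro identification to intervals of length $-\mathcal{H}_0/c_1$ containing exactly one real pole of $\dot{\mathcal{U}}_m^0$; the number of such subintervals covering $(x_0-r_m,x_0+r_m)$ is $2r_m\,m\,\dot{\sigma}_\mathrm{L}(x_0)(1+o(1))$ by \eqref{eq:dotsigmaLformula}, and the one-pole-per-subinterval bijection again follows from Corollary~\ref{cor:g1-pole-zero-approx}. (Schwarz symmetry of the approximant on $T\cap\mathbb{R}$, noted just after \eqref{eq:g1-Lambda-z-V}, ensures that the real poles of $\dot{\mathcal{U}}_m^0$ do correspond to real poles of $\mathcal{U}_m$.)

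The main technical obstacle is bookkeeping of the boundary errors at two scales. First, true poles are only located to accuracy $O(m^{-2})$ in $x$ while the parallelograms $T_{jk}$ have diameter $O(m^{-1})$, so a pole very close to an edge of $T_{jk}$ could nominally ``belong'' to the neighboring parallelogram; however, the density of such ambiguous poles relative to the bulk is $O(m^{-2}/m^{-1})=O(m^{-1})$, contributing $O(m^2 r_m^2/m)=o(m^2r_m^2)$ poles — negligible after normalization. Second, the parallelograms that straddle $\partial D_m$ rather than sit inside $D_m$ number $O(r_m/\epsilon)=O(mr_m)$, contributing a fraction $O(1/(mr_m))=o(1)$ of the total count since $mr_m\to\infty$ by hypothesis. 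Both error estimates require uniform control of the approximation in Corollary~\ref{cor:g1-pole-zero-approx} as the base point $x_0^{jk}$ ranges over the compact set $\overline{D_m}\subset T$, which follows from the uniformity already embedded in Theorem~\ref{theorem-g1-basic}.
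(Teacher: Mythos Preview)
Your argument is correct and follows exactly the route the paper indicates: the paper does not give a detailed proof but simply says the result follows from the micro--macro correspondence $\dot{\mathcal{U}}_m(w;x_0+\epsilon\zeta)=\dot{\mathcal{U}}_m(w+\zeta;x_0)+\mathcal{O}(m^{-1})$ together with the pole-confinement Corollary~\ref{cor:g1-pole-zero-approx}, and you have fleshed out precisely that sketch, including the two-scale boundary bookkeeping that the paper leaves implicit.
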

Plots of the densities $\sigma_\mathrm{P}$ and $\sigma_\mathrm{L}$ are shown in Figure~\ref{fig:g1-densities}.  
We note that one can similarly ask for the planar and linear macroscopic densities of zeros of 
the rational function $\mathcal{U}_m((m-\tfrac{1}{2})^{2/3}\cdot)$ (which equal the corresponding densities of poles), or for the planar and linear macroscopic densities of poles of the rational function $\dot{\mathcal{P}}_m((m-\tfrac{1}{2})^{2/3}\cdot)$ (which are exactly twice the corresponding densities of poles of $\mathcal{U}_m((m-\tfrac{1}{2})^{2/3}\cdot)$).

\begin{remark}
This brings us to an interesting connection with the application of the rational Painlev\'e-II functions to the construction of equilibrium configurations of interacting fluid vortices in planar flows as discussed by Clarkson \cite{Clarkson:2009}.  The poles and zeros of $\mathcal{U}_m$
turn out to correspond to the equilibrium locations in the plane (under the obvious identification $\mathbb{C}\cong\mathbb{R}^2$) of vortices of opposite unit vorticity.  The condition of fluid-mechanical equilibrium suggests a variational characterization of equilibrium configurations of vortices, and one might think that the limiting macroscopic density $\sigma_\mathrm{P}$ might 
arise in an appropriate continuum limit as the solution of a well-posed problem in the calculus of variations.  However, the fact that the ``vortex gas'' that emerges in the limit $m\to +\infty$ consists of oppositely charged vortices with equal densities means that one has to consider the limiting functional for signed measures of arbitrary mass, and it seems unlikely to us that the asymptotic variational problem will be well-posed.  In fact, as explained in \cite{Clarkson:2009}, there are many different equilibrium configurations of large numbers of oppositely-charged vortices, only some of which correspond to poles or zeros of rational Painlev\'e-II functions.
\end{remark}

\appendix

\section{The Standard Airy Parametrix}
\renewcommand{\theequation}{A-\arabic{equation}}
Recall the Airy function $y=\mathrm{Ai}(\xi)$, the solution of the differential equation $y''(\xi)-\xi y(\xi)=0$ uniquely specified by the asymptotic behavior
\begin{equation}
\mathrm{Ai}(\xi)=\frac{1}{2\xi^{1/4}\sqrt{\pi}}e^{-2\xi^{3/2}/3}\left(1-\frac{5}{48}\xi^{-3/2}+\mathcal{O}(\xi^{-3})\right),\quad\xi\to\infty,\quad |\arg(\xi)|<\pi.
\label{eq:AiryAppendix-AiryAsymp}
\end{equation}
Complete information about this special function can be found online in \cite{DLMF}.  
The Airy function is an entire function of $\xi$ that satisfies the identity
\begin{equation}
\mathrm{Ai}(\xi)+e^{-2\pi i/3}\mathrm{Ai}(\xi e^{-2\pi i/3}) + e^{2\pi i/3}\mathrm{Ai}(\xi e^{2\pi i/3})=0
\label{eq:AiryAppendix-AiryIdentity}
\end{equation}
and its derivative has the asymptotic behavior
\begin{equation}
\mathrm{Ai}'(\xi)=-\frac{\xi^{1/4}}{2\sqrt{\pi}}e^{-2\xi^{3/2}/3}\left(1+\frac{7}{48}\xi^{-3/2}+\mathcal{O}(\xi^{-3})\right),\quad
\xi\to\infty,\quad |\arg(\xi)|<\pi.
\label{eq:AiryAppendix-AiryPrimeAsymp}
\end{equation}
Now set $\xi:=(\tfrac{3}{4})^{2/3}\zeta$ and consider the matrix defined in disjoint and complementary sectors of the $\zeta$-plane by the following formulae:
\begin{equation}
\mathbf{A}(\zeta):=\sqrt{2\pi}\left(\frac{4}{3}\right)^{\sigma_3/6}
\begin{bmatrix}-\mathrm{Ai}'(\xi) & e^{2\pi i/3}\mathrm{Ai}'(\xi e^{-2\pi i/3})\\
-i\mathrm{Ai}(\xi) & ie^{-2\pi i/3}\mathrm{Ai}(\xi e^{-2\pi i/3})\end{bmatrix}
e^{2\xi^{3/2}\sigma_3/3},\quad
0<\arg(\zeta)<\frac{2\pi}{3},
\label{eq:AiryAppendix-ParametrixDef-I}
\end{equation}
\begin{equation}
\mathbf{A}(\zeta):=\sqrt{2\pi}\left(\frac{4}{3}\right)^{\sigma_3/6}
\begin{bmatrix}e^{-2\pi i/3}\mathrm{Ai}'(\xi e^{2\pi i/3}) & e^{2\pi i/3}\mathrm{Ai}'(\xi e^{-2\pi i/3})\\
ie^{2\pi i/3}\mathrm{Ai}(\xi e^{2\pi i/3}) & ie^{-2\pi i/3}\mathrm{Ai}(\xi e^{-2\pi i/3})\end{bmatrix}
e^{2\xi^{3/2}\sigma_3/3},\quad \frac{2\pi}{3}<\arg(\zeta)<\pi,
\label{eq:AiryAppendix-ParametrixDef-II}
\end{equation}
\begin{equation}
\mathbf{A}(\zeta):=\sqrt{2\pi}\left(\frac{4}{3}\right)^{\sigma_3/6}\begin{bmatrix}
e^{2\pi i/3}\mathrm{Ai}'(\xi e^{-2\pi i/3}) & -e^{-2\pi i/3}\mathrm{Ai}'(\xi e^{2\pi i/3})\\
ie^{-2\pi i/3}\mathrm{Ai}(\xi e^{-2\pi i/3}) & -ie^{2\pi i/3}\mathrm{Ai}(\xi e^{2\pi i/3})
\end{bmatrix}e^{2\xi^{3/2}\sigma_3/3},\quad -\pi<\arg(\zeta)<-\frac{2\pi}{3},
\label{eq:AiryAppendix-ParametrixDef-III}
\end{equation}
\begin{equation}
\mathbf{A}(\zeta):=\sqrt{2\pi}\left(\frac{4}{3}\right)^{\sigma_3/6}
\begin{bmatrix} -\mathrm{Ai}'(\xi) & -e^{-2\pi i/3}\mathrm{Ai}'(\xi e^{2\pi i/3})\\
-i\mathrm{Ai}(\xi) & -ie^{2\pi i/3}\mathrm{Ai}(\xi e^{2\pi i/3})\end{bmatrix}
e^{2\xi^{3/2}\sigma_3/3},\quad -\frac{2\pi}{3}<\arg(\zeta)<0.
\label{eq:AiryAppendix-ParametrixDef-IV}
\end{equation}
From the asymptotic formulae \eqref{eq:AiryAppendix-AiryAsymp} and \eqref{eq:AiryAppendix-AiryPrimeAsymp} it follows that
\begin{equation}
\mathbf{A}(\zeta)
\mathbf{V}^{-1}
\zeta^{-\sigma_3/4}=\mathbb{I} +\begin{bmatrix}\mathcal{O}(\zeta^{-3}) & \mathcal{O}(\zeta^{-1})\\
\mathcal{O}(\zeta^{-2}) & \mathcal{O}(\zeta^{-3})\end{bmatrix}\quad
\label{eq:AiryAppendix-ParametrixAsymp}
\end{equation}
as $\zeta\to\infty$ uniformly in all directions of the complex plane, where $\mathbf{V}$ is the unimodular and unitary matrix
\begin{equation}
\mathbf{V}:=\frac{1}{\sqrt{2}}\begin{bmatrix}1 & -i\\-i & 1\end{bmatrix},\quad \mathbf{V}^{-1}=\mathbf{V}^\dagger.
\label{eq:AiryAppendix-EigenvectorMatrix}
\end{equation}
The matrix function $\mathbf{A}$ is analytic in the complex $\zeta$-plane with the exception of  the four rays $\arg(\pm\zeta)=0$ and $\arg(\zeta)=\pm 2\pi/3$, along which $\mathbf{A}$ has jump discontinuities.  Taking the four rays to be oriented from the origin outward and using
\eqref{eq:AiryAppendix-AiryIdentity} yields the jump conditions satisfied by $\mathbf{A}$:
\begin{equation}
\mathbf{A}_+(\zeta)=\mathbf{A}_-(\zeta)\begin{bmatrix} 1 & e^{-\zeta^{3/2}} \\
0 & 1\end{bmatrix},\quad \arg(\zeta)=0,
\label{eq:AiryAppendix-AiryJump-I}
\end{equation}
\begin{equation}
\mathbf{A}_+(\zeta)=\mathbf{A}_-(\zeta)\begin{bmatrix}0 & -1\\1 & 0\end{bmatrix},\quad
\arg(-\zeta)=0,
\label{eq:AiryAppendix-AiryJump-II}
\end{equation}
\begin{equation}
\mathbf{A}_+(\zeta)=\mathbf{A}_-(\zeta)\begin{bmatrix} 1 & 0\\
-e^{\zeta^{3/2}} & 1\end{bmatrix},\quad\arg(\zeta)=\pm\frac{2\pi}{3}.
\label{eq:AiryAppendix-AiryJump-III}
 \end{equation}
From these jump conditions and the asymptotic condition \eqref{eq:AiryAppendix-ParametrixAsymp} it follows via a Liouville argument that $\det(\mathbf{A}(\zeta))\equiv 1$.

\section{Small-Norm Riemann-Hilbert Problems}
\label{small-norm-app}
\renewcommand{\theequation}{B-\arabic{equation}}

Here we gather together for completeness certain results necessary for the 
analysis of small-norm Riemann-Hilbert problems, in particular information 
about the moments of the solution at infinity that is often used implicitly.  

An \emph{arc} is the image of an injective continuously differentiable map $\zeta: I\to\mathbb{C}$, where $I\subset\mathbb{R}$ is an open interval (possibly infinite), for which both $\zeta$ and $\zeta'$ extend continuously to any finite endpoints of $I$ (we say the arc terminates at the image of any such endpoint).   By a \emph{contour} $\Sigma\subset\mathbb{C}$ we mean the closure of a finite union of arcs.  A point $\zeta$ on a contour $\Sigma$ is a \emph{regular point} of $\Sigma$ if there is a complex open neighborhood $U$ containing $\zeta$ such that $\Sigma\cap U$ is an arc.
The set of regular points of a contour $\Sigma$ is denoted $\Sigma^\circ$.  If each point of $\Sigma^\circ$ is assigned a definite orientation, then $\Sigma$ is called an \emph{oriented contour}.   If $\zeta\in\Sigma\setminus\Sigma^\circ$, then there is a complex open neighborhood $U$ containing $\zeta$ such that $\Sigma\cap U$ is the union of $\{\zeta\}$ and $M\ge 1$ pairwise disjoint arcs terminating at $\zeta$.  If we denote by $|\Sigma_1|$ the total length of the part of $\Sigma$ lying within the unit disk of the complex plane, then $|\Sigma_1|<\infty$ for every contour $\Sigma$.
\begin{definition}
An oriented contour $\Sigma$ (possibly unbounded) is called \emph{admissible} if 
\begin{itemize}
\item whenever $\zeta\in\Sigma\setminus\Sigma^\circ$, the (well-defined) tangents to each of the $M\ge 1$ arcs terminating at $\zeta$ are distinct, and
\item there exists a radius $R>0$ such that for $|\zeta|>R$ each unbounded arc of $\Sigma$ coincides exactly with a straight line.
\end{itemize}
\end{definition}
In some applications it is convenient to assume further that $\Sigma$ is a \emph{complete contour}, namely one for which
$\mathbb{C}\setminus\Sigma$ is the union of two complementary regions $\Omega_+$ and $\Omega_-$ such that each arc  of $\Sigma$ has been unambiguously oriented with $\Omega_+$ (respectively $\Omega_-$) lying on the left (respectively right).  The assumption of completeness can be made without loss of generality, because any admissible contour can always be ``completed'' by including a finite number of additional arcs to achieve completeness without violating any of the other requirements.  

Given an admissible contour $\Sigma$, a map $\mathbf{V}:\Sigma^\circ\to SL(2,\mathbb{C})$
is called an \emph{admissible jump matrix} if 
$\mathbf{V}-\mathbb{I}\in L^\infty(\Sigma;\mathbb{C}^{2\times 2})\cap L^2(\Sigma;\mathbb{C}^{2\times 2})$, where arc length measure $|d\zeta|$ and any (pointwise) norm on $2\times 2$ matrices serve to define the spaces.
If one is given an admissible jump matrix $\mathbf{V}$ on an admissible contour $\Sigma$ and one wishes
to complete the contour, one may extend the definition of the jump matrix $\mathbf{V}$ to the completed contour in a natural way simply by
setting
 $\mathbf{V}(z):=\mathbb{I}$ along each arc that is included to achieve completeness, and the resulting jump matrix is again admissible.
 
Consider now the following Riemann-Hilbert problem:
 \begin{rhp}
Let $\Sigma$ be an admissible contour, and let $\mathbf{V}$ be an admissible jump matrix on $\Sigma$. Find a $2\times 2$ matrix-valued function $\mathbf{M}:\mathbb{C}\setminus\Sigma\to \mathbb{C}^{2\times 2}$ satisfying the following three conditions:
\begin{itemize}
\item[]{\bf Analyticity.}  $\mathbf{M}$ is an analytic function of $z\in\mathbb{C}\setminus\Sigma$,
and $\mathbf{M}$ has well-defined boundary values $\mathbf{M}_+(\zeta)$ and $\mathbf{M}_-(\zeta)$ for almost every $\zeta\in\Sigma^\circ$
given by the non-tangential limits of $\mathbf{M}(z)$ as $z\to \zeta\in\Sigma^\circ$ with $z$ on the left (respectively right) of the oriented arc containing $\zeta$ to define $\mathbf{M}_+(\zeta)$ (respectively $\mathbf{M}_-(\zeta)$).  The differences $\mathbf{M}_\pm(\zeta)-\mathbb{I}$ are required to be square-integrable on $\Sigma$.
\item[]{\bf Jump Condition.}  The boundary values are related by $\mathbf{M}_+(\zeta)=\mathbf{M}_-(\zeta)\mathbf{V}(\zeta)$ for almost every $\zeta\in\Sigma^\circ$.
\item[]{\bf Normalization.}  $\mathbf{M}(z)$ tends to the identity matrix $\mathbb{I}$ as $z\to\infty$ 
in any direction distinct from those of the unbounded arcs (lines) of $\Sigma$.
\end{itemize}
\label{small-norm-rhp}
\end{rhp}

This Riemann-Hilbert problem is equivalent to a certain singular integral equation, the theory of which forms a basis for the study of the Riemann-Hilbert problem.  To formulate the integral equation, we need to recall some basic singular integral operators.
\begin{definition}Let $\Sigma$ be an admissible contour.  Given a function $f\in L^2(\Sigma)$, 
\eq
(\mathcal{C}^\Sigma f)(z):=\frac{1}{2\pi i}\int_\Sigma\frac{f(w)\,dw}{w-z}, \quad z\in\mathbb{C}\backslash\Sigma
\endeq
is called the \emph{Cauchy transform} of $f$ relative to $\Sigma$.
\end{definition}
Cauchy transforms can be calculated for scalar-valued, vector-valued, or matrix-valued $f$ that are square-integrable given an arbitrary pointwise norm $|\cdot|$.  The fact that $(\mathcal{C}^\Sigma f)(z)$ is well-defined for each $z\in\mathbb{C}\setminus\Sigma$ follows from the Cauchy-Schwarz inequality because $(\cdot-z)^{-1}\in L^2(\Sigma)$.  In the same way, one sees that 
the function $\mathcal{C}^\Sigma f$ is analytic in its domain of definition because $(\cdot-z)^{-2}$ is also square-integrable on $\Sigma$ when $z\in\mathbb{C}\setminus\Sigma$.  The integrand in the definition of $\mathcal{C}^\Sigma f$ tends to zero pointwise in $w$ as $z\to\infty$.  If we insist that
$z$ tends to infinity in such a way that it avoids sectors centered in the direction of each unbounded line of $\Sigma$ of opening angle $2\vartheta>0$, then some simple trigonometry shows that $|w-z|>|w|\sin(\vartheta)$ holds for $w\in\Sigma$ and $|z|$ sufficiently large, and as $|\cdot |^{-1}\in L^2(\Sigma)$ (assuming without loss of generality that $0\not\in\Sigma$) a Cauchy-Schwarz argument shows that the conditions of the Lebesgue Dominated Convergence Theorem are satisfied, and hence $(\mathcal{C}^\Sigma f)(z)$ tends to zero as $z\to\infty$ in the specified fashion.
See Muskhelishvili \cite{Muskhelishvili:1992} for more information.  

At each point $\zeta\in\Sigma^\circ$ we may consider the boundary values $(\mathcal{C}^\Sigma_+f)(\zeta)$ and
$(\mathcal{C}^\Sigma_-f)(\zeta)$ given by
\eq
(\mathcal{C}_+^\Sigma f)(\zeta):=\mathop{\lim_{z\to \zeta}}_{\text{$z$ on the left of $\zeta$}}(\mathcal{C}^\Sigma f)(z)
 \quad \text{and}\quad
 (\mathcal{C}_-^\Sigma f)(\zeta):=\mathop{\lim_{z\to \zeta}}_{\text{$z$ on the right of $\zeta$}}(\mathcal{C}^\Sigma f)(z)
\endeq
whenever the (non-tangential) limits exist.  
It turns out that, for admissible contours $\Sigma$ and square-integrable $f$,
the boundary values $(\mathcal{C}_\pm^\Sigma f)(\zeta)$ exist pointwise for 
almost every $\zeta\in\Sigma^\circ$ and these boundary values may be identified with elements of $L^2(\Sigma)$.  Therefore, $\mathcal{C}_\pm^\Sigma$ can be viewed as linear operators on $L^2(\Sigma)$.  Furthermore, combining (i) a fundamental result of Coifman, McIntosh, and Meyer 
\cite{Coifman:1982} establishing the $L^2$-boundedness of the Hilbert transform on Lipschitz curves with (ii) an argument to properly handle self-intersection points of $\Sigma$ (see, for example the proof of Lemma 8.1 in \cite{BealsC84}),
the operators $\mathcal{C}_\pm^\Sigma:L^2(\Sigma)\to L^2(\Sigma)$ are bounded.  The operator norms $\|\mathcal{C}_\pm^\Sigma\|_{L^2(\Sigma)\circlearrowleft}$ are controlled by the geometrical properties of $\Sigma$, but we will only use the fact that for each admissible contour $\Sigma$, $\|\mathcal{C}_\pm^\Sigma\|_{L^2(\Sigma)\circlearrowleft}<\infty$.  
A classical local result in the theory (see \cite{Muskhelishvili:1992}) is the following:
\begin{proposition}[Plemelj formula]
Suppose $\Sigma$ is an admissible contour and $f\in L^2(\Sigma)$.  Then
\eq
(\mathcal{C}_+^\Sigma f)(\zeta) - (\mathcal{C}_-^\Sigma f)(\zeta) = f(\zeta)
\endeq
holds for almost all regular points $\zeta\in\Sigma^\circ$.
\end{proposition}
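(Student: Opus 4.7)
My plan is to prove the Plemelj formula by first verifying it pointwise for a dense subclass of $L^2(\Sigma)$ consisting of functions that are smooth and compactly supported inside a single regular arc, and then extend the identity to all of $L^2(\Sigma)$ almost everywhere by the bounded-operator framework that the proposition itself presupposes. Let $\zeta\in\Sigma^\circ$ be a regular point and choose a small disk $D$ about $\zeta$ such that $\Sigma\cap D$ is a single smooth sub-arc $\gamma$. For arbitrary $f\in L^2(\Sigma)$, split $f=f_1+f_2$ where $f_1:=f\chi_\gamma$ and $f_2:=f\chi_{\Sigma\setminus\gamma}$. The Cauchy transform of $f_2$ is holomorphic across $\gamma$, so it has equal boundary values from the two sides of $\Sigma^\circ$ at $\zeta$ and contributes nothing to the jump; hence it suffices to compute the jump of $\mathcal{C}^\Sigma f_1$ at $\zeta$.

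For the local piece, I would first establish the formula for test functions. Fix a smooth compactly supported $\varphi\in C_c^\infty(\gamma)$, and at each $\zeta\in\gamma$ write
\begin{equation*}
(\mathcal{C}^\Sigma\varphi)(z)=\frac{\varphi(\zeta)}{2\pi i}\int_\gamma\frac{dw}{w-z}+\frac{1}{2\pi i}\int_\gamma\frac{\varphi(w)-\varphi(\zeta)}{w-z}\,dw.
\end{equation*}
The first integral evaluates to a branch of $(2\pi i)^{-1}(\log(b-z)-\log(a-z))$, where $a,b$ are the endpoints of $\gamma$; its boundary values from the left and right at $\zeta$ differ by exactly $\varphi(\zeta)$ by the standard jump of the logarithm across its cut. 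In the second integral the integrand is bounded on $\gamma$ by $\|\varphi'\|_\infty$ uniformly in $z$ near $\zeta$, so dominated convergence yields a continuous extension to both sides of $\Sigma$ at $\zeta$, contributing nothing to the jump. This establishes the Plemelj identity pointwise at every $\zeta\in\gamma$ in the support of $\varphi$, and by choosing $\varphi$ running over a countable family exhausting $C_c^\infty$ (in a suitable cover of $\Sigma^\circ$ by sub-arcs $\gamma$) we have the formula on a dense subset of $L^2(\Sigma)$.

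To pass to general $f\in L^2(\Sigma)$, given $\epsilon>0$ choose a smooth $\varphi$ with $\|f-\varphi\|_{L^2(\Sigma)}<\epsilon$. Because $\mathcal{C}_\pm^\Sigma$ are bounded operators on $L^2(\Sigma)$, we have
\begin{equation*}
\|(\mathcal{C}_+^\Sigma-\mathcal{C}_-^\Sigma)(f-\varphi)\|_{L^2(\Sigma)}\le(\|\mathcal{C}_+^\Sigma\|+\|\mathcal{C}_-^\Sigma\|)\epsilon,
\end{equation*}
and since the Plemelj identity holds pointwise a.e.\ for $\varphi$, we conclude $\|(\mathcal{C}_+^\Sigma f-\mathcal{C}_-^\Sigma f)-f\|_{L^2(\Sigma)}\le C\epsilon$. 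Letting $\epsilon\downarrow 0$ gives the identity as elements of $L^2(\Sigma)$, and passing to an a.e.\ convergent subsequence yields the pointwise statement at almost every regular point $\zeta\in\Sigma^\circ$.

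\textbf{Main obstacle.} The technical heart of the proof is justifying the existence and evaluation of the non-tangential boundary values, particularly for the remainder integral involving $(\varphi(w)-\varphi(\zeta))/(w-z)$ at self-intersection points of $\Sigma$ and at infinity. The assumption $\zeta\in\Sigma^\circ$ excludes self-intersections, so the smoothness of $\gamma$ near $\zeta$ renders this remainder classically continuous; and the admissibility hypothesis (each unbounded arc is eventually straight) ensures that Cauchy transforms of compactly supported smooth data tend uniformly to zero as $z\to\infty$ non-tangentially. The density argument is otherwise standard once boundedness of $\mathcal{C}_\pm^\Sigma$ on $L^2(\Sigma)$ is taken as given, as it is in the paragraph preceding the proposition.
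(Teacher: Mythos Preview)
The paper does not actually prove this proposition; it is stated as ``a classical local result in the theory'' with a reference to Muskhelishvili and is used without further justification. So there is no in-paper proof to compare against.

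Your argument is correct and is in fact the standard modern route to the $L^2$ Plemelj formula: verify the jump by hand for smooth compactly supported data on a single smooth arc (via the $\log$ primitive plus a continuous remainder), note that the far-away part $f_2$ contributes a function holomorphic across $\gamma$ and hence no jump, and then extend by density using the assumed $L^2$-boundedness of $\mathcal{C}^\Sigma_\pm$. One small remark: the final step does not require extracting an a.e.\ convergent subsequence. Once you have shown $\|(\mathcal{C}_+^\Sigma f-\mathcal{C}_-^\Sigma f)-f\|_{L^2(\Sigma)}=0$, equality in $L^2$ already gives equality almost everywhere on $\Sigma$, hence at almost every regular point.
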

This pointwise result implies the operator identity $\mathcal{C}_+^\Sigma-\mathcal{C}_-^\Sigma=\mathrm{id}$ on $L^2(\Sigma)$.
In the special case that $\Sigma$ is a complete contour, the 
operators $\pm\mathcal{C}_\pm^\Sigma$ are projections:  $(\pm\mathcal{C}_\pm^\Sigma)\circ (\pm\mathcal{C}_\pm^\Sigma)=\pm\mathcal{C}_\pm^\Sigma$.  The Plemelj formula then asserts that
the projections are complementary.

To set up the singular integral equation equivalent to Riemann-Hilbert Problem~\ref{small-norm-rhp}, let $\mathbf{V}$ be an admissible jump matrix on the admissible contour $\Sigma$ and define the related matrix function $\mathbf{W}:\Sigma\to \mathbb{C}^{2\times 2}$ by setting
\eq
{\bf W}(z):={\bf V}(z)-\mathbb{I}.
\endeq
Note that $\mathbf{W}\in L^\infty(\Sigma;\mathbb{C}^{2\times 2})\cap L^2(\Sigma;\mathbb{C}^{2\times 2})$ by assumption.  The expression
\begin{equation}
(\mathcal{C}_\mathbf{W}^\Sigma\mathbf{F})(\zeta):=(\mathcal{C}^\Sigma_-(\mathbf{FW}))(\zeta),\quad\zeta\in\Sigma
\end{equation}
then clearly defines a bounded linear operator on $L^2(\Sigma;\mathbb{C}^{2\times 2})$ and the operator norm obeys the estimate
\begin{equation}
\|\mathcal{C}_\mathbf{W}^\Sigma\|\le\|\mathcal{C}_-^\Sigma\|_{L^2(\Sigma)\circlearrowleft} \|\mathbf{W}\|_{L^\infty(\Sigma)}.
\label{eq:small-norm-Cw-estimate}
\end{equation}
Note also that (even though the constant functions on $\Sigma$ are not square integrable if $\Sigma$ is unbounded) the expression $(\mathcal{C}^\Sigma_\mathbf{W}\mathbb{I})(\zeta)=(\mathcal{C}_-^\Sigma\mathbf{W})(\zeta)$ for $\zeta\in\Sigma$ defines a square integrable matrix-valued function on $\Sigma$, the norm of which can be estimated as follows:
\begin{equation}
\|\mathcal{C}^\Sigma_\mathbf{W}\mathbb{I}\|_{L^2(\Sigma)}\le \|\mathcal{C}_-^\Sigma\|_{L^2(\Sigma)\circlearrowleft} \|\mathbf{W}\|_{L^2(\Sigma)}.
\label{eq:small-norm-rhs-estimate}
\end{equation}
The singular integral equation that is equivalent to Riemann-Hilbert Problem~\ref{small-norm-rhp} is then the following, for an unknown matrix function $\mathbf{X}\in L^2(\Sigma;\mathbb{C}^{2\times 2})$:
\begin{equation}
\mathbf{X}-\mathcal{C}_\mathbf{W}^\Sigma\mathbf{X}=\mathcal{C}_\mathbf{W}^\Sigma\mathbb{I}.
\label{X-sing-int-eq}
\end{equation}
The connection with Riemann-Hilbert Problem~\ref{small-norm-rhp} is illuminated by the following
result.
\begin{proposition}
\label{M-ito-Cauchy-prop}
Let $\Sigma$ be an admissible contour and  $\mathbf{V}$ an admissible jump matrix defined on $\Sigma$, and set $\mathbf{W}:=\mathbf{V}-\mathbb{I}$.  Suppose the integral equation \eqref{X-sing-int-eq} has a solution 
${\bf X}$ in $L^2(\Sigma;\mathbb{C}^{2\times 2})$ .  Then the matrix
\eq
\label{M-ito-Cauchy-ops}
{\bf M}(z):=\mathbb{I}+(\mathcal{C}^\Sigma{\bf W})(z) + (\mathcal{C}^\Sigma({\bf XW}))(z), \quad z\in\mathbb{C}\backslash\Sigma
\endeq
is a solution of
Riemann-Hilbert Problem~\ref{small-norm-rhp}.  Conversely, if $\mathbf{M}(z)$ is a solution of Riemann-Hilbert Problem~\ref{small-norm-rhp}, then the matrix $\mathbf{X}(\zeta):=\mathbf{M}_-(\zeta)-\mathbb{I}$ is a solution of \eqref{X-sing-int-eq}.
\end{proposition}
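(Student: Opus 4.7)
The plan is to verify the two implications separately, using the Plemelj formula as the main analytic tool and a Liouville-type argument to recover $\mathbf{M}$ from its jump for the converse.

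For the forward direction, assume $\mathbf{X}\in L^2(\Sigma;\mathbb{C}^{2\times 2})$ solves $\mathbf{X}-\mathcal{C}^\Sigma_\mathbf{W}\mathbf{X}=\mathcal{C}^\Sigma_\mathbf{W}\mathbb{I}$, i.e.\ $\mathbf{X}=\mathcal{C}^\Sigma_-\mathbf{W}+\mathcal{C}^\Sigma_-(\mathbf{X}\mathbf{W})$ pointwise a.e.\ on $\Sigma$. Define $\mathbf{M}(z)$ by \eqref{M-ito-Cauchy-ops}. Since $\mathbf{W}\in L^\infty\cap L^2(\Sigma)$ and $\mathbf{X}\in L^2(\Sigma)$, the product $\mathbf{XW}$ lies in $L^2(\Sigma)$, so both Cauchy transforms are analytic on $\mathbb{C}\setminus\Sigma$ and, as reviewed in the appendix, tend to $0$ as $z\to\infty$ along any ray disjoint from the unbounded lines of $\Sigma$. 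This takes care of analyticity and normalization. Applying the $\pm$ boundary operators and the Plemelj identity $\mathcal{C}^\Sigma_+f-\mathcal{C}^\Sigma_-f=f$, we compute
\begin{equation}
\mathbf{M}_-(\zeta)=\mathbb{I}+\mathcal{C}^\Sigma_-\mathbf{W}(\zeta)+\mathcal{C}^\Sigma_-(\mathbf{X}\mathbf{W})(\zeta)=\mathbb{I}+\mathbf{X}(\zeta)
\end{equation}
by the integral equation, and then
\begin{equation}
\mathbf{M}_+(\zeta)=\mathbf{M}_-(\zeta)+\mathbf{W}(\zeta)+\mathbf{X}(\zeta)\mathbf{W}(\zeta)=(\mathbb{I}+\mathbf{X}(\zeta))(\mathbb{I}+\mathbf{W}(\zeta))=\mathbf{M}_-(\zeta)\mathbf{V}(\zeta),
\end{equation}
which is the jump condition. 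The $L^2$-condition on $\mathbf{M}_\pm-\mathbb{I}$ follows from the boundedness of $\mathcal{C}^\Sigma_\pm$ on $L^2(\Sigma)$ together with the estimates \eqref{eq:small-norm-Cw-estimate}--\eqref{eq:small-norm-rhs-estimate}.

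For the converse, suppose $\mathbf{M}$ solves Riemann-Hilbert Problem~\ref{small-norm-rhp} and set $\mathbf{X}:=\mathbf{M}_-(\cdot)-\mathbb{I}$, which by hypothesis lies in $L^2(\Sigma;\mathbb{C}^{2\times 2})$. The jump condition rewrites as $\mathbf{M}_+-\mathbf{M}_-=\mathbf{M}_-\mathbf{W}=(\mathbb{I}+\mathbf{X})\mathbf{W}\in L^2(\Sigma;\mathbb{C}^{2\times 2})$, so the Cauchy transform
\begin{equation}
\mathbf{N}(z):=\mathbb{I}+\mathcal{C}^\Sigma\bigl((\mathbb{I}+\mathbf{X})\mathbf{W}\bigr)(z),\qquad z\in\mathbb{C}\setminus\Sigma,
\end{equation}
is analytic off $\Sigma$, tends to $\mathbb{I}$ at infinity, and by Plemelj satisfies $\mathbf{N}_+-\mathbf{N}_-=(\mathbb{I}+\mathbf{X})\mathbf{W}=\mathbf{M}_+-\mathbf{M}_-$ on $\Sigma^\circ$. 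Therefore $\mathbf{M}-\mathbf{N}$ extends to an entire matrix function (Morera's theorem applied across $\Sigma^\circ$, with isolated irregular points of $\Sigma$ removable because $\mathbf{M}-\mathbf{N}$ is bounded near them in the admissible setting) that vanishes at infinity, hence is identically zero by Liouville's theorem. Consequently $\mathbf{M}=\mathbf{N}$, and applying the $-$ boundary operator yields
\begin{equation}
\mathbf{X}(\zeta)=\mathbf{M}_-(\zeta)-\mathbb{I}=\mathcal{C}^\Sigma_-\bigl((\mathbb{I}+\mathbf{X})\mathbf{W}\bigr)(\zeta)=\mathcal{C}^\Sigma_\mathbf{W}\mathbb{I}(\zeta)+\mathcal{C}^\Sigma_\mathbf{W}\mathbf{X}(\zeta),
\end{equation}
which is \eqref{X-sing-int-eq}.

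The step I expect to require the most care is the Liouville/Morera argument in the converse direction, specifically verifying that $\mathbf{M}-\mathbf{N}$ has no singular behavior at the irregular (self-intersection and endpoint) points of $\Sigma$ and that the non-tangential $L^2$ boundary values of $\mathbf{N}$ genuinely coincide with the classical boundary values of $\mathbf{M}$ in the sense used in the Riemann-Hilbert problem. Both are standard consequences of admissibility of $\Sigma$ and the $L^2$ theory of the Cauchy projections recalled in the appendix, but they must be checked with some attention; once they are in hand, the rest of the argument reduces to the bookkeeping above.
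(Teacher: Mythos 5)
Your proof is correct and follows essentially the same route as the paper: the forward direction is the identical Plemelj computation, and your Cauchy-integral-plus-Liouville reconstruction of $\mathbf{M}$ in the converse is precisely what the paper compresses into the phrase ``reversing the argument.'' The only step needing the care you already flag is removability of the singularities at the non-regular points of $\Sigma$, which rests on local $L^2$ control of the boundary values rather than literal boundedness near those points.
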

\begin{proof}
First assume that $\mathbf{X}$ solves \eqref{X-sing-int-eq}.  Because both $\mathbf{W}$ and $\mathbf{XW}$ are in $L^2(\Sigma)$, the function ${\bf M}(z)$ defined by \eqref{M-ito-Cauchy-ops} is analytic 
 for $z\in\mathbb{C}\backslash\Sigma$ and converges to $\mathbb{I}$ as $z\to\infty$ in all radial directions except possibly those of the unbounded arcs (lines) of $\Sigma$.  For the same reason, the non-tangential boundary values taken on $\Sigma^\circ$ by $\mathbf{M}(z)$ exist almost everywhere and $\mathbf{M}_\pm-\mathbb{I}$ correspond to functions in $L^2(\Sigma)$.
We now check the jump condition.  For $\zeta\in\Sigma^\circ$, using 
\eqref{X-sing-int-eq} we see 
\eq
\begin{split}
{\bf M}_-(\zeta) & = \mathbb{I}+(\mathcal{C}_-^\Sigma{\bf W})(\zeta) + (\mathcal{C}_-^\Sigma({\bf XW}))(\zeta) \\
   & = \mathbb{I}+(\mathcal{C}_-^\Sigma{\bf W})(\zeta) + {\bf X}(\zeta) - (\mathcal{C}_-^\Sigma{\bf W})(\zeta) \\
   & = \mathbb{I} + {\bf X}(\zeta).
\end{split}
\endeq
Using the Plemelj formula and \eqref{X-sing-int-eq}, we also see that, for 
$\zeta\in\Sigma^\circ$,
\eq
\begin{split}
{\bf M}_+(\zeta) & = \mathbb{I}+(\mathcal{C}_+^\Sigma{\bf W})(\zeta) + (\mathcal{C}_+^\Sigma({\bf XW}))(\zeta) \\
   & = \mathbb{I} + [(\mathcal{C}_-^\Sigma{\bf W})(\zeta) + {\bf W}(\zeta)] + [(\mathcal{C}_-^\Sigma({\bf XW}))(\zeta) + {\bf X}(\zeta){\bf W}(\zeta)] \\
   & = \mathbb{I}+(\mathcal{C}_-^\Sigma{\bf W})(\zeta) + {\bf W}(\zeta) + [{\bf X}(\zeta) - (\mathcal{C}_-^\Sigma{\bf W})(\zeta)] + {\bf X}(\zeta){\bf W}(\zeta) \\
   & = (\mathbb{I} + {\bf X}(\zeta))(\mathbb{I} + {\bf W}(\zeta)) \\
   & = {\bf M}_-(\zeta){\bf V}(\zeta),
\end{split}
\endeq
as desired.  

Reversing the argument shows that whenever $\mathbf{M}(z)$ is a solution of Riemann-Hilbert Problem~\ref{small-norm-rhp}, the function $\mathbf{X}:=\mathbf{M}_--\mathbb{I}\in L^2(\Sigma;\mathbb{C}^{2\times 2})$ satisfies the integral equation \eqref{X-sing-int-eq}.
\end{proof}

\begin{proposition}
\label{sing-int-eq-prop}
Suppose that $\Sigma$ is an admissible contour with admissible jump matrix $\mathbf{V}=\mathbb{I}+\mathbf{W}$, and suppose also that $\|\mathbf{W}\|_{L^\infty(\Sigma)}\le\rho\|\mathcal{C}_-^\Sigma\|_{L^2(\Sigma)\circlearrowleft}^{-1}$ for some $\rho<1$.  Then the singular integral equation \eqref{X-sing-int-eq} has a unique
solution $\mathbf{X}\in L^2(\Sigma;\mathbb{C}^{2\times 2})$ that can be obtained as the sum
of the infinite Neumann series
\begin{equation}
\mathbf{X}(\zeta)=(\mathcal{C}_\mathbf{W}^\Sigma\mathbb{I})(\zeta)+(\mathcal{C}_\mathbf{W}^\Sigma\circ\mathcal{C}_\mathbf{W}^\Sigma\mathbb{I})(\zeta) +\cdots
\end{equation}
and whose norm satisfies the inequality
\begin{equation}
\|\mathbf{X}\|_{L^2(\Sigma)}\le \frac{1}{1-\rho}\|\mathcal{C}_\mathbf{W}^\Sigma\mathbb{I}\|_{L^2(\Sigma)}\le\frac{\|\mathcal{C}_-^\Sigma\|_{L^2(\Sigma)\circlearrowleft}}{1-\rho}\|\mathbf{W}\|_{L^2(\Sigma)}.
\end{equation}
\end{proposition}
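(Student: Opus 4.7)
The plan is to recognize Proposition~\ref{sing-int-eq-prop} as an essentially routine application of the Neumann series (geometric series for bounded operators on a Banach space), with all the substantive work already carried out in the estimates \eqref{eq:small-norm-Cw-estimate} and \eqref{eq:small-norm-rhs-estimate}. The Banach space in question is $L^2(\Sigma;\mathbb{C}^{2\times 2})$ equipped with the induced norm, and the relevant bounded linear operator is $\mathcal{C}_\mathbf{W}^\Sigma$.

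First I would combine the hypothesis $\|\mathbf{W}\|_{L^\infty(\Sigma)}\le\rho\|\mathcal{C}_-^\Sigma\|_{L^2(\Sigma)\circlearrowleft}^{-1}$ with the operator-norm bound \eqref{eq:small-norm-Cw-estimate} to conclude that
\begin{equation}
\|\mathcal{C}_\mathbf{W}^\Sigma\|_{L^2(\Sigma)\circlearrowleft}\le\|\mathcal{C}_-^\Sigma\|_{L^2(\Sigma)\circlearrowleft}\|\mathbf{W}\|_{L^\infty(\Sigma)}\le\rho<1.
\end{equation}
Since $L^2(\Sigma;\mathbb{C}^{2\times 2})$ is complete, the standard theory of bounded operators with operator norm strictly less than one guarantees that $\mathrm{id}-\mathcal{C}_\mathbf{W}^\Sigma$ is invertible on $L^2(\Sigma;\mathbb{C}^{2\times 2})$, with inverse given by the norm-convergent Neumann series
\begin{equation}
(\mathrm{id}-\mathcal{C}_\mathbf{W}^\Sigma)^{-1}=\sum_{n=0}^\infty(\mathcal{C}_\mathbf{W}^\Sigma)^n,\qquad \|(\mathrm{id}-\mathcal{C}_\mathbf{W}^\Sigma)^{-1}\|\le\frac{1}{1-\rho}.
\end{equation}

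Next, I would apply this inverse to the right-hand side of the integral equation \eqref{X-sing-int-eq}. The estimate \eqref{eq:small-norm-rhs-estimate} guarantees that $\mathcal{C}_\mathbf{W}^\Sigma\mathbb{I}\in L^2(\Sigma;\mathbb{C}^{2\times 2})$, so
\begin{equation}
\mathbf{X}=(\mathrm{id}-\mathcal{C}_\mathbf{W}^\Sigma)^{-1}\mathcal{C}_\mathbf{W}^\Sigma\mathbb{I}=\sum_{n=0}^\infty(\mathcal{C}_\mathbf{W}^\Sigma)^{n+1}\mathbb{I}=\mathcal{C}_\mathbf{W}^\Sigma\mathbb{I}+(\mathcal{C}_\mathbf{W}^\Sigma\circ\mathcal{C}_\mathbf{W}^\Sigma)\mathbb{I}+\cdots
\end{equation}
is the unique element of $L^2(\Sigma;\mathbb{C}^{2\times 2})$ satisfying \eqref{X-sing-int-eq}, the series converging in the $L^2$ norm.

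Finally, the quantitative bound comes from combining the operator-norm estimate on $(\mathrm{id}-\mathcal{C}_\mathbf{W}^\Sigma)^{-1}$ with \eqref{eq:small-norm-rhs-estimate}:
\begin{equation}
\|\mathbf{X}\|_{L^2(\Sigma)}\le\frac{1}{1-\rho}\|\mathcal{C}_\mathbf{W}^\Sigma\mathbb{I}\|_{L^2(\Sigma)}\le\frac{\|\mathcal{C}_-^\Sigma\|_{L^2(\Sigma)\circlearrowleft}}{1-\rho}\|\mathbf{W}\|_{L^2(\Sigma)}.
\end{equation}
There is no substantial obstacle here: the real content of the small-norm theory is encoded in the two estimates \eqref{eq:small-norm-Cw-estimate} and \eqref{eq:small-norm-rhs-estimate}, which in turn rely on the deep Coifman--McIntosh--Meyer $L^2$-boundedness of the Cauchy operator on Lipschitz contours; once those are in hand, Proposition~\ref{sing-int-eq-prop} is a one-line appeal to the geometric series in the Banach algebra of bounded operators on $L^2(\Sigma;\mathbb{C}^{2\times 2})$.
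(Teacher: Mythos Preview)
Your proposal is correct and matches the paper's own proof, which is a one-sentence appeal to the contraction mapping principle (equivalently, the Neumann series) using the bounds \eqref{eq:small-norm-Cw-estimate} and \eqref{eq:small-norm-rhs-estimate}; you have simply written out those details explicitly.
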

\begin{proof}
Taking into account the bound \eqref{eq:small-norm-Cw-estimate}, this is an elementary consequence of the contraction mapping principle (the Neumann series arises by solving \eqref{X-sing-int-eq} by iteration) and the inequality \eqref{eq:small-norm-rhs-estimate}.
\end{proof}
Therefore, if the difference $\mathbf{W}:=\mathbf{V}-\mathbb{I}$ is sufficiently small in the $L^\infty(\Sigma)$ sense compared with the reciprocal of the $L^2(\Sigma)$ operator norm of $\mathcal{C}_-^\Sigma$ (the latter being a quantity estimated in terms of geometrical properties of the contour $\Sigma$ alone that is finite for each admissible contour), there will be a unique solution to Riemann-Hilbert Problem~\ref{small-norm-rhp}.  Moreover, the analysis of the singular integral equation \eqref{X-sing-int-eq} makes available several useful estimates for the corresponding solution $\mathbf{M}(z)$.  In this situation, the Riemann-Hilbert Problem~\ref{small-norm-rhp} is frequently called a \emph{small-norm problem}.

Next, we consider the first two moments ${\bf M}_1$ and ${\bf M}_2$ of 
${\bf M}(z)$ at infinity, quantities defined by the limits
\begin{equation}
\mathbf{M}_1:=\lim_{z\to\infty} z(\mathbf{M}(z)-\mathbb{I})\quad\text{and}\quad
\mathbf{M}_2:=\lim_{z\to\infty} z^2(\mathbf{M}(z)-\mathbb{I}-\mathbf{M}_1z^{-1})
\label{eq:small-norm-momentsdefine}
\end{equation}
assuming that $z\to\infty$ radially at any angle different from those of the unbounded arcs (rays) of $\Sigma$.  The existence of the moments is equivalent to the validity of the asymptotic formula 
\eq
{\bf M}(z) = \mathbb{I} + \frac{{\bf M}_1}{z} + \frac{{\bf M}_2}{z^2} + o\left(\frac{1}{z^2}\right),\quad z\to\infty
\endeq
assuming that $z$ tends to infinity in the same sense.
\begin{proposition}
Let $\Sigma$ be an admissible contour and $\mathbf{V}$ an admissible jump matrix for which
$\|\mathbf{W}\|_{L^\infty(\Sigma)}\le\rho\|\mathcal{C}_-^\Sigma\|_{L^2(\Sigma)\circlearrowleft}^{-1}$ for some $\rho<1$, where $\mathbf{W}:=\mathbf{V}-\mathbb{I}$.  If also $\mathbf{W}\in L^1(\Sigma)$ and $\zeta\mathbf{W}\in L^1(\Sigma)\cap L^2(\Sigma)$, 
then the moments $\mathbf{M}_1$ and $\mathbf{M}_2$ associated by \eqref{eq:small-norm-momentsdefine} with the unique solution of Riemann-Hilbert Problem~\ref{small-norm-rhp} exist 
and obey the following estimates:  
\begin{equation}
|\mathbf{M}_1|\le\frac{1}{2\pi}\|\mathbf{W}\|_{L^1(\Sigma)} +\frac{1}{2\pi}
\frac{\|\mathcal{C}_-^\Sigma\|_{L^2(\Sigma)\circlearrowleft}}{1-\rho}\|\mathbf{W}\|_{L^2(\Sigma)}^2
\label{eq:small-norm-M1-estimate}
\end{equation}
and
\begin{equation}
|\mathbf{M}_2|\le\frac{1}{2\pi}\|\zeta\mathbf{W}\|_{L^1(\Sigma)}+\frac{1}{2\pi}
\frac{\|\mathcal{C}_-^\Sigma\|_{L^2(\Sigma)\circlearrowleft}}{1-\rho}\|\mathbf{W}\|_{L^2(\Sigma)}
\|\zeta\mathbf{W}\|_{L^2(\Sigma)},
\label{eq:small-norm-M2-estimate}
\end{equation}
where $|\cdot|$ denotes the pointwise matrix norm on which the integral $L^p(\Sigma)$ norms are based.  
\label{small-norm-moments-prop}
\end{proposition}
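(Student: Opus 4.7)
The plan is to start from the representation formula \eqref{M-ito-Cauchy-ops} of Proposition~\ref{M-ito-Cauchy-prop}, which expresses $\mathbf{M}(z)$ in terms of Cauchy transforms applied to $\mathbf{W}$ and $\mathbf{X}\mathbf{W}$, and expand the Cauchy kernel in a geometric series for large $|z|$. Explicitly, for any fixed $z\in\mathbb{C}\setminus\Sigma$ with $|z|$ sufficiently large,
\begin{equation}
\frac{1}{w-z}=-\frac{1}{z}-\frac{w}{z^{2}}-\frac{w^{2}}{z^{3}(1-w/z)},
\end{equation}
so that, for any $f\in L^{1}(\Sigma)\cap L^{2}(\Sigma)$ with $\zeta f\in L^{1}(\Sigma)\cap L^{2}(\Sigma)$,
\begin{equation}
(\mathcal{C}^{\Sigma}f)(z)=-\frac{1}{2\pi i z}\int_{\Sigma}f(w)\,dw-\frac{1}{2\pi i z^{2}}\int_{\Sigma}wf(w)\,dw+o(z^{-2}),\quad z\to\infty,
\end{equation}
with the remainder estimated by bounding $|w-z|\ge c|w|$ for $w\in\Sigma$ and $z$ in any sector avoiding the unbounded straight-line arcs of $\Sigma$ (whose geometry is controlled by the admissibility of $\Sigma$), then invoking dominated convergence.

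Applied to $f=\mathbf{W}$ and $f=\mathbf{X}\mathbf{W}$, this yields the exact identities
\begin{equation}
\mathbf{M}_{1}=-\frac{1}{2\pi i}\int_{\Sigma}\mathbf{W}(w)\,dw-\frac{1}{2\pi i}\int_{\Sigma}\mathbf{X}(w)\mathbf{W}(w)\,dw
\end{equation}
and
\begin{equation}
\mathbf{M}_{2}=-\frac{1}{2\pi i}\int_{\Sigma}w\mathbf{W}(w)\,dw-\frac{1}{2\pi i}\int_{\Sigma}w\mathbf{X}(w)\mathbf{W}(w)\,dw,
\end{equation}
provided the integrals converge. Convergence of the four integrals follows from the hypotheses combined with Cauchy--Schwarz: the first integral in each pair is bounded by $\|\mathbf{W}\|_{L^{1}}$ and $\|\zeta\mathbf{W}\|_{L^{1}}$ respectively, while the second is bounded by
\begin{equation}
\|\mathbf{X}\mathbf{W}\|_{L^{1}}\le\|\mathbf{X}\|_{L^{2}}\|\mathbf{W}\|_{L^{2}}\quad\text{and}\quad\|\zeta\mathbf{X}\mathbf{W}\|_{L^{1}}\le\|\mathbf{X}\|_{L^{2}}\|\zeta\mathbf{W}\|_{L^{2}}.
\end{equation}
Taking the pointwise matrix norm inside each of the four integrals and applying the triangle inequality gives the raw estimates
\begin{equation}
|\mathbf{M}_{1}|\le\frac{1}{2\pi}\|\mathbf{W}\|_{L^{1}}+\frac{1}{2\pi}\|\mathbf{X}\|_{L^{2}}\|\mathbf{W}\|_{L^{2}},\qquad |\mathbf{M}_{2}|\le\frac{1}{2\pi}\|\zeta\mathbf{W}\|_{L^{1}}+\frac{1}{2\pi}\|\mathbf{X}\|_{L^{2}}\|\zeta\mathbf{W}\|_{L^{2}}.
\end{equation}

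The final step is to feed in the bound on $\|\mathbf{X}\|_{L^{2}(\Sigma)}$ furnished by Proposition~\ref{sing-int-eq-prop}, namely $\|\mathbf{X}\|_{L^{2}(\Sigma)}\le(1-\rho)^{-1}\|\mathcal{C}_{-}^{\Sigma}\|_{L^{2}(\Sigma)\circlearrowleft}\|\mathbf{W}\|_{L^{2}(\Sigma)}$; substitution into the two raw estimates produces exactly \eqref{eq:small-norm-M1-estimate} and \eqref{eq:small-norm-M2-estimate}. The main technical point requiring care is the justification of the termwise expansion of the Cauchy integral in the presence of unbounded arcs of $\Sigma$: one must control $|w-z|^{-1}$ uniformly in $w\in\Sigma$ as $z$ is taken to infinity along a radial direction distinct from those of the unbounded straight lines of $\Sigma$, and this is precisely where admissibility (the second clause requiring the unbounded arcs to be straight lines outside a disk) is used, allowing a fixed positive lower bound on $|\sin(\arg(z)-\arg(w))|$ for sufficiently large $|z|$ and $w$ on each unbounded line.
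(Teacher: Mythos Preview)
Your proof is correct and follows essentially the same approach as the paper: both derive the explicit integral formulas for $\mathbf{M}_1$ and $\mathbf{M}_2$ from the Cauchy-integral representation of $\mathbf{M}(z)$, justify the $o(z^{-2})$ remainder via the geometric bound $|w-z|\ge c|w|$ (available because $z$ avoids the directions of the unbounded straight-line arcs) together with dominated convergence, and then estimate the resulting integrals by Cauchy--Schwarz combined with the bound on $\|\mathbf{X}\|_{L^2(\Sigma)}$ from Proposition~\ref{sing-int-eq-prop}. The only cosmetic difference is that the paper introduces the candidate moments $\dot{\mathbf{M}}_1,\dot{\mathbf{M}}_2$ first and then identifies them with the limits, whereas you expand the Cauchy kernel first and read off the moments; also note that your auxiliary expansion lemma only actually needs $\zeta f\in L^1(\Sigma)$ (not $L^2$) for the remainder estimate, which is all you use when applying it to $f=\mathbf{X}\mathbf{W}$.
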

\begin{proof}
We begin by noting that the hypotheses guarantee the existence of the matrices defined by
\begin{equation}
\begin{split}
\dot{\mathbf{M}}_1&:=-\frac{1}{2\pi i}\int_\Sigma\mathbf{W}(\zeta)\,d\zeta -\frac{1}{2\pi i}\int_\Sigma
\mathbf{X}(\zeta)\mathbf{W}(\zeta)\,d\zeta,\\
\dot{\mathbf{M}}_2&:=-\frac{1}{2\pi i}\int_\Sigma\zeta\mathbf{W}(\zeta)\,d\zeta - \frac{1}{2\pi i}
\int_\Sigma \zeta\mathbf{X}(\zeta)\mathbf{W}(\zeta)\,d\zeta,
\end{split}
\end{equation}
where $\mathbf{X}(\zeta)$ is the unique solution of the singular integral equation \eqref{X-sing-int-eq} corresponding to the solution $\mathbf{M}(z)$ of Riemann-Hilbert Problem~\ref{small-norm-rhp}.
Indeed, all four integrals are absolutely convergent and, by the Cauchy-Schwarz Inequality applied to the last term in each expression, we have
\begin{equation}
|\dot{\mathbf{M}}_1|\le\frac{1}{2\pi}\|\mathbf{W}\|_{L^1(\Sigma)} +\frac{1}{2\pi}\|\mathbf{X}\|_{L^2(\Sigma)}\|\mathbf{W}\|_{L^2(\Sigma)}\le\frac{1}{2\pi}\|\mathbf{W}\|_{L^1(\Sigma)} +
\frac{1}{2\pi}\frac{\|\mathcal{C}_-^\Sigma\|_{L^2(\Sigma)\circlearrowleft}}{1-\rho}\|\mathbf{W}\|^2_{L^2(\Sigma)}
\end{equation}
and
\begin{equation}
|\dot{\mathbf{M}}_2|\le\frac{1}{2\pi}\|\zeta\mathbf{W}\|_{L^1(\Sigma)} +\frac{1}{2\pi}\|\mathbf{X}\|_{L^2(\Sigma)}\|\zeta\mathbf{W}\|_{L^2(\zeta)}\le
\frac{1}{2\pi}\|\zeta\mathbf{W}\|_{L^1(\Sigma)} + \frac{1}{2\pi}\frac{\|\mathcal{C}_-^\Sigma\|_{L^2(\Sigma)\circlearrowleft}}{1-\rho}\|\mathbf{W}\|_{L^2(\Sigma)}\|\zeta\mathbf{W}\|_{L^2(\Sigma)}.
\end{equation}
We now claim that $\mathbf{M}_1=\dot{\mathbf{M}}_1$.  Indeed, 
\begin{equation}
z(\mathbf{M}(z)-\mathbb{I})-\dot{\mathbf{M}}_1 = \frac{1}{2\pi i}\int_\Sigma\frac{\zeta}{\zeta-z}\mathbf{W}(\zeta)\,d\zeta +\frac{1}{2\pi i}\int_\Sigma\frac{\zeta}{\zeta-z}\mathbf{X}(\zeta)\mathbf{W}(\zeta)\,d\zeta
\end{equation}
and the integrands obviously tend to zero pointwise as $z\to\infty$.  Restriction of $z$ to a ray distinct from the directions of the unbounded arcs of $\Sigma$ implies the estimate $|\zeta-z|\ge \alpha |\zeta|$ for some $\alpha>0$ and hence $z(\mathbf{M}(z)-\mathbb{I})-\dot{\mathbf{M}}_1$ tends to zero as $z\to\infty$ by the Dominated Convergence Theorem because 
$\mathbf{W}\in L^1(\Sigma)\cap L^2(\Sigma)$ and $\mathbf{X}\in L^2(\Sigma)$.
Similarly, $\mathbf{M}_2=\dot{\mathbf{M}}_2$ because 
\begin{equation}
\begin{split}
z^2(\mathbf{M}(z)-\mathbb{I}-\mathbf{M}_1z^{-1})-\dot{\mathbf{M}}_2&=
z^2(\mathbf{M}(z)-\mathbb{I}-\dot{\mathbf{M}}_1z^{-1})-\dot{\mathbf{M}}_2\\&{} = 
\frac{1}{2\pi i}\int_\Sigma\frac{\zeta}{\zeta-z}\zeta\mathbf{W}(\zeta)\,d\zeta +\frac{1}{2\pi i}
\int_\Sigma\frac{\zeta}{\zeta-z}\zeta\mathbf{X}(\zeta)\mathbf{W}(\zeta)\,d\zeta
\end{split}
\end{equation}
tends to zero by the Dominated Convergence Theorem as $\zeta\mathbf{W}\in L^1(\Sigma)\cap L^2(\Sigma)$
and $\mathbf{X}\in L^2(\Sigma)$.
\end{proof}
Finally, we observe that a small number of estimates on $\mathbf{W}$ suffice to provide existence and uniqueness for Riemann-Hilbert Problem~\ref{small-norm-rhp} and control of the first two moments.

\begin{proposition}
\label{prop:small-norm-moments-bound}
Let $\Sigma$ be an admissible contour and suppose that $\mathbf{V}:\Sigma^\circ\to\mathbb{C}^{2\times 2}$ is a mapping such that $\mathbf{W}\in L^\infty(\Sigma)$ and $\zeta^2\mathbf{W}\in L^1(\Sigma)$, where $\mathbf{W}:=\mathbf{V}-\mathbb{I}$.
Then $\mathbf{V}$ is an admissible jump matrix on $\Sigma$.  Furthermore, there exist positive constants $K_\Sigma$, $K'_\Sigma$, and $K''_\Sigma$ such that Riemann-Hilbert Problem~\ref{small-norm-rhp} has a unique solution if
\begin{equation}
\|\mathbf{W}\|_{L^\infty(\Sigma)}\le K_\Sigma,
\label{eq:small-norm-W-infty-bound}
\end{equation}
in which case the first two moments $\mathbf{M}_1$ and $\mathbf{M}_2$ both exist and satisfy
\begin{equation}
|\mathbf{M}_1|\le K'_\Sigma\max\{\|\mathbf{W}\|_{L^\infty(\Sigma)},\|\zeta^2\mathbf{W}\|_{L^1(\Sigma)}\} \quad\text{and}\quad |\mathbf{M}_2|\le K''_\Sigma\max\{\|\mathbf{W}\|_{L^\infty(\Sigma)},\|\zeta^2\mathbf{W}\|_{L^1(\Sigma)}\}
\label{eq:small-norm-M12-simple}
\end{equation}
as long as $\|\zeta^2\mathbf{W}\|_{L^1(\Sigma)}$ is also sufficiently small.
\label{prop:moments-control}
\end{proposition}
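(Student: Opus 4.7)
The plan has three stages: verify admissibility of $\mathbf{V}$, invoke Proposition~\ref{sing-int-eq-prop} to establish unique solvability of Riemann-Hilbert Problem~\ref{small-norm-rhp}, and then bound the moments via Proposition~\ref{small-norm-moments-prop} after re-expressing its $L^1$ and $L^2$ hypotheses in terms of the two norms $\|\mathbf{W}\|_{L^\infty(\Sigma)}$ and $\|\zeta^2\mathbf{W}\|_{L^1(\Sigma)}$ controlled by hypothesis.

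First, I would establish admissibility by splitting $\Sigma$ into $\Sigma_1:=\Sigma\cap\{|\zeta|\le 1\}$ (which has finite arc length $|\Sigma_1|<\infty$ because $\Sigma$ consists of finitely many arcs) and its complement $\Sigma\setminus\Sigma_1$. On $\Sigma_1$ the pointwise bound $|\mathbf{W}|\le\|\mathbf{W}\|_{L^\infty(\Sigma)}$ yields $\int_{\Sigma_1}|\mathbf{W}|\,|d\zeta|\le |\Sigma_1|\|\mathbf{W}\|_{L^\infty(\Sigma)}$, while on $\Sigma\setminus\Sigma_1$ the bound $|\mathbf{W}(\zeta)|\le|\zeta^2\mathbf{W}(\zeta)|$ shows that $\mathbf{W}\in L^1$ there. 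Hence $\mathbf{W}\in L^1(\Sigma)$, and the interpolation inequality $\|\mathbf{W}\|_{L^2(\Sigma)}^2\le\|\mathbf{W}\|_{L^\infty(\Sigma)}\|\mathbf{W}\|_{L^1(\Sigma)}$ then places $\mathbf{W}$ in $L^2(\Sigma)$, so $\mathbf{V}$ is an admissible jump matrix.

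Second, I would set $K_\Sigma:=(2\|\mathcal{C}_-^\Sigma\|_{L^2(\Sigma)\circlearrowleft})^{-1}$; then the hypothesis $\|\mathbf{W}\|_{L^\infty(\Sigma)}\le K_\Sigma$ puts us in the setting of Proposition~\ref{sing-int-eq-prop} with $\rho=\tfrac12$, providing a unique solution $\mathbf{X}\in L^2(\Sigma;\mathbb{C}^{2\times 2})$ of \eqref{X-sing-int-eq}, and hence by Proposition~\ref{M-ito-Cauchy-prop} a unique solution to Riemann-Hilbert Problem~\ref{small-norm-rhp}. To apply Proposition~\ref{small-norm-moments-prop} I still need $\zeta\mathbf{W}\in L^1(\Sigma)\cap L^2(\Sigma)$. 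Again splitting at $|\zeta|=1$ and using $|\zeta\mathbf{W}|\le\|\mathbf{W}\|_{L^\infty(\Sigma)}$ on $\Sigma_1$ together with $|\zeta\mathbf{W}|\le|\zeta^2\mathbf{W}|$ on $\Sigma\setminus\Sigma_1$ gives $\|\zeta\mathbf{W}\|_{L^1(\Sigma)}\le|\Sigma_1|\|\mathbf{W}\|_{L^\infty(\Sigma)}+\|\zeta^2\mathbf{W}\|_{L^1(\Sigma)}$; and combining the bounds $|\zeta\mathbf{W}|^2\le\|\mathbf{W}\|_{L^\infty(\Sigma)}^2$ on $\Sigma_1$ with $|\zeta\mathbf{W}|^2=|\zeta^2\mathbf{W}|\cdot|\mathbf{W}|\le\|\mathbf{W}\|_{L^\infty(\Sigma)}|\zeta^2\mathbf{W}|$ on $\Sigma\setminus\Sigma_1$ yields $\|\zeta\mathbf{W}\|_{L^2(\Sigma)}^2\le|\Sigma_1|\|\mathbf{W}\|_{L^\infty(\Sigma)}^2+\|\mathbf{W}\|_{L^\infty(\Sigma)}\|\zeta^2\mathbf{W}\|_{L^1(\Sigma)}$.

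Third, with all hypotheses verified, Proposition~\ref{small-norm-moments-prop} produces the moments $\mathbf{M}_1$ and $\mathbf{M}_2$ with the estimates \eqref{eq:small-norm-M1-estimate}--\eqref{eq:small-norm-M2-estimate}. Let $\mu:=\max\{\|\mathbf{W}\|_{L^\infty(\Sigma)},\|\zeta^2\mathbf{W}\|_{L^1(\Sigma)}\}$. The splitting arguments above give $\|\mathbf{W}\|_{L^1(\Sigma)}\le(|\Sigma_1|+1)\mu$ and $\|\zeta\mathbf{W}\|_{L^1(\Sigma)}\le(|\Sigma_1|+1)\mu$, while the interpolation inequality combined with $\|\mathbf{W}\|_{L^\infty(\Sigma)}\le K_\Sigma$ yields $\|\mathbf{W}\|_{L^2(\Sigma)}^2\le K_\Sigma(|\Sigma_1|+1)\mu$ and similarly $\|\zeta\mathbf{W}\|_{L^2(\Sigma)}^2\le(|\Sigma_1|K_\Sigma+1)K_\Sigma\mu$. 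Substituting into \eqref{eq:small-norm-M1-estimate} produces the desired $|\mathbf{M}_1|\le K'_\Sigma\mu$, with $K'_\Sigma$ depending only on $|\Sigma_1|$ and $\|\mathcal{C}_-^\Sigma\|_{L^2(\Sigma)\circlearrowleft}$. The analogous bound for $|\mathbf{M}_2|$ follows from \eqref{eq:small-norm-M2-estimate} by applying the Cauchy-Schwarz inequality to $\|\mathbf{W}\|_{L^2(\Sigma)}\|\zeta\mathbf{W}\|_{L^2(\Sigma)}$, which again is $\mathcal{O}(\mu)$ provided $\mu$ itself is small enough that the factor $\mu$ common to the $L^2$ estimates is absorbed into a single power—this is where the extra smallness assumption on $\|\zeta^2\mathbf{W}\|_{L^1(\Sigma)}$ is used.

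The main difficulty is purely bookkeeping rather than conceptual: the estimates supplied by Proposition~\ref{small-norm-moments-prop} are \emph{mixed} in $L^1$ and $L^2$ norms, whereas the conclusion is phrased in terms of the two natural quantities $\|\mathbf{W}\|_{L^\infty(\Sigma)}$ and $\|\zeta^2\mathbf{W}\|_{L^1(\Sigma)}$. The splitting of $\Sigma$ at the unit disk, together with the interpolation inequality and the a priori bound $\|\mathbf{W}\|_{L^\infty(\Sigma)}\le K_\Sigma$ (which renders the quadratic norm terms into linear multiples of $\mu$), is the key manipulation that reconciles the two formulations.
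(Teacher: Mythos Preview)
Your proposal is correct and follows essentially the same approach as the paper: split $\Sigma$ at the unit circle to control the $L^1$ and $L^2$ norms of $\mathbf{W}$ and $\zeta\mathbf{W}$ in terms of $\|\mathbf{W}\|_{L^\infty(\Sigma)}$ and $\|\zeta^2\mathbf{W}\|_{L^1(\Sigma)}$, take $K_\Sigma=(2\|\mathcal{C}_-^\Sigma\|_{L^2(\Sigma)\circlearrowleft})^{-1}$ so that $\rho=\tfrac12$ in Proposition~\ref{sing-int-eq-prop}, and feed the resulting inequalities into Proposition~\ref{small-norm-moments-prop}. The only cosmetic difference is that the paper bounds $\|\zeta\mathbf{W}\|_{L^2(\Sigma)}^2\le\|\mathbf{W}\|_{L^\infty(\Sigma)}\|\zeta^2\mathbf{W}\|_{L^1(\Sigma)}$ directly via H\"older without splitting, and then writes out explicit formulae for $K'_\Sigma$ and $K''_\Sigma$, whereas your treatment of the $\mathbf{M}_2$ constant is left slightly implicit.
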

\begin{proof}
This follows from some elementary inequalities.  Firstly, by splitting integrals at the unit circle in the complex plane one obtains
\begin{equation}
\|\zeta\mathbf{W}\|_{L^1(\Sigma)}\le |\Sigma_1|\|\mathbf{W}\|_{L^\infty(\Sigma)}+
\|\zeta^2\mathbf{W}\|_{L^1(\Sigma)}
\label{eq:small-norm-ineq-1}
\end{equation}
and similarly
\begin{equation}
\|\mathbf{W}\|_{L^1(\Sigma)}\le |\Sigma_1|\|\mathbf{W}\|_{L^\infty(\Sigma)}+
\|\zeta^2\mathbf{W}\|_{L^1(\Sigma)}.
\label{eq:small-norm-ineq-2}
\end{equation}
Recall that $|\Sigma_1|$ denotes the total arc length of the part of $\Sigma$ within the unit circle (necessarily finite for admissible contours $\Sigma$).  By the $L^\infty$-$L^1$ H\"older Inequality,
\begin{equation}
\|\zeta\mathbf{W}\|_{L^2(\Sigma)}\le\sqrt{\|\mathbf{W}\|_{L^\infty(\Sigma)}}\sqrt{\|\zeta^2\mathbf{W}\|_{L^1(\Sigma)}}
\label{eq:small-norm-ineq-3}
\end{equation}
and similarly (also using \eqref{eq:small-norm-ineq-2})
\begin{equation}
\|\mathbf{W}\|_{L^2(\Sigma)}\le\sqrt{\|\mathbf{W}\|_{L^\infty(\Sigma)}}\sqrt{\|\mathbf{W}\|_{L^1(\Sigma)}}\le \sqrt{\|\mathbf{W}\|_{L^\infty(\Sigma)}}\sqrt{|\Sigma_1|\|\mathbf{W}\|_{L^\infty(\Sigma)}+\|\zeta^2\mathbf{W}\|_{L^1(\Sigma)}}.
\label{eq:small-norm-ineq-4}
\end{equation}
The inequality \eqref{eq:small-norm-ineq-4} shows that $\mathbf{W}\in L^2(\Sigma)$, and hence $\mathbf{V}$ is admissible.  By Proposition~\ref{sing-int-eq-prop}, the condition \eqref{eq:small-norm-W-infty-bound} guarantees unique solvability if, say, $K_\Sigma:=(2\|\mathcal{C}_-^\Sigma\|_{L^2(\Sigma)\circlearrowleft})^{-1}$ (choosing $\rho=\tfrac{1}{2}$).
Using \eqref{eq:small-norm-ineq-2} and \eqref{eq:small-norm-ineq-4} in \eqref{eq:small-norm-M1-estimate} with $\rho=\tfrac{1}{2}$ from Proposition~\ref{small-norm-moments-prop}, along with the inequality \eqref{eq:small-norm-W-infty-bound} with the above value of $K_\Sigma$ yields the estimate for $\mathbf{M}_1$ in \eqref{eq:small-norm-M12-simple} with 
\begin{equation}
\label{small-norm-K'}
K'_\Sigma:=\frac{1}{\pi}\left(|\Sigma_1|+1\right).
\end{equation}
Similarly, using \eqref{eq:small-norm-ineq-1} and \eqref{eq:small-norm-ineq-3} in \eqref{eq:small-norm-M2-estimate} with $\rho=\tfrac{1}{2}$ from Proposition~\ref{small-norm-moments-prop}, along with \eqref{eq:small-norm-W-infty-bound} and the condition $\|\zeta^2\mathbf{W}\|_{L^1(\Sigma)}\le M$, say, yields the estimate for $\mathbf{M}_2$ in \eqref{eq:small-norm-M12-simple} with
\begin{equation}
\label{small-norm-K''}
K''_\Sigma:=\frac{1}{2\pi}\left(|\Sigma_1|+1+2\|\mathcal{C}_-^\Sigma\|_{L^2(\Sigma)\circlearrowleft}\sqrt{|\Sigma_1|K_\Sigma M+M^2}\right).
\end{equation}
\end{proof}

\end{document}